 \newcommand{\roncomment}[1]{{\color{red} Ron: #1}}
 \newcommand{\kayacomment}[1]{{\color{blue} Kaya: #1}}
 \newcommand{\joecomment}[1]{{\color{blue} Joe: #1}}
\newcommand{\always}{\boxdot}
\newcommand{\decided}{\mathit{decided}}
\newcommand{\deciding}{\mathit{deciding}}
\newcommand{\rd}{\mathit{jd}}  
\newcommand{\justdecided}{\mathtt{jd}}
\newcommand{\decision}{\mathtt{decision}}
\newcommand{\decides}{\mathtt{decides}}
\newcommand{\decide}{\mathtt{decide}}
\newcommand{\commentout}[1]{}
\newcommand{\ok}{\mathtt{ok}}
\newcommand{\init}{\mathit{init}}
\newcommand{\initf}{\mathit{init}}
\newcommand{\Agents}{\mathit{Agt}}
\newcommand{\exchange}{\mathcal{E}}
\newcommand{\rimp}{\Rightarrow}
\newcommand{\noop}{\mathtt{noop}}
\newcommand{\Nat}{\mathbb{N}} 
\newcommand{\future}{\Box} 
\newcommand{\powerset}[1]{\mathcal{P}(#1)}
\newcommand{\Prop}{Prop}
\newcommand{\Estates}{L_e} 
\newcommand{\I}{\mathcal{I}}
\newcommand{\R}{\mathcal{R}}
\newcommand{\failures}{\mathcal{F}} 
\newcommand{\adversary}{\alpha} 
\newcommand{\cb}[1]{C\hspace{-1pt}B_{#1}}
\newcommand{\eb}[1]{E\hspace{-1pt}B_{#1}}
\newcommand{\ck}[1]{C_{#1}}
\newcommand{\ek}[1]{E_{#1}}
\newcommand{\<}{\langle}
\renewcommand{\>}{\rangle}
\newcommand{\beln}{B^\N} %
\newcommand{\ekn}{\ek{\N}} %
\newcommand{\contb}[1]{E\hspace{-1pt}B^\always_{#1}}
\newcommand{\contcb}[1]{C^\always_{#1}}
\newcommand{\kbp}{\mathbf{P}}
\newcommand{\N}{\mathcal{N}}
\newcommand{\Time}{\mathit{time}}
\newcommand{\Timef}{\mathit{time}}
\newcommand{\basic}{\mathit{basic}}
\newcommand{\fip}{\mathit{fip}}
\newcommand{\nodecided}{\mathit{no\text{-}decided}}
\newcommand{\sendsto}{\rightarrow}  
\newenvironment{program}[1][ht]
  { %
   \begin{algorithm}[#1]
  }{\end{algorithm}}
  \newcommand{\jdecided}{\mathit{jdecided}}
  \renewcommand{\citeyear}{\cite}
  \newcommand{\nonfaulty}{\mathcal{N}}
\newcommand{\faulty}{\Agents - \mathcal{N}} %
\newcommand{\faultyag}{\mathit{t\mbox{-}faulty}}
\newcommand{\Circ}{\mbox{{\small $\bigcirc$}}}
\newcommand{\cS}{\mathcal{S}}
\newcommand{\cO}{\mathcal{O}}
\newcommand{\Z}{\mathcal{Z}}
\renewcommand{\P}{\mathcal{P}}
\newcommand{\inv}{\mathit{inv}}
\newcommand{\dist}{\mathit{dist}}
\newcommand{\dec}{\mathit{decide}}
\newcommand{\dk}[1]{D_{#1}}
\renewcommand{\>}{\rangle}
\renewcommand{\C}{\mathcal{C}}
\newcommand{\cond}{\mathit{cond}}
\newcommand{\common}{\mathit{common}}
\newcommand{\opt}{\mathit{opt}}
\newcommand{\last}{\mathit{last}}
\newcommand{\latest}{\mathit{latest0}}
\newcommand{\len}{\mathit{len}}
\newcommand{\zchain}{$0$-chain\xspace}  
\newcommand{\zchains}{$0$-chains\xspace} 
\newcommand{\dimp}{\Leftrightarrow}
\newcommand{\nxt}{\Circ} 
\newcommand{\prev}{\ominus} 
\newcommand{\lamport}[1]{\rightarrow_{#1}}
\begin{document}

\title{Optimal Eventual Byzantine Agreement Protocols with Omission Failures} 

\author{Kaya Alpturer}
\email{ki78@cornell.edu}
\affiliation{%
  \institution{Cornell University}
  \city{Ithaca, NY}         %
  \country{USA}   %
}

\author{Joseph Y. Halpern}
\email{halpern@cs.cornell.edu}
\affiliation{%
  \institution{Cornell University}
  \city{Ithaca, NY}         %
  \country{USA}   %
}

\author{Ron van der Meyden}
\email{R.VanderMeyden@unsw.edu.au}
\affiliation{%
  \institution{UNSW Sydney}
  \city{Sydney, NSW}         %
  \country{Australia}   %
}

\begin{CCSXML}
  <ccs2012>
     <concept>
         <concept_id>10003752.10003809.10010172</concept_id>
         <concept_desc>Theory of computation~Distributed algorithms</concept_desc>
         <concept_significance>500</concept_significance>
         </concept>
     <concept>
         <concept_id>10010147.10010178.10010187.10010198</concept_id>
         <concept_desc>Computing methodologies~Reasoning about belief and knowledge</concept_desc>
         <concept_significance>300</concept_significance>
         </concept>
     <concept>
         <concept_id>10010520.10010575</concept_id>
         <concept_desc>Computer systems organization~Dependable and fault-tolerant systems and networks</concept_desc>
         <concept_significance>500</concept_significance>
         </concept>
   </ccs2012>
\end{CCSXML}
\ccsdesc[500]{Theory of computation~Distributed algorithms}
\ccsdesc[300]{Computing methodologies~Reasoning about belief and knowledge}
\ccsdesc[500]{Computer systems organization~Dependable and fault-tolerant systems and networks}
\keywords{Distributed algorithms, Epistemic logic, Reasoning about knowledge, Byzantine Agreement, Consensus, Fault tolerance} 

\begin{abstract}
  Work on \emph{optimal} protocols for \emph{Eventual Byzantine Agreement} 
  (EBA)---protocols that, in a precise sense, decide as soon as possible 
  in every run and guarantee that all nonfaulty agents decide on the
  same value---has focused on \emph{full-information protocols} (FIPs), where 
  agents repeatedly send messages that completely describe their past 
  observations to every other agent. While it can be shown that, without 
  loss of generality, we can take an optimal protocol to be an FIP,
  full information  
  exchange is impractical to implement for many applications due to the 
  required message size.
  We separate protocols into two parts, the \emph{information-exchange
  protocol} and
  the \emph{action protocol}, so as to be able to examine the effects of
  more limited information exchange.   We then define a
  notion of optimality with respect to an information-exchange protocol.
  Roughly speaking, an action protocol $P$ is 
 optimal with respect to an information-exchange
  protocol $\exchange$ if, with $P$,
  agents decide as soon as possible among action protocols that exchange
  information according to $\exchange$.
  We present a knowledge-based EBA program for omission failures 
all of whose implementations are guaranteed to be correct and are
optimal if the information exchange satisfies a certain safety condition.
  We then construct concrete programs that implement this
  knowledge-based program in two settings of interest that are shown
  to satisfy the safety condition.
  Finally, we show that a small modification of our program results in
    an FIP that 
  is both optimal and efficiently implementable, 
    settling an open problem posed by Halpern, Moses, and Waarts 
    (SIAM J. Comput., 2001).
\end{abstract}

\maketitle

\section{Introduction} \label{sec:intro}

Logics of knowledge (epistemic logics) have been shown to provide
useful abstractions for  reasoning about distributed systems 
\cite{FHMV,HM90},
enabling a focus on the information that needs to be attained in order
to perform certain actions, independent of    
how that information is encoded in the states of the system. The
approach has, in particular, been used  
fruitfully to analyze agreement protocols, where agents are
required to make consistent decisions on some value, 
based on their initial preferences 
\cite{CGM14,DM,HMW,MT}.

One particular focus of this work  has been to develop
protocols that make optimal
use of information. In the context of protocols for agreement, this 
has meant protocols  
that decide as soon as possible.
We say that a protocol $P_1$ \emph{dominates} a protocol $P_2$ if,
roughly speaking, for each possible pattern of failures and inputs,
$P_1$ decides at least as soon as $P_2$.  An \emph{optimum protocol}
is one that dominates every other protocol; an \emph{optimal protocol}
is one that is not dominated by any other protocol.
Optimum knowledge-based programs (i.e., programs with explicit tests
for knowledge) have been provided  for simultaneous
Byzantine agreement 
(SBA) with crash failures \cite{CGM14,CastanedaMRR17,DM} and omission
failures%
\footnote{Recall that with \emph{crash failures}, a faulty process
behaves according to 
the protocol, except that it might crash at some round (possibly
after sending some messages), after which it
sends no messages; with sending-omission failures, a faulty process
may omit to send an arbitrary set of messages in any given round, but
otherwise behaves according to the protocol.}
\cite{MT},
and for \emph{consistent}
SBA with omission failures \cite{NeigTuttle} (where all agents that
decide on an action must perform the same action, not just the
nonfaulty agents).
Moreover, polynomial-time implementations of these
programs were provided.
For the problem of eventual Byzantine agreement (EBA), it is
well-known that there are no optimum protocols \cite{MT}, although
there are optimal protocols.  Optimal knowledge-based programs have
been
provided for EBA in the case of crash failures \cite{CGM14} and
sending-omission failures \cite{HMW}.  While a polynomial-time
implementation of the knowledge-based program was provided for the
case of crash failures \cite{CGM14}, none was provided in the case of
omission falures.  Indeed, the problem of finding a polynomial-time
optimal algorithm for EBA in the presence of omission failures  has
been open for over 20 years.  Among other things, we solve this
problem here.

The work on optimal (and optimum) protocols has focused on
\emph{full-information protocols} (FIPs), ones where 
each agent repeatedly sends all other agents  
its complete state, containing its initial state and all messages that
it has received up to the present time. 
As far as optimal protocols go, there is no loss of generality in
considering only FIPs. As is well known 
\cite{Coan},
any protocol can be
simulated by an FIP, so for any protocol, there is an FIP that decides
at least as soon.
However, while FIPs do provide optimality, they are expensive to implement due
to their space requirements and the length of messages sent,
their analysis may 
be
complex, and in some failure
environments, they may require that intractable properties be  
computed at each step of the protocol to attain optimality. They are
therefore not 
necessarily practical. The present paper is part of 
a
program of research in which we seek to
overcome these difficulties with the  
full-information paradigm by considering 
protocols in which less than full information is exchanged
between the agents.  

Our goal in this paper is to examine the effects of more limited
information exchange.
In order to do so, 
we separate protocols into two parts, 
the \emph{information-exchange protocol}, which specifies what information
agents maintain in their local states and what message they exchange
at each step, and 
the \emph{action protocol}, which, in the case of agreement protocols,
specifies what decision 
agents make.   We then define a
  notion of optimality with respect to an information-exchange protocol.
  Roughly speaking, an action protocol $P$ is 
    optimal for a particular specification with respect to an
    information-exchange 
    protocol $\exchange$ if
    agents decide as soon as possible with $P$  as they do with any
    other protocol that satisfies the specification and exchanges
    information according to $\exchange$.
  A full-information protocol is a special case of an
information-exchange protocol, but we allow far more limited
protocols, where agents keep track of less information and send less
information in their messages.
  We focus in this paper on optimal protocols for binary EBA, where
  agents have only one of two possible initial preferences, and we assume
sending-omission failures.
For EBA, there is   
a group of agents, each with 
an initial preference
of 0 or 1. The nonfaulty agents
are required to
reach agreement on a value, but different agents may decide at
different times.   
The solution is required to be non-trivial in the sense that if all
initial preferences
are $x\in \{0,1\}$  
then a nonfaulty agent must decide $x$.

Since with EBA, agents do not have to decide simultaneously, the
literature has typically considered optimal protocols that are biased
towards 0, in that an agent decides 0 as 
soon as it learns that some agent started with
an initial preference of 0, and protocols that are baised towards 1. In the
case of crash failures, it is known that there are optimal EBA
protocols that are biased towards 0 (resp., 1) in this sense
\cite{CGM14}.  But it 
is easy to see that there cannot be an optimal EBA protocol that is
biased towards 0 (or 1) in the case of omissions failures.  Consider a
setting where there are at least three agents, and a run $r$ where 
exactly one of these agents is faulty,
say agent 1, all the remaining agents have an initial preference
of 1, and the faulty agent does 
not send any messages.  In run $r$, the nonfaulty agents must
eventually decide 1, 
because the faulty agent's initial preference may be 1, and in that
case, EBA requires a decision of 1.  Suppose that the first nonfaulty agent 
to decide in run $r$ does so at round $n$, and without loss of
generality, that 2 is a nonfaulty
agent that decides 1 at round $n$.  Now consider a run $r'$ 
where agent 1 has an initial preference of 0, agent 1 is faulty, all
the remaining agents are nonfaulty and have an initial preference of
1, agent 1 does not send
any messages up to round 
$n-1$, and in round $n-1$ sends exactly one
message, which is sent to agent 3, and says (truthfully) that
agent 1's initial preference was 0.  Since agent 2 cannot distinguish
$r$ from $r'$, agent 2 must decide 1 in round $n$ of $r'$.  Since
agent 3 does not decide in the first $n-1$ rounds of $r$, it also does
not decide in the first $n-1$ rounds of $r'$.  Since it learns that
some agent has an initial preference of 0 in round 
$n-1$,
agent 3 decides
0 in round $n$ of $r'$.  Thus, two nonfaulty agents decide on
different values in $r'$, so EBA is not achieved.  (Note a run like
$r'$ is inconsistent with crash failures; it really requires omission
failures.) 

\commentout{
Optimal EBA protocols can be biased in one of two ways: they can
prefer to 
decide 0 as soon as possible, intuitively, by having an agent
decide 0 as soon as it learns
that some agent 
has initial preference
0. Alternately, the protocol could have a
similar bias to deciding 1.
In protocols where agents decide 0 as soon as they hear about a 0,
since no protocol can decide 0 earlier, we can reduce the problem of 
constructing an optimal protocol to finding the best rule to decide 1 
without breaking correctness. 
While this approach does work in the presence of crash failures
\cite{CGM14}, due to the difficulty mentioned above, we usually get 
non-terminating runs in the presence of omission failures. 
The first contribution of this paper is to show that it is impossible 
to get an optimal protocol that terminates when agents decide 0 as soon as 
possible. 
We can get around this problem
by having agents decide on 0 only if they hear about a 0 in a chain of
messages, starting with an agent that had 0 as an initial preference,
and passed from one agent to the next.}

To deal with this issue, in a 0-biased protocol, rather than requiring
that an agent decide 0 as soon as it hears about a 0, we require only
that an agent decides 0 only if it hears about 0
via a chain of agents (where the first agent in the chain has an
initial preference of 0, and in round $k$, the $k$th agent in the
chain decides 0 and tells the $(k+1)$st agent about this).  Note that
in the case of crash 
failures, an agent can hear about a 0 only via such a chain.
We then provide a knowledge-based action protocol $\kbp^0$ based on this
(well-known) idea that we show is correct (in the sense that all of its 
implementations satisfy the EBA specification, no matter what
information-exchange protocol is used) and is optimal in
contexts that satisfy a certain safety condition.  
We then consider two information-exchange protocols where agents exchange
relatively short (and relatively few) messages, and show that
they satisfy the safety condition.  Finally, we provide concrete
polynomial-time
action protocols that implement $\kbp^0$ with respect to these two
information-exchange 
protocols.

The knowledge-based program $\kbp^0$ is not optimal in the case of
full-information contexts, but, as we show, a small modification
$\kbp^1$ of $\kbp^0$ is optimal.  Moreover, $\kbp^0$ and $\kbp^1$ are
equivalent in the two limited-information contexts that we considered,
so $\kbp^1$ is also optimal in these contexts.
Roughly speaking, $\kbp^1$ allows agents to decide if they get
common knowledge of who the nonfaulty agents are. Such common
knowledge can't be obtained in the limited-exchange contexts that we
consider, but it can be obtained if agents use a full-information
protocol and keep track of everything they have learned.
The knowledge-based program $\kbp^1$ does not involve the (rather
complicated) \emph{continual common knowledge operator} used by Halpern,
Moses, and Waarts; rather, it uses more standard knowledge and common
knowledge operators. This allows us to provide a polynomial-time
implementation of it.  

\commentout{
Interestingly, the FIP does \emph{not} satisfy the safety condition.  
We nevertheless conjecture that $\kbp^0$ is in fact optimal with respect to the
FIP (the safety condition is only a sufficient condition to ensure
optimality).  We provide a weakening of the safety 
condition closely related to conditions considered by Halpern, Moses,
and Waarts \citeyear{HMW} and show that $\kbp^0$ is optimal with respect
to the FIP if this weaker condition holds.  
} %

\commentout{
  This protocol can be simulated by an FIP,
simply by having the FIP first determine 
from its state what would be the state of the limited
information-exchange protocol.   
The obvious question is whether an FIP can decide earlier.  
As we show, it can.   Indeed, we
provide a polynomial-time optimal FIP for EBA that dominates our
initial protocol, thereby solving a problem left open by Halpern,
Moses, and Waarts \citeyear{HMW}.
In runs where the limited-exchange protocol decides 0, the FIP decides
0 as well, and decides no earlier.  But the FIP can decide earlier in
runs where the limited-exchange protocol decides 1 (and may decide 0
in some of these runs).
We also provide a knowledge-based characterization of the conditions
under which the optimal FIP decides.  The characterization does not
involve the (rather complicated) continual common knowledge operator
used by Halpern, Moses, and Waarts \citeyear{HMW}; rather, it uses more
standard knowledge and common knowledge operators.}

\commentout{
As we suggested above, there are a number of advantages to considering
implementations of $\kbp^0$ using information-exchange protocols with
limited information flow.  
Of particular interest to us is that it
opens the door to applying epistemic synthesis techniques that allow
the automated derivation of protocols from a knowledge-based program
in the context of a given information-flow model
\cite{HuangM13,HuangM14}.
We hope to explore this direction in future work.
}

The rest of this paper is organized as follows. Section
\ref{sec:knowledge} 
reviews the epistemic framework of \cite{FHMV}.
In Section \ref{sec:faults}, we introduce the 
separation of information-exchange and action protocols 
along with a representation of the failure model. 
Section \ref{sec:kbps} defines knowledge-based programs.
The specification of EBA and a formal definition of optimality with respect to 
an information-exchange protocol are given in Section \ref{sec:spec}.
In Section~\ref{sec:opt-lim-ie}, we define the knowledge-based
program $\kbp^0$, show that it satisfies EBA, define the safety
condition that suffices for $\kbp^0$ to be optimal, describe two
natural limited information-exchange protocols that satisfy the safety
condition, and provide concrete action protocols that implement
$\kbp^0$ with respect to these two information-exchange protocols.
In Section~\ref{sec:fip-optimal}, we define $\kbp^1$, a modification
of $\kbp^0$ that is optimal for the full-information-exchange
protocol, and provide a polynomial-time 
implementation
of it.
We conclude with a discussion of the cost of limited information 
exchange in Section \ref{sec:cost}.
\begin{full}
Proofs for all the results in the paper can be found in the appendix.
\end{full}
\begin{podc} 
We omit most proofs here due to lack of space; they can be found in
\cite{FIXME-fullversion}.
\end{podc} 

\commentout{

  \section{Related Work} 

  The state of the art with respect to knowledge-based analysis of consensus protocols 
  consists of
  \begin{itemize} 
  \item a simple characterization of SBA using a knowledge-based program based on common knowledge, 
  that is correct across a range of environments (information-exchange protocols) 
  \item implementations for a number of FIP environments, developed by hand. 
  \item a complex, continuous common knowledge characterization of optimality for EBA in FIP using EBA
  \item a knowledge-based process for transforming an EBA protocol into an optimal EBA protocol wrt FIP. 
  \item further notions of common knowledge extending continuous common knowledge for 
  uniform and terminating versions of EBA with respect to  FIP (Neiger \& Tuttle, Neiger \& Bazzi) 
  \item A simpler knowledge-based program for EBA for the crash failure case, 
  and an implementation in the FIP crash failure  case, developed by hand. 
  \end{itemize}
}

\section{Semantic Model}\label{sec:knowledge}

We assume that a set $\Agents$ of agents  communicate using a
message-passing network, which may be subject to various types of failures.  
To model such systems semantically, we use the standard
runs-and-systems model \cite{FHMV,HF}, which we briefly review.

\emph{Interpreted systems} \cite{FHMV} model multi-agent scenarios
in which some number $n$ of agents change their states over time.   
An interpreted system is a  pair $\I = (\R, \pi)$, where $\R$ is a set of runs, describing how
the system evolves over time,  and  
$\pi: \R\times \Nat \rightarrow \powerset{\Prop}$ 
is an interpretation function that indicates which atomic propositions are 
true at each {\em point} of the system, where a point is a pair $(r,m)$
consisting of a 
run $r\in \R$ 
and time $m\in \Nat$.
The set $\R$ is called a \emph{system}. 
Formally, a run $r\in \R$ is a function
$r:\Nat \rightarrow \Estates\times \Pi_{i\in \Agents} L_i$, 
where $\Estates$ is the set of possible \emph{local states} of the
environment in which the 
agents operate, and each $L_i$ is  the set of possible {\em
local states} of agent $i$.
The elements of $\Estates\times \Pi_{i\in \Agents} L_i$ are called
\emph{global states}.
Given a run $r$, agent $i$, and time $m$, we write $r_i(m)$ for 
the $(i+1)$st component of $r(m)$, which is the local state of agent
$i$ in the global state $r(m)$, 
and $r_e(m)$ for the first component of $r(m)$, which is 
the local state of the environment.

To reason about the knowledge of agents in interpreted systems, we 
use a standard language for reasoning about
knowledge and time.  We start with a set $\Phi$ of primitive
propositions, and close off under $\land$, $\neg$, the epistemic
operators $K_i$ for $i = 1,\ldots, n$ (one for each agent) and
$C_\cS$ (common knowledge among the agents in an \emph{indexical} set $S$; see
below) 
and the temporal operators $\future$, $\always$, $\Circ$, and $\ominus$.
The formula $K_i \phi$ says that agent $i$ knows that formula $\phi$ holds, 
$\future \phi$ says that $\phi$ holds at all times in the future,
$\always \phi$ says that $\phi$ holds at all times,
$\Circ \phi$ says that $\phi$ holds at the next time,
and $\ominus \phi$ says that $\phi$ holds at the previous time. 

The semantics of the logic is given by a relation $\I,(r,m)\models \phi$,
where $\I$ is an interpreted system, $(r,m)$ is a point of $\I$, and
$\phi$ is a formula.  
For formulas not of the form $C_\cS\phi$, the relation $\models$ is 
defined inductively as
follows (we omit the obvious cases for the propositional operators):
\begin{itemize}
\item $\I,(r,m) \models p$ if   $p\in \pi(r,m)$,

\item
$\I,(r,m)\models K_{i} \phi$ if $\I,(r',m') \models \phi$ for all points $(r',m')$ of $\I$ such that $r_i(m)  = r'_i(m')$,

\item $\I,(r,m)\models \future\phi$ if   $\I,(r,m') \models \phi$ for all $m'\geq m$,

\item $\I,(r,m)\models \always\phi$ if   $\I,(r,m') \models \phi$ for all $m'\in \Nat$,

    \item $\I,(r,m) \models \Circ\phi$ if $\I,(r,m+1) \models \phi$
    \item $\I,(r,m) \models \ominus\phi$ if $m > 0$ and $\I,(r,m-1)
      \models \phi$. 
        \end{itemize}
\commentout{

  A formula $\phi$ is said to be \emph{valid} in an 
interpreted system $\I$, 
written $\I \models \phi$, 
if $\I,(r,m) \models \phi$ for all points $(r,m)$ of $\I$.  
A formula $\phi$ is \emph{local} to an agent $i$ in an interpreted system $\I$ if for all 
points $(r,m)$ and $(r',m')$ of $\I$ with $r_i(m) = r'_i(m')$ we have 
$\I,(r,m) \models \phi$ iff $\I,(r',m') \models \phi$. In particular, this
means that $\I \models \phi \rimp K_i \phi$ and  $\I \models \neg \phi \rimp K_i\neg  \phi$.
} 

The intuition for the definition of the knowledge operator $K_i\phi$
is that $r'_i(m) = r_i(m)$ says that agent $i$ considers it possible,
when  
in the actual situation $(r,m)$, that it is in situation $(r',m')$,
since it has the same local state there.  
An agent then \emph{knows} $\phi$ if $\phi$ is true in all the
situations that the agent considers to be possible.

We can now define the modal operator $C_\cS$.
Intuitively, $C_\cS \phi$ is true at a point $(r,m)$ if $\phi$ is common
knowledge among the agents in $S$; each of the agents in $\cS$ knows
that each of the agents in $S$ knows \ldots that $\phi$ is true.  
The fact that $\cS$ is an indexical set means that its membership can
depend on the point; that is, 
semantically, $\cS(r,m)$ is a set of agents for each point $(r,m)$.  
We define $\ek{\cS}\phi$ (everyone in $\cS$ knows $\phi$) as an
abbreviation for $\bigwedge_{i\in \cS} K_i \phi$.  That is,
$$\mbox{$\I,(r,m) \models \ek{\cS}\phi$ if, for all $j \in \cS(r,m)$, we have
$\I,(r,m) \models K_j \phi$.}$$  
Taking $\ek{\cS}^1\phi$ to be an abbreviation of $\ek{\cS}\phi$,
 and
$\ek{\cS}^{m+1}\phi$ to be an abbreviation of $\ek{\cS}(\ek{\cS}^m\phi)$, we
define 
$$\mbox{$\I,(r,m) \models C_\cS\phi$ if, for all $m \ge 1$, 
$\I,(r,m) \models \ek{\cS}^m \phi$.}$$  

As usual, we say that $\phi$ is 
\emph{valid in $\I$}, 
and write $\I
\models \phi$, if $\I,(r,m) \models \phi$ for all points $(r,m)$ in
$\I$.

\commentout{
We also work with a number of different notions of common
knowledge relative to an \emph{indexical set}  
$S$ 
of agents, which differs from 
point to point in the system. That is, we assume that
$S$ is a function mapping each point of the system to a set of agents.  
The semantics of the atomic formula $i\in S$, 
where $i$ is an agent, 
is given by:
\begin{itemize}
  \item $\I,(r,m) \models i \in S$ if $i\in S(r,m)$.
\end{itemize} 

An agent may not know whether it is in a set $S$. We can define a
notion of belief relative to the indexical set $S$, by taking $B^S_i \phi
= K_i(i\in S \rimp \phi)$.   
We define several notions of ``every agent knows/believes" relative to an indexical set $S$: 
\begin{itemize} 
  \item \emph{everyone in $S$ knows}, defined as $\ek{S}\phi =
  \bigwedge_{i\in S} K_i \phi$,
\item \emph{everyone in $S$ believes}, defined as $\eb{S} \phi = \bigwedge_{i\in S} B^S_i \phi$,  
\item \emph{everyone in $S$ continually believes}, defined as 
$\contb{S}{\phi} = \always \bigwedge_{i\in S} B^S_i \phi$,  
\end{itemize}
We then consider several corresponding ``flavors'' of common knowledge:
\begin{itemize}
\item \emph{common belief relative to indexical set $S$}, defined as
$\cb{S} \phi = \eb{S} \phi \land \eb{S}^2 \phi \land \ldots$,

\item \emph{common knowledge relative to indexical set $S$}, defined as
$\ck{S} \phi = \ek{S} \phi \land \ek{S}^2 \phi \land \ldots$,

\item \emph{continual common belief relative to an indexical set $S$}, defined as 
$\contcb{S} \phi = \contb{S}(\phi) \land (\contb{S})^2(\phi) \land \ldots$.  
\end{itemize}

These definitions satisfy 
$\cb{S} \phi \equiv \eb{S}(\phi \land  \cb{S} \phi)$,
 $\ck{S} \phi \equiv \ekn(\phi \land \ck{S} \phi)$
and $\contcb{N} \phi \equiv \contb{N}(\phi \land  \contcb{N}\phi)$. Provided it is valid that 
$S \neq \emptyset$, we have that 
$\eb{S}\phi \rimp \phi$,  $\ek{S}\phi \rimp \phi$, 
$\contb{S} \phi \rimp \phi$, 
$\cb{S}\phi \rimp \phi$,  $\ck{S}\phi \rimp \phi$ 
and $\contcb{S}\phi \rimp \phi$ are all valid, so
these are knowledge-like notions.   

\roncomment{ add induction axioms, introspection properties,
  equivalence relations  as needed for proofs}
}

\section{Communication and Failure Models} \label{sec:faults}

We now specialize the general model from the previous section to
represent an omissions-failure model.  
In our representation, we separate the \emph{information-exchange protocol},
which characterizes 
the information maintained by agents in their local states, and which messages are sent and when, from the
\emph{action protocol}, which characterizes the rules for performing actions
other than sending messages.  (In our case, these actions will be decisions.)
In the literature, the information-exchange protocol has often been
the \emph{full-information protocol}, in which at each step each agent
sends all other agents 
a complete description of everything it has learned up to that
time. However,  we will be interested in protocols  
in which less information is exchanged, so it helps to separate out
the information-exchange protocol as a parameter of the interpreted
systems we construct.  
A further parameter is the failure model $\failures$, defined below.  

We assume that information is exchanged by sending messages.
Our focus will be on \emph{synchronous} message passing, in which agents operate in a sequence of synchronized rounds. 
In each round, each agent performs some actions, sends a set of messages to the other agents, 
receives some of the messages from the other agents that were sent  in the same round, and updates its state depending on these events. 
The information-exchange protocol describes the possible initial states of the agent (which may include information such as the agent's preference for the 
outcome of the consensus decision to be made), how it chooses the messages to send at each time, and how it updates its state in response to receiving messages.  

We assume that each agent has a set $A_i$ of actions that it
can perform.  
In our applications, $A_i = \{\decide_i(x)~|~ x\in
\{0,1\}\}\cup \{\noop\}$, but in general, $A_i$ can be arbitrary.
Formally, an information-exchange protocol $\exchange$ for agents
$\Agents = \{1, \ldots,n\}$ is given by a tuple  
$\langle \exchange_1, \ldots , \exchange_n\rangle$ consisting of a 
local information-exchange protocol $\exchange_i$ for each each agent $i$. 
Each  local information-exchange protocol 
$\exchange_i$ is a tuple $\langle L_i, I_i , A_i, M_i, \mu_i,
\delta_i\rangle$,   
where
\begin{itemize} 
\item $L_i$ is a set of local states.
\commentout{
\item $\init_i : L_i \rightarrow \{0,1\}$ maps each local state to a
  vote for the decision to be made,
\item $d_i: L_i \rightarrow \{0,1,\bot\}$ maps each local state to
  either a decision or $\bot$ (in case no decision has yet been made).  
\item $\Time_i : L_i \rightarrow \Nat$ maps each local state to a time. 
}
\item $I_i \subseteq L_i$ is a set of  initial states.

\newcommand{\Messages}{\mathcal{M}} 
\item $M_i$ is a set of messages that can be sent by agent $i$. 
  \item $\mu_i : L_i \times A_i \rightarrow \Pi_{j\in \Agents} ( M_i
  \cup \{\bot\})$ is a function mapping a local state $s$  
and an action $a$ to  the messages to be sent in the current round,
one to each agent $j$.   
Intuitively, $\mu_i(s,a) = \sigma$ 
means that when action $a$ is performed in state $s$, the 
information-exchange protocol  transmits message $\sigma_j$ to each agent $j$.  
Here $\sigma_j= \bot$ represents that no message is sent by $i$ to $j$.
Let $\mu_{ij}(s,a)$ denote the message that $i$ sends to $j$ in this tuple.

\item 
  $\delta_i: L_i \times A_i \times \Pi_{j\in \Agents} (M_j \cup \{\bot\})
  \rightarrow L_i$  
is a function that 
updates the local state, given 
an action and 
a tuple $(m_1, \ldots, m_n)$ 
of messages $m_j\in M_j \cup \{\bot\}$ (where $m_j = \bot$ if $i$
receives no message from $j$).
\end{itemize}  

\commentout{
We assume that votes are invariant under the state update, 
and that the 
time
value is advanced at the end of a round on receiving messages but not when actions are performed
at the start of a round.  Decision actions set the decision variable $d_i$, and message receipt does not affect $\init_i$ or $d_i$. 
Formally, 
\begin{itemize}

\item For all states $s \in L_i$ 
actions $a\in A_i$
and 
$\rho\in \Pi_{i\in \Agents} \rightarrow (M_i \cup \{\bot\})$, 
we have 
$\init_i(\delta_i(s,a,\rho)) = \init_i(s)$ and  $\Time_i(\delta_i(s,\rho)) = \Time_i(s)+1$.
If $a = \decide_i(x)$ then $d_i(\delta_i(s,a,\rho)) = x$, otherwise  $d_i(\delta_i(s,a,\rho)) = d_i(s)$. 
\end{itemize} 
}

\commentout{
We will restrict our attention in this paper to information-exchange
protocols that are \emph{synchronous}. This means  
that for all $i \in \Agents$ there exists a function $\Timef_i:L_i$
\begin{itemize} 
\item $\Timef_i(s) = 0$ for $s\in I_i$ an initial state, 
\item $\Timef_i(\delta_i(s,a,\rho)) = \Timef_i(s) +1$ for $s \in L_i$,
    $a\in A_i$, and $\rho\in \Pi_{i\in \Agents} (M_i \cup
  \{\bot\})$.  
\end{itemize} 
}

\commentout{
\roncomment{There is a small issue about the semantics of $i\in \N$ - for failure models more general than SO(t), 
this cannot necessarily be determined by looking at $F$ alone. For example, if there is one faulty agent and exactly 
one message that fails, is the sender or the receiver the culprit? Put
$\N$ directly into the adversary?  
(An alternative is to put it into $\pi$ and impose a consistency constraint, but that does allows only one interpretation for each $F$, 
so does not work.) 
Some of our results
hold beyond SO(t), e.g. the correctness of the $\Box$ version of the
simple KBP, so we ought to say so.}
}

The failure model describes what failures  can occur. 
Typically a failure model comes with a parameter $t$ that indicates
the maximum number agents that may be faulty.  
A \emph{failure pattern},  or \emph{adversary}, defines
the failures that actually occur in   
a particular run consistent with the failure model.
\commentout{
Each failure model is defined as a set of adversaries.  
An omissions-failure pattern indicates which messages that are
intended to be sent are not actually sent (in case the sender is
faulty), or which messages that were sent are not delivered (in case
it is the  
recipient that is faulty.) 

}
Formally, a failure pattern 
$\adversary$ is a pair $(\nonfaulty,F)$, where $\nonfaulty\subseteq
\Agents$ and 
$F$ is a mapping $F: \Nat \times \Agents \times \Agents \rightarrow \{0,1\}$. 
Here $\nonfaulty$ is the set of nonfaulty agents,
and  
$F(m,i,j)$ describes whether the message sent
from agent $i$ to agent $j$ in 
round $m+1$ is 
delivered.
(If it is not delivered, we assume that the message $\bot$ is
delivered instead.)
A failure model is a set of failure patterns.
The sending-omissions model $SO(t)$ for agents $\Agents$ is the set of
all failure patterns 
$(\nonfaulty,F)$ such that $|\faulty| \leq t$, so that there are at
most $t$ faulty agents,  and for all $m\in \Nat $
and $j \in \Agents$,  
if $F(m,i,j) = 0$ then $i \in \faulty$.
The crash-failures model is the special case where if $F(m,i,j) = 0$
then $F(m',i,j') = 0$ for all $m' > m$ and agents $j'$.

\commentout{
Formally, a failure pattern $F$ is a mapping $F: \Nat \times \Agents \times \Agents \rightarrow \{0,1\}$. 
Here the Boolean value $F(m,i,j)$ represents whether the message sent
from agent $i$ to agent $j$ at time $m$ is delivered.
(If it is not delivered, we assume that the message $\bot$ is
delivered instead.)
A failure model $\failures$ is a set of failure patterns.
The sending omissions model $SO(t)$ for agents $\Agents$ is the set of all failure patterns
$F$ such that there exist at most $t$ agents $i$ such that $F(m,i,j) = 0$ for some $m\in \Nat $ and $j \in \Agents$.

For the sending omissions model, the indexical set of nonfaulty agents
at a point $(r,m)$ with failure  pattern $F$ is 
defined as the set $\N(r,m)$ containing all agents $i$ that do not have a sending fault at any time, that is, for 
which $F(m',i,j) = 1$ for all $m'\in \Nat$ and $j\in \Agents$. Since this definition does not depend on the time, 
we may also write $\N(r)$ for this set. 
}

An \emph{action protocol} $P$ 
for an information-exchange protocol $\exchange$, 
is a tuple $(P_1, \ldots,P_n)$ containing, for each agent $i=1\ldots n$, 
a \emph{local action protocol} $P_i : L_i \rightarrow A_i$ mapping
the local states $L_i$ for agent $i$ in $\exchange$ to actions in $A_i$.

To connect these definitions to the semantic model of
Section~\ref{sec:knowledge}, we describe how an information-exchange
protocol $\exchange$, a failure model $\failures$, and an
action protocol $P$ determine 
a system $\R_{\exchange,\failures,P}$.  
In this system, the set $\Estates$ of possible local states of the
environment consists of the possible failure patterns.
The  local states $L_i$ of the agent are the states of the
information-exchange protocol for agent $i$.  
An \emph{initial state} $(s_e, s_1, \ldots , s_n)$ 
is a global 
state,
where $s_i\in I_i$ is an initial state of 
each 
agent $i$'s
information-exchange protocol.  
For each initial state, a run $r$ with that initial state is uniquely determined by the information-exchange protocol $\exchange$, 
the failure model $\failures$, and the action protocol $P$. In this run, 
the protocol $\exchange$, the failure pattern $\adversary$, and $P$
determine at each step, in order, 
what 
actions are taken, what messages are sent, and what messages are received.
Each agent updates its local state depending on the actions taken and the messages received in the round.  
Formally, the global state $r(k+1) = (s_e',s_1', \ldots ,s_n')$ at time $k+1$ is determined
from the global state $r(k) = (s_e,s_1, \ldots ,s_n)$ at time $k$ as follows: 
\begin{itemize} 
  \item $s_e' = s_e$ (so the failure pattern remains unchanged
         throughout the run).
\item For each 
pair of agents 
$i$ and $j$, let $m_{i,j}$ be the message that agent $i$ sends to $j$, 
given that it performs action $P_i(s_i)$ in state $s_i$, that is,  $m_{i,j} = \mu_i(s_i,P_i(s_i))(j)$. 
\item For each 
pair of agents
$i$ and $j$, let $m'_{i,j}$ be the result of applying
the failure pattern to the messages sent.
  Specifically,
suppose that $s_e = (\nonfaulty,F)$. If 
$F(k,i,j) =0$ then $m'_{i,j} = \bot$ and if $F(k,i,j)
=1$ then $m'_{i,j} = m_{i,j}$.  
\item Finally, for each agent $i$, 
$s'_i = \delta_i(s_i,P_i(s_i),(m'_{1,i}, \ldots,m'_{n,i}))$. 
\end{itemize}

The system $\R_{\exchange,\failures,P}$ consists of all 
runs
generated from some
initial state.
\commentout{
For the interpretation function $\pi$, we work with  atomic propositions $p$, for $x \in \{0,1\}$, 
which hold at a point $(r,m)$ with global state $r(m) = \langle s_e,s_1, \ldots, s_n\rangle$, 
(that is, $p\in \pi(r,m)$) 
as follows:
\begin{itemize} 
\item $\decided_i(x)$ holds  if $d_i(s_i) = x$, 
\item $\decides_i(x)$ holds if $P_i(s_i) = x$, 
\item $\init_i= x$ holds  if $\init_i(s_i) = x$,
\item $i \in \N$ holds if $i$ is nonfaulty in the environment state $s_e = F$. 
For $SO(t)$, this means that there do  not exist $j\in \Agents$ and $m\in \Nat$ such that $F(m,i,j)=0$. 
\end{itemize} 
We also use abbreviations $\decided_i$ for $\decided_i(0)  \lor \decided_i(1)$, 
$\decided(x)$ for $\bigvee_{i\in \Agents} \decided_i(x)$, 
and  
$\exists x$ (with $x \in \{0,1\}$) for $\bigvee_{i\in \Agents} \init_i = x$.
}

\section{Knowledge-Based Programs} \label{sec:kbps}

Knowledge-based
programs specify how an agent's actions are determined, given what the agent knows. 
As defined by Fagin et al. \cite{FHMV}, these programs are interpreted relative
to an \emph{interpreted context} that  
defines the global states, how they are updated as a result of actions, and an interpretation of 
atomic propositions. 
In our setting, we can take the interpreted context to be a tuple
$(\exchange,\failures,\pi)$ consisting of an  
information-exchange protocol $\exchange$,
a failure model $\failures$, and an
interpretation $\pi$ of atomic propositions in the set of all runs over global states constructed from $\exchange$
and $\failures$.  

For our purposes, it is convenient to take
knowledge-based programs to 
have the form $\kbp= (\kbp_1, \ldots,\kbp_n)$, 
where for each agent $i$, the local knowledge-based program $\kbp_i$ is in the 
language with grammar 
$$ \kbp_i ::= a_i ~| ~\text{if}~\phi_i~\text{then}~
\kbp_i~\text{else}~\kbp_i,$$ 
where $a_i$ denotes actions in the set $A_i$ of actions of agent $i$,
and  $\phi_i$ is a Boolean 
combination
of formulas of the form
$K_i\psi$. That is, the tests in agent $i$'s  
local knowledge-based program concern agent $i$'s knowledge.
Note that the truth of such a formula $\phi_i$ at a point $(r,m)$ in
an  interpreted system $\I$ depends only on agent $i$'s local state 
at that point. That is, for points $(r,m), (r'm')$ with $r_i(m) = r'_i(m')$, we have $\I,(r,m) \models \phi_i$ iff 
$\I,(r',m') \models \phi_i$. Given a local state $s$ of agent $i$,
we may therefore write $\I,s\models \phi_i$ 
to express that $\I,(r,m) \models \phi_i$ for all points $(r,m)$ of $\I$ with $r_i(m) = s$. 

To interpret a knowledge-based program semantically, we first define
how a knowledge-based program $\kbp= (\kbp_1, \ldots,\kbp_n)$ 
determines a concrete action protocol $\kbp^\I$ given an
interpreted system $\I$.  
For each agent $i$ and local state $s$ of $i$ in $\I$, we define $\kbp^\I_i(s)$ to be the action resulting
from executing the program $\kbp_i$ with its tests interpreted at
local state $s$ in $\I$.  
Formally,
we define $\kbp^\I$ by induction on the structure of $\kbp$, taking
$(a_i)^\I(s) = a_i$, and for $\kbp_i = $``$\text{if}~\phi_i~\text{then}~ \mathbf{Q}_i~\text{else}~\mathbf{R}_i$'',
we define $\kbp^I_i(s) = \mathbf{Q}_i^\I(s)$ if $\I,s\models \phi_i$, and 
$\kbp^I_i(s) = \mathbf{R}_i^\I(s)$ otherwise.

Given a knowledge-based program $\kbp= (\kbp_1, \ldots,\kbp_n)$ and a
concrete action protocol $P = (P_1, \ldots,P_n)$ 
for an information-exchange protocol $\exchange$, 
we say  that $P$ \emph{implements} $\kbp$ in the context
$\gamma = (\exchange,\failures,\pi)$  
if, for 
$\I 
= (\R_{\exchange,\failures,P},\pi)$, we have $P_i (s)=
\kbp^\I_i(s)$ for all agents $i = 1, \ldots, n$ and  
local states $s$ of agent $i$ that arise in $\I$.

\section{Eventual Byzantine agreement} \label{sec:spec}

We briefly review the specification of the eventual Byzantine agreement
problem that we consider in this paper.
The specification assumes that
each agent starts with an independently selected value $\init_i \in
\{0,1\}$. The actions in $A_i$ have the form $\decide_i(v)$, where $v
\in \{0,1\}$, 
as well as a ``do-nothing'' action $\noop$.
\commentout{
The formula $\decided_i(v)$ is true if $i$ has
performed the action $\decided_i(v)$ in the past;
We take $\decided_i$ to be an abbreviation for $\decided_i(0)  \lor
\decided_i(1)$. The formula
$\exists v$ is true (for $v \in \{0,1\}$) if $\init_j=v$ for some
agent $v$.)  (We asssume that the truth of $\decided_i(v)$ is 
encoded in the environment's state, and that each agent $i$'s initial 
value
is encoded $i$'a local state, so the truth of formulas can be
determined from the global state.) 
}%
We seek protocols (i.e.,
an information-exchange protocol and an action protocol) such
that every run satisfies the following  
four properties:
\begin{itemize} \item {\bf Unique Decision:} If agent $i$ performs
  an action $\decide_i(v)$ (for some $v$), then it does not later
  perform $\decide_i(1-v)$.
\item {\bf Agreement:}  
  If agents $i$ and $j$ are both nonfaulty,
   $i$ performs $\decide_i(v)$, and $j$ performs $\decide_j(v')$,
   then $v=v'$.  
\item {\bf Validity:} If  
  a nonfaulty agent
  $i$ performs $\decide_i(v)$ 
then 
$\init_j = v$ for some agent $j$.
\commentout{ 
In the case of benign failure models, such as crash and omissions failures, it may make sense to strengthen the definition of these safety properties to the following:
\begin{itemize} 
\item[] {\bf Agreement:}  If  $\decided_i(v)$ and $\decided_j(v')$ then $v=v'$. 
\item[] {\bf Validity:} If  $\decided_i(v)$ then $\ok(v)$.
\end{itemize} 
Intuitively, these properties require that even faulty nodes should make correct decisions, if they do decide. For example, 
{\bf Agreement} and {\bf Validity} state that  a node that is
faulty because it crashes, but makes a decision before it does so,  
are excused from making a correct decision that agrees with the decisions of the correct nodes. 
By contrast, {\bf Agreement} and {\bf Validity} hold faulty nodes to the same standard as nonfaulty nodes, when they do make a decision. 
This allows the system to remain in a safe state should the faulty node be repaired and rejoin the system.  

A stronger form of the agreement property requires that the nonfaulty nodes make their decisions simultaneously: 
\begin{itemize} 
\item {\bf Simultaneous-Agreement:} If  $\decides_i(v)$ then, at the same time, $\decides_j(v)$ for all $j\in \N$.
\item {\bf Simultaneous-Agreement(\N):} If $i\in \N$ and $\decides_i(v)$ then, at the same time, $\decides_j(v)$ for all $j\in \N$.
\end{itemize} 

This can be important in the context of real-time systems, where decisions are associated with actions that must be synchronized in the real world (e.g., controlling actuators in avionic or robotic systems).  
}%

\item {\bf Termination:} For all nonfaulty agents $i$,
    eventually  
$i$ performs $\decide_i(v)$ for some value $v\in \{0,1\}$. 
\end{itemize}  

To relate this specification to our formal
model, we define  
an \emph{EBA context} to be a tuple $(\exchange,\failures,\pi)$ 
consisting of  an information-exchange protocol $\exchange$, a failure model 
$\failures$, and an interpretation $\pi$ of a set $\Prop$ of
propositions,
such that the following conditions hold:
\begin{itemize}
  \item The local states in $\exchange_i$ have the form
    $\<\Time_i,\init_i, 
    \decided_i,\rd_i,\ldots\>$, 
    where $\Time_i \in \Nat, \init_i \in \{0,1\}, \decided_i \in
    \{0,1, \bot\}$ and 
    $\rd_i \in \{0,1,\bot\}$.
   Intuitively, $\rd_i = v$ if $i$ learned that some agent just
   decided $v$, for $v \in \{0,1\}$.
      \item The initial local states in $\exchange_i$ have the form
        $\<0,\init_i, 
      \bot,\rd_i^0, \ldots\>$, where 
      $\rd_i = \bot$.
 \item The message-selection 
 value $\mu_i(s,a)$ 
 satisfies the following
   constraint: 
   $i$ sends different messages in the
   following three cases: (a) $a = \decide_i(0)$,
      (b) $a = \decide_i(1)$,
and (c) the remaining cases.
    That means that 
    $j$  can tell from the message it receives
        from $i$ whether $i$ is 
        deciding 0 or 1 in the current round. 
Formally, this means that there are three disjoint sets 
$M^0$, $M^1$, and $M^2$ with 
$\bot \notin M^0\cup M^1$ 
such that if $a =
\decide_i(0)$, then $i$ sends each agent $j$ a message in 
$M^0$, 
if $a
= \decide_i(1)$,
then $i$ sends each agent $j$ a message in 
$M^1$, and otherwise, $i$
sends each agent $j$ a message in 
$M^2$. 
\item The transition function $\delta_i$, when given as input state 
    $s$, action $a$, and a message tuple $(m_1, \ldots,m_n)$,
increases the time component 
$\Time_i$
of $s$ by 1;
if 
$a = \decide_i(v)$, 
it sets 
  $\decided_i$ to $v$, and otherwise leaves $\decided_i$
  unchanged;
it also sets 
$\rd_i = 0$ if $i$ received a message in round $m$ from an agent that
performs  
action $\decide_i(0)$ in that round, sets
$\rd_i = 1$ if $i$ received a message in round $m$ from an agent that performs 
action $\decide_i(1)$ in that round, and otherwise sets $\rd_i = \bot$
  (the assumptions on $\mu_i$ ensure that such messages are
    distinguishable from other messages).
Note that the fact that the $\Time$ component increases
by 1 at every step ensures that the system is \emph{synchronous}; all
agents $i$ have $\Time_i = m$ at time $m$.
\item $\Prop$ contains at least the following propositions (all of
  which are 
  necessary to define the specification below) for each agent $i\in
  \Agents$:  
  \begin{itemize} 
\item $\init_i=v$ for $v\in\{0,1\}$,
  \item   $\decided_i = v$  for $v\in\{\bot,0,1\}$, 
\item   $\Time_i = k$ for $k \in \Nat$, and
  \item $i \in \N$;
\end{itemize}
\item $\pi$ interprets $\init_i=v$, $\decided_i=v$, and $\Time_i = k$ 
in the obvious way from $i$'s local state (e.g., $\pi(r,m)$
makes $\init_i=v$ true iff $r_i(m)$ has 
its second component $\init_i$ equal to $v$),
and interprets $i \in \N$ in the obvious way from $\failures$
(i.e., $\pi(r,m)$ makes $i \in \N$ true iff $i \in \N(r)$, where
$\N(r)$ is the set of nonfaulty agents in run $r$).
\end{itemize}
An EBA context satisfies some minimal properties that we expect all
contexts that arise in the analysis of EBA to satisfy.

We define $\decided_i$ to be an
abbreviation of $\decided_i=0 \lor \decided_i =1$,
take
$\jdecided_i= v$ to be an abbreviation 
for $\decided_i = v \land \ominus \decided_i = \bot$ (intuitively, $i$
just decided $v$),
take
$\deciding_i=v$ to be an abbreviation for
$\decided_i = \bot \land \Circ\decided_i = v$ (intuitively, $i$
is deciding $v$ in the current round)
and
take
$\exists v$, for $v\in \{0,1\}$ to be an abbreviation of
$\bigvee_{i\in \Agents}\init_i =v$.  

Given an  EBA context $\gamma = (\exchange,\failures,\pi)$ 
and an action protocol $P$ for $\exchange$, we define the system $\I_{\gamma,P} = (\R_{\exchange,\failures,P}, \pi)$.
To satisfy the informal specification above, we now seek an EBA-context $\gamma = (\exchange,\failures,\pi)$
and an action protocol $P$ for $\exchange$ such that the following are valid in 
the system $\I_{\gamma,P}$ for 
all agents $i$ and $j$: 
\begin{itemize} 
  \item {\bf Unique Decision:} $ \decided_i = v \rimp \future
\neg  (\decided_i = 1-v) $, for $v\in \{0,1\}$.  
\item {\bf Agreement:}  $ \neg (i \in \N \land j\in \N \land
  \decided_i = v \land \decided_j = v')$, for $v\neq v'$.  
  \item {\bf Validity:} $(\decided_i = v  \land i \in \N) \rimp \exists v$
  \item {\bf Termination:} $i \in \N \rimp \Diamond(\decided_i \ne \bot)$.
\end{itemize} 
If these conditions are satisfied, we call $P$ an \emph{EBA
decision protocol} for the context $\gamma$.  
The tuple $(\exchange,P,\pi)$ is an \emph{EBA-protocol} for failure model $\failures$.  
That is, a protocol solving EBA in the failure model consists of an information-exchange protocol, 
an action protocol that makes decisions, and an interpretation of the basic propositions. 

We are  interested in protocols that are optimal
given the information that is maintained 
by the information-exchange protocol. The following definitions formalize this notion. 
Runs $r,r'$ of two action protocols 
$P,P'$,  respectively,
\emph{correspond} 
if $r(0) = r'(0)$. That is, the two runs have the same failure pattern
and the same initial states for all agents.
Recall that the initial global state of a run, the information-exchange, and
the action protocol together
determine the complete run. 
An action protocol $P$ \emph{dominates} action protocol $P'$  
with respect to a context  $\gamma = (\exchange,\failures,\pi)$, written
 $P' \leq_\gamma P$
if, for all corresponding runs 
$r\in \R_{\exchange,\failures,P}$ and $r'\in \R_{\exchange,\failures,P'}$
and all agents $i$ that are nonfaulty in $r$ and times $m$, if
$P_i(r_i(m)) = \decide_i(v)$ for 
$v\in \{0,1\}$ then  
$P'_i(r'_i(m')) \neq \decide_i(w)$ for any $m' < m$ and $w\in \{0,1\}$. 
That is, $P$ makes it decisions no later than $P'$.   
$P$ \emph{strictly dominates} $P'$ with respect to $\gamma$ if $P'
\leq_\gamma  P$ and it is not the case that that $P \leq_\gamma  P'$.
An EBA decision protocol $P$ is \emph{optimal} with respect to an  
EBA context $\gamma$ 
if no EBA decision protocol 
for $\gamma$
strictly dominates $P$.

\commentout{
\section{An impossibility result for fast-deciding optimal protocols} \label{sec:impossibility}
In this section, we formalize why techniques for designing optimal EBA protocols
in the presence of crash failures do not carry over to our model that allows
omission failures to occur. Suppose we start with the program that says: if
$K_i(\exists 0)$ holds then decide 0, and if $K_i(\neg \exists 0)$ holds then
decide 1. As we observed in the introduction, this protocol fails to terminate
in the presence of omission failures when a faulty agent withholds sending a
message throughout the run, even though it is shown to be optimal in the
presence of crash failures in \cite{CGM14}. 
However, this does not immediately suggest that every EBA protocol that includes
the decision rule "if $K_i(\exists 0)$ then $\decide_i(0)$" does not terminate.
It is conceivable that the nonfaulty agents can reach consensus without ever
hearing from the faulty agents, since our specification only requires the
nonfaulty agents to eventually decide.

Indeed, an improvement is possible and it can be clearly seen by considering the
following scenario: suppose all agents have an initial preference of 1, and no
faulty agent ever sends a message. Notice that the nonfaulty agents can't decide
in this case as every nonfaulty agent would be concerned about another nonfaulty agent
suddenly hearing from one of the silent agents. They would, however, given
enough time, reach consensus among themselves that (a) no nonfaulty agent decided
0 in a previous round, and (b) at least one agent has an initial preference of
1. This suggests that we can improve this protocol to get a strictly better one.

When describing a knowledge-based program $\kbp$, we usually have in
mind a fixed set of $n$ agents, and write $(\kbp_1,\ldots,\kbp_n)$,
where $\kbp_i$ denotes agent $i$'s component of $\kbp$.  But when
analyzing EBA, we think of $n$ as a parameter,
and think of an agent as following the same knowledge-based
program no matter how many agents
there are in the system.  Thus, we describe the knowledge-based
program $\kbp^0_i$ for agent $i$ below, which we intend to be run by
agent $i$ in all systems that include agent $i$.

Let $\kbp^f_i$ be the following knowledge-based program for agent $i$:

\begin{program}
  \DontPrintSemicolon
  \lIf{$\decided_i\neq \bot$}{$\noop$}
  \lElseIf{$K_{i}(C_{\N}(\nodecided_{\N}(0) \land \exists 1))$}
    {$\decide_i(1)$}
  \lElseIf{$K_i (\exists 0)$}{$\decide_i(0)$}
  \lElseIf{$K_i (\neg \exists 0)$}{$\decide_i(1)$}
  \lElse{$\noop$}
  \caption{$\kbp^{f}_i$}
\end{program}
\noindent where in addition to the rules that were informally discussed above, we specify
rules for $\noop$'s performed by agents that can't decide yet or have already
decided.

\kayacomment{I'm thinking about whether $C_\N(\nodecided_\N(0) \land \exists 1) 
  \Leftrightarrow C_\N(\faulty \land \nodecided_\N(0) \land \exists 1)$. 
  If this is the case, maybe mentioning this in the later discussion of 
  $C_\N(\faulty)$ could connect the narrative better. }

It then follows that implementations of $\kbp^f$ with respect to
$\gamma_{\fip,n,t}$ satisfy the weak-EBA specification under full information
exchange. However, there are still runs where the nonfaulty agents never decide.
Let $\gamma_{\fip,n,t}$ be a full-information context.
\begin{lemma}
  For all implementations $P$ of the knowledge-based program $\kbp^{f}$ with
  respect to $\gamma_{\fip,n,t}$, the interpreted system $\I =
  \I_{\gamma_{\fip,n,t},P}$ satisfies Unique-Decision, Agreement, and Validity,
  but not Termination.
\end{lemma}
\begin{proof}
  \kayacomment{The proof is not typed up yet.}
\end{proof}

This leads to a natural follow-up question: Can $\kbp^f$ still be improved in
the hopes of geting a terminating version? Clearly, if we show that $\kbp^f$ is
optimal with respect to $\gamma_{\fip,n,t}$, it would demonstrate that no such
improvement exists. This turns out to be the case.
\begin{theorem} \label{thm:f0-optimal}
  All implementations $P$ of the knowledge-based program $\kbp^{f}$ with respect
  to $\gamma_{\fip,n,t}$ are optimal.
\end{theorem}
\begin{proof}
  \kayacomment{The proof is not typed up yet.}
\end{proof}

Note that the notion of optimality does not require Termination since optimality
is defined in terms of "unbeatability" and a non-terminating agent is simply
beaten by any terminating one \cite{HMW}. Usually, non-terminating optimal protocols 
would be of little interest to us since EBA requires termination. However, in this specific case, 
$\kbp^f$ reveals how preferering 0 too agressively results in an incorrect protocol.
As an immediate consequence to Theorem~\ref{thm:f0-optimal}, we see
that any protocol $\kbp$ that decides when $K_i(\exists 0)$ holds cannot satisfy
termination.
\begin{corollary}
  If for a weak-EBA protocol $P$, $\I_{\gamma_{\fip,n,t},P} \models K_i(\exists 0)
  \Rightarrow \Circ(\neg(\decided_i = \bot))$, then $P$ cannot satisfy Termination.
\end{corollary}
\begin{proof}
  \kayacomment{The proof is not typed up yet.}
\end{proof}
This shows that agents running a $0$-biased optimal EBA
protocol can't immediately decide when they learn that some agent had an initial
preference for $0$. Therefore, in the optimal protocol design process, we can't
only restrict our search to optimizing the decision rule for 1 as in the crash
failure case. This highlights the increased complexity of the problem in our
model. 
}

\section{Optimal EBA with respect to limited information exchange} \label{sec:opt-lim-ie}

In this section, we describe 
a knowledge-based program for EBA that is somewhat biased towards 0,
show that it is correct, and show that it is 
optimal with respect to all information-exchange protocols that
satisfy a certain safety condition.
As discussed in the introduction, there is no protocol for EBA in the
presence of omission failures where an
agent decides 0 as soon as it hears that some agent had an initial
preference of 0.  So we consider instead a program where an agent
decides 0 if it hears that some agent had an initial preference of 0
via a chain of agents; this is essentially the condition used to
decide 0 in the crash-failure case.

\commentout{
We consider two EBA decision protocols that implement this program and 
are optimal for the particular information exchange that they use.
This is interesting not
only because it illustrates the concept of optimality with respect to
limited information exchange, involves short messages, and requires little
space, but 
in the process we develop sufficient conditions
on the information exchange for the protocol to be optimal. 

The intuition behind the 
knowledge-based program 
is straightforward: it is biased 
towards deciding 0 as early as possible, 
so that if agent $i$ hears that from agent $j$ that $j$ has decided 0,
and $i$ considers it possible that $j$ is nonfaulty, then $i$ also
decides 0.
A key feature of the program is that, once an agent decides 0, it
sends a message to all agents saying this, and then terminates, sending
no further messages.

So when does an agent $i$ decide 1?  At the epistemic level, there is
a simple description: it decides 1 when $i$ knows that no nonfaulty
agent will decide 0.  It turns out that this epistemic description can be
implemented easily.

When describing a knowledge-based program $\kbp$, we usually have in
mind a fixed set of $n$ agents, and write $(\kbp_1,\ldots,\kbp_n)$,
where $\kbp_i$ denotes agent $i$'s component of $\kbp$.  But when
analyzing EBA, we think of $n$ as a parameter,
and think of an agent as following the same knowledge-based
program no matter how many agents
there are in the system.  Thus, we describe the knowledge-based
program $\kbp^0_i$ for agent $i$ below, which we intend to be run by
agent $i$ in all systems that include agent $i$.  Similarly,
we describe information-exchange protocols $\exchange_i$ and EBA decision
protocols $P_i$ for agent $i$, without specifying the number of agents
in the system.
We later write $\exchange(n)$ to denote an information-exchange
protocol where $n$ agents are involved.

Let $\kbp^0_i$ be the following knowledge-based program for agent $i$:

\commentout{\begin{itemize} 
    \item[] If $\decided_i\neq \bot$ then $\noop$
    \item[] else if $\init_i = 0 \lor 
      K_i   (\bigvee_{j \in \Agents} (\decided_j =0 \land \ominus 
       \decided_j=\bot \land \neg K_i \neg (j \in \N 
       \land \decided_j =0 \land \ominus 
       \decided_j=\bot))) $\\
       \mbox{\ \ \ \ \ }  then    $\decide_i(0)$
        \item[] else if  $K_i(\bigvee_{j \in \Agents}  (\decided_j =1 \land \ominus 
          \decided_j=\bot \land \neg K_i \neg (j \in \N 
          \land
          \decided_j=1 \land \ominus \decided_j = \bot)))\\
  \mbox{\ \ \ \ \ } \lor 
  K_i(\bigwedge_{j \in \Agents} 
    \neg (\deciding_j = 0 
  \land \neg K_i \neg (j \in \N 
          \land         \decided_j=0 \land \ominus \decided_j = \bot)))$\\
   \mbox{\ \ \ \ \ }    then    $\decide_i(1)$ 
    \item[] else $\noop$.
  \end{itemize} }
\begin{program}
  \DontPrintSemicolon
    \lIf{$\decided_i\neq \bot$}{
      $\noop$
  }
  \lElseIf{
    $\init_i = 0 \lor  \bigvee_{j \in \Agents} (\neg K_i \neg (j
    \in \N \land  \jdecided_j = 0
        ))$
  }{
    $\decide_i(0)$
  }
  \lElseIf{
    $K_i(\bigvee_{j \in \Agents}  (
        \jdecided_j = 1
      \land \neg K_i \neg (j \in \N 
    \land
        \jdecided_j = 1
    )))$ \\ \hspace{2em}
        $\lor \; K_i(\bigwedge_{j \in \Agents} 
    \neg (
        \deciding_j = 0
              ))$
  }{
    $\decide_i(1)$
  }
  \lElse{$\noop$}
  \caption{$\kbp^0_i$}
\end{program}

\noindent In words: as long as $i$ hasn't already decided, 
then $i$ decides 0 if it has an initial preference
of 0 or $i$ hears that someone it thinks might be 
nonfaulty just decided 0; $i$ 
decides 1 if it knows it is faulty or it hears that someone that
it thinks might be nonfaulty just 
decided 1 or it knows that no agent can be currently deciding 0;
otherwise, it does nothing.

} %

\commentout{
It is useful to reason about the pattern in which the information about the
initial preferences of agents spread along the messages. In the presence of
crash failures, if an agent learns about an initial preference $v$
for the first 
time, that value necessarily has to reach that agent through a chain of faulty
agents where the failures occur at distinct times. This notion of a chain can be
extended to the omission failure case in order to emulate the information
pattern of crash failures, forming a solution to the problem outlined in
Section~\ref{sec:impossibility}.

In light of this observation, we now describe a knowledge-based program for EBA
where the decision rules are characterized by chains carrying information about
0's. We consider two EBA decision protocols that implement this program and are
optimal for the particular information exchange that they use. This is
interesting not  only because it illustrates the concept of optimality with
respect to limited information exchange, involves short messages, and requires
little space, but in the process we develop sufficient conditions on the
information exchange for the protocol to be optimal. 
}

A sequence $i_0, \ldots, i_{m}$ of distinct agents is a \emph{0-chain
of length $m$ in run $r$ of interpreted system $\I$} 
if (a)
$\I,(r, 0) \models \init_{i_0} = 0$, (b) for all $m'$ with 
$0 \le m' \le m$, agent $i_{m'}$ first decides 0 in round $m'+1$ of $r$, and (c)
for all $m'$ with $1 \le m' \le m$,
$i_{m'}$ knows at the point $(r,m')$ that $i_{m'-1}$ has just decided 
0.
  We say that \emph{an agent $i$ receives a 0-chain in round $m$} in
  run $r$   if there is a 
  $0$-chain of length $m$ that ends with
  agent $i$ in run $r$.

Variants of what we call a 0-chain exist in the literature \cite{CGM14, HMW}.
The 0-chain definition in \cite{CGM14}, defined for crash failures,
requires only that $i_{m'}$ receives a message 
from 
$i_{m'-1}$ in round
$m'$ for $m' \ge 1$.
The 0-chain definition in \cite{HMW} 
requires that $i_m$ receives a message from $i_{m-1}$ in round $m'$
for $m' \ge 1$ and $i_m$ considers it possible that $i_{m-1}$ is
nonfaulty when it receives the message (this will automatically be the
case with crash failures, which is all that are considered in
\cite{CGM14}, but is not necessarily the case with omission failures,
which are considered in \cite{HMW}); moreover, $i_m$ is required to be
nonfaulty.  

Let $\kbp^0_i$ be the following knowledge-based program for agent $i$:

\begin{program}
  \DontPrintSemicolon
  \lIf{$\decided_i\neq \bot$}{$\noop$}
  \lElseIf{$\init_i = 0
    \; \lor \;
            K_i (\bigvee_{j \in \Agents} \jdecided_j = 0)$
  }{$\decide_i(0)$}
  \lElseIf{$K_i(\bigwedge_{j \in \Agents} \neg (\deciding_j = 0))$}
  {$\decide_i(1)$}
  \lElse{$\noop$}
  \caption{$\kbp^0_i$}
\end{program}
\noindent In words: as long as $i$ hasn't already decided, then $i$ decides 0
if it has an initial preference of 0 or knows that someone just decided 0;
$i$ decides 1 if it knows that no agent can be currently deciding 0; otherwise,
it does nothing.
$\kbp^0$ is essentially the same as the knowledge-based program used
by Casta\~{n}eda et al.~\cite{CGM14} in the case of crash failures.
We will show 
that the second condition for deciding 0, that $i$ knows that someone
has just decided 0, holds iff $i$ receives a 0-chain.  It follows that
if $i$ hasn't already decided and is not deciding 0, then $i$ decides 1
iff $i$ knows that no agent is receiving a 0-chain.  This latter
condition is very close in spirit to Castan\~{n}eda et al.'s notion of
there being no \emph{hidden paths}.

$\kbp^0$ satisfies all the EBA properties
in all EBA contexts.
\commentout{
Define a \emph{standard} EBA information-exchange protocol to be one
where, for each agent $i$, 
the following hold:
\begin{itemize}
\item The local states have the form $\<\Time_i,\init_i,
    \decided_i,\rd_i,\ldots\>$, where $\rd_i : \{0,1\} \rightarrow \P(\Agents)$.
    Intuitively, $\rd_i(v)$ is the set of agents that $i$ learned have just
    decided $v$, for $v \in \{0,1\}$.
 \item The initial local states have the form $\<0,\init_i,
      \bot,\rd_i^0, \ldots\>$, where $\rd_i^0(v) = \emptyset$ for $v
      \in \{0,1\}$.
   \commentout{
 \item $M_i$ includes messages of the form ``$\deciding_i = v$'', for $v
   \in \{0,1\}$, and $\init_i = 1$
   }
 \item The message-selection function $\mu_i$ satisfies the following
   constraint: 
   $i$ sends different messages in the
   following three cases: (a) $a = \decide_i(0)$,
      (b) $a = \decide_i(1)$,
and (c) the remaining cases.
    That means that 
    $j$  can tell from the message it receives
        from $i$ whether $i$ is about to decide 0 or 1. 
Formally, this means that there are three disjoint sets $M_0$, $M_1$,
and $M_2$ with $\bot \notin M_1 \cup M_2$ such that if $a =
\decide_i(0)$, then $i$ sends each agent $j$ a message in $M_0$, if $a
= \decide_i(1)$,
then $i$ sends each agent $j$ a message in $M_1$, and otherwise, $i$
sends each agent $j$ a message in $M_2$.
\commentout{
If $i$ knows that it is nonfaulty, it sends whatever message it would
send if it was not deciding either 0 or 1.  
As expected, we take $i$ to consider it possible that it is nonfaulty
in local state $s$ when it is running protocol
$P$ in context $\gamma$ if $\I_{\gamma,P} ,s \models \neg K_i \neg (i \in \N)$
(note that truth of the formula $\neg K_i \neg (i \in \N)$ at a point
$(r,m)$ depends only on $i$'s local state).  Given a concrete
information-exchange protocol and decision protocol, we can express
this in a knowledge-free way.
For example, if the local state does
not keep track of which agents are faulty at all (as is the case for
the two concrete information-exchange protocols we consider later),
this condition is vacuously true.  For the FIP, it will be true in $i$
has not received any message from another agent $j$ saying that $j$
has not received a message from $i$.
Intuitively, because for optimality we do not care what happens for
agents that are faulty, we want to ignore what agents who know they
are faulty do, so we essentially treat them as not participating.  
This special treatment of agents who know that they are faulty is
significant only when trying to show that $\kbp^0$ is optimal, and
does not arise in our first two implementations of $\kbp^0$, where we
consider contexts where an agent does not keep track of who is faulty
(see Theorems~\ref{thm:ImplementsSimple} and \ref{thm:implement2}).
\commentout{
\roncomment{There are still a number of things that I don't like about this part. 
\begin{enumerate} 
\item  The epistemic aspect: a property of a context should depend on the context alone, not the 
context and a concrete protocol. This matters for synthesis, where we want Context+KBP to 
generate the protocol, and prefer a result that says "if context satisfies conditions X then the protocol
generated from Context+KBP satisfies the EBA specification". I think this definition could be reformulated
in a non-epistemic way by lifting the knowledge test into the KBP and having two actions $\decide^*_i(1)$ 
(decide 1, knowing that you are faulty) and $\decide_i(1)$ (decide 1,
not knowing if you are faulty.)
\item The definition does not match what seems to be needed for $\delta$ below. We could have the 
same message being sent when $K_i(\neg i \in N)$ and $\decide_i(1)$ as
when $i$ does $\noop$.
\end{enumerate}}
}%
\commentout{
  Partition the set of (local state, action) pairs for agent $i$ into
two sets: $L_0$, the pairs where the local state has the form
$\<0,0,\bot, \ldots\>$ or the action is $\decide_i(0)$, and $L_0'$, the
remaining pairs.
We require that the messages sent by $i$ to $j$ 
according to $\mu_{ij}$ are distinct for $L_0$ and $L_0'$, that
'is, if $(s_1,a_1) \in L_0$ and $(s_2,a_2) \in L_0'$, 
then $\mu_{ij}(s_1,a_1) \ne \mu_{ij}(s_2,a_2)$.  Thus, it can be
inferred from the message sent which of these two sets the
state-action pair is in (and, specifically, whether $i$ is deciding 0).
}
} %
\item The transition function $\delta_i$, when given as input state $s$
increases the time component of $s$ by 1, so that at time $m$ it will be $m$.
Then in round $m$,  it changes the value of
  $\decided_i$ from $\bot$ to $v \ne \bot$ if $i$ performs the action 
$\decide_i(v)$ in round $m$, and otherwise leaves $\decided_i$
  unchanged;
it also sets $\rd_i(v)$ to consist of all agents $j$ from whom $i$
received a message in round $m$ that it would have received only if
$j$ performs the action $\decide_j(v)$ in round $m$, for $v \in \{0,1\}$
  (the assumptions on $\mu_i$ ensure that such messages are
    distinguishable from other messages).
Note that the fact that the $\Time$ component increases
by 1 at every step ensures that the system is \emph{synchronous}; all
agents agree on the time.
\end{itemize}
A standard information-exchange protocol just satisfies some minimal
properties.  We expect all information-exchange protocols of interest
to be standard.

We take an interpretation $\pi$ to be \emph{compatible} with a
standard  information-exchange protocol $\exchange$ and set $\failures$ of
failure patterns 
if for all runs $r$ that
can be constructed from $\exchange$ and $\failures$, we have that
$\pi$ interprets $\init_i=v$, $\decided_i=v$, and $\Time_i = k$ 
in the obvious way from $i$'s local state (e.g., $\pi(r,m)$
makes $\init_i=v$ true iff $r_i(m)$ has $\init_i=v$ as its second
component),
and interprets $i \in \N$ in the obvious way from $\failures$
(i.e., $\pi(r,m)$ makes $i \in \N$ true iff $i \in \N(r)$, where
$\N(r)$ is 
the set of nonfaulty agents in run $r$).
We take a \emph{standard EBA-context} to be a tuple $\gamma =
(\exchange,\failures,\pi)$, where $\exchange$ is a standard
information-exchange protocol and $\pi$ is compatible with $\exchange$ and
$\failures$.  

}

\begin{proposition} \label{prop:kbp0correct}
  If $\gamma=(\exchange, \failures, \pi)$ is an EBA context,
  then 
all implementations of the knowledge-based program $\kbp^0$ with
respect to $\gamma$ are EBA decision protocols for $\gamma$.
Indeed, all implementations of $\kbp^0$ terminate after at most $t+1$
    rounds of message exchange and Validity holds even for faulty agents.
\end{proposition}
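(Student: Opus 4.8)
I would prove the four EBA properties (Unique Decision, Agreement, Validity, Termination) separately for an arbitrary implementation $P$ of $\kbp^0$ in an arbitrary EBA context $\gamma = (\exchange,\failures,\pi)$ with $\failures = SO(t)$, writing $\I = \I_{\gamma,P}$.  The cleanest route is to first establish a structural lemma relating the knowledge-test ``$K_i(\bigvee_j \jdecided_j = 0)$'' to the existence of a $0$-chain ending at $i$, since the program text is phrased in terms of knowledge but the correctness argument is really about chains of messages.  Concretely I would show: (i) if agent $i$ performs $\decide_i(0)$ at round $m$ with $\init_i = 1$, then $i$ receives a $0$-chain of length $m-1$ in round $m-1$ (so in particular some agent had initial preference $0$); and (ii) a $0$-chain has length at most $t$, because the agents $i_0,\dots,i_{m'}$ in a $0$-chain are distinct and each $i_{m'}$ with $m' \geq 1$ must be faulty --- it decides $0$ in round $m'+1$ while the next agent in the chain learns this only via a message, which in $SO(t)$ requires a sending omission somewhere downstream, and a standard induction shows all of $i_0,\dots,i_{m-1}$ are faulty when the chain has length $m$.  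This caps $0$-chains at length $t$, hence the ``decide $0$'' branch can fire only up to round $t+1$.

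\textbf{Validity.}  If $i$ performs $\decide_i(v)$, I trace back why.  For $v = 0$: either $\init_i = 0$ (done), or $K_i(\bigvee_j \jdecided_j = 0)$ held, and then by the chain lemma there is a $0$-chain, whose first agent $i_0$ has $\init_{i_0} = 0$, so $\exists 0$ holds.  For $v = 1$: $i$ decided $1$ only because $K_i(\bigwedge_j \neg \deciding_j = 0)$; I claim this forces $\exists 1$.  Suppose not, i.e.\ all initial preferences are $0$; then every agent with $\init_j = 0$ decides $0$ in round $1$ (the first guard fires at time $0$), so $\deciding_j = 0$ holds at time $0$ for such $j$, contradicting what $i$ knows (here one uses that $\I,(r,m)\models K_i\psi$ implies $\psi$, and that the all-$0$ initial configuration is indistinguishable to $i$ at time $0$ from... — actually the cleanest argument: in the actual run, if all $\init_j = 0$ then $\bigvee_j \deciding_j=0$ is true at round $0$, and it is true in every run $i$ considers possible with that same property, but more carefully one argues $i$ can only know $\bigwedge_j\neg\deciding_j=0$ at a time $m\ge 1$, and by that time if $\exists 1$ failed, some agent already decided $0$ in round $1$ — I would phrase this via the contrapositive on $i$'s possibility relation).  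This is the one spot needing care; I expect it is the \textbf{main obstacle}, because it requires reasoning about what $i$ \emph{doesn't} know rather than a direct structural trace, and one must handle the fact that $i$ may itself be faulty and have missed messages.

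\textbf{Unique Decision, Agreement, Termination.}  Unique Decision is immediate from the program: once $\decided_i \neq \bot$, the first guard fires and $i$ performs $\noop$ forever, so $\decided_i$ never changes.  For Agreement, suppose $i,j$ nonfaulty, $i$ decides $0$ and $j$ decides $1$; since $j$ is nonfaulty and decided $1$, at $j$'s decision round $K_j(\bigwedge_k \neg\deciding_k = 0)$ held, so no agent — in particular not $i$ — is deciding $0$ in that round; combined with the fact (to be proven by induction on rounds, as above) that $i$ decides $0$ only via a $0$-chain and all members of a live $0$-chain up to the last are faulty while the last may be nonfaulty, one derives that $j$ would have had to learn about $i$'s impending decision, contradiction; I would structure this as: the earliest round at which any nonfaulty agent decides $1$ comes after every round at which a $0$-chain can propagate, using the length-$\le t$ bound and a ``no hidden path'' argument that a nonfaulty agent deciding $1$ implies it has heard from enough agents.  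Finally Termination: by round $t+1$, every $0$-chain has terminated (length $\le t$), so for any nonfaulty $i$ still undecided at round $t+1$, no agent is deciding $0$ in round $t+1$ or later; a nonfaulty agent hears all messages sent to it, so $i$ knows this (it can compute from the bound that no further $0$-decisions are possible given its view), whence the second guard fires and $i$ decides $1$ by round $t+1$.  I would then note that the $t+1$ bound and the fact that Validity was argued for \emph{all} agents $i$ (the traceback arguments never used $i\in\N$) give the ``indeed'' clause.
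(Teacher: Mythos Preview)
Your overall decomposition matches the paper's, and your chain lemma (i) is exactly what the paper proves first.  But lemma (ii) contains a genuine error, and it propagates.

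You claim each $i_{m'}$ in a $0$-chain is faulty because ``the next agent in the chain learns this only via a message, which in $SO(t)$ requires a sending omission somewhere downstream.''  This is backwards: a message being \emph{received} requires no omission at all --- nonfaulty agents' messages are always delivered.  The correct argument (which the paper runs in contrapositive form for Termination) is: if $i_k$ were \emph{nonfaulty} for some $k\le m-2$, then its decide-$0$ message in round $k+1$ would reach \emph{every} agent, not just $i_{k+1}$; in particular $i_m$ (still undecided at time $k+1<m$) would decide $0$ in round $k+2\le m$, contradicting that $i_m$ first decides in round $m+1$.  So only $i_0,\dots,i_{m-2}$ are forced to be faulty, giving $m\le t+1$, not $m\le t$; the last $0$-decision can occur in round $t+2$, and your Termination bound is off by one.

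You also misidentify the ``main obstacle.''  Validity for $v=1$ is a one-liner in the paper: if $i$ decides $1$ then $i$ did not decide $0$ in round $1$; but at time $0$ the test $\init_i=0 \lor K_i(\bigvee_j\jdecided_j=0)$ reduces to $\init_i=0$ (no one has decided yet), so $\init_i=1$, hence $\exists 1$.  No indistinguishability argument is needed, and this covers faulty $i$ automatically.

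Finally, your Agreement sketch leans on the faultiness claim where the paper does not use it at all.  The paper simply splits on $m_j\le m_i$ versus $m_j>m_i$: in the first case, the $0$-chain to $i$ places some agent in state $\deciding=0$ at time $m_j$, so (knowledge implies truth) $K_j(\bigwedge_k\neg\deciding_k=0)$ is false and $j$ cannot decide $1$; in the second, $i$ is nonfaulty, so $j$ receives $i$'s decide-$0$ message and would itself decide $0$.  Your ``no hidden path'' gesture is in the right spirit for the first case but the faultiness detour is neither needed nor, as stated, correct.
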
 

We next prove that some implementations of $\kbp^0$ are
actually optimal EBA 
decision
protocols
in certain EBA contexts.
Instead of individually proving optimality of $\kbp^0$ with respect to specific 
information exchanges, we first give a sufficient condition for optimality
in EBA contexts. We then look at two specific contexts  
that satisfy this safety property and show that implementations of $\kbp^0$
in them are optimal.

\begin{definition}[safety]
A knowledge-based program $\kbp$ is \emph{safe with respect to 
an EBA context $\gamma = (\exchange,\failures,\pi)$} if, for all
implementations $P$ of 
$\kbp$ with respect to $\gamma$ and
all points $(r,m)$ of $\I= (\R_{\exchange,\failures,P},\pi)$, the
following two conditions hold:
\begin{enumerate}
\item If $i$ has not received a 0-chain by $(r,m)$,
then there exists a point $(r',m)$ such that 
$r_i(m)= r'_i(m)$ and all agents have initial preference 1 in $r'$.
    \item If $\I, (r,m) \models \neg
        K_i(\bigwedge_{j \in \Agents}  
        \neg (\deciding_j =0 ))$ and $i$ does not decide before
        round $m+1$ in $r$, then
   there exists a
  point $(r',m)$ such that:
  \begin{enumerate}
    \item $r_i(m)= r'_i(m)$,
    \item $i$ is nonfaulty in $r'$,
    \item some agent $j$ that is nonfaulty in $r'$ decides 0 in round
      $m+1$ of $r'$;
      moreover, if $m \ge 1$, there exists a run $r''$ and an agent
      $j'$ such that $j$ and $j'$ are
      nonfaulty in $r''$, $r'_j(m) = r''_j(m)$, and 
      $j'$ decides 0 in round $m$ in round $r''$.
        \end{enumerate}
\end{enumerate}
\end{definition}

Intuitively, (1) says that the only way that an agent learns that some 
agent had an initial preference of 0 is via a
0-chain; thus, (1) implies that if $i$ has not received a 0-chain by
$(r,m)$, then $I, (r,m) \models \neg K_i \exists 0$.  Clause
(2) says that the only way an agent is unable to 
decide 1 (i.e., the test for deciding 1 in $\kbp^0$ does not hold) is if it 
considers it possible that some 
nonfaulty
agent is 
deciding 0; thus, (2) implies that if $\I, (r,m) \models
\neg K_i (\bigwedge_{j \in \Agents} \neg (\deciding_j = 0))$ and $i$ has
not decided by round $m$, then
$\I, (r,m) \models
\neg K_i (\bigwedge_{j \in \N} \neg (\deciding_j = 0))$.

Note that a knowledge-based program will in general not be safe with
respect to an FIP, since 
an agent $i$ may learn that some agent $j$ had an initial preference of 0 
without
receiving a 0-chain. As a result, 
the first condition will not hold, 
since 
at all points that $i$ considers possible, 
$j$ has an initial preference of 0. 
(In Section~\ref{sec:fip-optimal}, we show that a small modification
of $\kbp^0$ is optimal even with full information exchange.)
But, as we shall see, $\kbp^0$
is safe with 
respect
to two EBA contexts 
of interest, where
agents do not keep track of who is faulty.

\begin{theorem}\label{thm:safe}
    If $\gamma$ is 
  an  EBA context and $\kbp^0$ is safe with respect to $\gamma$, then
all implementations of $\kbp^0$ are optimal with respect to $\gamma$.
\end{theorem}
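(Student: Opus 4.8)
The plan is to fix an implementation $P$ of $\kbp^0$ with respect to $\gamma$ (an EBA decision protocol by Proposition~\ref{prop:kbp0correct}), take an arbitrary EBA decision protocol $P'$ for $\gamma$ with $P\leq_\gamma P'$, and show $P'\leq_\gamma P$; since this forces $P$ and $P'$ to decide at exactly the same time for every nonfaulty agent in every pair of corresponding runs, no EBA decision protocol strictly dominates $P$, so $P$ is optimal. Suppose instead $P'\not\leq_\gamma P$. Then there are corresponding runs $r\in\R_{\exchange,\failures,P}$ and $r'\in\R_{\exchange,\failures,P'}$, a nonfaulty agent $i$, and times $m'<m$ such that $i$ first decides at time $m$ in $r$ and first decides at time $m'$ in $r'$. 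Choose such data with $m'$ minimal; call this minimum $M$ (note $M<m\le t+1$ by Proposition~\ref{prop:kbp0correct}).

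From $\kbp^0$ one reads off what ``$i$ has not decided at time $M$ in $r$'' means: at $(r,M)$ we have $\decided_i=\bot$ and both $\kbp^0$-tests fail, so (a) $\init_i=1$; (b) $i$ has received no $0$-chain by round $M$ in $r$ (using the equivalence, established around Proposition~\ref{prop:kbp0correct}, between $K_i(\bigvee_j \jdecided_j=0)$ and the receipt of a $0$-chain), so safety condition~(1) yields a $P$-run $\hat r$ with $\hat r_i(M)=r_i(M)$ in which all agents have initial preference $1$; and (c) $\I_{\gamma,P},(r,M)\models\neg K_i(\bigwedge_j\neg(\deciding_j=0))$, so safety condition~(2) yields a $P$-run $\bar r$ with $\bar r_i(M)=r_i(M)$, $i$ nonfaulty in $\bar r$, and an agent $j$, nonfaulty in $\bar r$, that first decides $0$ at round $M+1$ of $\bar r$, together with the recursive $0$-chain witnesses when $M\ge 1$. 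Now split on the value $w$ that $i$ decides at time $M$ in $r'$ under $P'$. If $w=0$: passing to the $P'$-run corresponding to $\hat r$, the coupling lemma (below) keeps $i$'s local state at time $M$ equal to its local state in $r'$, so $i$ still decides $0$ there while all initial preferences are $1$; since $i$ is still nonfaulty, this violates Validity for $P'$. If $w=1$: passing to the $P'$-run $\bar s$ corresponding to $\bar r$, the coupling lemma keeps $i$ deciding $1$ at time $M$ and keeps $j$ deciding $0$ at time $M$ --- its decision round in $\bar r$ cannot move earlier by $P\leq_\gamma P'$ and minimality of $M$, and cannot become a $1$ because the recursive $0$-chain witnesses certify that the decision of $0$ is forced ($\init=0$ at the head of the chain, receipt of a $0$-chain thereafter) in a way inherited by the corresponding $P'$-run; as $i$ and $j$ are both nonfaulty in $\bar s$, this violates Agreement for $P'$.

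The heart of the argument, and the step I expect to be the main obstacle, is the coupling lemma: that replacing a $P$-run by its corresponding $P'$-run leaves the relevant local states, and the decisions that matter, unchanged up to round $M$. Here minimality of $M$, the hypothesis $P\leq_\gamma P'$, Proposition~\ref{prop:kbp0correct} (bounded termination within $t+1$ rounds, and the fact that $\kbp^0$ enforces Validity even for faulty agents), and the EBA-context constraint that a message reveals only the trichotomy $\{\decide_i(0),\decide_i(1),\text{other}\}$ all have to be used together. Minimality forces every nonfaulty agent that decides before round $M+1$ under $P'$ to decide at exactly the same round under $P$, so the only divergences that can occur before round $M$ are changes of the \emph{value} decided by an agent at a fixed round; one must then show, by induction on rounds and using the recursive clause of safety condition~(2) to track why each agent on the critical $0$-chain decides $0$, that such value-changes (including those of faulty agents, for whom only Unique Decision constrains $P'$) never propagate into $i$'s local state, nor into the $0$-chain witness $j$'s, in a way that flips the decisions used above. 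Once the coupling lemma is established, the two cases close and every implementation of $\kbp^0$ is optimal with respect to $\gamma$.
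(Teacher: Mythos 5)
Your overall architecture matches the paper's: split on the value decided early, use safety clause (1) plus Validity in the $0$ case and safety clause (2) plus Agreement in the $1$ case. But there are two genuine gaps. The first is the step you yourself flag: the ``coupling lemma'' is never proved, and with your choice of minimality it is not available. You minimize over the earliest time $M$ at which a \emph{nonfaulty} agent decides strictly earlier under $P'$; this controls neither faulty agents (the relation $\leq_\gamma$ says nothing about their decision times or values) nor agents that decide at the same round but on different values. In an EBA context the message an agent sends reveals whether it is performing $\decide(0)$, $\decide(1)$, or neither, so any such divergence before round $M$ changes the messages other agents receive and hence their local states; in particular $i$'s state at time $M$ in the $P'$-run need not equal its state in the $P$-run, and your transfers to the $P'$-runs corresponding to $\hat r$ and $\bar r$ collapse. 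The paper avoids having to prove any coupling lemma by choosing the divergence point differently: $k$ is the earliest round at which \emph{any} agent's decision---its existence or its value, faulty or nonfaulty---differs between corresponding runs (the paper explicitly notes this $k$ may be smaller than the round you would pick), and then, because both protocols use the same information-exchange protocol and agree on all actions before round $k$, every agent's local state coincides in corresponding runs up to time $k-1$ with no further argument. With your weaker minimality, the coupling claim is the whole difficulty and is arguably false as stated.

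The second gap is in your $w=1$ case. You assert that the recursive witnesses of clause 2(c) make $j$'s $0$-decision ``forced\ldots in a way inherited by the corresponding $P'$-run.'' Nothing forces an arbitrary EBA protocol to decide $0$ upon receiving a $0$-chain (a $1$-biased protocol may legitimately decide $1$ there), so $j$ may decide $1$ at that round of $\bar s$ without any local violation, and your appeal to Agreement between $i$ and $j$ does not go through. The paper's use of the second half of clause 2(c) is different: it produces a further run in which another nonfaulty agent $j'$ decides $0$ one round \emph{earlier} (round $k-1$) while $j$'s local state at time $k-1$ is unchanged; minimality of $k$ then forces $j'$ to decide $0$ at round $k-1$ under $P'$ as well, while $j$ still decides $1$ at round $k$, and Agreement between the two nonfaulty agents $j$ and $j'$ yields the contradiction; the case $k=1$ is handled separately via a run in which all agents have initial preference $0$. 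Note also, for your $w=0$ case, that clause (1) only supplies a run agreeing with $i$'s local state in which all initial preferences are $1$; it does not assert that $i$ is nonfaulty there, so the step ``since $i$ is still nonfaulty, this violates Validity for $P'$'' needs justification rather than being immediate.
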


We now describe two families of concrete 
EBA contexts with
respect to which $\kbp^0$ is safe, parameterized by the number $n$ of
agents involved.  For the first, let $\exchange_{\min}(n)$ be the
\emph{minimal} information-exchange protocol for $n$ agents, where for
each agent $i$, the following hold:
\begin{itemize}
  \item The local states have the form $\<\Time_i,\init_i,
    \decided_i,\rd_i\>$.
    Thus, the local states include just what is required
        in an EBA context
  \item The initial local states have the form $\<0,\init_i,\bot,\bot\>$.
  \item 
  $M_i = \{0, 1\}$, $M^0 = \{0\}$, $M^1 = \{1\}$ and $M^2= \{\bot\}$. 
  \item For each agent $j$, if $a = \decide_i(v)$ then $\mu_{ij}(s,a)
    = v$; otherwise,
        $\mu_{ij}(s,a) = \bot$. 
    Note $\mu_{ij}$ satisfies the constraint we imposed for
EBA     contexts. 
    Intuitively, if 
    $\mu_{ij}(s,a) = v \ne \bot$, then $i$ is about to decide $v$.
  \item The state-update component $\delta_i$ is defined 
    on $\Time_i$, $\init_i$, $\decided_i$, and $\rd_i$ just
    as in
    EBA contexts.
\end{itemize}

For the second, let $\exchange_{\basic}(n)$ be the
\emph{basic} information-exchange protocol for $n$ agents, where the
local states of 
agents are like those in a minimal information-exchange protocol,
except that, in addition to the other messages allowed in a basic
information-exchange protocol,  each agent $i$ can send 
a message of the form $(\init,1)$, 
and their local states 
have one additional component, $\#1_i$, that
intuitively counts how many messages of the form 
$(\init, 1)$ $i$
received in the last round.  In more detail, 
for each agent $i$, the following hold:
\begin{itemize}
  \item The local states have the form $\<\Time_i,\init_i,
    \decided_i,\rd_i,\#1_i\>$, where $\#1_i \in
  \{0, \ldots, n\}$.
  \item The initial local states have the form $\<0,\init_i,\bot,\bot,0\>$.
  \item 
  $M_i = \{0, 1,(\init,1)\}$, $M^0 = \{0\}$, $M^1 = \{1\}$, and $M^2 = \{(\init,1), \bot\}$.  
  \item For all agents $j$, if $a = \decide_i(v)$ then $\mu_{ij}(s,a)
    = v$ for $v \in 
    \{0,1\}$; if $a = 
    \noop$
    and $s$ has the form $\<m,1,\bot,\bot,k\>$, 
    then $\mu_{ij}(s,a)  = (\init,1)$; otherwise,
    $\mu_{ij}(s,a) = \bot$.
  \item The state-update component $\delta_i$ is defined as in
EBA contexts, with the added constraint
that $\#1_i$ is updated to the number of messages of the form
$(\init,1)$ that $i$ receives in the current round if $\decided_i =
\bot$ and $i$ does not receive a message $v \in \{0,1\}$ from some
agent $j$; otherwise, $\#1_i$ is set to $0$ (essentially, for technical
reasons, once a decision is made, $\#1_i$ is ignored).
\end{itemize}

Let $\gamma_{\min,n,t} = (\exchange_{\min}(n), SO(t), \pi_{\min,n})$ denote
the family of minimal contexts where there are $n$ agents, at most $t
< n$ faulty agents, the language includes $\rd_i = v$ in addition to
$\Time_i=k$, $\init_i = v$, $\decided_i(v)$, and $i \in \N$, for $i \in
\{1,\ldots,n\}$, and $\pi_{\min,n}$ interprets these primitive
propositions in the obvious way.
Similarly, let $\gamma_{\basic,n,t} = (\exchange_\basic(n), SO(t),
\pi_{\basic,n})$ denote the family of basic contexts with $n$ agents
and $t < n$ faulty agents, where the language includes $\#1_i = k$ in
addition to all the primitive proposition used in minimal contexts,
where again, $\pi_{\basic,n}$ interprets all the primitive
propositions in the obvious way.

\begin{proposition}\label{pro:optimal}
$\kbp^0$ is safe with respect to all contexts $\gamma_{\min,n,t}$ and
  $\gamma_{\basic,n,t}$ such that $n - t \ge 2$.
  \end{proposition}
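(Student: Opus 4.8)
The plan is to verify the two safety conditions directly for both families of contexts, exploiting the fact that in $\exchange_{\min}$ and $\exchange_{\basic}$ an agent's local state records only $\Time_i$, $\init_i$, $\decided_i$, $\rd_i$ (and $\#1_i$ in the basic case), so agents keep essentially no information about who is faulty. The main tool is a ``perturbation'' lemma: given a point $(r,m)$ and the local state $r_i(m)$, I want to build an alternative run $r'$ with the same failure-pattern-shape but different initial preferences and/or a different nonfaulty set, such that $i$ cannot distinguish $(r,m)$ from $(r',m)$. The key structural observation, which I would establish first, is that in these contexts the \emph{only} way any information about an initial preference of $0$ reaches agent $i$ is along a $0$-chain: by Proposition~\ref{prop:kbp0correct} and the definition of $\kbp^0$, an agent decides $0$ iff it starts with $0$ or knows some agent just decided $0$, and in $\exchange_{\min}$/$\exchange_{\basic}$ the messages that carry ``just decided $0$'' are exactly the messages in $M^0=\{0\}$; tracing these back inductively gives a $0$-chain. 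So if $i$ has not received a $0$-chain by $(r,m)$, no message in $M^0$ ever reached $i$, directly or transitively, and $i$'s state is consistent with all agents having started with $1$.

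For condition (1): suppose $i$ has not received a $0$-chain by $(r,m)$. I would construct $r'$ by flipping to $1$ the initial preference of every agent that started with $0$ in $r$, keeping the same failure pattern $(\N, F)$. I then argue by induction on rounds $k \le m$ that the behavior visible to $i$ is unchanged: no agent on any message-path feeding into $i$ ever decides $0$ in $r$ (else $i$ would receive a $0$-chain), hence the only messages along those paths are in $M^1 \cup M^2$, and in $r'$ those same agents behave identically through round $m$ because $\kbp^0$'s deciding-$0$ clause is the only place $\init_i$ is read and, for the relevant agents, that clause's outcome is unchanged (it was false because no $0$-chain, and after flipping $\init$ it is still governed by the chain condition). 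Thus $r'_i(m) = r_i(m)$ and all agents have initial preference $1$ in $r'$, as required. The condition $n - t \ge 2$ is needed here only to ensure such an $r'$ is a legitimate run in $SO(t)$ (it is, since we did not change $\N$).

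For condition (2): suppose $\I,(r,m) \models \neg K_i(\bigwedge_j \neg(\deciding_j=0))$ and $i$ has not decided by round $m+1$. By the safety-of-decisions argument, $i$'s inability to rule out someone deciding $0$ means $i$ considers possible a run in which some agent is about to decide $0$; I need to upgrade this to a run $r'$ where $i$ itself is nonfaulty and the agent deciding $0$ is nonfaulty. Here is where $n-t\ge 2$ does the real work: I can take the ``witness'' agent $j$ deciding $0$ in round $m+1$ and declare $\{i,j\}$ (or just $j$ together with whatever is needed) to be among the nonfaulty agents, making the remaining $\le t$ agents faulty — this is possible precisely because $n-t\ge 2$ leaves room for at least two nonfaulty agents. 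Since agents don't track $\N$, changing which agents are faulty is invisible to $i$ as long as the delivered-message pattern to $i$ is unchanged, which I arrange by making the newly-faulty agents omit exactly the messages that were ``naturally'' not informative. The recursive ``moreover'' clause (for $m \ge 1$, a further run $r''$ where $j$ is nonfaulty and sees some nonfaulty $j'$ deciding $0$ one round earlier) I would handle by the same construction applied one level down: unwinding the $0$-chain that $j$ is about to extend gives the predecessor $j'$, and the same $n-t\ge 2$ counting lets me keep both $j$ and $j'$ nonfaulty in $r''$.

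The main obstacle I anticipate is condition (2), specifically the bookkeeping needed to simultaneously (i) preserve $i$'s local state, (ii) install enough nonfaulty agents, and (iii) realize an actual $0$-chain of the right length ending at a nonfaulty agent deciding $0$ in round $m+1$ — together with the one-round-earlier witness when $m\ge 1$. Getting a genuine $0$-chain requires exhibiting agents $i_0,\dots,i_m$ with $\init_{i_0}=0$ and each $i_{m'}$ deciding $0$ in round $m'+1$ and knowing its predecessor just decided; I would build this chain explicitly (e.g., $i_0$ starts with $0$ and decides in round $1$; it messages $i_1$, who decides in round $2$; etc.), route its final message to $i$'s witness, and verify that all the non-chain agents' omissions are consistent with $SO(t)$ and with $i$'s observed state. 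The basic context needs the additional check that the $\#1_i$ component and the $(\init,1)$ messages are also reproduced correctly under the perturbation, but since $(\init,1) \in M^2$ these messages carry no decision information and the argument goes through with only minor extra accounting.
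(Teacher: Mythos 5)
Your overall plan coincides with the paper's: verify the two clauses of safety directly, exploiting the fact that local states in $\gamma_{\min,n,t}$ and $\gamma_{\basic,n,t}$ record nothing about who is faulty. Your treatment of clause (1) --- flip every initial 0 to 1, keep the failure pattern, and induct on rounds to show $i$'s view is unchanged --- is exactly the paper's argument (and note that $n-t\ge 2$ plays no role in that clause, so your parenthetical about it is a red herring).

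The gap is in clause (2), concentrated precisely where you wave the details away. As stated, ``declare $\{i,j\}$ nonfaulty, make the remaining $\le t$ agents faulty, and have the newly-faulty agents omit the uninformative messages'' is not a well-defined construction: the remaining $n-2\ge t$ agents cannot all be made faulty, and, more importantly, if $i$ or $j$ exhibited faulty behaviour in the witness run you cannot simply erase it. In $\gamma_{\min,n,t}$ this happens to be harmless (their pre-decision messages are $\bot$, so blocking them is invisible), but in $\gamma_{\basic,n,t}$ the blocked messages are $(\init,1)$ messages, and your claim that these ``carry no decision information'' is false in the relevant sense: the count $\#1_i$ feeds directly into the rule for deciding 1 (via $\#1_i > n-\Timef_i$), so un-blocking them can make some agent decide 1 earlier, change the messages it sends, and ultimately change $i$'s own state. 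The paper's fix --- and the actual use of $n-t\ge 2$ --- is to take two agents $i',j'$ that are nonfaulty in the witness run $r^+$ and \emph{interchange} the failure behaviour of $i$ with $i'$ and of $j$ with $j'$; since $i,j,i',j'$ all send only $\bot$ (minimal) or $(\init,1)$ (basic) up to round $m$ (the paper checks that $i',j'$ have initial preference 1 and that no nonfaulty agent decides 0 before round $m$), every receiver gets exactly the same messages in every round, with a special case ensuring that a 0-message sent by $i'$ or $j'$ when deciding in round $m$ is still delivered. Your proposal gestures at such a compensation but never pins it down, and this is the crux of the proof, not minor accounting. Two smaller points: you do not need to build a 0-chain from scratch --- the witness run supplied by $\neg K_i(\bigwedge_j \neg(\deciding_j=0))$ already contains one --- and the ``moreover'' clause is handled by repeating the same interchange for the chain's penultimate agent, which your one-sentence sketch roughly matches but would need the same care.
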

\commentout{
The proof is constructive;
we explicitly construct the required runs.
Proving that $\kbp^0$ satisfies the first part of the safety condition
is straightforward.
For the second part, there exists a run $r'$ that $i$ considers
possible where an agent $j$ decides 0 in round $m+1$.  The problem is
that $i$ and $j$ may be faulty in $r'$.  
In this case, 
we choose agents $i'$ and $j'$ that are nonfaulty in $r'$ (there must
be such agents, since $n \ge t-1$),
and construct
a run $r''$ where all agents have
the same initial preferences as in $r'$, $i$ and $j$ are nonfaulty as
required, and messages from $i$ (resp., $j$) to an agent $j''$ are
blocked in $r'$ iff messages from $i'$ (resp., $j'$) to $j''$ are
blocked in $r''$.
This turns out to work.  For reasons of space, we defer the details to
Appendix~\ref{proofpro:optimal}.
} %

Finally, we provide EBA decision protocols that implement $\kbp^0$ in the two
contexts of interest.  Let $P^{min}$ be 
the protocol implemented by the following (standard) program:

\begin{program}
  \DontPrintSemicolon
   \lIf{$\decided_i \neq \bot$}{$\noop$}
  \lElseIf{$\init_i = 0 \lor \rd_i = 0$}{$\decide_i(0)$}
  \lElseIf{$\Timef_i = t+1$}{$\decide_i(1)$}
  \lElse{$\noop$}
  \caption{$P^{min}_i$}
\end{program}

Intuitively, this EBA decision protocol decides $0$ 
if the agent has initial value $0$ or hears of a 0-decision by another
agent.
If the agent does not hear about a $0$-decision by time 
$t+1$, 
then it decides 1.

\begin{theorem} \label{thm:ImplementsSimple}
    If $t\leq n-2$, then
        $P^{min}$ implements
    $\kbp^0$ in the EBA context~$\gamma_{\min,n,t}$. 
\end{theorem}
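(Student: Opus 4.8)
The plan is to unwind the definition of ``implements'': writing $\I = \I_{\gamma_{\min,n,t},P^{min}}$, it suffices to show $(\kbp^0)^\I_i(s) = P^{min}_i(s)$ for every agent $i$ and every local state $s = \<\Time_i,\init_i,\decided_i,\rd_i\>$ of $i$ that arises in $\I$. States with $\decided_i \neq \bot$ are immediate, since the first clause of both programs yields $\noop$, so the content lies in a finite case analysis over the reachable \emph{undecided} states. I would first pin these down with a structural lemma about runs of $P^{min}$: (a) $\decided_i = \bot$ forces $\Time_i \le t+1$, since $P^{min}$ makes a decision as soon as $\Time_i = t+1$; (b) no agent performs $\decide_i(1)$ before time $t+1$, so the message $1$ is never received before round $t+2$, hence $\rd_i = 1$ is incompatible with $\decided_i = \bot$; consequently the reachable undecided states are exactly $\<0,0,\bot,\bot\>$, $\<0,1,\bot,\bot\>$, and, for $1 \le m \le t+1$, the states $\<m,1,\bot,0\>$ and $\<m,1,\bot,\bot\>$. (Reachability of the last two is witnessed by the run in which all initial preferences are $1$ and by a run carrying a chain of $\decide(0)$ actions that reaches $i$ only at round $m$; the latter needs $n \ge t+2$.)

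The easy half matches the ``decide $0$'' behaviour. When $\init_i = 0$ (only $\<0,0,\bot,\bot\>$) both programs decide $0$; when $\rd_i = 0$ both decide $0$, and I must verify $\I,s \models K_i(\bigvee_j \jdecided_j = 0)$: by synchrony every point indistinguishable from $s$ sits at time $\Time_i$ with $\rd_i = 0$ there, and in $\exchange_{\min}$ a $0$ is received in a round only from an agent that performed $\decide(0)$ that round, so $\bigvee_j \jdecided_j = 0$ holds there. Dually, at $\<m,1,\bot,\bot\>$ with $1 \le m \le t$ both programs do $\noop$ and at $\<t+1,1,\bot,\bot\>$ both decide $1$; for all of these I must show $\I,s \models \neg K_i(\bigvee_j \jdecided_j = 0)$, which holds because the run with all initial preferences $1$ and no omissions is indistinguishable from $s$ and contains no $\decide(0)$ action at all.

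The remaining obligations concern $K_i(\bigwedge_j \neg(\deciding_j = 0))$, the test for deciding $1$. At $\<0,1,\bot,\bot\>$ and at $\<m,1,\bot,\bot\>$ with $m \le t$ I must establish its negation by exhibiting an indistinguishable run $r'$ with $r'_i(m) = s$ that carries a $\decide(0)$-chain $i_0,\dots,i_m$ (with $\init_{i_0}=0$ and $i_{m'}$ deciding $0$ in round $m'+1$) in which $i_0,\dots,i_{m-1}$ are faulty and suppress their $0$-messages to $i$, so that $i$ stays in state $s$ while $\deciding_{i_m} = 0$ is true at $(r',m)$; this costs at most $m \le t$ faults and $m + 2 \le n$ agents. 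At $\<t+1,1,\bot,\bot\>$ I must conversely establish the test: assuming some agent $j$ has $\deciding_j = 0$ at a point $(r',t+1)$ indistinguishable from $s$, I unwind $P^{min}$ --- an agent about to decide $0$ at a positive time must have $\rd = 0$ and hence received a $0$-message from an agent that decided $0$ the previous round --- to extract a chain of \emph{distinct} agents $i_0,\dots,i_{t+1}=j$ with $i_{m'}$ performing $\decide(0)$ at time $m'$; each $i_{m'}$ with $m' \le t$ sends a $0$-message to $i$ at time $m'$ which, since $i$'s local state stays $\<t+1,1,\bot,\bot\>$ through time $t+1$, must be lost, forcing all $t+1$ agents $i_0,\dots,i_t$ to be faulty --- contradicting $|\faulty| \le t$. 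I expect this last ``no hidden $\decide(0)$-chain'' step to be the crux; it is the concrete instance of clause~(2) of the safety condition, so via Theorem~\ref{thm:safe} and Proposition~\ref{pro:optimal} it also delivers optimality of $P^{min}$. The hypothesis $t \le n-2$ is not needed for this contradiction but is essential for the positive run constructions above, which must fit a $\decide(0)$-chain together with the extra agent $i$ among the $n$ available agents.
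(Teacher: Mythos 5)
Your proposal is correct and follows essentially the same route as the paper's proof: a case analysis over the local states reachable under $P^{min}$, verifying $K_i(\bigvee_j \jdecided_j=0)$ from receipt of a $0$-message, refuting the knowledge tests via the all-ones failure-free run and via a hidden $0$-chain whose first $m$ members are faulty, and establishing the decide-$1$ test at time $t+1$ by the counting argument that a $0$-chain hidden from $i$ through round $t+1$ would require $t+1$ faulty agents. The only cosmetic differences are that you enumerate the reachable undecided states up front and re-derive the time-$(t+1)$ counting argument inline, whereas the paper imports it from the Termination part of Proposition~\ref{prop:kbp0correct}.
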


Finally, let $P^{\basic}$ be the EBA decision protocol that implements the
following program:

\begin{program}
  \DontPrintSemicolon
  \lIf{$\decided_i \neq \bot$}{$\noop$}
  \lElseIf{$\init_i = 0 \lor \rd_i = 0$}{$\decide_i(0)$}
  \lElseIf{$\#1_i > n- \Timef_i \lor \rd_i = 1$}{$\decide_i(1)$}
  \lElse{$\noop$}
  \caption{$P^{basic}_i$}
\end{program}

\begin{theorem} \label{thm:implement2}
If $t\leq n-2$, then 
$P^{\basic}$ implements 
$\kbp^0$ in the EBA context~$\gamma_{\basic,n,t}$. 
\end{theorem}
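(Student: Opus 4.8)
The plan is to show that the concrete protocol $P^{\basic}$ and the knowledge-based program $\kbp^0$ prescribe the same action at \emph{every} local state that is reachable in $\I:=\I_{\gamma_{\basic,n,t},P^{\basic}}$; this is exactly what ``$P^{\basic}$ implements $\kbp^0$ in $\gamma_{\basic,n,t}$'' requires. Both $P^{\basic}$ and $\kbp^0$ are if-then-else cascades of the shape ``if $\decided_i\neq\bot$ then $\noop$; elseif (decide-$0$ guard) then $\decide_i(0)$; elseif (decide-$1$ guard) then $\decide_i(1)$; else $\noop$,'' and their first guards are syntactically identical, so it suffices to verify that the two remaining guards agree at each reachable state $s=\langle m,v,\bot,\rho,c\rangle$ (i.e.\ $\Timef_i=m$, $\init_i=v$, $\decided_i=\bot$, $\rd_i=\rho$, $\#1_i=c$). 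Concretely I must establish: \textbf{(A)} $\I,s\models \init_i=0 \lor K_i(\bigvee_{j\in\Agents}\jdecided_j=0)$ iff $v=0$ or $\rho=0$; and \textbf{(B)}, assuming the decide-$0$ guard fails (so $v=1$ and $\rho\neq 0$), $\I,s\models K_i(\bigwedge_{j\in\Agents}\neg(\deciding_j=0))$ iff $c>n-m$ or $\rho=1$.

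For (A), since $\init_i=v$ is local it is enough to take $v=1$ and show $\rho=0 \iff \I,s\models K_i(\bigvee_j\jdecided_j=0)$. The ``$\Rightarrow$'' direction is immediate from the $\delta_i$ clause of $\exchange_\basic$: $\rho=0$ means that in every run $r'$ with $r'_i(m)=s$ agent $i$ received in round $m$ a message $\exchange_\basic$ emits only on a $\decide_j(0)$ action, so some $j$ first decides $0$ in round $m$, i.e.\ $\jdecided_j=0$ holds at $(r',m)$. For ``$\Leftarrow$'' I would, given $\rho\neq 0$, build a run $r'$ with $r'_i(m)=s$ in which no agent ever decides $0$: if $\rho=\bot$, take $r'$ with every initial preference $1$ (a one-line induction shows no such agent can ever acquire $\rd=0$, hence none decides $0$), arranging the deliveries to $i$ so that $i$ stays undecided and the $\#1$ component tracks the prescribed values (feasible because reachability of $s$ already pins $c$ and $m$ into the relevant range); if $\rho=1$, use the ``no $0$ after a $1$'' lemma below, which already gives that no agent decides $0$ in round $m$ of any run consistent with $s$.

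The heart of the proof is (B). For the ``$\Leftarrow$'' direction with $c>n-m$ I would prove a \textbf{counting lemma}: by tracing $0$-decisions back through the messages that triggered them, in any run $r'$ of $\I$ in which some agent decides $0$ in round $m+1$ there is a sequence $a_0,\dots,a_m$ of $m+1$ distinct agents with $\init_{a_0}=0$, each $a_\ell$ first deciding $0$ in round $\ell+1$ and relaying its decision to $a_{\ell+1}$; moreover $i\notin\{a_0,\dots,a_m\}$ (chain members decide $0$ by round $m+1$ and $a_m$ has $\rd=0$ at time $m$, both incompatible with $s$), and among $a_0,\dots,a_m$ only $a_m$ can be a sender of an $(\init,1)$ message in round $m$ (the earlier ones have either decided or have $\rd=0$ at time $m-1$). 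Counting the $\le n-m-2$ agents off the chain and $\neq i$, plus $a_m$, plus $i$'s self-message, gives $\#1_i\le n-m$ in $r'$; since $r'_i(m)=s$ forces the same value $c$, this contradicts $c>n-m$, so no consistent run has anyone deciding $0$ in round $m+1$, giving $K_i(\bigwedge_j\neg(\deciding_j=0))$. For the case $\rho=1$ I would use the \textbf{``no $0$ after a $1$'' lemma}: if some agent decides $1$ in round $k$ of a run of $\I$, then no agent decides $0$ in round $k$ or later. This is proved by induction on $k$ --- the first $1$-decision must fire via $\#1>n-\Timef$, which by the counting lemma precludes a relay chain of length $k-1$, hence precludes a $0$-decision in round $k$, and this propagates forward since no $0$-decisions means no new $0$-messages; later $1$-decisions via $\rd=1$ trace back. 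Since $\rho=1$ forces some agent to have decided $1$ in round $m$ in every run consistent with $s$, it follows that no agent decides $0$ in round $\ge m$ there, in particular not in round $m+1$. The remaining ``$\Rightarrow$'' direction is the run construction: from a reachable $s$ with $v=1$, $\rho=\bot$, $c\le n-m$, I first note $m\le t$ (reachability forces $\#1_i\ge n-t$, because the $\ge n-t$ nonfaulty agents keep broadcasting $(\init,1)$ until $i$ has heard a decision, so $n-t\le c\le n-m$), then exhibit $r'$ with $r'_i(m)=s$ and a relay chain of length $m$: choose $a_0,\dots,a_m\neq i$ (possible as $n\ge m+2$), let $\init_{a_0}=0$ and all others be $1$, make $a_0,\dots,a_{m-1}$ together with $t-m$ of the remaining agents faulty (enough exist since $n-2\ge t$), have each $a_\ell$ relay its $0$-decision only to $a_{\ell+1}$, and tune the faulty agents' deliveries to $i$ so that $i$ receives exactly $c$ $(\init,1)$ messages in round $m$ and remains undecided throughout (the achievable range of $\#1_i$ is precisely $[n-t,\,n-m]$). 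Then $\deciding_{a_m}=0$ holds at $(r',m)$, witnessing $\neg K_i(\bigwedge_j\neg(\deciding_j=0))$.

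The main obstacle is the \emph{tightness} of the threshold $n-\Timef_i$: the counting lemma and the matching construction must meet exactly, and one must check that all corner cases are handled consistently --- $m$ small, so the decide-$1$ guard is vacuously false; $m$ close to $t$ (and $m=t+1$), where there is no room for a hidden relay chain and the guard is already forced true by the nonfaulty broadcasters; the priority of $\rd_i=0$ over $\rd_i=1$ when $i$ hears both kinds of decision in one round; and whether $i$'s self-message is counted in $\#1_i$. All of these are reconciled by the hypothesis $t\le n-2$ (i.e.\ $n-t\ge 2$), the same condition under which $\kbp^0$ is safe with respect to $\gamma_{\basic,n,t}$ (Proposition~\ref{pro:optimal}); indeed the counting argument here is the concrete counterpart of the run constructions used there. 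Theorem~\ref{thm:ImplementsSimple} is then the analogous but simpler statement, with the unconditional timeout $\Timef_i=t+1$ playing the role of the counter test $\#1_i>n-\Timef_i$, and $\rd_i=\bot$ at every time before $t+1$.
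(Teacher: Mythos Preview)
Your proposal is correct and follows essentially the same approach as the paper's proof. The paper's argument is terser: for the decide-$1$ guard it proves by induction on $m$ that whenever $\decided_i=\bot$, $\init_i=1$, and either $\rd_i=1$ or $\#1_i>n-m$, then $K_i(\bigwedge_j\neg(\deciding_j=0))$ holds, using exactly your counting observation (agents $i_0,\dots,i_{m-1}$ of a length-$m$ \zchain cannot send $(\init,1)$ in round $m$) for the $\#1_i>n-m$ case and your ``no $0$ after a $1$'' reasoning for the $\rd_i=1$ case; for the final $\noop$ case it constructs the same two witness runs (all initial preferences $1$, respectively a hidden \zchain among the agents from whom $i$ did not hear $(\init,1)$). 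Your explicit derivation of $m\le t$ from the reachability bound $\#1_i\ge n-t$ and your careful bookkeeping of self-messages and of who sends $(\init,1)$ in each round make precise what the paper leaves as ``it is easy to see,'' but the decomposition and the key ideas coincide.
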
 

We get the following immediate corollary to Theorems~\ref{thm:safe},
\ref{thm:ImplementsSimple}, and
\ref{thm:implement2}.
\begin{corollary} $P^{\basic}$ is optimal  
  with respect to $\gamma_{\basic,n,t}$ and $P^{min}$ is optimal with 
  respect to $\gamma_{\min,n,t}$.
\end{corollary}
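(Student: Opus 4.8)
The plan is simply to chain together Proposition~\ref{pro:optimal} and Theorems~\ref{thm:safe}, \ref{thm:ImplementsSimple}, and \ref{thm:implement2}; this is precisely why the statement is labelled a corollary rather than a theorem. Fix $n$ and $t$ with $t \le n-2$ (equivalently $n - t \ge 2$), which is the standing side condition in all three of those results. First, invoke Proposition~\ref{pro:optimal} to conclude that $\kbp^0$ is safe with respect to both $\gamma_{\min,n,t}$ and $\gamma_{\basic,n,t}$. Since $\gamma_{\min,n,t}$ and $\gamma_{\basic,n,t}$ were set up so that $\exchange_{\min}(n)$ and $\exchange_\basic(n)$ satisfy the EBA-context conditions, Theorem~\ref{thm:safe} applies verbatim, giving that \emph{every} implementation of $\kbp^0$ with respect to either of these contexts is optimal with respect to that context.

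Next, Theorem~\ref{thm:ImplementsSimple} tells us that $P^{min}$ is an implementation of $\kbp^0$ in $\gamma_{\min,n,t}$, and Theorem~\ref{thm:implement2} tells us that $P^{\basic}$ is an implementation of $\kbp^0$ in $\gamma_{\basic,n,t}$. Instantiating the universally-quantified conclusion of the previous paragraph at these two particular implementations yields that $P^{min}$ is optimal with respect to $\gamma_{\min,n,t}$ and $P^{\basic}$ is optimal with respect to $\gamma_{\basic,n,t}$, which is exactly the claim.

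The only point requiring any attention — and it is a bookkeeping point, not a mathematical one — is that the hypotheses of the cited results line up: the inequality $n - t \ge 2$ needed for safety in Proposition~\ref{pro:optimal} is the same as the $t \le n-2$ assumed in Theorems~\ref{thm:ImplementsSimple} and~\ref{thm:implement2}, and the quantifier in the conclusion of Theorem~\ref{thm:safe} (``all implementations of $\kbp^0$ are optimal'') is strong enough to be applied to whichever concrete implementations the implementation theorems supply. Hence there is no genuine obstacle in this proof; all of the real work has already been carried out in establishing the three results being combined.
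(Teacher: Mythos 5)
Your proposal is correct and is exactly the argument the paper intends: the corollary is stated as an immediate consequence of combining Proposition~\ref{pro:optimal} (safety), Theorem~\ref{thm:safe} (safety implies optimality of all implementations), and Theorems~\ref{thm:ImplementsSimple} and~\ref{thm:implement2} (that $P^{min}$ and $P^{\basic}$ are such implementations), under the standing assumption $t \le n-2$. No further comment is needed.
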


\commentout{
\joecomment{This is the end of my changes.  Ron's material follows.}

While $\kbp_0$ is correct (modulo termination) for a very general set of contexts, 
it has several disadvantages as a knowledge-based program, 
due to the use of the future operator ``$\Box$'' in the condition
for $\decide_i(1)$. 
It is desirable to eliminate this, since this enables us to apply a
result from \cite{FHMV} (Theorem 7.2.4) concerning ``synchronous contexts" 
that applies to our setting. The following is a corollary of that result, 
using the fact that information-exchange protocols as we have defined them
yield   synchronous contexts in the framework of  \cite{FHMV}. 

\begin{theorem} 
  If $\kbp$ is a knowledge-based program such that the 
  formulas in its conditions do not contain future operators, 
then for every synchronous context $\gamma$, there exists a 
unique protocol $P$ that implements $\kbp$ with respect to $\gamma$.
\end{theorem} 

Here ``future operators'' refers to operators like $\Box$ and
$\always$, whose truth conditions depend 
on points in the future of the point of evaluation. 
A similar result 
does not hold in general for knowledge-based programs like $\kbp^0$. 
Related to this result is the fact that, in synchronous contexts, 
implementations can be automatically 
synthesized for knowledge-based programs not containing future operators \cite{HuangM13}. 
It is therefore preferable to eliminate the use of future time operators if possible. 

We now develop another knowledge-based program that does achieve this goal. 
However, this will come at the cost of a more complex correctness argument, 
and the need to restrict the set of contexts in which we work. 
Let $\kbp_1$ be the following knowledge-based program for each agent $i$. 
\begin{program}
  \DontPrintSemicolon
  \lIf{$K_i(\decided_i\neq \bot )$}{$\noop$}
  \lElseIf{$K_i ( \init_i = 0 \lor \bigvee_{j} \decided_j = 0)$}{$\decide_i(0)$}     
  \roncomment{$K_i \exists 0$ here may be more convenient for
  optimality argument}  \\
  \lElseIf{$K_i (\bigwedge_{j\in \N} \neg \decides_j(0))$}{$\decide_i(1)$}
  \roncomment{I now think belief is wanted here for optimality} \\
  \lElseIf{$K_i (\bigvee_{j} \decided_j(1))$}{$\decide_i(1)$}
  \roncomment{(I stuck this in for safety, maybe eliminate if not needed)} \\
  \lElse{$\noop$}
  \caption{$\kbp_1$}
\end{program}

Say that a knowledge-based program $\kbp$ is a \emph{fast 0-decider} if, for all agents $i$, $\kbp_i$ has the form 
\begin{program}
  \DontPrintSemicolon
  \lIf{$K_i(\decided_i\neq \bot )$}{$\noop$}
  \lElseIf{$K_i ( \init_i = 0 \lor \bigvee_{j} \decided_j = 0)$}{$\decide_i(0)$}     
  \lElse{$\mathbf{Q}$}
  \caption{fast 0-decider}
\end{program}
where $\mathbf{Q}$ is a knowledge-based program that performs only actions $\decide_i(1)$ and $\noop$. 
Evidently, the knowledge-based programs $\kbp_0$ and $\kbp_1$ are both fast 0-deciders. 

Correctness and termination of the knowledge-based program $\kbp_1$ depends on whether the information-exchange protocol 
transmits enough information for a decision to be eventually made and
on whether this information is encoded in the agents' local states.   We
show that this holds for 
information-exchange protocols satisfying some abstract conditions, captured using the following definitions. 
\begin{itemize}  
\item 
A message $m$ \emph{conveys a 0-decision} if for all agents $i,j$, states 
$t \in L_i$ and actions $a$ with $\mu_i(t,a)(j) = m$, we have that $a = \decide_i(0)$.
That is, the fact that the message $m$ has been sent implies that the sender has decided 0.

\item The information-exchange protocol \emph{immediately transmits 0-decisions} if for 
all agents $i,j$ and local states $s\in L_i$, 
the message $\mu_i(s,\decide_i(0))(j)$ conveys a zero-decision. 
Intuitively, this says that it can be deduced 
from the message sent when agent $i$ 
decides 0 that  agent $i$ has decided 0.

\item The information-exchange protocol \emph{records 0-decisions}
    if for all agents $i$,  actions $a$,  local states $s\in L_i$,
and $\rho \in \Pi_{k\in Agents} M_k \cup \{\bot\}$,
if 
$\rho(j)$ conveys a 0-decision for some $j\in \Agents$, 
then either $d_i(\delta_i(s,a,\rho)) \neq \bot$ or  for all 
actions $a'$, local states $s'\in L_i$  
and 
$\rho' \in \Pi_{k\in Agents} M_k \cup \{\bot\}$ 
such that $\delta_i(s,a,\rho) = \delta_i(s',a',\rho')$, 
there exists $j'\in \Agents$ such that  $\rho'(j')$ conveys a 0-decision. 
Intuitively, this says that when an agent receives a message in $\rho$ that 
conveys a 0-decision, it can be deduced from the resulting state $\delta_i(s,a,\rho)$
that it has either itself already decided or it 
has received a message that conveys a 0-decision.

\item Agent $j$ receives a message from agent $i$ at time $m> 0$ in run $r$ 
of a protocol $P$, 
denoted 
$(r,m) \models i\sendsto j$, if $r_e(0) = F$, $\mu_i(r_i(m-1),P_i(r_i(m-1)))(j) \neq \bot$ and $F(m-1,i,j)  =1$. 
That is, at time $m-1$ agent $i$ sends a non-null message to $j$,
which the adversary does not block. 

\item A 0-chain to agent $i$ at a point $(r,m)$ 
of a protocol $P$
 is a sequence of agents $i_0,i_1, \ldots i_k$ 
and times $m_1 < m_2 < \ldots < m_k=m$ such that 
$(r,0)\models  \init_i = 0$ and 
$(r,m_\ell)\models i_{\ell-1} \sendsto i_{\ell}$ for $\ell = 1\ldots k-1$.   
Such a  0-chain is said to be \emph{active} if $d_{i_k}(r_{i_k}(m_k)) = \bot$, that is, the final recipient of a 
message in the chain has not yet decided just after receiving the last message in the chain.

\item An information-exchange protocol is \emph{0-aware} if it immediately transmits 0-decisions and records 0-decisions.  

\item An information-exchange protocol \emph{halts transmissions after deciding} 
if for all agents $i$ and states $s\in L_i$ such that $d_i(s) \neq \bot$ 
and all actions $a$ we have  $\mu_i(s,a)(j)  = \bot$
 for all agents $j$.

\item An agent is \emph{active} at at point $(r,m)$ if it has not yet made a decision, that is,  $d_i(r_i(m)) = \bot$. 
  Note that this implies that $\I,(r,m)\models K_i( \decided_i =
  \bot)$, and hence that   $\I,(r,m)\models \neg K_i (\decided_i\neq \bot)$. 
\end{itemize}

For the proof of correctness and termination of the knowledge-based
program, we first establish the following lemma. 

\begin{lemma} \label{lem:zchain} 
Let $P$ be an implementation of a fast 0-decider knowledge-based program with respect to an 
EBA context $\gamma=(\exchange,SO(t),\pi)$ for the sending omissions
failure model.  
If $\exchange$ is 0-aware and halts transmissions after
deciding and $r$ is a run of $\I=\I_{\gamma,P}$, then 
\begin{enumerate} 
  \item  if there is an active 0-chain to 
agent $i$ at point $(r,m)$ then 
$\I,(r,m) \models K_i (\init_i = 0 \lor  \bigvee_j \decided_j = 0)$; 

\item if there is no active 0-chain  at time $k\geq 1$ in
  $r$, but some agent remains active at time $k$, then  
  the 0-chains in $r$ reach only inactive or faulty agents;
  [[ FIXME:
    include this here?? ]]  

  \item if no 0-chain has reached agent $i$ by time $m$ in run $r$, then
  agent $i$ considers it possible at time $m$ that all  
  agents have initial value 1, that is,
    $\I,(r,m) \models \neg K_i (\neg \bigwedge_{j\in \Agents}  
    \init_j = 1)$. 
\end{enumerate} 
\end{lemma} 

\begin{proof} 

We show (1) by induction on $m$. In case $m=0$, there is a 0-chain to agent $i$ at point $(r,0)$ iff $\I,(r,m)\models \init_i = 0$, 
from which it follows that $\I,(r,m)\models K_i (\init_i = 0)$ since
$\init_i$ is local to agent $i$.  
It is immediate that $\I,(r,m) \models K_i (\init_i = 0 \lor  \bigvee_j \decided_j = 0)$. 

Suppose that the claim holds for $m$ and there exists an active 0-chain to agent $i$ at point $(r,m+1)$. 
We show that $\I,(r,m+1) \models K_i (\init_i = 0 \lor  \bigvee_j \decided_j = 0)$. 
Since $i$ is active at $(r,m+1)$, we cannot have $\init_i = 0$ in $r$, else $i$ would have decided 0 and be inactive at $(r,m+1)$. 
By definition of 0-chain, 
we have that $(r,m+1) \models j \sendsto i$ for some agent $j$ such that there exists a 0-chain to $j$ at some point 
$(r,m')$ with $m' \leq m$. Since $j$'s message to $i$ at time $m$ is  non-null, and $\exchange$ halts transmissions 
after deciding, $j$ is active at times $m$ and $m'$. 
By induction,  $\I,(r,m') \models K_j (\init_j = 0 \lor  \bigvee_k \decided_k = 0)$. 
Since $j$ is active at time $m'$,  
we have $\I,(r,m') \models \neg K_j (\decided_j \neq \bot)$. 
Since $P$ implements $\kbp_0$, we have $P_j(r_j(m')) = \decide_j(0)$. Because 
$\exchange$
halts after deciding,  
agent $j$ sends no messages after time $m'+1$. We must therefore have
$m'=m$. Since $P$ is 0-aware it immediately transmits 0-decisions, so
the message that $j$ sends to $i$ at time $m$ conveys a 0-decision. Since
$(r,m) \models j \sendsto i$, this message is not blocked by the
adversary. Because $P$ records 0-decisions,  
we conclude that  $\I,(r,m+1) \models K_i ( \bigvee_k \decided_k = 0)$, 
hence  $\I,(r,m+1) \models K_i (\init_i = 0 \lor  \bigvee_k \decided_k = 0)$, as required.  

For the proof of (2), suppose to the contrary that 
that there is no 0-chain to any active agent  at time $k\geq 1$ in $r$, 
some agent remains active at time $k$, and there is a $0$-chain to an active nonfaulty agent $i$ at some time $m$
in $r$. 
Because $\exchange$ halts after deciding, we must have $m<k$. By
(1), we have  
$\I,(r,m) \models K_i (\init_i = 0 \lor  \decided(0))$. 
Since $i$ is active, $\I,(r,m) \models \neg K_i (\decided_i \ne \bot)$. 
It follows from the fact that $P$ implements $\kbp_0$ 
that $P_i(r_i(m)) = \decide_i(0)$. Since $\exchange$ is 0-aware, it
immediately transmits 0-decisions.  
Hence, for all agents $j$, we have that the message $m_{i,j} = \mu_i(r_i(m),P_i(r_i(m)))_j$ that $i$ sends to 
$j$ in the next round conveys a 0-decision. Since $i$ is nonfaulty,
this message is received by all $j$.  
Suppose that $j$ is active at time $m+1$. 
Since $\exchange$ records 0-decisions, it follows that $\I,(r,m+1)
\models K_j \decided(0)$.
If $m+1 = k$,  this contradicts the assumption that there is no
0-chain to any active agent  at time $k$.  
Thus, $m+1 <k$. It then follows, again by (1), that $j$ decides 0 at
time $m+1$, upon which it becomes inactive.  
But this means that there are no active agents at time $k$, a contradiction. 

To prove (3), given a run $r$, we construct a run $r'$ 
by modifying all initial values to be 1 and ensuring that  
if a faulty agent $i$ does not send a message to $j$ at time $m$ in $r$, 
it also does not send a message to $j$ at time $m$ in $r'$. 
Formally, if $r(0)= (F,s_1, \ldots s_n)$, 
we define $r'$ to be the run generated by $P$ from 
the initial global state  $(F',s'_1, \ldots s'_n)$ defined as follows. 
For all $i$, if $\initf_i(s_i) = 1$ then $s'_i = s_i$. 
If $\initf_i(s_i) = 0$ then $s'_i$ is any initial state in $I_i$ with
$\initf_i(s'_i) = 1$. 
The adversary $F'$ is defined as follows.
First, $F'(m,i,j) = 0$ if $F(m,i,j) = 0$, or $i$ sends no message to $j$ at time $m$ in $r$, 
(that is, $\mu_i(r_i(m),P_i(r_i(m))) = \bot$) and $i$ is faulty in $r$ (that is, there exists $m',j'$ such that $F(m',i,j') = 0$). 
Otherwise $F'(m,i,j) = 1$. Note that we have $F'(m,i,j) =0$ only if $i$ is faulty in $r$. 
Since $F$ has at most $t$ faulty agents, so does $F'$. 

We show that if no 0-chain has reached $i$ by time $m$ in $r$ then $r_i(m) = r'_i(m)$. 
It follows that $\I,(r,m) \models \neg K_i\neg (\bigwedge_{j\in
  \Agents} \init_j(1))$.  
The proof is by induction on $m$. For the case $m=0$, note that if no 0-chain has reached $i$ by time 0 
in $r$ then $\initf_i(r_i(0)) = 1$, from which it follows that $r'_i(0) = r_i(0)$.

Inductively,  suppose that the claim holds at time $k$, 
and suppose that no 0-chain has reached agent $i$ by point $(r,k+1)$. 
Consider a message $m_{j,i}\in M_j \cup \{\bot\}$ sent by an arbitrary agent $j$ to $i$ at time $k$ in $r$, 
and let $m'_{j,i}$ be the message received by $i$ after applying
the adversary $F$ to $m_{j,i}$. 
Similarly, let $p_{j,i}\in M_j \cup \{\bot\}$ be the message  sent by an agent $j$ to $i$ at time $k$ in $r'$, 
and let $p'_{j,i}$ be the message received by $i$ after applying 
the adversary $F'$ to $p_{j,i}$. 
We consider three cases, corresponding to the cases of the definition of $F'$: 
(a) $F'(k,j,i) = 0$ because $F(k,j,i) = 0$, 
(b)  $F'(k,j,i) = 0$ because $F(k,j,i) =1$,  $j$ is faulty in $r$,
and $m_{j,i} = \bot$,  
(c) $F'(k,j,i) = 1$ because $F(k,j,i) = 1$ and either $j$ is nonfaulty in $r$ or  $m_{j,i} \neq \bot$.
We claim that in each case $p'_{j,i} = m'_{j,i}$, and that this conclusion
holds for all agents $j$.  
By induction, $r'_i(k) = r_i(k)$, so the action performed by $i$ at time $k$ is also the same in 
$r$ and $r'$, and  it follows that $r_i(k+1) = 
r_i(k+1)$. 

\begin{itemize} 
\item In case (a), we have that $F'(k,j,i) = 0$ and $F(k,j,i) = 0$, so the send fails in both $r'$ and $r$ and 
the message received by $i$ in $r'$ is $p'_{j,i} = \bot = m'_{j,i}$. 
\item In case (b), we have $m'_{j,i} = \bot$ because $m_{j,i} = \bot$
    and, since $F'(k,j,i) = 0$, we also have $p'_{j,i} = \bot$. Hence,
    $p'_{j,i} =
m'_{j,i}$.  
\item 
    In case (c), we consider two further cases, $m_{j,i}\neq \bot $ and
$m_{j,i} = \bot $.  
If $m_{j,i} \neq \bot$, it follows that no 0-chain reaches $j$ by
    time $k$ in $r$,  
for otherwise we would have a 0-chain reaching $i$ by time $k+1$. 
    By induction, we have $r_j(k) = r'_j(k)$. It follows that $j$ performs the same 
action $a$ at $(r,k)$ and at $(r',k)$, and moreover, that $p_{j,i} = m_{j,i}$. 
Since  $F'(k,j,i) = F(k,j,i) = 1$, neither message fails at time $k$,
and  
we have $p'_{j,i} =  m'_{j,i}$.
On the other hand, if $m_{j,i} = \bot$, 
then $j\in \N(r)$, by assumption. (Recall that $\N(r)$ denotes the set
of agents that are nonfaulty in run $r$.) By construction, $j \in \N(r')$ and
$m'_{j,i} = \bot$.   To show that $p'_{j,i} =  m'_{j,i}$, 
suppose by way of contradiction that $p'_{j,i} \neq \bot $.
Since $F'(k,j,i) = F(k,j,i) = 1$, 
we must have $p_{j,i} \neq  \bot$. It follows from this that 
$r'_j(k) \neq r_j(k)$, since the message sent by agent $j$ is 
a  function of its local state $s$ and the action $P_j(s)$. 
By the inductive hypothesis, 
there is a 0-chain to agent $j$ at some time $k'\leq k$ in run $r$.
Consider the earliest such $k'$. 
If $j$ has not decided by time $k'$ in $r$, then by part (1), it
follows that $j$ decides 0 at time $k'$.  
But $j$ is nonfaulty in $r$, so it follows that $j$ sends a message at time $k'$ that 
conveys a 0-decision, which is received by $i$ at time $k'+1 \leq k+1$. 
This means that a 0-chain reaches $i$ by time $k+1$ in $r$, a contradiction. 
Thus,  $j$ must have already decided by time $k'$ in $r$, and the
decision must have been 1, since it first got a 0-chain at time $k'$.
Suppose that $j$ decided 1 at time $\ell <k'$.
By choice of $k'$, there is no 0-chain to $j$ by time $\ell$ in $r$, 
so by induction, we have $r_j(\ell) = r'_j(\ell)$.
Because $j$ decides 1 at time $\ell$ in $r$, it also decides $1$ at
time $\ell$ in 
$r'$, and halts transmissions. But $\ell <k$ and $j$ sends the non-null message 
$p'_{j,i} \neq \bot $ at time $k$ in $r'$. This given the desired
contradiction.  
\end{itemize} 
Hence the claim holds in all cases. 
\end{proof}

We remark that part (3) of this result is similar to results relating Lamport causality and knowledge \cite{}, but  
these cannot be applied directly because they apply to an \emph{asynchronous} message passing setting, 
whereas our setting is synchronous. Ido Ben Zvi's thesis
\url{https://arxiv.org/pdf/1112.4428.pdf}  
discusses similar results in synchronous models. 

We can now prove correctness of $\kbp_1$. We note that the following result is more restricted 
than Proposition~\ref{prop:kbp0correct}, both because of the assumptions on $\exchange$
and the assumption that the failures model is a sending omissions
model  $SO(t)$.  

\begin{proposition} \label{prop:kbp1correct}
Suppose action protocol $P$ is an implementation of $\kbp_1$
with respect to an EBA context $\gamma = (\exchange, SO(t), \pi)$
for the sending omissions model,  
where the information-exchange protocol $\exchange$
is 0-aware and halts transmissions after deciding. 
Then the interpreted system  $\I_{\gamma,P}$
satisfies Unique-Decision, Agreement, and Validity.  
\end{proposition} 

\begin{proof}
Write $\I$ for $\I_{\gamma,P}$.  
Satisfaction of Unique-Decision in $\I$  follows from the fact that the program 
for agent $i$ makes at most one decision per round
and the fact that the decision made is recorded in the local state
variable $\decided_i$. 
If $\decided_i \ne \bot$, then $K_i (\decided_i\ne \bot)$ holds,
so  agent $i$ performs $\noop$, which 
preserves the invariant that $\decided_i \ne \bot$. 
Hence, once $\decided_i \ne \bot$, all future actions of agent $i$ are $\noop$.  

To see that Agreement holds, suppose that $i,j \in \N(r)$,
$i$ decides 0 at time $m_0$, and $j$ decides 1 at time $m_1$ in $r$. 
By Unique-Decision, $i\neq j$. 
We consider cases $m_0<m_1$ and $m_1 \leq m_0$, deriving a contradiction in each case. 

If $m_0 <m_1$, then since $\exchange$ is 0-aware, 
the message that $i$ sends at time $m_0$ conveys a 0-decision.
Because $i\in \N$, this message is received by $j$. 
Since $\exchange$ is 0-aware, it follows that 
$\I,(r,m_0+1) \models K_j (\bigvee_k \decided_k = 0)$. 
Because $m_0 <m_1$, agent $j$ has not yet decided at time $m_0+1$. 
It then follows from the fact that  $P$ implements $\kbp_1$ that 
$j$ decides 0 at time $m_0+1$, contradicting Unique-Decision. 

Now suppose that $m_1 \leq m_0$. Let $m_1'$ be the earliest time 
at which some agent $\ell$ (possibly equal to $j$) 
decides 1.  By Unique-Decision, we must have $\ell\neq i$. 
Neither of the knowledge conditions for deciding 
1 can hold at time 0, so we have $m_1'>0$. 
Since no agent has decided 1 at time $m'_1$
and $\ell$ decides 1 at time $m'_1$, 
we cannot have $\I,(r.m'_1) \models K_{\ell} (\bigvee_{k}
\decided_k(1))$, so we must have   $\I,(r.m'_1) \models K_{\ell}
(\bigwedge_{k\in \N} \neg \decides_k(0))$.  
Since $\I,(r.m'_1) \models  i \in \N \land \decides_i(0)$, 
we cannot have $m_1' = m_0$. Thus $0< m_1' < m_0$. 

By Lemma~\ref{lem:zchain}, there exists an
active 0-chain to agent $i$ at point $(r,m_0)$. 
It follows that there exists an active 0-chain to 
some agent $i'$ at time $m_1'-1$. 
Agent $i'$ decides 0 at $(r,m'_1-1)$, 
so by 0-awareness, it sends every agent a message
that conveys 0 at time $m_1'-1$.  
Because
$i$ does not decide in $r$ until time $m_0$, 
the message that $i'$ sends $i$ is not delivered, and $i'$ must be faulty. 
Let $r'$ be the run
obtained from $r$ by modifying the failure environment $F$ of $r$
to $F' = F[(i',i,m'_1 -1) \mapsto 1]$ so that this message is in fact
delivered to $i$. We must have $F' \in SO(t)$, so $r'$ is a run of $i$. 
The only agent that can see a difference between runs $r$ and $r'$ 
by time $m'_1$ is $i$; since $i \ne \ell$, 
we must have $r_{\ell}(m'_1) = r'_{\ell}(m'_1)$. 
In $r'$ we have a 0-chain to $i$ at time $m'_1$, 
so by Lemma~\ref{lem:zchain}, 
$\I,(r,m'_1) \models K_i(\init_i = 0 \lor \bigvee_{k} \decided_k = 0)$.
Agent $i$ does not decide until time $m_0 > m_1'$ in $r$, 
so $i$ has not decided at time $m'_1$ in $r'$. 
Hence $\I,(r',m'_1) \models i \in \N \land \decides_i(0)$. 
But this contradicts 
$\I,(r.m'_1)\models K_{\ell} (\bigwedge_{k\in \N} \neg \decides_k(0))$. 
Thus, Agreement holds.

For Validity, we show first by induction on $k$ that if an agent
decides 0 in round $k$,
then some agent 
has initial preference 0. 
If any agent $i$ decides 0 in the first round, then it cannot know about
another agent's decision, so it must be that $\init_i = 0$.  
And if agent $i$ decides 0 in round $k+1$, then $\init_i \ne 0$  (for
otherwise $i$ would have decided 1 in round 1 and not made further
decisions), so $i$ 
knows that some other agent $j$ decided 0 at an earlier round.  
By induction, this implies that some agent 
has initial preference 0. 
And if agent $i$ decides 1, then $i$ did not decide 0 in the first
first round, so we must have $\init_i = 1$.
Hence, some agent 
has initial preference 1. 
This completes the proof of Validity.
\end{proof} 

Next, we establish termination of implementations of $\kbp_1$.

\begin{proposition} 
Suppose action protocol $P$ is an implementation of $\kbp_1$
with respect to an EBA context $\gamma = (\exchange, SO(t), \pi)$
for the sending omissions model with at most $t$ failures,  
where the information-exchange protocol $\exchange$
is 0-aware and halts transmissions after deciding. 
Then all 
agents have decided 
by time $t+2$ in all runs of  $\I_{\gamma,P}$. 
In the special case that $t=n-1$, all agents have decided by time $t+1$. 
\end{proposition}

\begin{proof} 
If there is an active 0-chain then by Lemma~\ref{lem:zchain}(1) the recipient decides 0, 
transmits its decision and becomes inactive. Each such instance reduces the number of active agents 
by one. Thus, by time $t$, either all 0-chains are inactive, or some nonfaulty agent 
must have received a 0-chain. When it does so, that agent decides 0 and transmits its decision to all other agents,
who receive it, and halt in the next step, if they have not
already halted.  

We show that if  in run $r$, agent $i$ is undecided and has not received a 0-chain by time $t+1$, 
then $\I,(r,t+1) \models \bigwedge_{j\in \N} \neg \decides_j(0)$. 
The antecedent of this implication is a property of the agent's local state. 
Hence, in fact, we have $\I,(r',t+1) \models K_i ( \bigwedge_{j\in \N} \neg \decided_j(0))$. 
It follows that $i$ decides at time $t+1$. 
(To see that the antecedent is a local proposition, note that 
the fact $\decided_i = \bot$ is local, and if it holds, then by 
Lemma~\ref{lem:zchain}(1) and the structure of the knowledge-based program, 
no 0-chain has reached $i$ at any past time. 
Let $\phi$ be the proposition ``a 0-chain has just arrived at $i$''. 
Because $\exchange$ records 0-decisions, we have that $\I \models \phi \rimp K_i \phi$. 
It follows using negative introspection that $\I \models \neg \phi \rimp K_i \neg \phi$.
Hence, it is also the case that if no 0-chain reaches $i$ at time 
$t+1$, then at time $t+1$, agent $i$ knows that no zero-chain has just arrived.)

Suppose that in run $r$, agent $i$ is undecided and has not received a 0-chain by time $t+1$. 
Let $j \in \N(r)$. We show that $j$ does not decide 0 at $(r,t+1)$. 
We consider two cases. Suppose first that $j$ has already decided by
time $t+1$. In this case, $\I,(r,t+1) \models K_j (\decided_j \neq \bot)$, 
and $j$ performs $\noop$ at $(r,t+1)$.  
On the other hand, if $j$ is undecided at time $t+1$, then it has not 
received a 0-chain before time $t+1$, for otherwise it would have decided 
by time $t+1$, by Lemma~\ref{lem:zchain}. 
Since there are no active 0-chains at time $t+1$, 
the agent also does not receive a 0-chain at time $t+1$. 
Hence by Lemma~\ref{lem:zchain}, we have $\I,(r,t+1) \models \neg K_j
\neg (\bigwedge_k \init_k = 1)$.  
{F}rom Proposition~\ref{prop:kbp1correct}, we have that $\I$ satisfies Validity. Hence we 
have $\I,(r,t+1) \models \neg \decides_j(0)$, for otherwise $j$ would decide 0
at a point at which it considers it possible 
that all initial values are  1. 

In the special case that $t=n-1$, note that if there exists an
active 0-chain at time $t$,  
then there have been $t$ failures, and there is just one remaining
active agent, which must  
be the one that has just received the 0-chain. if an agent is
undecided at time $t$ and  
has not received a 0-chain, it knows that there are no active
0-chains, hence no nonfaulty agent is deciding at time $t$. 
\end{proof}

Say that two decision protocols $P,P'$ for a context $\gamma$ are \emph{equivalent before time $k$}, 
if for all corresponding runs $r,r'$ of $P,P'$, respectively, all agents $i$, and all times $m<k$, 
we have  $P_i(r_i(m)) = P'_i(r'_i(m))$. Note that it follows that for all times $m \leq k$, we have 
$r(m) = r'(m)$. Trivially, we have that every two decision protocols $P,P'$ for a context $\gamma$ 
are equivalent before time 0.

\begin{lemma} \label{lem:fast0} 
Suppose that  $P$ is an EBA decision protocol 
for EBA context $\gamma$  and 
EBA decision protocol $P'$ implements a 
fast 0-decider in context $\gamma$. 
If
$P$ and $P'$ are equivalent before time $k$, and
$(r',k)$ and $(r,k)$ of $\I_{P',\gamma}$ are corresponding points
of $\I_{P',\gamma}$ and $\I_{P,\gamma}$, respectively,
then for all  all agents $i$,
if $P_i(r_i(k)) = \decides_i(0)$ then $P'_i(r'_i(k)) = \decides_i(0)$.
\end{lemma} 

\begin{proof} 
Suppose that $P_i(r_i(k)) = \decides_i(0)$. 
We must have $\I_{P,\gamma},(r,k) \models K_i \exists 0$, for otherwise
we can construct a run in which $P$ violates Validity. 
Also,  $\I_{P,\gamma},(r,k) \models \decided_i = \bot$. 
By synchrony, and the fact 
that for all times $m \leq k$, we have $r(m) = r'(m)$, it 
follows that $\I_{P',\gamma},(r',k) \models  \decided_i = \bot \land K_i \exists 0$. 
Because $P'$ implements a fast 0-decider, it follows that 
$P'_i(r'_i(k)) = \decides_i(0)$.
\end{proof} 

For a decision protocol $P$ for an EBA-context $\gamma$ with $n$
agents and up to $t$ failures,  
say that  \emph{agents have weak awareness of failures} in $\I_{P,\gamma}$, if
for all runs $r$ of $\I_{P,\gamma}$, all times $k$ with  $0< k \leq min(n-1,t+1)$, and all agents $i,j$, 
if $\I_{P,\gamma} (r,k) \models   \decided_i = \bot \land K_i \exists 0$ 
then there exists an agent $j$ and a 
run $r'$ with  $r_i(k) = r'_i(k)$ such that $i,j \in \N(r')$ and
$\I_{P,\gamma} (r',k-1) \models   \decided_j = \bot \land K_j \exists 0$.
Informally, this says that if an agent $i$ has learned $\exists 0$, then it 
considers it possible that both itself and another agent $j$ from which it 
could have learned that fact are nonfaulty. 
\roncomment{Maybe express this definition using expressions like $P_i(r_i(k)) = \decide_i(0)$ rather than satisfaction of formulas??
Possibly the upper bound on $k$ can be eliminated also - I got this by asking what is needed for there to be 
two nonfaulty agents, but it may follow from the antedecent and it seems not to play much of a role in the proofs below. 
(But I still need to check that aspect of the proofs carefully.) 
}

The following result says that for protocols and contexts whose systems have weak awareness of failures, 
we have a converse to Lemma~\ref{lem:fast0}.  

\begin{lemma} \label{lem:fast1}
  If  $P$ and $P'$ are EBA decision protocols
for EBA-context $\gamma$, $P'$ implements a fast 0-decider in
context $\gamma$,
agents have weak awareness of failures in $\I_{P',\gamma}$, 
$P \leq_\gamma P'$, 
$P$ and $P'$ are equivalent before time $k$, where $k \leq
min(n-1,t+1)$, and $(r',k)$ and $(r,k)$ are corresponding points of
of $\I_{P',\gamma}$ and  $\I_{P,\gamma}$, respectively, then for all agents
$i$,  
if  $P'_i(r'_i(k)) = \decides_i(0)$ then $P_i(r_i(k)) = \decides_i(0)$. 
\end{lemma} 

\begin{proof} 
Suppose that $(r',k)$ and $(r,k)$  are corresponding points of
$\I_{P',\gamma}$ and  $\I_{P,\gamma}$, respectively, 
and $P'_i(r'_i(k)) = \decides_i(0)$. Because $P$ and $P'$ are equivalent before time $k$, agent $i$ has not 
decided before time $k$ in run $r$. It follows from $P \leq P'$ that agent $i$ decides at time $k$ in run $r$. 
We show that the decision must be 0, by deriving a contradiction from the assumption that agent $i$ decides 1 at time $k$ in $r$. 
Since $P'$ implements a fast 0-decider, we have  $\I_{P',\gamma} \models \decided_i = \bot \land K_i \exists 0$. 
Because agents have weak awareness of failures in $\I_{P',\gamma}$, there exists a run
$r''$ of  $\I_{P',\gamma}$ with  $r'_i(k) = r_1''(k)$ such that $i,j
\in \N(r'')$
and $\I_{P,\gamma} (r'',k-1) \models   \decided_j = \bot \land K_j \exists 0$.
It follows using the fact that $P'$ implements a fast 0-decider that 
$j$ decides 0 at time $k-1$ in $r''$. Let $r^*$ be the run of $P$
corresponding to $r''$.  
By equivalence before time $k$ of $P$ and $P'$, 
$j$ decides 0 at time $k-1$ in $r^*$. We also have that $r_i(k)
= r'_i(k) = r''_i(k) = r^*_i(k)$, 
so agent $i$ makes the same decision (namely, 1) at time $k$ in $r^*$ as
it does at time $k$ in $r$. 
Because both $i$ and $j$ are nonfaulty in $r^*$, this contradicts
Agreement.
\end{proof}

\begin{theorem} \label{thm:OptimalityWeakAware}
  If $P^1$ is a decision protocol implementing $\kbp^1$ with
respect to the context $\gamma$, 
agents have weak awareness of failures in $\I_{P^1,\gamma}$, 
and $P^1$ is an EBA protocol, then $P^1$ is an optimal EBA protocol with
respect to the context $\gamma$.  
\end{theorem}

\begin{proof}
  Suppose that EBA decision protocol $P^1$ implements $\kbp^1$ with
respect to the context $\gamma$ and $P \leq_\gamma P^1$. 
We show by induction on $k$ that $P$ and $P^1$ are equivalent before time $k$, for all $k$ up to the latest 
time that $P^1$ makes a decision. The case of $k=0$ is immediate from the definition. 
Suppose that $P$ and $P^1$ are equivalent before time $k$. 
We show that $P$ and $P^1$ are equivalent before time $k+1$. 
Let $r$ and $r^1$ be corresponding runs of $P$ and $P^1$, respectively. 

If agent $i$ has already made a decision before time $k$ in $r$, 
then by the inductive hypothesis, agent $i$ made the same
decision before time $k$ in $r^1$. 
Since both protocols satisfy EBA, it follows from Unique-Decision that $P_i(r_i(k)) = P^1_i(r^1_i(k)) = \noop$.
If agent $i$ has not yet decided and does not decide at time $k$ in $r$, 
then since $P \leq P^1$ and $P$ and $P^1$ are equivalent before time $k$, 
it is also the case that $i$ has not yet decided by time $k$ in $r^1$,
nor does it  
decide at time $k$. Hence $P_i(r_i(k)) = P^1_i(r^1_i(k)) = \noop$. 
We may therefore assume that agent $i$ decides at time $k$ in $r$.
We consider the two possible decisions. 

If agent $i$ decides 0 at time $k$ in $r$, then by
Lemma~\ref{lem:fast0},  
we have that $i$ also decides 0 at time $k$ in $r'$, 
so $P_i(r_i(k)) = P^1_i(r^1_i(k)) = \decide_i(0)$. 
If $i$ decides 1 at time $k$ in $r$, then we must have $k>1$;
$i$ cannot decide 1 at time 0 since there
may be an agent with $\init_i=0$ that is deciding 0.  
Moreover, $i$ does not decide before time $k$ in $r^1$. We show
that agent $i$ decides  
1 also at time $k$ in $r^1$. There are two ways that agent $i$ could avoid 
doing so: it could decide 0, or it could perform $\noop$, and decide later. 

Consider the latter case first. Since agent $i$ has not yet decided, 
we have $\I_{P^1,\gamma} ,(r^1,k) \models \decision_i = \bot$, so 
the condition for deciding 1 in $\kbp^1$ must be false. 
That is, 
\roncomment{An amendment to the KBP above is needed in this step - belief instead of knowledge! This makes a 
difference to the argument that follows, since it ensures $i \in \N$
at the indistinguishable run.}
$$\I_{P^1,\gamma} ,(r^1,k) \models \neg \beln_i (
\bigwedge_{j\in \N} \neg \decides_j(0)).$$
This means that there exists a run $q^1$ of $P^1$ with $q^1_i(k) = r^1_i(k)$, and an agent $j$ such that 
$$\I_{P^1,\gamma} ,(q^1,k) \models i \in N \land j\in \N \land
\decides_j(0).$$ 
Let $q$ be the corresponding run of $P$. 
Because the failure pattern
is the same in $q$ and $q^1$, both $i$ and $j$
are nonfaulty in $q$.  
By Lemma~\ref{lem:fast1}, 
we have that $j$ decides 0 at time $k$ in $q$. 
But $q^1_i(k) = r^1_i(k)$, so $q_i(k) = r_i(k)$ by 
equivalence  of $P$ and $P^1$ before time $k$. 
Since $i$ decides 1 at time $k$ in $r$, we have that 
$i$ decides 1 at time $k$ in $q$ as well. This contradicts Agreement.

Finally, if $i$ decides 0 at time $k$ in $r^1$,
then by Lemma~\ref{lem:fast1}, $i$ decides 0 at time $k$ in $r$, 
contradicting the assumption that $i$ decides 1 at time $k$ in $r$.
\end{proof}

We now develop an implementation of the knowledge-based program $\kbp_1$ in a
specific EBA-context $\gamma_1$. Moreover, we will show that this implementation is 
optimal relative to this context. 

\newcommand{\decidedf}{d} %

We first describe the EBA context $\gamma^1 = (\exchange,SO(t),\pi)$.   
The information-exchange protocol $\exchange = \langle \exchange_1, \ldots , \exchange_n\rangle$ in this context 
has $\exchange_i = \langle L_i, I_i , A_i, M_i, \mu_i,
\alpha_i\rangle$,  given
by the following, for each agent $i$: 
\begin{itemize} 
\item The set $L_i$ of local states of agent $i$ is the set of 
tuples $\langle \Timef,\initf, \decidedf, \rd\rangle$, 
where $\Timef \in \Nat$, $\initf \in \{0,1\}$, $\decidedf\in \{\bot, 0,1\}$. 
and $\rd \in \{\bot, 0\}$. We use the same variable names, subscripted by $i$, to 
represent the functions that extract a component of the tuple. 
For example, we write $\Timef_i(s)$ for the first component. 

\item The set of initial states $I_i$ is the set of states $s \in L_i$
with $\Timef_i(s) =0$, $\decidedf_i(s) = \bot$ and $\rd_i(s) = \bot$. 
Thus, the only variability in agent $i$'s initial state is the choice of $\initf\in\{0,1\}$.

\item The action set $A_i = \{\noop,\decide_i(0),\decide_i(1)\}$, as required for all EBA-contexts, 

\item The set of non-null messages $M_i = \{0\} $. 

\item The message transmission component $\mu_i$ is defined, for $j \in \Agents$ and $s\in L_i$, by  
$ \mu_i(s,\noop)(j) = \bot$ 
and 
$ \mu_i(s,\decide_i(0))(j) = 0$
and 
$ \mu_i(s,\decide_i(1))(j) = \bot$.

\item The state update component $\delta_i$ is defined by taking
  $\delta(\langle \Timef,\initf, \decidedf, \rd\rangle, a, \rho) = 
\langle \Timef',\initf', \decidedf',\rd'\rangle$, where  
\begin{itemize} 
\item $\Timef' = \Timef +1$, 
\item $\initf' = \initf$, 
\item $\decidedf' = \decidedf$ if $a = \noop$ and $\decidedf' = x$ if
  $a = \decide_i(x)$ for $x \in \{0,1\}$,  
\item $\rd' = 0$ if $\rho(j) = 0$ for some $j \in \Agents$, otherwise
  $\rd' = \rd$.
  
\end{itemize} 
\end{itemize}
The interpretation of the atomic propositions is straightforward: 
$\init_i = x$ is in $\pi(r,m)$ iff $\initf_i(r(0)) =x$, and 
$\decided_i = x$ is in $\pi(r,m)$ iff $\decidedf_i(r_i(m)) = x$.
The intuition for this protocol is that the only information that is transmitted is 0-decisions, and that if any
$0$-decision message has been received, that fact is recorded as $\rd = 0$.  

The decision protocol $P^1$ is defined by taking the action $P^1_i(s)$ for agent $i$, in the case
the failures model is $SO(t)$, to be 
defined by the program 
\begin{itemize} 
\item[] If $\decidedf_i(s) \neq \bot$ then $\noop$ 
\item[] else  if $\initf_i(s) = 0$ or $\rd(s) = 0$ then $\decide_i(0)$ 
\item[] else if ($t+1 < n$ and $\Timef_i(s) = t+1$)  or ($t+1 = n$ and $\Timef_i(s) = t$)   then $\decide_i(1)$ 
\item[] else $\noop$. 
\end{itemize} 
where $n$ is the number of agents. 
Intuitively, this program decides $0$ 
if the agent has initial value $0$ or hears of a 0-decision by another agent. 
If no $0$-decision has been heard of by time $t+1<n$ then the agent decides 1. 
There is a special case when $t+1 = n$, in which case 1 can be decided at time $t$. 

\begin{theorem} \label{thm:ImplementsSimple}
Decision protocol $P^1$ implements the knowledge-based program
$\kbp^1$ in the EBA context $\gamma^1$.  
\end{theorem} 

\begin{proof} 
\roncomment{TO DO}  
\end{proof}

\begin{lemma} \label{lem:P1WeakAware} 
Agents have weak awareness of failures in the system $\I_{P^1,\gamma^1}$. 
\end{lemma} 

\begin{proof} 
(sketch, there may be some sharing with Theorem~\ref{thm:ImplementsSimple}.) 
  If $r_i(k) = \langle k, 1, \bot, 0\rangle$, then 
$\I_{P^1,\gamma^1},(r,k) \models
  \decided_i= \bot \land K_i\exists 0$.
Such a local state can occur at time $k >0$ iff $k \leq t+1$ (in case $t+1 <
n$), or $k \leq t$ (in case  $t+1 = n$).  
Let $r'$ be a run in which there exists a 0-chain of the form $i_0 \rightarrow i_1 \rightarrow \ldots i_k = i$, 
$i_0$ is the only agent with initial value $0$, 
the faulty nodes in $r'$ are $i_0 \ldots i_{k-2}$,  and 
each $i_\ell$ successfully transmits only to the next agent in the 
chain at time $\ell$, for $\ell < k-1$. 
Note that for this to be possible, we need that the number of faulty agents in the chain is at most $t$, 
that is $k-1 \leq t$. This holds in both of the cases above. Let $j = i_{k-1}$.
By construction, we have 
$r_i(k) = r'_i(k)$, 
$\I_{P^1,\gamma^1},(r',k) \models \decided_i= \bot \land K_i\exists 0$, 
and $i$ and $j$ are non-faulty in $r'$.
\end{proof}

The following is now immediate from Theorem~\ref{thm:ImplementsSimple},
Lemma~\ref{lem:P1WeakAware} 
and Theorem~\ref{thm:OptimalityWeakAware}.

\begin{corollary} 
The concrete  decision protocol $P^1$ implementing $\kbp^1$ in the context $\gamma^1$
is an optimal EBA-protocol with respect to context $\gamma^1$. 
\end{corollary}

\roncomment{The following is Joe's previous optimality proof. I've added some comments at places where I think there are 
issues.}

\begin{theorem} 
  For all failure environments $\failures$, information-exchange
  protocols $\exchange$ sufficiently rich for $\kbp_0$, and 
implementations $P$ of the knowledge-based program $\kbp_0$,
$P$ is optimal with respect to $\I_{\exchange^0,SO(t),P}$. 
\end{theorem}

\begin{proof}
We first prove that with $P$, an agent $i$
decides 0 in round $m$ of run $r$ iff the first time that $K_i \exists
0$ holds in run $r$ is at the point $(r,m-1)$; that is,
$i$ decides 0 in round $m$ of run $r$ iff 
$\I_{\exchange^*,SO(t),P'}, (r,m-1) \models K_i \exists 0 $ and
$\I_{\exchange^*,SO(t),P'}, (r,m') \models \neg K_i \exists 0 $ for $m' <
m-1$.  We proceed by induction on $k$.  
The claim is clearly true if $k=1$.  Suppose that $i$ decides 0 in
round $k> 1$ of run $r$.  Then it must have been the case that 
$\decides(0) = 1$ is in $i$'s local state at $(r,k)$
Thus, $\I_{\exchange^*,SO(t),P'}, (r,k) \models K_i(\decided(0) =
1)$.  It follows that at all point $(r',k)$ that $i$ considers
possible at $(r,k)$, $i$ must have received a message from some agent
$j$ saying $\decision=0$ (since this is the only way that $i$'s state
can include $\decision(0)=1$), and this, in turn, can happen only if 
$\I_{\exchange^*,SO(t),P'}, (r',k) \models \exists 0$.  It follows that
$\I_{\exchange^*,SO(t),P'}, (r,k) \models K_i(\exists 0)$.  On the
other hand, it cannot be the case that
$\I_{\exchange^*,SO(t),P'}, (r,k') \models K_i(\exists 0)$ if $k' < k-1$, for
otherwise, by the induction hypothesis, $i$ would have decided 0 at an
earlier round.  

For the opposite direction, suppose $\I_{\exchange^*,SO(t),P'}, (r,k-1)
\models K_i(\exists 0)$ and $\I_{\exchange^*,SO(t),P'}, (r,k')
\models K_i(\exists 0)$ for $k' < k-1$.  Given the form of $i$'s state,
this can be the case only if $\decided(0)=1$ is in $i$'s state for the
first time at $(r,k-1)$.  It follows that $i$ decides 0 in round $k$
of $r$.

Now suppose that $P'$ strictly dominates $P$ with respect to
$\exchange^*$.  Let $k$ be the earliest round at which
some
agent $i$ decides in round $k$ of a run $r'$ of
$\I_{\exchange^*,SO(t),P'}$ and $i$ either does not decide at or before round
$k$ of the corresponding run $r$ of $\I_{\exchange^*,SO(t),P'}$ or $i$
makes a different decision in round $k$ of $r$ than in round $k$ of
$r'$.   
Since the same information-exchange protocol is used in both
systems, all agents must have exactly the same state up to time $k-1$ in
corresponding runs of the two systems.  We now consider two cases.
Suppose that $i$ decides 0 in round $k$ of $r'$.  Since $i$ either
does not decide by round $k$ of $r$ or decides 1 in this round, by the
argument above, it is must be the case that 
$\I_{\exchange^*,SO(t),P'}, (r,k-1) \models \neg K_i \exists 0$.  
Thus, $\I_{\exchange^*,SO(t),P'}, (r',k-1) \models \neg K_i \exists 0$.  
It follows that there is a run $r''$ of $\I_{\exchange^*,SO(t),P'}$
such that $r'_i(k) = r''_i(k)$ and
$\I_{\exchange^*,SO(t),P'}, (r'',k-1) \models \exists 0$.  Since $i$
has the same local state in both $(r',k)$ and $(r'',k)$, $i$ will make
the same decision in round $k$ of both runs, so will decide 0 in round
$k$ of $r''$.  But this means that that the Validity property does not
hold for $P'$, a contradiction.

If $i$ decides 1 in round $k$ of $r'$, and not in round $k$ of
$r$, then it must be the case that
$\I_{\exchange^*,SO(t),P}, (r,k-1) \models \neg K_i \Box (\decided(0)
= 0)$, so
$\I_{\exchange^*,SO(t),P'}, (r',k-1) \models \neg K_i \Box (\decided(0)
= 0)$.  
\roncomment{[Issue in the last step because of the $\Box$. We only have equivalnce for 
past decisions so far.]}
Thus, there must exist a run $r''$ in $\I_{\exchange^*,SO(t),P'}$ such
that $i$ is nonfaulty in $r''$, 
$r''_i(k) = r'_i(k)$, $\I_{\exchange^*,SO(t),P'} \models
(r'',k-1) \models \Diamond (\decided(0) = 1)$, and in round $k$ of
$r''$, $i$ decides 1. 
\roncomment{Where does the ``$i$ is nonfaulty'' come from in the previous step? 
In my proof above I've used belief, which takes care of it.}  
 Since 
$i$'s local state does not track which agents are faulty,
there must exist a run $r'''$ in $\I_{\exchange^*,SO(t),P'}$ such that
$r'''_i(k) = r''_i(k)$, $i$ is nonfaulty in $r'''$, and some nonfaulty
agent $j$ decides 0 in 
$r'''$. 
\roncomment{``does not track which agents are faulty'',  and its use, need to be fleshed out here.}
Since $r'''_i(k) = r''_i(k)$, $i$ decides 1 in round $k$ of
$r'''$.  This contradicts Agreement.
\end{proof}

\roncomment{The following needs to be revisited now that the focus is on $\kbp_1$ as defined above. I suspect
that the abstraction of Lemma~\ref{lem:zchain} helps with the  other direction, but I did not work that through yet.} 

Let $\kbp_1$ be the knowledge-based program that is identical to $\kbp_0$, except that the 
condition for $\decide(1)$ is replaced by the formula $ \beln_i( \bigwedge_j(\neg \decides_j(0)))$

We remark that $\decides_j(0)$ is equivalent to $\neg \decided_j \land K_j(\init_j=0 \lor  \bigvee_{k\in \Agents}\decided_j(0))$
in the context of this program. 
 
 \begin{lemma} Suppose $n>t+1$. 
 If decision protocol $P$ is an implementation of $\kbp_0$ with respect to $\exchange$ and $SO(t)$
 then $P$  is an implementation of $\kbp_1$ with respect to $\exchange$ and $SO(t)$.
 \end{lemma} 
 
 \begin{proof} 
 Suppose that $P$ is an implementation of $\kbp_0$ with respect to $\exchange$ and $SO(t)$. 
 We show that in $\I = \I_{P, \exchange,SO(t)}$, the formulas $\neg \decided_i \land  \beln_i( \bigwedge_j(\neg \decides_j(0)))$
 and $\neg \decided_i \land \beln_j(\future \bigwedge_{k\in \N}\neg \decided_j(0))$ are equivalent at all points. 
  
 Suppose first that $\I,(r,m) \models \neg \decided _i \land \bigwedge_j(\neg \decides_j(0))$. 
 Then there does not exist an active 0-chain to any agent at the point $(r,m)$. 
If we had $\I,(r,m) \models \decided_k(0)$ for any $k \in \N(r)$, then 
agent $k$'s decision would have been conveyed to and recorded by 
agent $i$ by time $m$. We therefore have either $\decided_i$ or $\decides_i(0)$ 
at time $m$, a contradiction. 
Thus, we have $\I,(r,m) \models \neg \decided_k(0)$ for all $k \in \N(r)$. 
Since there are no active 0-chains at $(r,m)$, no 0-chain ever reaches any $k \in \N(r)$
at times later than $m$. By Lemma~\ref{lem:zchain}(3), it follows that 
no non-faulty agent ever decides 0, that is $\I,(r,m) \models \future \bigwedge_{k\in \N}\neg \decided_j(0))$. 

For the converse, we consider the cases $m=0$ and $m>0$ separately. 
First, if $m=0$, then $\I,(r,m)\models \neg    \beln_i( \bigwedge_j(\neg \decides_j(0)))$
and  $\I,(r,m)\models \neg \beln_i \future \bigwedge_{k\in \N}\neg \decided_j(0)$, 
in both cases because $n>t+1$ so the agent considers it possible that it is non-faulty and there exists another non-faulty agent 
that votes 0.  

Suppose that $m >0$ and 
$\I,(r,m) \models\neg \decided _i \land \beln_i \future \bigwedge_{k\in \N}\neg \decided_k(0))$. 
We show that $\I,(r,m) \models  \beln_i \bigwedge_j(\neg \decides_j(0))$.
To the contrary, assume that $\I,(r,m)  \models \neg \beln_i \bigwedge_j(\neg \decides_j(0))$.
Then there exists a run $r^1$  with $(r,m) \sim_i (r^1,m)$ and an agent $j$
with $\I,(r^1,m)  \models i \in \N \land \neg \decided _i\land \decides_j(0) \land \beln_i \future \bigwedge_{k\in \N}\neg \decided_k(0))$.
By Lemma~\ref{lem:zchain}(1), there is a 0-chain to agent $j$ at point $(r^1,m)$. 
Since $m>0$, we have $\I,(r^1,m-1) \models  \decides_\ell(0)$
and $\I,(r^1,m) \models \ell \rightarrow j$.

Consider the run $r^2$ obtained from $r^1$ by modifying the adversary $F^1$ of 
$r^1$ to an adversary $F^2$ identical to $F^1$ except that $F^2(m-1,\ell,k)=1$ for all $k\in \N(r^1) \setminus \{i\}$. 
This change does not increase the number of faulty agents, so it is still a run for $SO(t)$. 
Up to time $m$, this run is identical to $r^1$ from $i's$ point of view. 
Hence we have 
$\I,(r^2,m)  \models i \in \N \land \neg \decided _i\land \beln_i \future \bigwedge_{k\in \N}\neg \decided_k(0))$.
We conclude from this that 
$\I,(r^2,m)  \models  \bigwedge_{k\in \N\setminus \{i\}}\neg \decided_k(0))$.
But, in $r^2$, there is a 0-chain to all $k \in \N\setminus \{i\}$ at time $m$. 
Hence we must have 
$\I,(r^2,m)  \models  \bigwedge_{k\in \N\setminus \{i\}} \decided_i(1)$. 

Next, consider the run $r^3$ obtained by modifying $r^2$ by replacing the adversary $F^2$ 
by the adversary $F^3$ that is identical to $F^2$ except that $F^2(m-1,\ell,i)=1$. 
This is again a run for $SO(t)$, and all agents except $i$ have the same sequence 
of local states to time $m$. Thus, still 
$\I,(r^2,m)  \models  \bigwedge_{k\in \N\setminus \{i\}} \decided_i(1)$. 
However, we also have that 
$\I,(r^3,m) \models \neg \decided_i$ (since $i$'s local states are unchanged up to time $m-1$) 
and there exists a 0-chain to agent $i$ at $(r^3,m)$. 
But this means that $\I,(r^3,m) \models \decides_i(0) \land \bigwedge_{k\in \N\setminus \{i\}} \decided_i(1)$.
Since $n>t+1$, this yields a contradiction to validity. 
 \end{proof}

\roncomment{
To show the two KBP's are  equivalent, we'd really like to have the converse as well: 
If decision protocol $P$ is an implementation of $\kbp_1$ with respect to $\exchange$ and $SO(t)$
 then $P$  is an implementation of $\kbp_0$ with respect to $\exchange$ and $SO(t)$. 
 Because the argument makes use of results about properties of implementations of 
 $\kbp_0$, it may be necessary to prove similar set of results about $\kbp_1$. I have't tried that yet.
 It may be possible to refine the existing results so that they work for any protocol that uses the 
 0-decision condition in $\kbp_0$, which case the above proof probably works in both directions. 
 (CGM does something a bit like this.) 
}

\joecomment{We should also give a concrete implementation of
  $\kbp_0$.  This is easy: Agent i decides 0 at round $1$ iff
  $\init_i=0$; it decides 0 at round $k>1$ if it receives a messages
  from some agent $j$ in round $k-1$ saying $\decision = 0$; and it
  decides 1 at round $k > 1$ if in round $k$, it has received messages
  saying 1 from at least $n-k+1$ agents in round $k-1$, and it does
  not receive a message in round $k-1$ saying $\decision=0$.  We need
  to argue that this implements $\kbp$ (which should be straightforward).}

\section{An Optimal Knowledge-Based Program for Eventual Byzantine
  Agreement} \label{sec:fip} 

The new EBA program $\kbp_{EBA}$

Correctness and Optimality of the new program 

\roncomment{Do we have an actual concrete implementation?}

\section{Conclusion} \label{sec:conclusion}

}

\section{An optimal full-information protocol for EBA} \label{sec:fip-optimal}
While $\kbp^0$ is optimal with respect to the basic and minimal 
information-exchange settings,
it is not optimal in the full-information setting, as the following
example shows.

\begin{example}
  Suppose that $t=10$ and $n=20$.  Consider a run where all agents
  have initial value 1, agents 1--10 are faulty, and no faulty agent
  sends a message in any round.  This means that, at the end of the
  first round, each nonfaulty agent knows who the faulty agents are.
  At the end of the second round, it is common knowledge among the
  nonfaulty agents who the faulty agents are: each nonfaulty agent $i$  
  will know at that point that each other nonfaulty agent $j$ knows who
  the nonfaulty agents are and (by the same reasoning) $i$ knows that
  $j$ knows that all the nonfaulty agents know who the faulty agents
  are, and so on.   Moreover, it is common knowedge among the nonfaulty
  agents  
  that no nonfaulty agent has already decided, and it is
  not common knowledge 
  among 
  the nonfaulty agents that some agent had an
  initial preference of 0, while it is 
  common
  knowledge that some
  agent had an initial preference of 1.    As we show below, this
  means that the nonfaulty agents can decide on 1 in round 3.  On the
     other hand, with $P^{basic}$ and $P^{min}$, the nonfaulty agents
  would not decide in this run until round 12.
\end{example}

Intuitively, we take $\ck{\N}(\faultyag)$ to hold when it is 
common 
knowledge among the nonfaulty agents who the nonfaulty agents are.  It
turns out this can happen only if the nonfaulty agents have common
knowledge of $t$ agents that are faulty.  Thus, we 
take 
$\ck{\N}(\faultyag\land \phi)$, for each formula $\phi$, to be an abbreviation for $
\exists A \subseteq \Agents(|A| = t \land 
\ck{\N}(\bigwedge_{ i \in   A} (i \notin \N) \land \phi ))$.
We also define 
$\nodecided_\N(x)$ as an abbreviation for $\bigwedge_{j \in \N} \neg
(\decided_j = x)$ for $x \in \{0,1\}$. 
We can now formalize the situation in the example. 

\begin{proposition} \label{prop:cfaulty-necessary}
If $P$ is an optimal protocol in 
the context $\gamma_{\fip,n,t}$ and $\I_{P,\gamma_{\fip,n,t}} ,(r,m) \models
  \decided_i = \bot \land K_i(\ck{\N}(\faultyag \land \nodecided_\N(1) \land
 \exists 0))$, then all undecided agents in $\N(r)$ make a
  decision in round $m+1$, and 
  similarly if $\I_{P,\gamma} ,(r,m) \models \decided_i = \bot \land
  K_i(\ck{\N}(\faultyag \land \nodecided_\N(0) \land \exists 1))$.
\end{proposition}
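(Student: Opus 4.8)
The plan is to argue by contradiction, producing from a supposedly optimal $P$ that fails the conclusion a strictly dominating EBA decision protocol $P'$. I would prove only the first claim (the $\nodecided_\N(1)\land\exists 0$ case); the second is symmetric, interchanging $0$ and $1$. So suppose $P$ is optimal in $\gamma_{\fip,n,t}$, $\I_{P,\gamma_{\fip,n,t}},(r,m)\models\decided_i=\bot\land K_i(\ck{\N}(\faultyag\land\nodecided_\N(1)\land\exists 0))$, but some agent $j$ nonfaulty in $r$ is undecided at $(r,m)$ and does not decide in round $m+1$ of $r$. Define $P'$ to agree with $P$ everywhere except that for every agent $k$ and local state $s$ with $\I_{P,\gamma_{\fip,n,t}},s\models\decided_k=\bot\land K_k(\ck{\N}(\faultyag\land\nodecided_\N(1)\land\exists 0))$ we set $P'_k(s)=\decide_k(0)$; this is a well-defined action protocol for the full-information exchange, since the condition is a test on $k$'s local state.

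The crux is an ``equivalence up to the first deviation'' argument. For corresponding runs $\bar r$ of $P$ and $\bar r'$ of $P'$, let $m^\ast$ be the first round at which some agent acts differently in $\bar r'$ than in $\bar r$ (if there is none, $\bar r'=\bar r$, which is then a run of $\I_{P,\gamma_{\fip,n,t}}$). An easy induction on time gives $\bar r(\ell)=\bar r'(\ell)$ for $\ell\le m^\ast$ (same initial state and failure pattern; equal states give equal actions, hence equal messages, hence equal next states). Every deviation has the form ``an undecided agent decides $0$'', and the agent $k$ causing the round-$m^\ast$ deviation has $\I_{P,\gamma_{\fip,n,t}},(\bar r,m^\ast)\models K_k(\ck{\N}(\faultyag\land\nodecided_\N(1)\land\exists 0))$, so by truthfulness of $K_k$ the formula $\ck{\N}(\faultyag\land\nodecided_\N(1)\land\exists 0)$ holds at $(\bar r,m^\ast)$. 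Unfolding $\ck{\N}(\faultyag\land\cdot)$ into $\exists A\,(|A|=t\land\ck{\N}(\cdots))$ and using positive introspection of common knowledge ($\ck{\N}\psi\rimp\ek{\N}\ck{\N}\psi$) plus closure of knowledge under valid implication, I get that every agent nonfaulty in $\bar r$ satisfies the same test at $(\bar r,m^\ast)$; hence every such agent still undecided at $(\bar r,m^\ast)$ decides $0$ in round $m^\ast+1$ of $\bar r'$, and (since $\nodecided_\N(1)$ held at $(\bar r,m^\ast)$) every agent nonfaulty in $\bar r'$ has $\decided=0$ from time $m^\ast+1$ on.

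From here I would check that $P'$ is an EBA decision protocol satisfying $P\leq_{\gamma_{\fip,n,t}}P'$: Unique Decision is immediate (the modification fires only when $\decided_k=\bot$); Validity holds because $\exists 0$ depends only on the shared initial global state (so it holds in any run with a deviation, while deviation-free runs are runs of $P$); Agreement holds because in a run with a deviation no nonfaulty agent ever decides $1$ (none did before $m^\ast$, by $\nodecided_\N(1)$, and all are decided $0$ afterward), while deviation-free runs inherit Agreement from $P$; Termination holds since every run with a deviation has all nonfaulty agents decided by round $m^\ast+1$, and deviation-free runs coincide with runs of $P$; and $P\leq_{\gamma_{\fip,n,t}}P'$ because $P'$ never decides later than $P$ (an agent decided before $m^\ast$ in $\bar r$ is decided at the same time in $\bar r'$, and one still undecided at $(\bar r,m^\ast)$ performs a $\decide$ action at time $m^\ast$ in $\bar r'$, which is no later than in $\bar r$). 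Finally, for strict domination I use the run $r$ itself: since the hypothesis gives $K_i(\ck{\N}(\faultyag\land\nodecided_\N(1)\land\exists 0))$ at $(r,m)$, the argument above (truthfulness of $K_i$, then introspection of $\ck{\N}$) shows $j$ satisfies the modification test at $(r,m)$, so $P'_j(r_j(m))=\decide_j(0)$ while $P_j(r_j(m))\neq\decide_j(0)$ (else $j$ would decide in round $m+1$ of $r$); thus a deviation occurs in $r$ at some round $m^\ast\le m$, and $j$ (still undecided at $(r,m^\ast)$) performs a $\decide$ action at time $m^\ast\le m$ in the corresponding run $r'$ of $P'$, whereas $j$ makes its first $\decide$ action in $r$ only at some time $\ell\ge m+1>m^\ast$. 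Hence $P'\not\leq_{\gamma_{\fip,n,t}}P$, so $P'$ strictly dominates $P$, contradicting optimality. The second claim follows by the same argument with the decide-$0$ modification replaced by the analogous decide-$1$ modification and $\nodecided_\N(1)\land\exists 0$ replaced by $\nodecided_\N(0)\land\exists 1$.

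The hardest part, I expect, is the bookkeeping that keeps $\I_{P,\gamma_{\fip,n,t}}$ (where the tests defining $P'$ are evaluated) and $\I_{P',\gamma_{\fip,n,t}}$ (where the correctness of $P'$ must be verified) in sync: the ``equivalence up to the first deviation'' claim and the Agreement verification built on it---all undecided nonfaulty agents jointly commit to $0$ at the deviation round, and no nonfaulty agent had yet decided $1$---are where the real content lies. The common-knowledge manipulation (propagating the decide-$0$ test from a single agent to all nonfaulty agents using introspection and the unfolding of $\ck{\N}(\faultyag\land\cdot)$) is standard but needs to be stated carefully.
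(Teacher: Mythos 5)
Your construction is the same as the paper's: replace $P$ by a protocol $P'$ that decides $0$ whenever the agent's state satisfies the test $K_k(\ck{\N}(\faultyag \land \nodecided_\N(1) \land \exists 0))$ (evaluated in $\I_{P,\gamma_{\fip,n,t}}$), show that corresponding runs agree up to the first time the common-knowledge condition becomes true, use truth of $K_k$ plus $\ck{\N}\psi \land j\in\N \rimp K_j\ck{\N}\psi$ to make all undecided nonfaulty agents fire simultaneously, verify that $P'$ is an EBA protocol dominating $P$, and contradict optimality. That matches the paper's proof step for step.

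There is, however, one concrete gap in your verification that $P'$ is an EBA protocol: your guard $\decided_k=\bot$ is evaluated in $\I_{P,\gamma_{\fip,n,t}}$, i.e.\ it says ``$k$ has not yet decided \emph{under $P$}'', and nothing in your definition makes $P'$ perform $\noop$ once $k$ has decided \emph{under $P'$}. Concretely: suppose the condition first holds at time $m^\ast$, all undecided nonfaulty agents (including $k$) decide $0$ at $m^\ast$ in the $P'$-run, but under $P$ they remain undecided and some nonfaulty agent later decides $1$ at time $t_1>m^\ast$, with $k$ deciding $1$ under $P$ at some $t>t_1$. At time $t$ the formula $\nodecided_\N(1)$ is false in $\I_{P,\gamma_{\fip,n,t}}$, so the $K_k$-test fails, the modification does not fire, and $P'$ falls back to $P$'s action $\decide_k(1)$ --- even though $k$ already performed $\decide_k(0)$ at $m^\ast$ in the $P'$-run. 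This scenario is consistent with $P$ being an EBA protocol (it is exactly the kind of run the proof by contradiction must allow), and in it $P'$ violates Unique Decision and Agreement, so ``Unique Decision is immediate'' and ``no nonfaulty agent ever decides $1$ after the deviation'' do not hold for $P'$ as you defined it. The fix is small and is what the paper's proof leans on when it asserts that the protocol performs $\noop$ at states recording a prior decision: define $P'$ to perform $\noop$ whenever the agent has already decided \emph{in the $P'$-run} (a legitimate state-based test once decisions are recorded in the local state, or, in the full-information context, determined from the local state by recursion on time since the state encodes the agent's entire history and $P'$ is deterministic). With that clause, all nonfaulty agents are silent after round $m^\ast+1$, and the rest of your argument --- the equivalence-up-to-deviation induction, the simultaneity of the $0$-decisions, domination, and the strictness witness in the run $r$ itself --- goes through as you wrote it.
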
 

It turns out that if we add a condition to $\kbp^0$ that tests for
this common knowledge and decides 
appropriately 
if it holds, we get a
program that is optimal even with full information exchange.
Specifically, let
$\kbp^1_i$ be the following knowledge-based program for agent $i$:

\begin{program}
  \DontPrintSemicolon
  \lIf{$\decided_i\neq \bot$}{$\noop$}
  \lElseIf{$K_{i}(C_{\N}(\faultyag \land \nodecided_{\N}(1) \land \exists 0))$}
    {$\decide_i(0)$}
  \lElseIf{$K_{i}(C_{\N}(\faultyag \land \nodecided_{\N}(0) \land \exists 1))$}
    {$\decide_i(1)$}
  \lElseIf{$\init_i = 0
            \; \lor \;
            K_i (\bigvee_{j \in \Agents} \jdecided_j = 0)$
  }{$\decide_i(0)$}
  \lElseIf{$K_i(\bigwedge_{j \in \Agents} \neg (\deciding_j = 0))$}
  {$\decide_i(1)$}
  \lElse{$\noop$}
  \caption{$\kbp^1_i$}
\end{program}

\smallskip
Note that in the basic and mininimal contexts, agents never learn who
is faulty, so there is never common knowledge among the nonfaulty
agents who the faulty agents are.  Thus, in the contexts
$\gamma_{\min,n,t}$ and $\gamma_{\basic,n,t}$, $\kbp^1$ is equivalent
to $\kbp^0$, so $\kbp^1$ is correct and optimal in these contexts.
As we are about to show, $\kbp^1$ is also correct and optimal with full
information exchange.

To prove correctness and optimality, we follow the approach of
Halpern, Moses, and Waarts \citeyear{HMW} 
and consider a slightly nonstandard full-information context.  
We assume that each agent $i$'s local state does \emph{not}
contain the variables $\decided_i$ and $\rd_i$, but does contain a
variable or variables that keep track of all messages received from
all agents.
If agents keep track of all messages received in their local state,
then, given a decision protocol $P$, 
the variables $\decided_i$ and $\rd_i$ are redundant; their values can
be inferred from the messages received.
Let $\gamma_{\fip,n,t}$ denote the family of full-information contexts
as described above.  Not including $\decided_i$ and $\rd_i$ in the
local state has the 
advantage that, for all decision protocols $P$ and
$P'$, corresponding runs of $P$ and $P'$ in $\gamma_{\fip,n,t}$ are
actually identical; although agents may make different 
decisions, 
their
local states are the same at all times.  (This would not be the case
if the local states had included information  about decisions, and in
particular, if they had included the variables $\decided_i$ and
$\rd_i$.)  It is critical that we are dealing with
FIPs here; the claim is not true for arbitrary information-exchange protocols.

\commentout{
The argument that $\kbp^1$ is correct mainly follows the lines of the
proof of Proposition~\ref{prop:kbp0correct}.  The only 
non-trivial change is in showing Agreement, which intuitively follows 
from the observation that if the common knowledge condition for
deciding $v$ holds,
then it must hold for all agents simulatenously; moreover it must be
safe for the agents to decide $v$, (since none of them has decided $1-v$).
}

\begin{proposition} \label{p:kbp1correct}
All implementations of $\kbp^1$ with
respect to $\gamma_{\fip,n,t}$ are EBA decision protocols for $\gamma_{\fip,n,t}$.
\end{proposition}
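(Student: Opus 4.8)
The plan is to follow the proof of Proposition~\ref{prop:kbp0correct}, isolating the two clauses of $\kbp^1$ that are new relative to $\kbp^0$ (the ones whose tests involve $C_\N$). Fix an implementation $P$ of $\kbp^1$ with respect to $\gamma_{\fip,n,t}$ and write $\I = \I_{P,\gamma_{\fip,n,t}}$. Since $t<n$, the set $\N(r)$ is nonempty in every run $r$, so $C_\N\psi \rimp \psi$ is valid; unwinding the abbreviation for $C_\N(\faultyag\land\cdot)$ (an existential over $t$-element sets of faulty agents), it follows that $K_i(C_\N(\faultyag\land\psi)) \rimp \psi$ is valid as well. \textbf{Unique Decision} is immediate, exactly as for $\kbp^0$: once $\decided_i \neq \bot$, the first clause makes $i$ perform $\noop$, preserving $\decided_i\neq\bot$. \textbf{Validity} (even for faulty agents) is a routine adaptation of the corresponding argument in Proposition~\ref{prop:kbp0correct}: the two new decide-$v$ clauses each have $\exists v$ as a conjunct under the $C_\N$ operator and so imply $\exists v$ outright; for the clauses inherited from $\kbp^0$ one argues by induction on rounds as before, the only new point being that an agent deciding 0 because some $\jdecided_j = 0$ may be hearing about a $j$ that itself decided 0 via a new clause, but that decision still witnesses $\exists 0$ by the induction hypothesis, and an agent that reaches the last clause and decides 1 must have $\init_i = 1$, since the $\init_i = 0$ disjunct of the preceding decide-0 clause failed.

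The substantive content is \textbf{Agreement}; \textbf{Termination} will follow along the way. The crux is a propagation lemma for the new clauses: if $\I,(r,m)\models C_\N(\faultyag \land \nodecided_\N(1) \land \exists 0)$, then \emph{every} agent nonfaulty in $r$ ends up deciding 0. Indeed, by the fixpoint property $C_\N\phi \equiv \ek{\N}(\phi \land C_\N\phi)$, every nonfaulty $k$ satisfies $K_k(C_\N(\faultyag\land\nodecided_\N(1)\land\exists 0))$ at $(r,m)$, so the decide-0 clause fires for every such $k$ still undecided at time $m$, while those already decided cannot have decided 1 because $\nodecided_\N(1)$ holds at $(r,m)$. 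The symmetric statement for the decide-1 clause holds under the extra proviso that $C_\N(\faultyag\land\nodecided_\N(1)\land\exists 0)$ \emph{fails} at $(r,m)$, so that the decide-0 clause does not pre-empt the decide-1 clause for the nonfaulty agents; this proviso is automatic whenever the trigger is a nonfaulty agent's own decide-1 test, since that agent did not fire its decide-0 clause.

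For Agreement, suppose toward a contradiction that nonfaulty agents $i$ and $j$ reach different decisions in $r$; by Unique Decision $i\neq j$, and without loss of generality $i$ decides 0 and $j$ decides 1. I claim no agent (faulty or not) ever invokes either new clause in $r$. If some agent fires the decide-0 clause at time $\tau+1$, then by reflexivity of its $K$-test $C_\N(\faultyag\land\nodecided_\N(1)\land\exists 0)$ holds at $(r,\tau)$, so by the propagation lemma every nonfaulty agent decides 0 --- contradicting $j$. If some agent fires the decide-1 clause at $\tau+1$, then $C_\N(\faultyag\land\nodecided_\N(0)\land\exists 1)$ holds at $(r,\tau)$; if $C_\N(\faultyag\land\nodecided_\N(1)\land\exists 0)$ also holds there we contradict $j$ as before, and otherwise the symmetric propagation lemma makes every nonfaulty agent decide 1 --- contradicting $i$. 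Hence along $r$ only the clauses of $\kbp^1$ it shares with $\kbp^0$ are ever used, so $r$ unfolds exactly as a run of an implementation of $\kbp^0$, and Agreement follows from Proposition~\ref{prop:kbp0correct}. Termination is handled the same way: if some new clause is ever invoked in $r$ the propagation lemma forces all nonfaulty agents to decide within one more round, and otherwise $r$ is a $\kbp^0$ run and hence terminates within $t+1$ rounds by Proposition~\ref{prop:kbp0correct}.

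The main obstacle is making the propagation lemma fully precise: one must argue carefully with the indexical operator $C_\N$ and the $\exists A$-abbreviation $C_\N(\faultyag\land\cdot)$ --- in particular, deriving $C_\N\phi \rimp K_k(C_\N\phi)$ for each agent $k$ nonfaulty in the run, even though membership in $\N$ is point-dependent --- and one must dispose of the clause-ordering interaction that arises when both common-knowledge conditions could hold at once. A secondary point needing care is the reduction at the end of the Agreement and Termination arguments: one has to check that deleting the two never-used clauses does not change, along $r$, the truth of the knowledge tests in the surviving clauses, which is exactly where the protocol-independence of local states in $\gamma_{\fip,n,t}$ enters.
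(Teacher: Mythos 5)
Your propagation lemma is fine (the fixpoint property does give $K_k(C_{\N}\phi)$ for every $k\in\N(r)$, and veridicality handles the clause-ordering worry), and it correctly shows that in a run violating Agreement no agent ever fires the two common-knowledge clauses. The genuine gap is the final reduction: from ``the new clauses never fire along $r$'' you conclude that ``$r$ unfolds exactly as a run of an implementation of $\kbp^0$'' and invoke Proposition~\ref{prop:kbp0correct}. That step does not go through. The tests in the surviving clauses are knowledge tests evaluated in the system $\I_{\gamma_{\fip,n,t},P}$ generated by your implementation of $\kbp^1$, and in $\gamma_{\fip,n,t}$ the propositions $\decided_j$, $\deciding_j$, $\jdecided_j$ are inferred from local states \emph{via the protocol}; in runs other than $r$ that agents consider possible, the new clauses do fire (they must, e.g.\ when faulty agents stay silent), so formulas like $K_i(\bigwedge_j\neg(\deciding_j=0))$ can have different truth values in $\I_{\gamma_{\fip,n,t},P}$ and in the system of a $\kbp^0$-implementation, even at points of $r$ itself. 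Protocol-independence of local states only gives that corresponding runs have the same local states, not that these protocol-dependent propositions or the knowledge tests agree, so Proposition~\ref{prop:kbp0correct} cannot be used as a black box. The Agreement case could in fact be salvaged without the reduction, by rerunning the short $0$-chain argument of Proposition~\ref{prop:kbp0correct} inside the run $r$ (your ``no new clause fires in $r$'' plus veridicality is enough there), which is essentially what the paper does via its Lemma characterizing the decide-$0$ test by $0$-chains.

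Where the gap really bites is Termination. In your ``otherwise'' branch you need the decide-$1$ test $K_i(\bigwedge_j\neg(\deciding_j=0))$ to become true by round $t+2$ in $\I_{\gamma_{\fip,n,t},P}$, and this requires ruling out that $i$ considers possible a point at which some agent is deciding $0$ \emph{via the new common-knowledge clause}; your propagation lemma only constrains the actual run $r$, not the runs entering $i$'s knowledge. The paper closes exactly this hole with an FIP-specific fact (its Proposition~\ref{lemma:dfaulty} and Lemma~\ref{lemma:cfaultyimpliesall}): when $C_{\N}(\faultyag\land\nodecided_{\N}(1-v)\land\exists v)$ holds, it is known to \emph{all} agents, because it is equivalent to distributed knowledge among the nonfaulty agents one round earlier, which is then broadcast to everyone; equivalently, failure of the common-knowledge conditions transfers across indistinguishability. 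With that lemma one shows that if $i$ has not decided by round $t+1$, then no agent in any run $i$ considers possible decides via the new clauses up to that round, and the $t+1$-round $0$-chain bound then forces a decision by round $t+2$. Without developing this (or an equivalent) FIP-specific characterization, both the reduction you flagged and, more seriously, the Termination argument remain unproved.
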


We need to recall some material from \cite{HMW} in order to use the
characterization of optimality with respect to $\gamma_{\fip,n,t}$.  
Given an \emph{indexical set} $\cS$ of agents,
a point $(r', m')$ is \emph{$\cS$-$\boxdot$-reachable
from $(r, m)$} if there exist runs $r^0, \ldots, r^k$, times $m_0, m_0' ,
\ldots, m_{k}, m_{k}'$, and agents $i_0, \ldots, i_{k-1}$ such that $(r^0, m_0)
= (r, m)$, $(r^k, m_k') = (r', m')$, and for $0 \le j \le k-1$, 
we have 
$i_j \in
\cS(r^{j},m_{j}') \cap \cS(r^{j+1}, m_{j+1})$ and $r^{j}_j(m_j') =
r^{j+1}_j(m_{j+1})$.%
\footnote{Halpern, Moses, and Waarts \cite{HMW} introduced a family
of \emph{continual common knowledge} 
operators $\contcb{\cS}$ such that $\contcb{\cS}\phi$ holds at a point $(r,m)$
iff $\phi$ is true at all points $(r',m')$ that are $\cS$-$\boxdot$-reachable
from $(r, m)$. We get standard (indexical) common knowledge by taking $m_k =
m_k'$ in the definition of continual common knowledge; since we are working with
synchronous systems, we could restrict to taking $m_j' = m_{j+1}$.} Using the
notation of \cite{HMW}, let $\N \land \cO$
denote the indexical set where $(\N \land \cO)(r,m)$ consists of all agents that
are nonfaulty and about to decide 1
or have already decided 1
at the point $(r,m)$. 
Let $(\N \land \Z)$ be the analogous set for 0.

\begin{definition}[weak safety]
A knowledge-based protocol 
$\kbp$
is \emph{weakly safe with respect
to an EBA context $\gamma$} if, for all implementations $P$ of $\kbp$ and 
all points $(r,m)$ of $\I= (\R_{\exchange,\failures,P},\pi)$
and all agents $i$, 
if $\I_{\gamma_{\fip,n,t},P},(r,m) \models i \in \N \land 
\Circ (\decided_i = \bot)$, then there exist points
      $({r^0}',m)$, $({r^0}'',m)$, $({r^1}',m)$, and 
      $({r^1}'',m)$ 
      such that:
          \begin{enumerate}[1.]
                      \item $r_i(m) = {r^0_i}'(m) = {r^1_i}'(m)$,
            \item $i$ is nonfaulty in ${r^0}'$ and ${r^1}'$,
            \item $({r^0}'',m'')$ is $(\N \land \Z)$-$\boxdot$-reachable
              from $({r^0}',m)$, 
            \item $({r^1}'',m'')$ is $(\N \land \cO)$-$\boxdot$-reachable
              from $({r^1}',m)$, 
    \item all agents have initial preference 0 in ${r^0}''$, 
            \item all agents have initial preference 1 in ${r^1}''$.
          \end{enumerate}
          \commentout{
        \item If $\I_{\gamma_{\fip,n,t},P},(r,m) \models i \in \N \land 
          \Circ (\decided_i = \bot)$, then there exists points $(r',m)$ and 
          $(r'',m'')$ such that:
          \begin{enumerate}[1.]
            \item $r_i(m) = r'_i(m)$,
            \item $i$ is nonfaulty in $r'$,
            \item $(r'',m'')$ is $(\N \land \Z)$-$\boxdot$-reachable from $(r',m)$,
            \item all agents have initial preference 0 in $r''$.
          \end{enumerate}
          }
\end{definition}

Our interest in weak safety is motivated by the following result
proved by Halpern, Moses, and Waarts \cite{HMW}.
The statement of the result uses two operators, $B_i^\N$ 
  $C_{\cS}^\boxdot$.
$B_i^\N$ is an abbreviation of
  $K_i(i \in \N \rimp \phi)$. Thus, 
  $\I,(r,m) \models B_i^\N\phi$ if and only if  
  $\I,(r',m') \models \phi$ for all points $(r',m')$
  such that $r_i(m) = r'_i(m')$ and $i \in \N(r')$.
 Intuitively, $B_i^\N$ holds if $i$ knows that if it is nonfaulty, then
 $\phi$ holds.
   The    $C_{\cS}^\boxdot$. operator has a characterization in terms of 
  $\cS$-$\boxdot$-reachability. In \cite{HMW}, it is shown that 
  $\I,(r,m) \models C_\cS^\boxdot \phi$ if and only if 
  $\I,(r',m') \models \phi$ for all points $(r',m')$ 
  that are $\cS$-$\boxdot$-reachable from $(r,m)$.  We are interested
  in the cases that $\cS$ is either
  $\N \land \cO$ or $\N \land \Z$.

\begin{theorem}\label{thm:HMWchar} \cite[Theorem 5.4]{HMW}
An EBA protocol $P$ is optimal with respect to $\gamma_{\fip,n,t}$
  iff the following two conditions hold:
  $$\begin{array}{l}
\I_{\gamma_{\fip,n,t},P}
\models i \in \N \rimp
%\\ \ \ \ \ 
\ (\Circ(\decided_i = 0)
        \Leftrightarrow B_i^\N (\exists 0 \wedge
      C^\boxdot_{\N \land \cO}\exists 0 \wedge \neg (\Circ(\decided_i = 1)))) \\
\I_{\gamma_{\fip,n,t},P}
\models i \in \N \rimp 
%\\ \ \ \ \
\ (\Circ(\decided_i = 1)
        \Leftrightarrow B_i^\N (\exists 1 \wedge
      C^\boxdot_{\N \land \Z}\exists 1 \wedge \neg (\Circ(\decided_i =
      0)))).
\end{array}$$
\end{theorem}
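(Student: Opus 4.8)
This is Theorem~5.4 of~\cite{HMW}, so the plan is to recover the structure of their argument. I would read each biconditional as a \emph{necessity} implication (``$i$ decides $v$ next $\rimp$ the right-hand epistemic condition'') together with a \emph{promptness} implication (the converse), and prove the outer ``iff'' in its two directions. Everything hinges on the feature of $\gamma_{\fip,n,t}$ noted just before the statement: since local states record all messages received but nothing about decisions, corresponding runs of \emph{any} two EBA decision protocols $P$ and $P'$ carry identical local-state sequences, so the agents' indistinguishability relations at every point --- and the truth of every knowledge or common-knowledge formula not referring to decisions --- are protocol-independent, which lets one compare $P$ with a candidate dominator round by round. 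The necessity implications are then immediate for \emph{every} EBA protocol: if a nonfaulty $i$ has $\Circ(\decided_i=0)$ at $(r,m)$ in $\I_{\gamma_{\fip,n,t},P}$, then $B_i^\N(\neg\Circ(\decided_i=1))$ by Unique-Decision, $B_i^\N(\exists 0)$ by Validity (otherwise $i$ considers possible a world where it is nonfaulty, no agent started with $0$, yet it decides $0$), and $B_i^\N(C^\boxdot_{\N\land\cO}\exists 0)$ degenerately, because in a run where a nonfaulty agent decides $0$, Agreement forbids any nonfaulty agent from ever being about to decide or having decided $1$, so no point but the current one is $(\N\land\cO)$-$\boxdot$-reachable and $\exists 0$ already holds there; the case of deciding $1$ is symmetric. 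That these hold for \emph{every} EBA protocol, not just optimal ones, is what makes it possible to reason about the arbitrary dominator below.

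For the outer ``$\Rightarrow$'', assume $P$ is optimal; the necessity implications hold, so only promptness is in question. Suppose toward a contradiction that the epistemic condition for $v$ is true at $(r,m)$ in $\I_{\gamma_{\fip,n,t},P}$ but $i$ does not decide $v$ at round $m+1$ in $r$ under $P$ (by the $\neg\Circ(\decided_i=1-v)$ conjunct it decides nothing there). I would build a protocol $P'$ that agrees with $P$ except that $i$ decides $v$ at that point, together with the cascade of decisions this forces on the nonfaulty agents reached along the corresponding $\boxdot$-chains, and verify that $P'$ is still an EBA protocol: the $B_i^\N(\exists v)$ conjunct preserves Validity, the $C^\boxdot$ conjunct over the appropriate indexical set of nonfaulty $(1-v)$-deciders keeps the cascade internally consistent and preserves Agreement, and Termination is inherited from $P$. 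Then $P\leq_\gamma P'$ while $P'$ decides strictly earlier at $(r,m+1)$, so $P$ is strictly dominated, contradicting optimality.

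For the outer ``$\Leftarrow$'', assume $\I_{\gamma_{\fip,n,t},P}$ satisfies both biconditionals and let $P'$ be an EBA decision protocol with $P\leq_\gamma P'$; I would show $P'\leq_\gamma P$, so $P$ is not strictly dominated. This is an induction on $m$: assuming $P$ and $P'$ make the same decisions in all pairs of corresponding runs through round $m$, consider a nonfaulty, undecided agent $i$ at round $m+1$ in corresponding runs $r$ and $r'$. If $P$ has $i$ do $\noop$, then by the biconditionals both epistemic conditions fail at $(r,m)$, and since the necessity implications hold for the EBA protocol $P'$, $i$ cannot decide anything at round $m+1$ under $P'$, so $P'$ also does $\noop$. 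If $P$ has $i$ decide $v$, then $P\leq_\gamma P'$ forces $i$ to decide \emph{something} at round $m+1$ under $P'$; necessity applied to $P'$, together with the $C^\boxdot$ ``certificate'' inside the (true) epistemic condition for $v$, rules out the value $1-v$; hence $i$ decides $v$ under $P'$ as well.

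The step I expect to be the main obstacle, and which is the technical heart of~\cite{HMW}, is the transfer of the truth of $B_i^\N(\,\cdots\wedge C^\boxdot_{\N\land\cO}\exists v\wedge\cdots\,)$ (and its $\N\land\Z$ analogue) between $\I_{\gamma_{\fip,n,t},P}$ and $\I_{\gamma_{\fip,n,t},P'}$: the $\Circ(\decided_i=\cdot)$ conjunct and the $\boxdot$-reachability both speak about round $m+1$ and about points arbitrarily far into the future, where $P$ and $P'$ may already have diverged, so neither direction of the argument is self-contained as sketched. Resolving this is exactly what the continual-common-knowledge machinery of~\cite{HMW} is for: one shows that once a nonfaulty agent is poised to decide $v$, Agreement so tightly constrains every $\boxdot$-reachable configuration (no nonfaulty $(1-v)$-decider can appear along an admissible chain) that the epistemic condition, \emph{at the first time it becomes true}, is determined by the shared history, and that the cascade used in the promptness direction is then well defined; once this is in hand, the remainder is routine bookkeeping with Validity, Agreement, and Unique-Decision.
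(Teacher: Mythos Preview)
The paper does not prove this theorem; it is simply quoted as \cite[Theorem~5.4]{HMW} and used as a black box (in particular, in the proof of Theorem~\ref{thm:kbp1opt}, where the paper invokes only the ``conditions $\Rightarrow$ optimal'' direction and cites Proposition~4.3 of~\cite{HMW} for one half of the inner biconditionals). There is thus no in-paper proof to compare your sketch against, and you correctly note this at the outset.

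Your outline is broadly faithful to the architecture of the argument in~\cite{HMW}: necessity of the epistemic condition for every EBA protocol (their Proposition~4.3), a strictly dominating modification witnessing failure of optimality when promptness is violated, and a round-by-round induction for the converse. One minor imprecision: under the paper's definition of $\boxdot$-reachability, the length-$0$ chain makes every point $(r,m')$ in the same run reachable from $(r,m)$, not only the current point; your necessity argument still goes through because $\exists 0$ is time-invariant within a run. You are right that the substantive content is the protocol-transfer step you isolate at the end---the epistemic conditions mention $\Circ(\decided_i=\cdot)$ and the indexical sets $\N\land\cO$, $\N\land\Z$, all of which depend on the protocol---and your sketch is appropriately candid about deferring that to the continual-common-knowledge machinery of~\cite{HMW}.
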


Using Theorem~\ref{thm:HMWchar}, we can prove that weak safety implies
optimality for $\kbp^1$.

\begin{theorem} \label{thm:kbp1opt} 
  If $\kbp^1$ is weakly safe with respect to $\gamma_{\fip,n,t}$
    then all implementations of $\kbp^1$
  are optimal with respect to $\gamma_{\fip,n,t}$.
\end{theorem}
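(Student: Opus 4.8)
The plan is to verify the two biconditionals of Theorem~\ref{thm:HMWchar} directly for any implementation $P$ of $\kbp^1$, using weak safety to handle the "only if"/necessity-of-the-condition directions and the structure of $\kbp^1$ together with Proposition~\ref{p:kbp1correct} (which gives us that $P$ is an EBA decision protocol, in particular Agreement and Validity) for the "if" directions. By symmetry it suffices to treat the $\Circ(\decided_i=0)$ biconditional; the $\Circ(\decided_i=1)$ case is completely analogous with the roles of $0$ and $1$ swapped and $(\N\land\cO)$ replaced by $(\N\land\Z)$. Throughout I write $\I$ for $\I_{\gamma_{\fip,n,t},P}$ and fix a point $(r,m)$ with $i\in\N(r)$.

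First, the forward direction: assume $i$ decides $0$ in round $m+1$ of $r$, i.e.\ $\I,(r,m)\models\Circ(\decided_i=0)$. Since $P$ implements $\kbp^1$, this means $i$ has not already decided (so $\decided_i=\bot$ at $(r,m)$) and one of the two $\decide_i(0)$ clauses of $\kbp^1$ fired: either $K_i(C_\N(\faultyag\land\nodecided_\N(1)\land\exists 0))$, or $\init_i=0\lor K_i(\bigvee_j\jdecided_j=0)$, while the earlier $\decide_i(1)$ clause $K_i(C_\N(\faultyag\land\nodecided_\N(0)\land\exists 1))$ did \emph{not} fire. From either $\decide_i(0)$ trigger I argue $\I,(r,m)\models B_i^\N(\exists 0)$: in the first case $C_\N(\ldots\land\exists 0)$ entails $\exists 0$ (common knowledge is veridical when $\N\neq\emptyset$, which holds since $n-t\geq 1$), and $K_i$ of it gives $B_i^\N(\exists 0)$; in the second case, if $\init_i=0$ then $i$ knows $\exists 0$ outright, and if $i$ knows some agent just decided $0$ then by Validity (applied at the points $i$ considers possible where $i$ is nonfaulty — this needs a short argument using that the deciding agent's decision-$0$ forces $\exists 0$ via the chain-to-Validity reasoning of Proposition~\ref{prop:kbp0correct}) we again get $B_i^\N(\exists 0)$. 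Next I need $B_i^\N(C^\boxdot_{\N\land\cO}\exists 0)$. Here I use that $i$ decides $0$: at any point $(r',m)$ with $r'_i(m)=r_i(m)$ and $i\in\N(r')$, agent $i$ is nonfaulty and about-to-decide-$0$, so $i\in(\N\land\Z)(r',m)$ — wait, I need $(\N\land\cO)$, i.e.\ deciding~$1$; that is the wrong chain. The correct statement to extract is actually that \emph{there is no $(\N\land\cO)$-$\boxdot$-path to a run where all preferences are $1$ and $\exists 0$ fails} — I should instead show $C^\boxdot_{\N\land\cO}\exists 0$ by observing that along any $(\N\land\cO)$-reachability chain starting from such a point, the endpoint still satisfies $\exists 0$; this follows because weak safety's clause~4--6 structure, combined with the contrapositive that $i$ would otherwise not have been able to decide $0$, rules out a $\boxdot$-reachable $\exists 1$-only world from the $\cO$-side. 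I expect this is where the bookkeeping is most delicate and I will lean on the $C_\N(\faultyag\land\cdots)$ clause: when that clause fires, common knowledge among $\N$ pins down $t$ faulty agents, hence pins down $\N$ exactly, which collapses the relevant $\boxdot$-reachability to ordinary reachability within a known set and makes $C^\boxdot_{\N\land\cO}\exists 0$ immediate. Finally $\neg\Circ(\decided_i=1)$ is immediate from Unique~Decision since $i$ decides $0$.

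Second, the reverse direction: assume $\I,(r,m)\models i\in\N\land B_i^\N(\exists 0\land C^\boxdot_{\N\land\cO}\exists 0\land\neg\Circ(\decided_i=1))$; I must show $i$ decides $0$ in round $m+1$. From $B_i^\N(\neg\Circ(\decided_i=1))$ and $i\in\N$ we get that $i$ does not decide $1$ next round. If $i$ has already decided, it must (by the correctness of $\kbp^1$, Proposition~\ref{p:kbp1correct}, and Validity together with $B_i^\N(\exists 0)$... and the fact that $i$ is nonfaulty) have decided~$0$, and then $\decided_i=0$ persists by Unique~Decision — but we want $\Circ(\decided_i=0)$, which holds since $\noop$ keeps $\decided_i=0$. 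So assume $\decided_i=\bot$ at $(r,m)$. Now I run through the clauses of $\kbp^1$ in order. If the first $\decide_i(0)$-clause or the second $\decide_i(0)$-clause fires we are done. If the $\decide_i(1)$-clause $K_i(C_\N(\faultyag\land\nodecided_\N(0)\land\exists 1))$ fires, then $i$ decides $1$ — contradicting $\neg\Circ(\decided_i=1)$ — so this clause does not fire. The remaining danger is that \emph{none} of the positive clauses fires (so $i$ does $\noop$) or that the last $\decide_i(1)$-clause $K_i(\bigwedge_j\neg(\deciding_j=0))$ fires. In the $\noop$ case, $\I,(r,m)\models\neg K_i(\bigwedge_j\neg(\deciding_j=0))$, i.e.\ $i$ considers possible a point where some agent is deciding $0$; this is exactly the situation weak safety addresses — apply weak safety at $(r,m)$ to obtain the four points; the $(\N\land\cO)$-$\boxdot$-reachable point $({r^1}'',m'')$ has all preferences $1$, so $\exists 0$ is \emph{false} there, contradicting $C^\boxdot_{\N\land\cO}\exists 0$ (which transfers from $(r,m)$ to $({r^1}',m)$ via $B_i^\N$ and $r_i(m)={r^1_i}'(m)$, $i\in\N({r^1}')$, and is $\boxdot$-monotone). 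So the $\noop$ branch is impossible. In the last-$\decide_i(1)$-clause case, $i$ decides $1$, again contradicting $\neg\Circ(\decided_i=1)$. Hence $i$ must decide $0$, as required.

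\textbf{Main obstacle.} The delicate point is the treatment of $C^\boxdot_{\N\land\cO}\exists 0$ in the forward direction when $i$ decides $0$ via the \emph{chain} clause (second $\decide_i(0)$ clause) rather than the common-knowledge clause: I must show that the mere fact that $i$ knows someone just decided $0$, together with $i$'s inability to fire the $1$-deciding clause, is enough to guarantee $\exists 0$ survives every $(\N\land\cO)$-$\boxdot$-step. I expect to handle this by the following observation: if $C^\boxdot_{\N\land\cO}\exists 0$ failed at $({r^1}',m)$, there would be a $\boxdot$-chain through nonfaulty-deciding-$1$ agents to an $\exists 1$-only world; but each link in such a chain, being a nonfaulty agent about to decide $1$, must (by the very biconditional we are in the middle of proving for the $1$-case, or more safely by a direct argument from $\kbp^1$'s $1$-clauses and weak safety) have had $\neg K_j(\exists 0)$ at the previous link — propagating back this would force $i$ itself to consider an $\exists 1$-only world possible, contradicting $B_i^\N(\exists 0)$. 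Making this induction along the $\boxdot$-chain rigorous without circularity (we are proving both biconditionals simultaneously) is the crux; the cleanest fix is probably to prove the "$\Circ(\decided_i=v)\Rightarrow B_i^\N(\exists v\land C^\boxdot\exists v)$" halves of \emph{both} biconditionals together by a single induction on $m$, using weak safety and $\kbp^1$'s structure at the inductive step, and only then assemble the full equivalences.
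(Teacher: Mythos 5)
Your second half (the ``belief conditions imply deciding'' direction) is essentially the paper's proof: reduce to $\Circ(\decided_i=\bot)$, invoke weak safety to get a point $({r^1}',m)$ with the same local state for the nonfaulty $i$ from which an all-ones point is $(\N\land\cO)$-$\boxdot$-reachable, and contradict $B_i^\N(C^\boxdot_{\N\land\cO}\exists 0)$. The genuine gap is in your first half, the direction $\Circ(\decided_i=0)\Rightarrow B_i^\N(\exists 0\wedge C^\boxdot_{\N\land\cO}\exists 0\wedge\neg\Circ(\decided_i=1))$. You try to derive this from the structure of $\kbp^1$ and weak safety, you explicitly flag that you cannot close the argument, and the sketch you offer rests on a false claim: you propose to propagate $\neg K_j(\exists 0)$ backwards along the $(\N\land\cO)$-$\boxdot$-chain, asserting that a nonfaulty agent about to decide $1$ must not know $\exists 0$. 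Under $\kbp^1$ this fails: an agent decides $1$ when $K_i(C_\N(\faultyag\land\nodecided_\N(0)\land\exists 1))$ holds, or when $K_i(\bigwedge_j\neg(\deciding_j=0))$ holds, and in either case it may perfectly well know $\exists 0$ (e.g., it has heard of an initial $0$ whose chain has died out). So the induction along the $\boxdot$-chain, even set up ``simultaneously for both values,'' does not go through as described, and the forward half of the characterization is left unproven in your write-up.

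The missing idea is that this direction does not depend on $\kbp^1$ or on weak safety at all: it is a general necessity-of-knowledge result for \emph{every} protocol satisfying the EBA specification, namely Proposition 4.3 of \cite{HMW}. The paper's proof of Theorem~\ref{thm:kbp1opt} consists of exactly two moves: Proposition~\ref{p:kbp1correct} shows that every implementation of $\kbp^1$ is an EBA decision protocol, so the decide-implies-belief halves of both biconditionals in Theorem~\ref{thm:HMWchar} come for free from that result of Halpern, Moses, and Waarts; weak safety is then used only for the converse halves, by the contradiction you also found. If you do not want to cite the HMW proposition, you would have to reprove that continual-common-knowledge necessity result from Agreement and Validity, which is substantial and is precisely the content your sketch fails to supply; as written, your proposal proves only half of the required equivalence.
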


To show that the knowledge-based program $\kbp^1$ satisfies weak safety with
respect to the full-information context, we give a 
constructive proof that  explicitly constructs the sequences of points witnessing 
the conditions of the definition of weak safety.

The main idea of the proof comes from the following observation. If a
nonfaulty agent $i$ is unable to decide, then the common knowledge
conditions in $\kbp^1$ do not hold. Then, roughly speaking, we can
show that there exists points $({r^0}',m)$ and $({r^1}',m)$, as in the
definition of weak safety, such that a faulty agent $k$ acts nonfaulty
throughout the run.%
\footnote{Note that since the set of faulty agents in a run is
determined by the failure pattern, it is consistent that an agent $i$ is
faulty in a run although it acts nonfaulty throughout the run.   Since
all that really matters for our result is that no agent can detect
that agent $i$ is faulty, we could obtain our result by assuming that
$i$'s faulty behavior involved only not sending messages to itself.}
Moreover, since this faulty agent $k$ did not
display faulty behavior, all runs where another agent is faulty
instead of $k$ are indistinguishable from this point. The existence of
a faulty agent that acts nonfaulty turns out to be a strong condition
that allows the 
construction
of a sequence of $\boxdot$-reachable points where the end point
has a modified message pattern in addition to satisfying the same condition on
$k$. This is possible by temporarily making $k$ exhibit faulty behavior in
intermediate points of the sequence. Therefore we use the existence of such $k$
as an invariant that allows taking steps through the $\boxdot$-path.

\begin{theorem} \label{thm:kbp1-safe} $\kbp^1$ is weakly safe with respect to 
  $\gamma_{\fip,n,t}$.
\end{theorem}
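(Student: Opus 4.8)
The plan is to establish the two clauses of the weak-safety definition directly. Fix an implementation $P$ of $\kbp^1$ with respect to $\gamma_{\fip,n,t}$, a point $(r,m)$ of $\I = \I_{\gamma_{\fip,n,t},P}$, and an agent $i$ with $i \in \N(r)$ and $\I,(r,m) \models \Circ(\decided_i = \bot)$; I would produce the points $({r^0}',m),({r^0}'',m'')$ witnessing the $(\N \land \Z)$-clause and $({r^1}',m),({r^1}'',m'')$ witnessing the $(\N \land \cO)$-clause by essentially symmetric constructions. First I would unpack the hypothesis: since the decision variable is monotone, $i$ is undecided through time $m+1$, so $P_i(r_i(m)) = \noop$ and therefore \emph{all four} decision tests of $\kbp^1$ fail at $(r,m)$. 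The consequences I will use are: $\init_i = 1$; agent $i$ has not received a $0$-chain by $(r,m)$ (since the test $K_i(\bigvee_{j} \jdecided_j = 0)$ fails, using the characterization ``this test holds iff $i$ receives a $0$-chain'' established for $\kbp^0$); $\I,(r,m) \models \neg K_i(\bigwedge_{j \in \Agents} \neg(\deciding_j = 0))$; and, from the failure of the two common-knowledge tests, that $i$ does not know that the nonfaulty agents commonly know any fixed set of $t$ agents to be faulty (together with the relevant $\nodecided_{\N}$ and $\exists$ conjuncts). Finally, since $i$ has not decided by time $m+1$ whereas implementations of $\kbp^1$ terminate within $t+1$ rounds (as for $\kbp^0$, cf.\ Proposition~\ref{prop:kbp0correct}), we have $m \le t-1$, which leaves enough fault budget for the constructions below.

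The heart of the argument, following Halpern, Moses, and Waarts~\cite{HMW}, is to build each of ${r^0}'$ and ${r^1}'$ as a run that is indistinguishable from $r$ to $i$ at time $m$, in which $i$ is nonfaulty, and in which there is a \emph{hidden faulty} agent $k \ne i$ --- an agent that is faulty but whose only omitted messages are to itself, so that no agent's local state reveals that $k$ is faulty. For ${r^1}'$ one arranges, in addition, that after time $m$ the only initial-$0$ agents are those forced by $i$'s view, and that these agents are faulty and ``isolated'' so that every nonfaulty agent eventually decides $1$; for ${r^0}'$ one instead routes a $0$-chain through a block of faulty agents long enough that it has not yet reached any agent visible to $i$ by round $m$, so that in ${r^0}'$ some nonfaulty agent decides $0$ at a round later than $m$. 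The existence of a hidden faulty agent in each of these runs is exactly where the failure of the common-knowledge tests is used: because $i$ cannot pin down a set of $t$ certainly-faulty agents, it considers possible a run consistent with its view, in which it is nonfaulty, where one of the fault slots is occupied by an agent that never actually misbehaves.

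Given ${r^1}'$ (with its hidden faulty agent $k$), I would construct a $(\N \land \cO)$-$\boxdot$-path to an all-initial-$1$ run ${r^1}''$ one step at a time, maintaining as an invariant that the current run contains a hidden faulty agent. Each $\boxdot$-step holds fixed the local state of some agent currently in $\N \land \cO$ (nonfaulty, deciding or having decided $1$), flips one remaining initial-$0$ agent to initial-$1$, and absorbs the resulting change invisibly to that pivot by momentarily making the hidden agent misbehave and, in exchange, making the newly flipped agent faulty and isolated --- keeping the fault count at most $t$ and re-establishing the invariant. After finitely many steps no initial-$0$ agent remains, giving ${r^1}''$, at which $\neg \exists 0$ holds, as required. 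The $(\N \land \Z)$-clause is handled identically with the roles of $0$ and $1$ exchanged: the pivots lie in $\N \land \Z$ and the path terminates at an all-initial-$0$ run ${r^0}''$; since $\init_i = 1$, this path is necessarily nontrivial, which is why ${r^0}'$ has to be rigged so that a nonfaulty agent decides $0$, providing the first pivot.

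I expect the main obstacle to be precisely the inductive $\boxdot$-step: proving that one can always choose which messages to block so that the pivot's full-information local state is \emph{literally} unchanged between the two runs, that the pivot genuinely lies in $\N \land \cO$ (resp.\ $\N \land \Z$) at \emph{both} endpoints of the step, that the total number of omission faults stays at most $t$, and that the hidden-faulty-agent invariant is preserved, all while strictly decreasing the number of off-target initial preferences. This is the analogue of the ``hidden path'' manipulations of \cite{HMW}, made more delicate here because full-information local states record entire message histories, so the bookkeeping of what each agent may and may not see is tight. Once the two $\boxdot$-paths are in place, the remaining requirements of weak safety --- that $i$'s local state agrees with $r$'s at time $m$ in ${r^0}'$ and ${r^1}'$, and that $i$ is nonfaulty in both --- hold by construction.
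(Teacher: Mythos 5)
Your overall skeleton matches the paper's: the paper also works with runs indistinguishable to $i$ at time $m$ in which $i$ is nonfaulty, some nonfaulty agent is deciding the relevant value, and there is a ``hidden'' faulty agent that never misbehaves (the invariant $\inv_v(i,j,k)$, established from the failure of the decision tests in Lemmas~\ref{lemma:o-contra} and~\ref{lemma:z-contra}), and it then builds the $\boxdot$-paths while preserving that invariant. So the first half of your plan is essentially the paper's. The gap is in your inductive $\boxdot$-step. You propose to induct on the number of off-target initial preferences, flipping one initial-$0$ agent to $1$ per step while holding fixed the local state of a pivot in $\N\land\cO$ at (essentially) the current time, hiding the change by making the flipped agent ``faulty and isolated.'' This step fails in the full-information context: an agent can be in $\N\land\cO$ while knowing $\exists 0$ (it may have learned of an initial $0$ through a dead chain and still satisfy $K_i(\bigwedge_j\neg(\deciding_j=0))$), and indeed the very $0$'s you must eliminate include those recorded in $i$'s own time-$m$ state, which is pinned by clause 1 of weak safety. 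Since $i$ and every other nonfaulty agent broadcast their full histories, any such $0$ is in the local state of every candidate pivot by the time it decides $1$; flipping that agent's initial value, or retroactively blocking its past messages, then changes the pivot's state, and making it faulty ``going forward'' cannot erase what is already recorded. So there is in general no same-time pivot that can absorb a flip, and the invariant ``a hidden faulty agent exists'' is too weak to drive your induction.

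The paper's proof resolves exactly this by inducting on \emph{time} rather than on the number of $0$'s: Lemma~\ref{lemma:rollback} (together with the chain-surgery Lemmas~\ref{lemma:zchaincut}, \ref{lemma:delete-pastv2}, and~\ref{lem:shortenchain}, and Corollary~\ref{lemma:deciding1}, which lets intermediate pivots decide $1$ a round earlier) moves a point satisfying $\inv_1$ at time $m$ to an $(\N\land\cO)$-$\boxdot$-reachable point satisfying $\inv_1$ at time $m-1$, temporarily letting the hidden agent misbehave and swapping faultiness among $i,j,k$. The initial preferences are changed only at the base case (Lemma~\ref{lemma:dir1-induction}, $m=1$), where a pivot deciding $1$ in round $2$ provably knows of no $0$ at all, so all $0$'s can be flipped at once without disturbing it; dually, Lemma~\ref{lemma:dir2-induction} walks the $0$-decision back to round $1$, where the decider has initial preference $0$ and trivially considers the all-$0$ run possible. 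Your proposal is missing this time-rollback mechanism, and without it the ``main obstacle'' you flag is not just delicate bookkeeping but an actual obstruction. (A minor further point: termination gives $m\le t$, not $m\le t-1$, though nothing in your argument hinges on this.)
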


Proposition~\ref{p:kbp1correct}, Theorem~\ref{thm:kbp1opt}, and
Theorem
\ref{thm:kbp1-safe}  
together imply the following corollary. 
\begin{corollary} \label{theorem:optimal}
  All implementations of $\kbp^1$ with respect to $\gamma_{\fip,n,t}$ are
  optimal with respect to $\gamma_{\fip,n,t}$.
\end{corollary}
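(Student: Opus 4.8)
The plan is simply to chain together the three results that immediately precede the corollary. Fix any implementation $P$ of $\kbp^1$ with respect to $\gamma_{\fip,n,t}$. By Proposition~\ref{p:kbp1correct}, $P$ is an EBA decision protocol for $\gamma_{\fip,n,t}$; this matters because optimality, as defined in Section~\ref{sec:spec}, is a property quantified over EBA decision protocols for the context, so we first need to know that $P$ is in the class to which the notion applies (and that its competitors are too, but that is already built into the definition of $\leq_\gamma$). Next, Theorem~\ref{thm:kbp1-safe} tells us that $\kbp^1$ is weakly safe with respect to $\gamma_{\fip,n,t}$, which is precisely the hypothesis required by Theorem~\ref{thm:kbp1opt}. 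Applying Theorem~\ref{thm:kbp1opt} then yields that every implementation of $\kbp^1$ with respect to $\gamma_{\fip,n,t}$ is optimal with respect to $\gamma_{\fip,n,t}$, which is exactly the statement of the corollary.

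There is no real obstacle at the level of the corollary itself: all the substance has been discharged in the three cited statements. The one thing to check carefully is that the hypotheses line up exactly — that ``weakly safe with respect to $\gamma_{\fip,n,t}$'' as established in Theorem~\ref{thm:kbp1-safe} is literally the antecedent consumed by Theorem~\ref{thm:kbp1opt}, and that the implicit ``$P$ is an EBA protocol'' requirement underlying any optimality claim is supplied by Proposition~\ref{p:kbp1correct}. Concretely, the deductive pipeline is: Proposition~\ref{p:kbp1correct} (correctness: Unique Decision, Agreement, Validity, Termination) $+$ Theorem~\ref{thm:kbp1-safe} (weak safety, via the constructive $\boxdot$-reachability argument) $\Rightarrow$ the two biconditionals of Theorem~\ref{thm:HMWchar} hold for $\I_{\gamma_{\fip,n,t},P}$, which is what Theorem~\ref{thm:kbp1opt} packages $\Rightarrow$ optimality.

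Thus the proof of the corollary is a two-line composition, and the work it rests on is: the correctness proof of $\kbp^1$ (which follows the template of Proposition~\ref{prop:kbp0correct}, with the new common-knowledge rules for deciding handled as indicated before Proposition~\ref{p:kbp1correct}), the use of the Halpern–Moses–Waarts characterization Theorem~\ref{thm:HMWchar} inside Theorem~\ref{thm:kbp1opt}, and the explicit construction of the witnessing $\boxdot$-paths in Theorem~\ref{thm:kbp1-safe} that keeps a faulty-but-nonfaulty-acting agent as an invariant along the path. None of that needs to be reproved here; the corollary only needs to invoke the three results in order.
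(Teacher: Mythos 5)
Your proposal is correct and is exactly the paper's argument: the corollary is stated as an immediate consequence of Proposition~\ref{p:kbp1correct}, Theorem~\ref{thm:kbp1opt}, and Theorem~\ref{thm:kbp1-safe}, chained in precisely the way you describe.
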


Implementing $\kbp^1$ in polynomial time is possible using the 
compact communication graph representation of the full-information 
exchange due to \cite{MT}. Intuitively, the common knowledge conditions are implemented
using the observation that if an agent's faultiness is common knowledge among 
the nonfaulty agents, it must be \emph{distributed knowledge} at the previous
time (where a fact $\phi$ is distributed knowledge among an indexical
set $\cS$ of agents if the agents would know $\phi$ if they pooled
their knowledge together; for example, the set of faulty agents is
distributed knowledge among the nonfaulty agents if, between them, the
nonfaulty agents know who the faulty agents are).  
Since nonfaulty agents 
send 
messages describing their complete state
in every round, we can check whether
$C_\N(\faultyag)$ holds at a point $(r,m)$ by considering the local
states at $(r,m-1)$ of the agents that  
nonfaulty agents heard from in round $m$.

\begin{proposition} \label{prop:kbp1-polynomial}
  There exists a polynomial-time implementation 
    $P^{\opt}$ 
   of $\kbp^1$ with respect to a 
  full-information exchange. 
\end{proposition}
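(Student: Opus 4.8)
The plan is to give $P^{\opt}$ explicitly, using a compact encoding of the full-information state, and then argue that each of the five tests in $\kbp^1$ can be evaluated in time polynomial in $n$, $t$, and the round number $m$ (which, by Propositions~\ref{prop:kbp0correct} and~\ref{p:kbp1correct}, we only ever need to do for $m \le t+1$, though the construction works for any $m$). For the encoding we use the labeled communication graph of Moses and Tuttle~\cite{MT}: after $m$ rounds, agent $i$'s information is summarized by a graph $G_i(r,m)$ on the vertices $\{(j,k) : j \in \Agents,\ 0 \le k \le m\}$ recording which round-$k$ messages $i$ can trace back through chains of successful transmissions, together with the initial preference of every agent $i$ has heard from. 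Since in $\gamma_{\fip,n,t}$ each nonfaulty agent broadcasts its entire state every round, $G_i(r,m)$ has $O(n^2 m)$ vertices, can be maintained incrementally in polynomial time, and determines $r_i(m)$ as far as any test in $\kbp^1$ is concerned; moreover, from $G_i(r,m)$ agent $i$ can reconstruct $G_j(r,m-1)$ for every $j$ it heard from in round $m$ (that is $j$'s round-$m$ message).

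The first point to handle is that $\kbp^1$ is self-referential: the subformulas $\decided_j = \bot$, $\jdecided_j = 0$, and $\deciding_j = v$ all refer to decisions made by the implementation $P^{\opt}$ itself. We resolve this by computing decisions by induction on the round: given the communication graphs at all points up to time $m$, we compute for every agent $j$ and every $k \le m$ the action $P^{\opt}_j$ takes at $(r,k)$ by evaluating $\kbp^1_j$ using the decisions already computed for rounds $<k$ (and, for the $\deciding_j$ subformula, the decision at round $k$, which is well defined because the corresponding test of $\kbp^1$ does not itself mention $\deciding$). With decisions in hand, $\decided_i \ne \bot$ and $\init_i = 0$ are immediate; the condition $K_i(\bigvee_j \jdecided_j = 0)$ holds, as observed after the statement of $\kbp^0$, iff $i$ has received a $0$-chain, which is a reachability property of $G_i(r,m)$ given the decisions and hence polynomial-time checkable; and the condition $K_i(\bigwedge_j \neg(\deciding_j = 0))$ — ``$i$ knows no agent is about to decide $0$'' — fails iff $G_i(r,m)$ admits an extension (consistent with $SO(t)$) containing an active $0$-chain reaching an agent $i$ cannot rule out as nonfaulty, which, using the $0$-chain analysis underlying Proposition~\ref{pro:optimal}, reduces to a polynomial search over candidate chains.

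The main obstacle is the pair of common-knowledge tests $K_i(C_\N(\faultyag \land \nodecided_\N(1) \land \exists 0))$ and $K_i(C_\N(\faultyag \land \nodecided_\N(0) \land \exists 1))$. Here the key is the observation sketched before the statement: if the faultiness of a set $A$ with $|A| = t$ is common knowledge among the nonfaulty agents at $(r,m)$, then it is already \emph{distributed} knowledge among the nonfaulty agents at $(r,m-1)$, and conversely one further round of full-information exchange turns that distributed knowledge into common knowledge. Concretely, letting $W$ be the set of agents $i$ heard from in round $m$ and $G^\ast = \bigcup_{j \in W} G_j(r,m-1)$, one shows $\I,(r,m) \models K_i(C_\N(\faultyag \land \psi))$ (for $\psi$ either $\nodecided_\N(1)\land\exists 0$ or $\nodecided_\N(0)\land\exists 1$) iff $G^\ast$ witnesses at least $t$ distinct agents having omitted a message — so the faulty set $A$ is pinned down and $\Agents\setminus A \subseteq W$ — and the relevant instance of $\psi$ can be verified from $G^\ast$ using the inductive decision computation (no $j \notin A$ has decided the ``wrong'' value by time $m$, and some agent in $G^\ast$ has the required initial preference). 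Each of these checks is polynomial in $|G^\ast| = O(n^3 m)$, so $P^{\opt}_i(G_i(r,m))$ is computable in polynomial time, and it implements $\kbp^1$ essentially by construction, since at every point we evaluated exactly the tests of $\kbp^1$ against the true decisions. The substantive work is proving the displayed characterization of the common-knowledge condition correct; this closely parallels the treatment of common knowledge of the faulty set in the polynomial implementation of SBA with omission failures in~\cite{MT}, and I would adapt that argument rather than redo it from scratch.
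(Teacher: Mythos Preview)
Your overall plan---communication-graph encoding, inductive computation of decisions, and reducing the common-knowledge tests to distributed knowledge one round earlier---matches the paper's approach. But the concrete test you give for $K_i(C_\N(\faultyag \land \psi))$ is wrong, and the error is exactly in the step you flag as ``the substantive work.''

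You take $W$ to be the set of all agents $i$ heard from in round $m$ and $G^\ast = \bigcup_{j\in W} G_j(r,m-1)$, then check whether $G^\ast$ witnesses $t$ omitting agents. The problem is that $W$ can contain faulty agents, and their round-$(m{-}1)$ views can expose faults that \emph{no nonfaulty agent} knew about at time $m-1$; in that case $\ominus\dist_\N(\faultyag)$ fails, so by Proposition~\ref{lemma:dfaulty}(a) $C_\N(\faultyag)$ fails, yet your test succeeds. Concretely: take $n=4$, $t=2$, $\N=\{1,2\}$; in round~1 agent~3 omits only to agent~4 and agent~4 omits only to agents~1 and~2; in round~2 all messages are delivered. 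Then at time~1 the nonfaulty agents know only that $4$ is faulty, so $\dist_\N(\faultyag)$ fails, and hence $C_\N(\faultyag)$ fails at time~2. But $W=\{1,2,3,4\}$ for agent~1, and $G^\ast$ contains $G_4(r,1)$, which witnesses $3$'s omission; your test sees two omitting agents and declares common knowledge. The paper avoids this by restricting the time-$(m{-}1)$ union to the agents \emph{not known to $i$ to be faulty at time $m$}: it computes $D(\bar f(i,m,G),\,m{-}1,\,G)$ and checks that this set has size $t$ (Lemma~\ref{lemma:f=d=t}). The fix is small but essential, and it is not what \cite{MT} does for SBA, so ``adapt that argument'' would not have caught it.

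A secondary issue: your treatment of $K_i(\bigwedge_j \neg(\deciding_j=0))$ is too vague. ``Polynomial search over candidate chains'' is not obviously polynomial (there are exponentially many length-$m$ sequences), and Proposition~\ref{pro:optimal} concerns $\gamma_{\min}$ and $\gamma_{\basic}$, not the FIP. The paper's actual characterization (Proposition~\ref{lemma:notdeciding1}) is a counting condition on $\mathit{last}_{ij}(r,m)$ versus $\mathit{len}_i(r,m)$; this is what makes the test polynomial, and you would need to derive something equivalent.
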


\section{Discussion} \label{sec:cost}
We introduced the notion of limited information exchange,
examined optimality for EBA with respect to various information-exchange
protocols
and described an efficiently implementable optimal FIP. 
There is clearly far more to be done.  
There are two
short-term directions we are currently pursuing.  
First, we hope to explore the impact of limited information
exchange on other 
protocols of interest. 
Second,
we are exploring the application of epistemic synthesis techniques that allow
the automated derivation of protocols from a knowledge-based program
in the context of limited information-exchange models. This seems to give
the techniques far more scope 
(cf. \cite{HuangM13,HuangM14}).

We conclude this discussion by taking a closer look at the costs and
benefits of limited information exchange for EBA.
We focus on the two settings considered in Section
\ref{sec:opt-lim-ie}, as well as the full-information context,
and consider the cost both in terms of the number of bits sent and the
number of rounds required to reach a decision in the most likely case,
where there are no failures.  While the results are straightforward,
they help highlight the tradeoffs involved.
Let $\gamma_{\fip,n,t}$ be a full-information EBA context
with omission 
failures and let $P^{\fip}$ be an implementation 
of $\kbp^1$ in $\gamma_{\fip,n,t}$.

We start  by considering message complexity in terms of bits.
In the minimal information-exchange protocol $P^{min}$, each message 
can be represented using a single bit and agents send a message only 
when they first decide, otherwise staying silent.
Since each agent sends exactly one message in each run, and
sends it to all the other agents, $n^2$ bits are sent altogether.
In the basic 
information-exchange protocol $P^{\basic}$, we still require only a
constant number of bits  
to represent messages and agents send messages to all other agents
as long as they are undecided, which means for at most $t+1$
rounds. We then get the following result: 
\begin{proposition}
  In each run of $P^{min}$, $n^2$ bits are sent in total; in each run
  of    $P^{\basic}$, at most $O(n^2t)$ bits are sent in total.
\end{proposition}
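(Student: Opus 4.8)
The plan is to prove both bounds by elementary counting. For each of the two protocols it suffices to bound, and then multiply, three quantities: (i) the number of bits needed to encode one message; (ii) the number of recipients of a message sent in one round (always $n$, since $\mu_i$ returns a tuple indexed by all agents); and (iii) the number of rounds in which a single agent sends a non-$\bot$ message. Summing the product over the $n$ agents gives the total bit count. Two facts are used repeatedly: by Proposition~\ref{prop:kbp0correct} (applied via Theorems~\ref{thm:ImplementsSimple} and~\ref{thm:implement2}, which say $P^{min}$ and $P^{\basic}$ implement $\kbp^0$), every agent decides within $t+1$ rounds in every run of $P^{min}$ and of $P^{\basic}$; and, by inspection of both programs, once $\decided_i \neq \bot$ the only action taken is $\noop$, whose message tuple is $(\bot,\ldots,\bot)$, so no agent sends a non-$\bot$ message after it decides.

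For $P^{min}$ I would argue as follows. The message set is $M_i = \{0,1\}$, so every message is one bit. From the definition of $\mu_i$ in $\exchange_{\min}$, agent $i$ sends a non-$\bot$ message — the value $v$, to every agent — exactly in the round in which it performs $\decide_i(v)$, and sends $\bot$ otherwise. From the text of $P^{min}$, if $i$ is still undecided at time $t+1$ then $\init_i \neq 0$ and $\rd_i \neq 0$, so the clause ``$\Timef_i = t+1$'' fires; hence every agent decides exactly once. Therefore each agent sends exactly $n$ one-bit messages over the whole run, for a total of $n^2$ bits. (If messages an agent nominally sends to itself are not counted, the total is $n(n-1)$; this affects only the constant term.)

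For $P^{\basic}$: the message set is $M_i = \{0,1,(\init,1)\}$, which needs $O(1)$ (indeed $2$) bits per message. By the definition of $\mu_i$ in $\exchange_{\basic}$, an agent sends a non-$\bot$ message in a round only when it is still undecided at the start of that round — it sends $(\init,1)$ when its state has the form $\<m,1,\bot,\bot,k\>$, and it sends its decision bit in the round in which it decides — and sends only $\bot$ thereafter. Since every agent is undecided for at most $t+1$ rounds, each agent sends non-$\bot$ messages in at most $t+1$ rounds, to $n$ recipients, with $O(1)$ bits each, i.e.\ $O(nt)$ bits per agent; summing over the $n$ agents yields $O(n^2t)$ bits.

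I do not expect a genuine obstacle; the only point needing a little care is step (iii) for $P^{\basic}$, namely that ``undecided for at most $t+1$ rounds'' really covers faulty agents and not just the nonfaulty ones named in the EBA Termination property. For $P^{min}$ this is immediate, since the ``$\Timef_i = t+1$'' clause does not mention faultiness. For $P^{\basic}$ one can observe that after round $t+1$ all nonfaulty agents have decided, so an undecided faulty agent then receives copies of $(\init,1)$ only from the at most $t$ faulty agents, bounding its counter $\#1_i$ and forcing the test $\#1_i > n - \Timef_i$ to fire; in any event an agent is active for at most $n$ rounds, and with at most $t$ faulty agents even this weak bound contributes only $O(n^2 t)$ bits, so the stated asymptotics are unaffected. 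Everything else is a direct reading of the definitions of $\exchange_{\min}$ and $\exchange_{\basic}$ and of the protocols $P^{min}$ and $P^{\basic}$.
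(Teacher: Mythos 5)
Your proposal is correct and follows essentially the same counting argument the paper uses (the paper's justification is the prose preceding the proposition: one-bit messages sent only in the deciding round for $P^{min}$, and constant-size messages sent to all agents for at most $t+1$ rounds for $P^{\basic}$, using the $t+1$-round termination bound from Proposition~\ref{prop:kbp0correct}). Your additional care about faulty agents remaining undecided only sharpens the same argument and does not change the approach.
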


By way of contrast, a standard communication graph implementation of a
full-information 
exchange uses $O(n^{4}t^{2})$ bits \cite{MT}.

We next consider decision times.  
We focus on the failure-free case as, in most applications, 
the most common runs are runs with no failures. 
\begin{proposition} \label{prop:decision-times}
  If $r$ is a 
  failure-free run, then
    \begin{enumerate}[(a)]
      \item If there is at least one agent with an initial preference
        of 0 in $r$, then 
all agents decide by round 2
with $P^{min}$, $P^{\basic}$, and $P^{\fip}$.
\item If all agents have an initial preference of 1, then 
    all agents decide by round $t+2$ with $P^{min}$ 
       and by round $2$ with $P^{\basic}$ and $P^{\fip}$.      
  \end{enumerate}
\end{proposition}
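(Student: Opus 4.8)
The plan is to fix, for each case, the unique failure-free run $r$ determined by the given initial preferences and the protocol, and to track the agents' local states round by round. Since $r$ is failure-free, every message that is sent is delivered to all its recipients --- including the sender, since a nonfaulty agent never omits a delivery in $SO(t)$ --- so the transition function of the relevant information-exchange protocol pins the states down exactly. We use one observation throughout: in a failure-free run every agent is nonfaulty, so the formula $\ck{\N}(\faultyag\land\cdots)$, which entails that at least one agent is faulty, is false at every point, and hence so is $K_i$ of it; thus the two common-knowledge clauses of $\kbp^1$ never fire, and only its remaining four clauses --- which are literally $\kbp^0$ --- are relevant for $P^{\fip}$.

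\emph{Part (a).} Fix an agent with initial preference $0$. At time $0$, every agent $i$ with $\init_i = 0$ selects $\decide_i(0)$: for $P^{min}$ and $P^{\basic}$ by the test $\init_i = 0$, and for $P^{\fip}$ by the $\kbp^1$ clause with guard $\init_i = 0 \lor K_i(\bigvee_{j\in\Agents}\jdecided_j = 0)$, the two earlier clauses being false as just noted. So every such agent decides $0$ in round $1$ and, by the message-selection constraint on an EBA context, sends each agent a message in $M^0$ (for the FIP, it sends its full state, which exhibits the decision). Since $r$ is failure-free, every agent receives such a message in round $1$, so at time $1$ we have $\rd_i = 0$ for all $i$ in the minimal and basic contexts, and $\I,(r,1)\models K_i(\bigvee_{j\in\Agents}\jdecided_j = 0)$ for all $i$ in the full-information context. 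Consequently every agent that has not already decided decides $0$ in round $2$ (for $P^{\fip}$, via the same clause, the common-knowledge clauses still being false). Hence all agents decide by round $2$ under all three protocols.

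\emph{Part (b).} Let $r$ be the failure-free run in which every $\init_i = 1$. For $P^{min}$: an induction on the round shows no agent ever decides $0$, since deciding $0$ requires $\init_i = 0$ (never) or $\rd_i = 0$, and $\rd_i = 0$ requires receiving a message in $M^0$, which is sent only by an agent deciding $0$ in that round; hence every agent performs $\noop$ until its local time is $t+1$, at which point the test $\Timef_i = t+1$ fires and it decides $1$ in round $t+2$. For $P^{\basic}$: at time $0$ no agent decides (the test $\#1_i > n-\Timef_i$ reads $0 > n$, and $\rd_i \ne 1$), but each agent performs $\noop$ in a state of the form $\langle 0,1,\bot,\bot,0\rangle$ and so sends every agent the message $(\init,1)$; since $r$ is failure-free, at time $1$ each agent has $\#1_i = n$ (it received one $(\init,1)$ from each of the $n$ agents, itself included), and since $n > n-1 = n-\Timef_i$ every agent decides $1$ in round $2$. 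For $P^{\fip}$: at time $0$ no agent decides, since $\init_i \ne 0$, no agent has decided, and agent $i$ considers it possible that some other agent had $\init = 0$ and is about to decide $0$, so the guard $K_i(\bigwedge_{j\in\Agents}\neg(\deciding_j = 0))$ is false; in round $1$ every agent broadcasts its complete state, so at time $1$ every agent has received the time-$0$ states of all $n$ agents, whence (see below) that guard becomes true for every agent while the common-knowledge clauses and both clauses for deciding $0$ fail, so every agent decides $1$ in round $2$.

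The step I expect to be the main obstacle is the last claim: that at time $1$ of the all-$1$ failure-free run a nonfaulty agent $i$ \emph{knows} $\bigwedge_{j\in\Agents}\neg(\deciding_j = 0)$, not merely that it holds in $r$. Consider any run $r'$ indistinguishable from $(r,1)$ for $i$. In $r'$, agent $i$ has still received the time-$0$ states of all $n$ agents, so every initial preference is $1$ in $r'$. At time $0$ of any run no agent can satisfy the common-knowledge test for deciding $0$ (no message has yet been sent, so no failure has manifested and $\ck{\N}(\faultyag)$ fails), so a round-$1$ $0$-decision would require $\init = 0$; hence no agent decides $0$ in round $1$ of $r'$. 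Then a round-$2$ $0$-decision in $r'$ would require $\init = 0$ (false), or $K_j(\bigvee_k\jdecided_k = 0)$ at time $1$ (impossible, since no agent decided $0$ in round $1$ and sending-omission faults cannot manufacture a spurious $\jdecided_k = 0$ message), or the common-knowledge test (which needs $\exists 0$ and so fails); hence no agent decides $0$ in round $2$ of $r'$. So $\bigwedge_{j\in\Agents}\neg(\deciding_j = 0)$ holds at time $1$ of $r'$, and since $r'$ was arbitrary, $i$ knows it at $(r,1)$. A secondary point, needed for the sharp ``round $2$'' bound for $P^{\basic}$, is the exact count $\#1_i = n$ at time $1$, which relies on a nonfaulty agent's message to itself never being lost in $SO(t)$.
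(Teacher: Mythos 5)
Your proof is correct and takes essentially the same approach as the paper's: a direct round-by-round analysis of the failure-free run under each of the three protocols. You simply spell out in more detail two points the paper's terser proof glosses over (the epistemic argument that the guard for deciding 1 in $\kbp^1$ actually holds at time 1, and the self-message count giving $\#1_i = n$ for $P^{\basic}$), both of which are consistent with the paper's formal model.
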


Thus, for failure-free runs, the agents in the basic context 
and the full-information context decide at the same time. The only 
failure-free 
run
where the  
basic context results in an earlier decision than the minimal context
is the run where every agent starts with an initial preference of 1. 
As a result, implementing full-information and incurring a quadratic overhead in the
number of bits never leads to  an improvement for failure-free runs. If we
assume  
that each configuration of initial preferences is equally likely, 
using the basic context over the minimal context for failure-free runs
is only an improvement $1/2^n$ of the time.

If failure-free runs are sufficiently common, this suggests that the
gain of using an FIP may not be worth the cost; even the tradeoff
between $P^{\basic}$ and $P^{min}$ is not so clear.  We conjecture that
even in runs with failures, $P^{\basic}$ may not be much worse than
$P^{\fip}$.  This emphasizes the advantages of considering limited
information exchange, and further motivates considering optimal protocols
with limited information exchange more broadly.

\begin{podc}
\begin{acks}
  Alpturer and Halpern were supported in part by AFOSR grant
FA23862114029.  Halpern was additionally supported in part by ARO grants
W911NF-19-1-0217 and W911NF-22-1-0061. 
The Commonwealth of Australia (represented by the Defence Science and Technology
Group) supported this research through a Defence Science Partnerships agreement.
We thank Yoram Moses for useful comments on the paper.
\end{acks}
\end{podc} 

\begin{podc}
\bibliographystyle{ACM-Reference-Format}
\bibliography{z,joe}
\end{podc}

\begin{full}
\appendix

\section{Proofs}

\newenvironment{RETHM}[2]{\trivlist \item[\hskip 10pt\hskip\labelsep{\sc #1\hskip 5pt\relax\ref{#2}.}]\it}{\endtrivlist}
\newcommand{\rethm}[1]{\begin{RETHM}{Theorem}{#1}}
\newcommand{\erethm}{\end{RETHM}}
\newcommand{\relem}[1]{\begin{RETHM}{Lemma}{#1}}
\newcommand{\recor}[1]{\begin{RETHM}{Corollary}{#1}}
\newcommand{\repro}[1]{\begin{RETHM}{Proposition}{#1}}
\newcommand{\erepro}{\end{RETHM}}
\newcommand{\erelem}{\end{RETHM}}
\newcommand{\erecor}{\end{RETHM}}

\subsection{Proofs for Section~\ref{sec:opt-lim-ie}}

\repro{prop:kbp0correct}
If $\gamma=(\exchange, \failures, \pi)$ is an EBA context,
then 
all implementations of the knowledge-based program $\kbp^0$ with
respect 
to $\gamma$ are EBA decision protocols for $\gamma$.
  Indeed, all implementations of $\kbp^0$ terminate after at most $t+1$
    rounds of message exchange and Validity holds even for faulty agents.
\erepro
\begin{proof}

  Fix an implementation $P$ of $\kbp^0$ in $\gamma_{\fip,n,t}$.

  Unique-Decision follows from the fact that $P_i$ makes at most one decision
  per round and the fact that whether a decision was made is recorded in the
  local state variable $\decided_i$.

  To see that Agreement holds, we first show by induction on $m$ that if $i$ has
  not decided before round $m+1$ and $\I,(r,m) \models \init_{i} = 0 \lor
  K_{i}(\bigvee_{j \in \Agents}\jdecided_{j} = 0)$, then $i$ receives a \zchain
  in round $m$. If $m = 0$, then it must be the case that $\I,(r,m) \models
  \init_{i} = 0$ which is a \zchain with $i_1 = i$. If $m > 0$, we cannot have
  $\I,(r,m) \models \init_{i} = 0$ (as $i$ would have decided earlier) so we
  must have $\I,(r,m) \models K_{i}(\jdecided_{j} = 0)$ for some $j \in
  \Agents$. Then, $\I,(r,m) \models \jdecided_{j} = 0$ and the result follows
  from the induction hypothesis.

  Suppose by way of contradiction that $r$ is a 
  run where there exist nonfaulty agents $i$ and $j$ and a time $m$ such that
  $\I,(r,m) \models \decided_i = 0 \land \decided_j = 1$.
  Suppose that $j$ decides 1 in round $m_j+1$ and $i$ decides 0 in round $m_i+1$, 
  so that the decision conditions first hold at times $m_i$ and $m_j$, respectively. 
  If $m_{j} \leq m_{i}$, must have either $\I,(r,m_{i}) \models \init_{i} = 0$
  or $\I,(r,m_{i}) \models K_{i}(\jdecided_{k} = 0)$ for some $k \in \Agents$.
  Using our observation, we can 
  conclude that $i$ receives a \zchain at time $m_i$, which implies that 
  there exists an agent $i'$ such that 
  $\I,(r,m_{j}) \models \deciding_{i'} = 0$. Hence, 
  $\I,(r,m_{j}) \models \neg K_{j}(\neg (\deciding_{i'}=0))$,
  so  $j$ cannot decide 1 at $m_{j}$.
  If $m_{j} > m_{i}$, since $i$ decides 0 in round $m_i + 1$, 
  we must have $\I,(r,m_i) \models \init_i = 0 \bigvee_{j \in \Agents}
  K_i (\jdecided_j = 0)$. We can again use our observation to 
  conclude that $i$ receives a \zchain at time $m_i$ in $r$.
  As $i$ is nonfaulty, $j$ must hear from $i$ in round $m_{i}+1$,
  so $\I,(r,m_j) \models K_j(\jdecided_i = 0)$. 
  It follows that agent $j$ should decide 0 in this run,
  contradicting the assumption that $j$ decides 1.

  For Validity, observe that if $i$ decides 0, using the previous observation,
  there must be a \zchain, and hence an agent 
  that had an initial preference of 0. 
  If agent $i$ decides 1, then $i$ did not 
  decide 0 in the first round and therefore we must have $\init_i = 1$.
  Note that this argument holds even if $i$ is faulty.
  
  Finally, we prove Termination by showing that in all runs, all nonfaulty agents
  must decide by round $t+2$ after at most $t+1$ rounds of message
  exchange.  To see this, we first show that if some agent
  decides 0 in run $r$, then all agents that decide 0 must do so by round
  $t+2$ after at most $t+1$ rounds of message exchange.  For suppose by
  way of contradiction that some agent $i$ decides 0 in round $m > t+2$. 
  Then there must be a \zchain $i_1, \ldots, i_m$ with $i_m = i$.  
  All the agents on the \zchain are distinct. Since there are at most
  $t$ faulty agents in a run, one of $i_1, \ldots, i_{t+1}$ must be
  nonfaulty, say $i_j$. But that means that all agents (including $i$)
  would have received a message 
  from $i_j$ in round $j \le t+1$ from which they could 
  infer that $i_j$ is about to decide 0 (by our assumption regarding
  $\mu$ in an EBA context), and they  would all decide
  0 in round $j+1 < m$ if they have not done so yet. This gives the
  desired contradiction.  
  It follows that if $i$ has not decided 0 by round $t+2$, then 
  $\I,(r,t+2) \models K_{i}(\bigwedge_{j \in \Agents}
  \neg (\deciding_j = 0 ))$, 
  so $i$ will decide $1$ in round $t+2$ if it has not already decided.  
  \end{proof}
  
  \commentout{
    Fix an implementation $P$ of $\kbp^0$,
    and an EBA context $\gamma=(\exchange, \failures, \pi)$.
    
    Satisfaction of Unique-Decision follows from the fact that $P_i$
    makes at most one decision per round
  and the fact that whether a decision was made is recorded in the local state
  variable $\decided_i$.
  Once $\decided_i \ne \bot$, then 
  it
  continues to hold, so
  all future actions of agent $i$ are $\noop$.

  To see that Agreement holds,  first observe that if $m$ is the
  first time in run $r$ that $\I,(r,m) \models \decided_i = 0$,
  then $m$ decides 0 in round $m$, and 
  we show  by induction on $m$ that there must 
  be a 0-chain that ends at $i$ in $\I$.
  \commentout{
  be a sequence  $i_1, \ldots, i_{m}$ of distinct agents
  such that $i_{m} = i$, 
  $\I,(r, 0) \models \init_{i_1} = 0$, and for all $m'$ with $1
  \le m' \le m$, $i_{m'}$ first decides 0 in round $m'$, 
  if $m' > 1$, then $i_{m'}$ receives a message in round $m'-1$ from $i_{m'-1}$ 
  from which it can infer that $i_{m'-1}$ is about to decide 0 and it
  considers it possible that $i_{m'-1}$ is nonfaulty.
  We call a sequence $i_1, \ldots, i_m$ with these properties a
  \emph{0-chain of length $m$}. 
  } %
  Note that $m \ge 1$, since $\decided_j$ is initially $\bot$ for all
  agents $j$.
  It follows from $\kbp^0_i$ that
  $\I,(r,m-1) \models \init_i=0 \lor \bigvee_{j \in \Agents}
  \neg K_i \neg (j \in \N \land \jdecided_j = 0 ))$. 
  Clearly, if $m=1$, we cannot have
  $\I,(r,m-1) \models  \bigvee_{j \in \Agents}
  \neg K_i \neg (j \in \N \land \jdecided_j = 0 ))$, since
  $\decided_j$ is initially $\bot$ for all agents $j$, so we have must 
  $\I,(r,m-1) \models \init_i=0$.  This gives us the base case (with
  $i_1 = i$).  If $m > 1$, then we cannot have 
  $\I,(r,m-1) \models \init_i=0$ (for otherwise $i$ would
  have decided $0$ earlier), so we must have  
  that $\I,(r,m-1) \models  \bigvee_{j \in \Agents} \neg K_i \neg
  (j \in \N \land \jdecided_j = 0
  ))$.  Thus, there exists a point $(r',m-1) \sim_i (r,m-1)$ and an agent $j$ such
  that $\I,(r',m-1) \models j \in \N \land \jdecided_j= 0$.  
  This means that in round $m-1$ of run $r'$, $j$ sends all agents a
  message indicating it has just performed the action $\decide_j(0)$.
  Since $j$ is nonfaulty, $i$ receives this message.  Thus, $j \in \rd_i(0)$
  at the point $(r',m-1)$, and hence also at $(r,m-1)$.  That means that
  round $m-1$ of run $r$, $i$ receives a message from $j$
  saying it has just performed the action $\decided_{j}(0)$.  
  Thus, we have that $\I,(r,m-1) \models \jdecided_{j} = 0$ and $i$
  considers it possible that $j$ is nonfaulty.
  The result now follows
  from the induction hypothesis.  (Note that $i$ cannot be part of the
  0-chain ending at $j$ because $m$ is the first round that $i$ decides
  0 in run $r$; it follows that all the agents in the chain must be distinct.)
  Similarly, if $m$ is the first time in run $r$ that $\I,(r,m) \models
  \decided_i = 1$,  then there
  is a sequence  $i_{m'}, i_{m'+1}, \ldots, i_{m}$ of agents such that
  $i_m = i$, 
  for all $m''$ with $m' \le m'' \le m$, we have that $\I,(r,m'') \models
  \decided_{i_{m''}} = 1$, $m' \ge 1$, and 
  $\I,(r,m'-1) \models K_i(\bigwedge_{j \in \Agents} 
  \neg (\deciding_j = 0))$.

  To see that Agreement holds, suppose by way of contradiction that 
  $r$ is a run where Agreement is violated.
  Thus, there exist nonfaulty agents $i$ and $j$ and time $m$ such that
  $\I,(r,m) \models \decided_i = 0 \land \decided_j = 1$.
  Suppose 
  that $j$ decides 1 in round $m_j$ and $i$ decides 0 in
  round $m_i$.
  By the observation above, there exists a time $m_j' \le m_j$ such
  that, for some agent $j'$, $j'$ first decides 1 at round $m_j'$ and
  $\I,(r,m'_j-1) \models K_{j'}(\bigwedge_{j'' \in \Agents} 
  \neg (\deciding_{j''} = 0))$.
  If $m_i < m_j'$, then $m_i < m_j$.  
  Since $\I,(r,m_i) \models  \jdecided_i = 0$, 
  by the definition of $\mu_{ij}$, $i$
  sends $j$ a message in round $m_i$ that intuitively says it is
  deciding 0 in round $m_i$, which $j$ receives in round $m_i$ since $i$
  is nonfaulty, 
  and $j$ updates $\rd_{j}(0)$ so that it includes $i$ when $j$ receives
  the message.  Thus,
  $\I,(r,m_i) \models 
  \bigvee_{j' \in \Agents} \neg K_{j} \neg (j' \in \N \land \jdecided_{j'} = 0))$,
  so $j$ decides 0 in round $m_i+1 \le m_j$.  This contradicts Unique-Decision.
  On the other hand, 
  if $m_i \ge m_j'$, then since there exists a 0-chain of length
  $m_i$ and $m_j' \ge 1$, 
  for some agent $j'''$, we must have
  $\I,(r,m_j') \models \jdecided_{j'''} = 0$.
  This means that $\I, (r,m_j'-1) \models \deciding_{j'''} = 0$.
  But this contradicts the fact that 
  $\I,(r,m'_j-1) \models K_{j'}(\bigwedge_{j'' \in \Agents}  
  \neg (\deciding_{j''} = 0))$. 
  This
      completes the argument.

    For Validity,
  as we saw, if agent $i$ decides 0 in round $m$, there must be a
  0-chain of length $m-1$, hence some agent must have initial value 0.
  And if agent $i$ decides 1, then $i$ did not decide 0 in the 
  first round, so we must have $\init_i = 1$.
  Hence, some agent has initial value 1. This completes the proof of Validity.
  
  Finally, we prove Termination by showing that in all runs, all nonfaulty agents
  must decide by round $t+2$ after at most $t+1$ rounds of message
  exchange.  To see this, we first show that if some agent
  decides 0 in run $r$, then all agents that decide 0 must do so by round
  $t+2$ after at most $t+1$ rounds of message exchange.  For suppose by
  way of contradiction that some agent $i$ decides 0 
  in round 
  $m > t+2$.  Then there must be a 0-chain $i_1, \ldots, i_m$ with $i_m = i$.  
  All the agents on the 0-chain are distinct.  Since there are at most
  $t$ faulty agents in a run, one of $i_1, \ldots, i_{t+1}$ must be
  nonfaulty, say $i_j$.  But that means that all agents (including $i$)
  would have received a message 
  from $i_j$ in round $j \le t+1$ from which they could 
  infer that $i_j$ is about to decide 0 (by our assumption regarding
  $\mu$ in a standard information-exchange protocol), and they  would all decide
  0 in round $j+1 < m$ if they have not done so yet. This gives the
  desired contradiction.  
  It follows that if $i$ has not decided 0 by round $t+2$, then 
  $\I,(r,t+2) \models K_{i}(\bigwedge_{j \in \Agents}
  \neg (\deciding_j = 0 ))$, 
      so $i$ will decide $1$ in round $t+2$ if it
      has not already decided.  
  } %

\rethm{thm:safe}
If $\gamma$ is an EBA context and $\kbp^0$ is safe with
respect to $\gamma$, then all implementations of $\kbp^0$ are
    optimal with respect to $\gamma$. 
\erethm
\begin{proof}
Suppose 
by way of contradiction
that $P$ is an implementation of $\kbp^0$ with respect to 
an  EBA context $\gamma$ and
EBA-protocol
$P'$ strictly dominates $P$ with respect to
$\gamma$.
If $P'$ strictly dominates $P$, by definition, there exist
corresponding runs  
$r$ of $\I_{\gamma,P}$ and $r'$ of $\I_{\gamma,P'}$, 
a nonfaulty agent $i$, and a round $k$ such that $P'$ decides in round
$k$ of $r'$ 
and $P$ does not decide before round $k+1$ in $r$.
Clearly, there also exist corresponding runs $r$ and $r'$,
an agent $i$  
(possibly faulty) and a round $k$ such that $P'$ decides in round
$k$ of $r'$, and  
$P$ either does not decide before round $k+1$ of $r$, or decides
differently in round  $k$ of $r$.
Let $k$ be the earliest such round, and let $r$ and $r'$ be the
corresponding runs of $P$ and $P'$, respectively, where in $r'$, $P'$
decides in round $k$ and in $r$, $P$ either does not decide by round
$k$ or reaches a different decision in round $k$ than $P'$.
(Note that the $k$ we use here may be smaller than the smallest $k$
such that $P'$ decides in round $k$ and $P$ does not decide before
round $k+1$ in corresponding runs.)
\commentout{
at which
some
agent $i$ decides in round $k$ of
a run $r'$ of 
$\I_{\gamma,P'}$ and $i$ either does not decide at or before round
$k$ of the corresponding run $r$ of $\I_{\gamma,P}$ or $i$
makes a different decision in round $k$ of $r$ than in round $k$ of
$r'$.   
} %
Since the same information-exchange protocol is used in both
systems, all agents must have exactly the same state up to time $k-1$ in
corresponding runs of the two systems:
either they have not decided yet, or they have decided and made the
same decision, so they
will send the same
messages and undergo the same state transitions in corresponding runs
of $P$ and $P'$ up to time 
$k-1$.
We now consider two cases.

Suppose that $i$ decides 0 in round $k$ of $r'$.  Since $i$ either
does not decide 
at or before
round $k$ of $r$ or decides 1 in this round,
$i$ did not receive a 0-chain by $(r,k-1)$.
Since $\kbp^0$ is safe with respect to $\gamma$,
by the first part of the safety condition, it follows that there
is a run $r_1$ of $P$ such that $r_i(k-1) = (r_1)_i(k-1)$
and all agents have initial preference 1 in
$r_1$.  Let 
$r_1'$ be the run of $P'$ corresponding to $r_1$.  As observed above,
$i$ must have the same state 
at $(r_1,k-1)$ and $(r_1',k-1)$.  
Since $i$ also has the same state in $(r,k-1)$ and $(r',k-1)$, $i$ has the same
state in $(r',k-1)$ and $(r_1',k-1)$, so must decide 0 in round $k$ of
$r_1'$.  
We now get the desired contradiction by observing that the decision rule 
for deciding 0 in $\kbp^0$ requires that there exists an agent with an
initial preference 0. Note that this is because, as shown in
Proposition~\ref{prop:kbp0correct}, Validity holds even for faulty agents.

Suppose that $i$ decides 1 in round $k$ of $r'$, does not decide 1 in
round $k$ of  
$r$, and does not decide earlier than round $k$ in $r$.  Then we must
have that
$\I_{\gamma,P}, (r,k-1) \models
\neg K_i(\bigwedge_{j \in \Agents} 
\neg (\deciding_j = 0))$:
If $i$ decides 0 in round 
$k$,
then clearly $i$ considers it possible
that some agent is deciding 0; and if $i$ does not decide 0 and does
decide 1 (the only other possibility), then this
formula must also hold.
\commentout{
\roncomment{We don't get that conclusion until we have ruled out that
  $i$ is not deciding 0 in round $k$ or $r$, which it  
might well be doing since it differs from the decision being made in $r'$. 

For this, suppose that $i$ is deciding 0. Then either $\I,(r,k-1) \models \init_i=0 \lor K_i(\bigvee_j \jdecided_j=0)$. 
We can't have that $\I,(r,k-1) \models \init_i=0$ since then we must have $k=1$, and $P'$ would also have to decide 0, 
since no 1-decision can be made in round 1. Thus, we must have $\I,(r,k-1) \models K_i(\bigvee_j \jdecided_j=0)$. 
This implies also $\I',(r',k-1) \models K_i(\bigvee_j \jdecided_j=0)$ by equivalence to time $k-1$. 
However, I don't see how to get a contradiction 
here. I suspect an additional safety condition is required: 
$\I \models K_i(\bigvee_j \jdecided_j=0) \implies \neg K_i \neg (i\in \N \land \bigvee_{j\in \N} \jdecided_j=0)$. 
That allows us to reason to 
$\I',(r',k-1) \models  \neg K_i \neg (i\in \N \land \bigvee_{j\in \N} \jdecided_j=0)$, 
which contradicts the fact that $i$ is deciding 1 in $r'$. 

(In my proof way back when I was using this: 
$ (\decided_i = \bot \land 
\Time >0 \land  K_i \exists 0) \rimp \neg K_i \neg (i \in \N \land \bigvee_{j\in \N} \ominus
(\decided_j = \bot \land K_j \exists 0))$.) 

The purpose of the argumentation about bounds on $k$ in the following is not clear to me, since it
does not occur in the safety condition as stated - it may be a vestige from my arguments.  
} 
}
Since $\kbp^0$ is safe with respect to $\gamma$, 
it follows that there
is a run $r_1$ of $P$ such that $r_i(k-1) = (r_1)_i(k-1)$, $i$ is nonfaulty
in $r_1$, and there exists a nonfaulty agent $j$ that decides 0 in
round $k$ of $r_1$.  Again, let 
$r_1'$ be the run of $P'$ corresponding to $r_1$.  As above,
$i$ has the same
state in $(r',k-1)$ and $(r_1',k-1)$, so must decide 1 in round $k$ of $r_1'$.
Since $P'$ dominates $P$, $j$ must also decide in round $k$ of $r_1'$,
and must decide 1 (since $i$ is deciding 1).  

\commentout{
\roncomment{It seems to me that there is still an issue with this
  proof, in the last paragraph.  
I still don't get the reason for worrying about $k>2$: safety does not mention any bound (it once did). 
More significantly, why does it follow that  $j$ must decide 0 in
round $k$ of $r_1'$?  
Agent $j$ is running a different protocol in $r_1'$, so it is not enough that it has the same local state. 
We can't use the first case above since that goes from $P'$ deciding 0 to $P$ deciding 0, and we need the 
other direction here. I suspect we may need an additional condition like validity of 
$$ K_i (\bigvee_j \jdecided_j = 0) \rimp K_i ( \bigvee_j (\jdecided_j = 0\land \neg K_i \neg (j\in \N))) $$ 
as part of the definition of safety. Then 
we can conclude that $ \neg K_i ( \neg \bigvee_{j'} (\jdecided_{j'} = 0 \land {j'}\in \N))$ at $(r_1,k-1)$. 
By minimality of $k$, the runs $r_1,r_1'$  are also identical to to time $k-1$, so also 
 $ \neg K_i ( \neg \bigvee_j (\jdecided_{j'} = 0 \land j'\in \N))$ also holds at $(r_1',k-1)$. 
 This means that $j$ cannot decide 1 at $(r_1',k-1)$. 
 Since $P' \leq P$ and $j$ decides at $(r_1,k-1)$, also $j$ decides at $(r_1',k-1)$, so it must decide 0. 
That now  gives the desired contradiction. 

This argument requires amending the definition of safety and checking that the protocols satisfy the additional property. 
}
}
If $k=1$, since $j$ decides 0 in $r_1$, $j$ must have an initial
value of 0 in both $r_1$ and $r_1'$.  Let $r_1''$ be the run of $P'$ where all
agents have initial value 0 and are nonfaulty.  Since $(r_1'')_j(0)
= (r_1')_j(0)$, $j$ must decide 1 with $P'$ in run $r_1''$, giving us
the desired contradiction.  If  $k > 1$, by the second half of
condition 2(c) of the safety 
condition, there exists a run $r_2$ of $P$ and an agent $j'$ such that
$(r_2)_j(k-1) = (r_1)_j(k-1)$, $j$ and $j'$ are nonfaulty in $r_2$, and
$j'$ decides 0 in round $k-1$ of $r_2$.  Let $r_2'$ be the run
of $P'$ corresponding to $r_2$.  As observed above, $(r_2)_j(k-1) = 
(r_2)'_j(k-1)$.  Thus, $(r_1)'_j(k-1) = (r_2)'_j(k-1)$, so $j$ must
decide 1 in round $k$ of $r_2'$.  
Since $j'$ is nonfaulty and decides in round $k-1$ of $r_2$, 
and $P'$ dominates $P$, agent $j'$ must also decide at or before round $k-1$ of
$r_2'$.   Since, by construction, round $k$ is the earliest
round that $P$ and $P'$ reach different decisions in corresonding
runs, $j'$ must decide 0 
in round $k-1$ of $r_2'$.  But this means that $j$ and $j'$ make
different decisions in $r_2'$, despite both being nonfaulty.  This
gives  us the desired contradiction.
\end{proof}

\repro{pro:optimal}
$\kbp^0$ is safe with respect to all contexts $\gamma_{\min,n,t}$ and
$\gamma_{\basic,n,t}$ such that $n - t \ge 2$.
\erepro
\begin{proof}

  We do the argument simultaneously  for $\gamma_{\min,n,t}$ and
  and $\gamma_{\basic,n,t}$.
  Let $P$ be an implementation of
  $\kbp^0$ in $\gamma_{\min,n,t}$ (resp., $\gamma_{\basic,n,t}$) and
  let $r$ be a run in  $\I_{\gamma_{\min,n,t},P}$ (resp., $\I_{\gamma_{\basic,n,t},P}$).  
  We first show that the first part of the safety condition holds. 
  Suppose that $i$ has not received a 0-chain by $(r,m)$.
  We want to
  show that there exists a 
      point $(r',m)$ such that $r_i(m)= r'_i(m)$ and
      all agents have initial preference 1 in $r'$.
Let $r'$ be the run where all agents start with initial preference 1 
and the adversary is the same as in $r$.  
An easy argument by induction on $k$, using the fact that the
failure pattern 
is the same in $r$ and $r'$, shows that that for all agents $j$ and
times $k$, if $j$ 
has not received a 0-chain by $(r,k)$
then $r_j(k) = r'_j(k)$.
It immediately follows that $r_i(m) =
r'_i(m)$, which completes the proof of the first part of the argument.

To prove that the second part of the safety condition holds,
  suppose that $\I, (r,m) \models 
  \neg 
  (K_i(\bigwedge_{j \in \Agents}$ $
  \neg (\deciding_j = 0))$ and $i$ does not 
  decide before 
  round $m+1$ of $r$.
  The second part of the safety condition is
  easily seen to hold if $m=0$, so we assume that $m \ge 1$.  
  This implies that $\init_i = 1$, otherwise agent $i$ would have already decided
  in round 1.  
  There
  must exist a
  point $(r^+,m)$ such that $r_i(m)= r^+_i(m)$
  and some agent $j$ decides 0 in
  round $m+1$ of $r^+$.  
  Since $n - t \ge 2$, there must be
  agents, say $i'$ and $j'$, that are nonfaulty in $r^+$,
  where we can take $i= i'$ if $i$ is nonfaulty in $r^+$, and $j=j'$ if $j$ is
  nonfaulty in $r^+$.  
  If $m>1$ then the initial preferences of 
  $i'$ and $j'$ in $r^+$ are 1, for otherwise $i$ would decide 0
  at or before round 2 in $r^+$, and hence also in $r$, contradicting
  the assumption $i$ does not decide before round $m+1$ in $r$.
   The initial
    preference of $j$ must also be 1, for otherwise $j$ decides in
    round $1 \ne m+1$, given that $m \ge 1$.

  Observe that in $r^+$, (a) no agent decides 1 at or before round $m+1$    
(for if agent $j''$ decides 1 in round $k$ of $r^+$, then $\I,(r^+,k-1)
\models K_{j''} (\bigvee_{j''' \in \Agents} \neg (\deciding_{j'''} =
0))$, and this contradicts the fact that $j$ decides 0 in round $m+1$ of
$r$, so  there must be a 0-chain of length $m$ ending with $j$ in $r$), (b)
no nonfaulty agent decides 0 before round $m$ in $r^+$ (otherwise $i$ and $j$
would decide 0 at or before round $m$ in $r^+$), (c) $i$ and $j$ do
not decide 0 at or before round $m$ in $r^+$, and (d) $i$, $j$, and all
the nonfaulty agents send no message (i.e., send $\bot$) in
$\gamma_{\min,n,t}$ and send $(\init,1)$ in $\gamma_{\basic,n,t}$
up to and including round $m-1$; $i$ and $j$ also send $\bot$ (resp.,
$(\init,1)$) in round $m$ of $r$ in $\gamma_{\min}(n)$
(resp., ($\gamma_{\basic,n,t}$).  Note that $i'$ or $j'$ may decide 0 in
round $m$ of $r^+$, in which case they will send 0; otherwise, like $i$
and $j$, they send $\bot$ (resp., $(\init,1)$) in round $m$ of $r$ in
$\gamma_{\min,n,t}$ (resp., ($\gamma_{\basic,n,t}$). 

We want to modify $r^+$ to get a run $r'$ such
  that (a) $r_i(m)= r'_i(m)$, (b) $i$ and $j$ are nonfaulty in $r'$, 
  and (c) $j$ decides 0 in round $m+1$ of $r'$.  Let
  $\alpha=(\N',F')$ be the adversary in $r'$.  We define $r'$ by 
  assuming that all agents have the same initial preferences in
  $r'$ as in $r^+$, and
  the adversary $(\N'',F'')$ in $r'$ is defined as follows:
  $\N'' = \N' - \{i',j'\} \cup \{i,j\}$ (so that $i$ and $j$ are
  nonfaulty in $r'$)
  and, roughly speaking, $F''$ 
  interchange the failures of $i$ and $i'$ and of $j$ and $j'$ in
  $r^+$ and $r'$.   
  More precisely, for all agents $j''$, (a) if
  $j''$ does not 
  receive a message from $i$ (resp., $j$) in round 
  $k$
  of $r^+$
  according to $F'$, and, in the special case that $k=m$, neither
  $i'$ nor $j'$ sends the message 0,  then $j''$ does not
  receive a message from $i'$ (resp., $j'$) in round $k$ of $r'$
  according to $F''$; (b) if $j''$ receives a message from $i$ (resp.,
  $j$) in round 
  of $r^+$ according to $F'$ or if $k=m$ and either
  $i'$ or $j'$ send the message 0 in round $m$ of $r$,
  then $j''$
  receives a message from $i'$ (resp., $j'$) in round $k$ of $r'$
  according to $F''$.

  We claim that for all $k \le m$, all agents $j''$ have the same state
  at time $k$ of runs $r^+$ and $r'$.  We prove this by induction on
  $k$.  In the case that $k=0$, this is immediate since all agents
  have the same initial preferences in $r^+$ and $r'$.  If $0 < k < m$,
  this follows from the fact that all agents have the same state at
  time $k-1$, and the only way in which the runs differ is that
  $j''$ does not receive a message from $i$ (resp., $j$) in round
  $k$ of $r^+$, then $j''$ does not receive a message from $i'$
  (resp., $j'$) in round $k$ of $r'$.  These message are either
  $\bot$ (if the context is $\gamma_{\min,n,t}$) or $(\init,1)$ (if the
  context is $\gamma_{\basic,n,t}$).  But clearly this difference does
  not 
  affect
  the state of $j''$; in particular, if the context is
  $\gamma_{\basic,n,t}$, then $j''$ gets the same number of messages of
  the form $(\init,1)$ in both cases, and this is all it keeps track
  of in its state.  The same argument applies if $k=m$ and neither
  $i'$ or $j'$ send the message 0 in round $m$ of $r^+$.  If $i'$ or
  $j'$ do send the message 0 in round $m$ of $r^+$, since they are
  nonfaulty in $r^+$, all agents will get the message.  By
  construction, they will also send this message in round $m$ of
  $r'$ and all agents will get it in $r'$.  The transition
  function then guarantees that all agents will have the same state
  in round $m$ of $r^+$ and $r'$.  In particular, it follows that
  $r^+_i(m) = r'_i(m)$ and $j$ decides 0 in round $m+1$ of $r'$, as desired.

  Finally, if $m \ge 1$, we must construct a run $r''$ as required in
  the second part of condition 2(c).  Since $j$ decides 0 in $r'$ in
  round $m+1$, as 
  shown earlier, $j$ must receive a 0-chain at time $m$ in $r$.  Let
  $j'$ be the last agent on that chain.  Thus, $j'$ decides 0 in round
  $m$ in $r'$.  If $j'$ is nonfaulty in $r'$, we are done.  If $j'$ is
  faulty, then we consider two cases.  If $m = 1$, then we must have
  $\init_{j'} = 0$.  We consider a run $r''$ where all agents have the
  same initial values as in $r'$, and if $(N',F')$ is the adversary in
  $r'$, then the adversary in $r''$ is $(N'-\{j'\},F'')$, where
  $F''$ agrees with $F'$ on all the agents in $N'-\{j'\}$.  It is easy to see
  that $r'_j(1) = r''_j(1)$, completing the argument.  If $m > 1$,
  then there must be some nonfaulty agent $j''$ other than $i$, since
  $n-t \ge 2$.  As in the argument above, the initial values of $j'$
  and $j''$ must be 1 (otherwise $j'$ would have decided in round 1
  and $i$ would have decided in round 2 in $r'$).  
  We now proceed much as in the previous argument to
  construct $r''$: all agents have the same initial values in
  $r'$ and $r''$,  and we take the  the adversary in $r''$ to be $(N' \cup
  \{j''\} - \{j'\},F'')$, where  $F''$ interchanges the roles
  of $j'$ and $j''$.  We leave details to the reader.
\end{proof}

\rethm{thm:ImplementsSimple}
If $t\leq n-2$ then $P^{min}$ implements $\kbp^0$ in the 
EBA context~$\gamma_{\min,n,t}$. 
\erethm
\begin{proof}
Let $\I$ be the system $\I_{ \gamma_{\min,n,t},\kbp^0}$.  
We show that for all runs $r$ and times $m$, we have $P^{min}_i(r_i(m)) =
(\kbp^0_i)^\I(r_i(m))$.  
\begin{itemize}
  \item If $P^{min}_i(r_i(m)) = \noop$ because 
$\decided_i \neq \bot$ in $r_i(m)$, we clearly also have
$(\kbp^0_i)^\I(r_i(m)) = 
\noop$, because $\I,(r,m) \models K_i(\decided_i \neq \bot)$.
\item If $P^{min}_i(r_i(m)) = \decide_i(0)$, we must have
$\decided_i = \bot$  in $r_i(m)$, 
and either $\init_i =0$ or 
$\rd_i =0$.
If $\init_i =0$ in $r_i(m)$, then $\I,(r,m)\models K_i(\init_i = 0)$,
so  $(\kbp^0_i)^\I(r_i(m)) = \decide_i(0)$.  
If $
\rd_i =0$ and $\init_i \ne 0$ in $r_i(m)$, 
then we must have $\Time_i > 0$ in $r_i(m)$ and 
$\rd_i =\bot$ in $r_i(m')$ for $m' < m$,
for otherwise,
agent $i$ would have decided $0$ earlier and we would have
$\decided_i \neq\bot$ in $r_i(m)$.  Moreover, $i$ must have
received the message 0 from some agent $j$.
Thus, $\I,(r,m) \models K_i(\bigvee_{j \in \Agents} 
\jdecided_j = 0)$.  
\commentout{
Finally, for the agent $j$ from which
$i$ received the message $0$, the same construction as in the proof of
Theorem~\ref{thm:safe} shows that there exists a run $r'$ such that
$r_i(m) = r'_i(m)$, 
$j$ decides 0 in $r'$, and $j$ is nonfaulty in $r'$.  Thus,
$\I,(r,m) \models K_i(\bigvee_{j \in \Agents} (
\jdecided_j= 0 \land \neg K_i \neg (j \in \N \land
\jdecided_j = 0)))$.  
} %
It follows that $(\kbp^0_i)^\I(r_i(m)) = \decide_i(0)$.
\item  If $P^{min}_i(r_i(m)) = \decide_i(1)$, we must have 
 $\decided_i = \bot$, $\init_i = 1$, $ \rd_i \neq 0$, and
$\Time_i = t+1$ in $r_i(m)$.  
As shown in the argument for Termination in the proof of
Proposition~\ref{prop:kbp0correct}, we have
$\I,(r,m) \models K_{i}(\bigwedge_{j \in \Agents}
     \neg (\jdecided_j = 0))$.  Thus,
    $(\kbp^0_i)^\I(r_i(m)) = \decide_i(1)$.  
   \item Finally, if $P^{min}_i(r_i(m)) = \noop$ by the final line 
  of
    $P^{min}_i$,  
  we must have $\Time_i <t+1$, $\init_i =1$, $ \decided_i = \bot$, and
 $\rd_i = \bot$ in $r_i(m)$.  (Note that $t' = m$.) Consider a run
$r'$ where all 
agents have initial preference $1$ and are nonfaulty.  It is easy to
see that $i$ receives the same messages up to time $m$ in $r$ and $r'$, so
$r_i(m) = r'_i(m)$.  
Hence, we must have $\I,(r,m) \models \neg K_i(\init_i=0 \lor
\bigvee_{j \in \Agents} \decided_j=0)$.  Thus, $(\kbp^0_i)^\I(r_i(m)) \ne
\decide_i(0)$.  It is also not hard to construct a run $r''$
that $i$ considers possible where  
there are exactly $t'$ faulty agents such that some nonfaulty agent
$j$ in $r''$ gets a 0-chain of length
$t'$ in round $t'$ of $r''$.   Thus,
$\I,(r,m) \models \neg K_{i}(\bigwedge_{j \in \Agents}
\neg (\deciding_j = 0))$, so
  $(\kbp^0_i)^\I(r_i(m)) \ne \decide_i(1)$. Therefore,
  $(\kbp^0_i)^\I(r_i(m)) = \bot$.
\end{itemize}
It follows that $P^{min}$ implements $\kbp^0$ in
$\gamma_{\min,n,t}$. 
\end{proof}

\rethm{thm:implement2}
  If $t\leq n-2$, then $P^{\basic}$ implements $\kbp^0$ in the
    EBA context~$\gamma_{\basic,n,t}$. 
\erethm
\begin{proof}
We proceed just as in the previous argument.
If $P^{\basic}_i(r_i(m)) = \bot$ or $P^{\basic}_i(r_i(m)) =
\decide_i(0)$, then the 
argument is identical to that for $P^{min}$.  
If $P^{\basic}_i(r_i(m)) = \decide_i(1)$, then 
we proceed by induction to show that when 
we have $\decided_i = \bot$,
$\init_i=1$, and either 
$\rd_i = 1$ or $\#1_i > n - m$ in
$r_i(m)$, then
we also have $\I, (r,m) \models K_i(\bigwedge_{j \in \Agents} \neg ( \deciding_j = 0))$. 

So suppose that $\decided_i = \bot$ and $\init_i=1$ are in $r_i(m)$.  
If $\#1_i > n-m$ in $r_i(m)$, then it is easy to see
that there cannot be a 0-chain of length $m$ in $r$ (since the only
agents that can be involved in this 0-chain are ones that did not
send an $(\init,1)$ message). Thus, $\I, (r,m) \models 
K_i(\bigwedge_{j \in \Agents} \neg ( \deciding_j = 0))$.
On the other hand, if $\rd_i = 1$ is in $r_i(m)$, then 
we must have that 
$\rd_i =\bot$ in $r_i(m')$ for $m' < m$, for otherwise,
agent $i$ would have decided $1$ earlier and we would have
$\decided_i \neq\bot$ in $r_i(m)$.  Moreover, $i$ must have
received the message 1 from some agent $j$ in round $m$ of $r$.
Thus, $j$ decides 1 in round $m$ of $r$, so we must have 
$\decided_j = \bot$, $\init_j=1$, and either $\rd_j = 1$ or $\#1_j > n - m$
in $r_j(m-1)$. By the inductive hypothesis, $\I, (r,m-1) \models 
K_j(\bigwedge_{j' \in \Agents} \neg ( \deciding_{j'} = 0))$.  If some
agent $j'$ decides 0 in round $m+1$ of $r$, then there must be a 0-chain 
that ends  0 with $j'$, so $j'$ must get a message from an agent that 
decides 0 in round $m$, contradicting the fact that $\I, (r,m-1) \models 
K_j(\bigwedge_{j' \in \Agents} \neg ( \deciding_{j'} = 0))$. 
It thus follows from the information in $r_(m)$ that no agent decides
0 in round $m+1$ of $r$, so
$\I, (r,m) \models K_i(\bigwedge_{j \in \Agents} \neg ( \deciding_j = 0))$.
\commentout{
If $\rd_i = 1$, then arguments similar to those used in the
case that $P^{min}(r_i(m)) = \decide_i(0)$ show that
$\I, (r,m) \models  K_i   (\bigvee_{j \in \Agents} (
\jdecided_j = 1 \land \neg K_i \neg (j \in \N 
     \land \jdecided_j = 1))) $.  
     
If $\rd_i \ne 1$ and $\#1_i > n - m$, then it is easy to see
  that there cannot be a 0-chain of length $m$  (since the only
  agents that can be involved in this 0-chain are ones that did not
  send an $(\init,1)$ message), let alone one that ends with an
  honest agent.  Thus,
  $\I, (r,m) \models  K_i(\bigwedge_{j \in \Agents} 
          \neg ( \jdecided_j = 0))$.
} %
In either case, we have $(\kbp^0_i)^\I(r_i(m)) = \decide_i(1)$.

Finally, if $P^{\basic}_i(r_i(m)) = \noop$ by the final line of
$P^{\basic}_i$, then arguments similar in spirit to those used above
show that agent $i$ considers it possible that all agents started with
an initial preference of 1, and hence does not know that there is a
0-chain, but also cannot rule out the possibility of a 0-chain.
We must have $\decided_i = \rd_i = \bot$, $\init_i = 1$, and 
$\#1_i \leq n - m$ in $r_i(m)$. Since $\init_i = 1$ and $\rd_i = \bot$
are in $r_i(m)$, at the point $(r,m)$, 
  $i$ considers possible the run $r'$ where every agent started with an initial
  preference of 1 and the message pattern is identical to $r$. We then have
  $\I,(r,m) \models \neg K_{i}(\bigwedge_{j \in \Agents}
  (\jdecided_j = 0 
    ))$, so $(\kbp^0_i)^\I(r_i(m)) \ne \decide_i(0)$.
  Similarly, since $\#1_i \leq n - m$, $i$ considers it possible that a
there is a 0-chain consisting of the agents that $i$ didn't hear a 1
from.  
    Hence, 
        $\I,(r,m) \models \neg K_i
    (\bigwedge_{j \in \Agents} \neg (\deciding_j=0))$, so 
$(\kbp^0_i)^\I(r_i(m)) \ne \decide_i(1)$.
  It follows that     $(\kbp^0_i)^\I(r_i(m)) =     \bot$.
Thus, $P^{\basic}$ implements $\kbp^0$ in
$\gamma_{\basic,n,t}$.  
\end{proof}

\subsection{Proofs for Section~\ref{sec:fip-optimal}}

\subsubsection{Motivating the KBP}

\repro{prop:cfaulty-necessary}
If $P$ is an optimal protocol in 
the context $\gamma_{\fip,n,t}$ and $\I_{P,\gamma_{\fip,n,t}} ,(r,m) \models
  \decided_i = \bot \land K_i(\ck{\N}(\faultyag \land \nodecided_\N(1) \land
 \exists 0))$, then all undecided agents in $\N(r)$ make a
  decision in round $m+1$, and 
  similarly if $\I_{P,\gamma} ,(r,m) \models \decided_i = \bot \land
  K_i(\ck{\N}(\faultyag \land \nodecided_\N(0) \land \exists 1))$.
\erepro
\begin{proof} 
  For ease of exposition, let $\gamma = \gamma_{\fip,n,t}$.
  Suppose that
$\I_{P,\gamma} ,(r,m) \models \decided_i = \bot \land
    K_i(\ck{\N}(\faultyag \land \nodecided_\N(1) \land \exists 0))$.
Let $P'$ be the protocol for context $\gamma$ obtained by modifying $P$ as follows. 
For each agent $i$ and local state $s\in L_i$, we define $P'_i(s)$ to be 
$\decide_i(0)$ if 
$\I_{P,\gamma} ,(r,m) \models K_i(\ck{\N}(\faultyag \land \nodecided_\N(1) \land \exists 0))$ 
for points $(r,m)$ with $r_i(m) = s$, and $P'_i(s) = P_i(s)$ otherwise. We claim that 
$P' \leq P$ and $P'$ is an EBA-protocol in context $\gamma$. 

Let $r$ and $r'$ be corresponding runs of $\I_{P, \gamma}$ and $\I_{P',\gamma}$, respectively. 
We show by induction on $k$ that $r(k) = r'(k)$ if $k$ is less than or
equal to the earliest time $m$ such that
$\I_{P,\gamma}, (r,m) \models \ck{\N}(\faultyag \land \nodecided_\N(1) \land \exists 0)$. 
The base case is trivial, and the inductive case follows
the fact that if  
$r(k) = r'(k)$ and  $\I_{P,\gamma}, (r,k) \not\models \ck{\N}(\faultyag
\land \nodecided_\N(1) \land \exists 0))$,  
then for all agents $i$, $\I_{P,\gamma}, (r,k) \not\models K_i
\ck{\N}(\faultyag \land \nodecided_\N(1) \land \exists 0)$
(the fact that $\I_{P,\gamma}, (r,k) \not\models K_i
\ck{\N}(\faultyag \land \nodecided_\N(1) \land \exists 0)$ for all
agents $i$ follows from Lemma~\ref{lemma:cfaultyimpliesall}, proved below),
so $P_i(r_i(k)) = P'_i(r'_i(k))$ for all agents $i$. Since the
failure patterns are the same in these runs,  
it follows that $r(k+1) = r'(k+1)$. 
Moreover, since $\I \models (\ck{\N}(\phi) \land i \in \N ) \rimp
K_i(\ck{\N}(\phi))$ (see \cite{FHMV}),
it follows that once $\I_{P,\gamma}, (r,m) \models \ck{\N}(\faultyag
\land \nodecided_\N(1) \land \exists 0)$,  
all undecided agents in $\N(r')$ decide 0 simultaneously using
$P'$.  

It is immediate from these facts that $P'\leq_\gamma P$. If a
nonfaulty agent $i$
decides in $r'$ at a time $m$ before 
the common knowledge condition has become true in $r$, then $r(m) = r'(m')$ and 
agent $i$ makes the same decision at time $m$ in $r$. Once the common knowledge condition 
becomes true, an undecided nonfaulty agent $i$ decides using $P'$, 
so does so at least as soon as it does using $P$. 
Recall that if $i$ is faulty, then  the definition of $P'\leq_\gamma P$ allows that agent $i$ decides using $P$ before
it does so using $P'$, so we do not need to consider this case. 

Next, we show that $P'$ is an EBA protocol in context $\gamma$. 
Unique Decision follows from the fact that the context $\gamma$ records 
decisions in the local state, and $P_i(s) = \bot$ for states $s$ that record that a decision has already been made.  
For Validity, consider runs $r'$ of $P'$ and the corresponding
run $r$ of $P$ in context $\gamma$. 
If the common knowledge condition has not yet become true at a point where nonfaulty agent $i$ makes its
decision on value $v$ using $P'$, then $r(m) = r'(m)$, and $i$ makes the same decision using $P$.
It follows from Validity for $P$ that some agent has initial value $v$. Alternately, if nonfaulty agent
$i$ decides $0$ at $(r',m)$ because it knows at $(r,m)$ in 
$\I_{P,\gamma}$ that the common knowledge condition has become true, 
then in fact $\I_{P,\gamma},(r,m) \models \exists 0$, and the same fact holds at $(r',m)$. 

For Agreement,  consider a run $r'$ of $P'$ where agent $i \in
\N(r')$ decides $0$ in round $m_0+1$ and agent $j \in \N(r')$ decides
$1$ in round $m_1+1$.  
If the common knowledge condition has not become true in the corresponding run $r$ of $P$
by time  $\max(m_0,m_1)$, then since $r$ and $r'$ are
identical up to at least this time,
we have a contradiction to  Agreement for $P$. 
If the common knowledge condition becomes true at time $k\leq m_1$, then 
according to $P'$, all nonfaulty agents undecided by time $k$ decide 0
in round $k+1$.
That means that agent $j$, which is undecided at time $k$, decides 0
in round $k$ of $r'$ and decides on a different value in round
$m_1+1$, contradicting 
the fact that, as we have shown, $P'$ satisfies Unique Decision.
On the other hand, the common knowledge condition cannot become true after time $m_1$, since it 
implies $\nodecided_\N(1)$ and  $j\in \N(r')$ has decided
1 in round $m_1+1$.  
Thus, $P'$ satisfies Agreeement$(\N)$. 

Finally, Termination for $P'$ follows from $P'\leq_\gamma P$ and the fact
that $P$ satisfies Termination.  

Since we have $P'\leq P$ and $P'$ is an EBA protocol for context $\gamma$, it follows that $P\leq P'$, 
so in $\I_{P, \gamma}$, all undecided nonfaulty agents make a decision as soon as they know that the common knowledge condition 
holds. 
\end{proof} 

Note that the result does not tell us \emph{what} decision the nonfaulty agents must make when the common knowledge condition holds. There may be situations where no nonfaulty agent has made a decision, and both common knowledge conditions hold. 
Either a 0 or a 1 decision would be acceptable in this case.

\commentout{
\begin{definition}[\zchain]
  A sequence $i_1, \ldots, i_{m}$ of distinct agents is a \emph{\zchain of
  length $m$ in  interpreted system $\I$} if (a) $\I,(r, 0) \models
  \init_{i_1} = 0$ and for all $m'$ with $1 \le m' \le m$, (b) $i_{m'}$
  first decides 0 in round $m'$, and (c) if $m' > 1$, then $i_{m'}$ 
  receives a message in round $m'-1$ from $i_{m'-1}$ from which it can 
  infer that $i_{m'-1}$ 
  decides 0 in round $m'-1$
  (in the notation we used
  when we defined $\mu_i$, this means that it receives a message in $M_0$).
  We say that \emph{an agent $i$ gets a \zchain at time $m$} or 
  \emph{a \zchain reaches an agent $i$ at time $m$} if there exists
  a \zchain of length $m+1$ that ends with agent $i$
  We say that \emph{a \zchain reaches time $m$} if there exists 
  a \zchain of length $m+1$.

  \kayacomment{This is the 0-chain definition from the current paper draft
    without the requirement that $i_{m'}$ considers $i_{m'-1}$ potentially 
    nonfaulty. I added a $'$ to differentiate them for now. We could probably
    find a better name.}

\end{definition}

\begin{definition}[\emph{hears-from}]
  We define a \emph{one-step hears-from} relation in a run $r$ 
  on pairs $(j, m)$ consisting of agents $j$ and times $m$ by saying that 
  $(j', m')$ \emph{one-step hears from} $(j, m)$ in $r$ if 
  agent $j$ sends $j'$ a non-$\bot$ message in round $m + 1$ of $r$
    that $j''$ receives and $m+1 \le m'$.
   The \emph{hears-from} relation is the  
   transitive closure of the \emph{one-step hears-from} relation.
   We write $(j,m) \lamport{r} (j',m')$ if $(j',m')$ \emph{hears-from} $(j,m)$.
\end{definition}

Some definitions:
  \begin{itemize}
    \item $\nodecided_{\N}(v)$ is an abbreviation for 
          $\bigwedge_{j \in \N} \neg (\decided = v)$.
    \item $\nodecided_\N(0,1)$ is an abbreviation for 
          $\nodecided_\N(0) \land \nodecided_\N(1)$.
    \item $K_{i}(\faultyag)$ is an abbreviation for
          $\bigwedge_{j \in \Agents} 
          (K_{i}(j \in \N) \lor K_{i}(j \not\in \N))$.
    \item $E_\N(\faultyag)$ is an abbreviation for $\bigwedge_{j \in \N}
          K_j(\faultyag)$.
    \item $C_\N(\faultyag)$ is an abbreviation for $E_\N(\faultyag) \land 
          E^2_N(\faultyag) \land \dots$.
    \item $\dist_{\N}(\faultyag)$ is an abbreviation for
          $\forall k \not\in \N(\exists j \in \N(K_{j}(k \not\in\N)))$.
    \item $\dec_{i} = v$ is an abbreviation for $\Circ (\decided = v)$.
  \end{itemize}

  \begin{program}
    \DontPrintSemicolon
    \lIf{$\decided_i\neq \bot$}{$\noop$}
    \lElseIf{$K_{i}(C_{\N}(\faultyag \land \nodecided_{\N}(1) \land \exists 0))$}
      {$\decide_i(0)$}
    \lElseIf{$K_{i}(C_{\N}(\faultyag \land \nodecided_{\N}(0) \land \exists 1))$}
      {$\decide_i(1)$}
    \lElseIf{$\init_i = 0
      \; \lor \;
              \bigvee_{j \in \Agents} K_i (\jdecided_j = 0)$
    }{$\decide_i(0)$}
    \lElseIf{$K_i(\bigwedge_{j \in \Agents} \neg (\deciding_j = 0))$}
    {$\decide_i(1)$}
    \lElse{$\noop$}
    \caption{$\kbp^1_i$}
  \end{program}

One aspect of an optimal protocol that may be of interest is the behavior of 
faulty agents when they discover their own faultiness. We considered deciding 1
in this case (by adding a decision rule $K_{i}(i \not\in \N)$ to decide 1).
This turns out to not affect the optimality argument, and results in a protocol 
that is incomparable to the current one.
\kayacomment{I could add two example runs where the two protocols beat each 
  other if it would be good to include them.} 

  \subsection{Characterizing the conditions in $\kbp^1$}
}

We now
prove all the required properties of $\kbp^1$.  
We start by examining what each of the conditions is $\kbp^1$ tells
us.

\subsubsection{The common knowledge conditions}

As a first step to understanding $\kbp^1$, we characterize the common
knowledge conditions (i.e., the second and third lines).  The next
proposition gives necessary and sufficient conditions for these
conditions to hold.  These conditions show that the problem of computing
when the common knowledge conditions hold is tractable.

Define 
 $\dist_{\N}(\faultyag)$ to be an abbreviation for 
          $$\exists A \subseteq \Agents(|A| = t \land \forall i \in A\,
\exists j \in \N\, (K_{j}(i \not\in\N))).$$
Roughly speaking,  $\dist_{\N}(\faultyag)$ holds if, between them, the
nonfaulty agents know about $t$ faulty agents.

  \begin{definition}[\emph{hears-from}]
      We define a \emph{one-step hears-from} relation in a run $r$ 
      on pairs $(j, m)$ consisting of agents $j$ and times $m$ by saying that 
      $(j', m')$ \emph{one-step hears from} $(j, m)$ in $r$ if 
      agent $j$ sends $j'$ a non-$\bot$ message in round $m + 1$ of $r$
        that $j''$ receives and $m+1 \le m'$.
       The \emph{hears-from} relation is the  
       transitive closure of the \emph{one-step hears-from} relation.
       We write $(j,m) \lamport{r} (j',m')$ if $(j',m')$ \emph{hears-from} $(j,m)$
  \end{definition}
Those familiar with the \emph{Lamport causality} relation \cite{Lamclocks}
will recognize that the \emph{hears-from} relation is similar in spirit.

  \commentout{

  \subsubsection{Characterizing common knowledge conditions}

\roncomment{There were several issues with Lemma 1: potentially,
  knowledge of another's failure may not come atomically from a single
  agent, but from combining information  from several. In fact,  
that is not the case for sending omissions, but it needs to be
proved. Secondly, the definition of $\dist_\N(\faultyag)$ just says
that all the actually faulty nodes have been detected. It does not
follow that there is distributed knowledge of who the faulty nodes
are, since we could have fewer than $t$ nodes faulty, leaving
suspicions about whether there are more faulty nodes that  
cannot be resolved. 
Third, the direction $C_\N(\faultyag) \rimp \prev \dist_\N(\faultyag)$ is false in the case $t = n-1$.
I have reworked this result using  express it in a way that captures the reasoning about $\nodecided_\N$ and $\exists v$ as well.
I've used the more familiar $D_\N$ instead of $\dist_\N$, since that covers the $\nodecided_\N$ application better, but possibly most my arguments work with $\dist_\N$ as well, if that turns out to be needed.

\begin{lemma} \label{lem:DD} 
The formula 
$\dk{S}(S= G \land \phi) \rimp \dk{S}\dk{S}(S= G \land \phi) $
is valid in all systems, where $S$ is a non-rigid set of agents and $G$ is a rigid set of agents.
\end{lemma} 

\begin{proof} 
Suppose that $\I,(r,m) \models \dk{S}(S= G \land \phi)$. It follows that  $S(r,m) = G$, 
since $(r,m) \sim_i (r,m)$ for all $i \in S(r,m)$. Let $(r,m) \sim_i (r',m')$ for all $i \in  S(r,m)$. Then $\I,(r',m') \models S = G$, so $S(r',m') = G$. 
Suppose  $(r',m') \sim_j(r'',m'')$  for all $j \in S(r',m')$. Then for all $k \in S(r,m)$, 
since $S(r,m) =G = S(r',m')$ and relation $\sim_k$ is transitive, we have $(r,m) \sim_k (r'',m'')$. 
Thus, from $\I,(r,m) \models \dk{S}(S= G \land \phi)$, we have that   $\I,(r'',m'') \models S= G \land \phi$. 
This shows that $\I,(r',m') \models \dk{S}(S= G \land \phi)$. 
We conclude that  $\I,(r,m) \models \dk{S}\dk{S}(S= G \land \phi)$.
\end{proof} 

Say that a failure model \emph{guarantees transmission of messages
between nonfaulty agents} if  
every message sent by a nonfaulty agent is received by every nonfaulty agent in the same round. 
This condition applies to the crash failure model, the sending omissions model, the general omissions model, and the byzantine failure model. 
Say that a failure model \emph{is honest} if every message sent by every agent is as specified by protocol. 
This condition applies to the crash failure model, the sending omissions model, the general omissions model, but not the byzantine failure model. 

\roncomment{I am writing $\N(r)$ rather than $\N(r,m)$, assuming that $\N$ is constant in a run. Check this for consistency.} 
}

\begin{lemma} \label{lem:DnextC} 
Then $\I \models \dk{\N}(\nxt \phi) \rimp \nxt \ek{\N}(\phi)$. 
Consequently, also  $\I \models \dk{\N}(\N = G \land \nxt \phi) \rimp \nxt \ck{\N}(\N = G \land \phi)$, where $G$ is a rigid set of agents. 
In case of the sending omissions model, we in fact have $\I \models \dk{\N}(\nxt \phi) \rimp \nxt \ek{\N}(\phi)$ and 
$\I \models \dk{\N}(\N = G \land \nxt \phi) \rimp \nxt \ck{\Agents}(\N = G \land \phi)$
\end{lemma} 

\begin{proof} 
Suppose $\I,(r,m) \models \dk{\N}(\nxt \phi)$. 
For an arbitrary agent $i\in \N(r)$, let $r'$ be a run with $(r,m+1) \sim_i (r',m+1)$. We show $\I,(r',m+1)\models \phi$. 
The FIP model requires that nonfaulty agents send their local state at every round. Thus, 
by FIP, honesty and guaranteed transmission of messages between nonfaulty agents, 
the local state $r_i(m+1)$ records reception of the local states $r_j(m)$ of all agents $j \in \N(r)$. 
The local state $r'_i(m+1)$ must therefore record the reception of the same local states. 
By honesty and FIP, we have $r_j(m)  = r'_j(m)$ for all $j \in \N(r)$. 
(Note that we do not necessarily have $\N(r) = \N(r')$, but we do not need this for this conclusion.) 
Since $\I,(r,m) \models \dk{\N}(\nxt \phi)$, 
it follows that $\I,(r',m) \models \nxt \phi$, and consequently, that $\I,(r',m+1) \models \phi$. 
This proves that $\I \models \dk{\N}(\nxt \phi) \rimp \nxt \ek{\N}(\phi))$.

Assume now that $\I,(r,m) \models \dk{\N}(\N = G \land \nxt \phi)$. 
We show by induction that $\I, (r,m+1) \models \ek{\N}^k(\prev \dk{\N}(\N = G \land \nxt \phi))$ for all $k \geq 1$. 
The argument for the base case of $k=1$ is similar to the previous paragraph. 
Suppose $\I,(r,m) \models \dk{\N}(\N = G \land \nxt \phi)$. 
For an arbitrary agent $i\in \N(r)$, let $r'$ be a run with $(r,m+1) \sim_i (r',m+1)$. We show $\I,(r',m+1)\models \ek{\N}(\prev \dk{\N}(\N = G \land \nxt \phi))$. 
As above, by FIP, honesty and guaranteed transmission of messages between nonfaulty agents, 
we have $r_j(m)  = r'_j(m)$ for all $j \in \N(r)$. 
By Lemma~\ref{lem:DD}, we have  $\I,(r,m) \models \dk{\N}\dk{\N}(\N = G \land \nxt \phi)$, and 
it follows that $\I,(r',m) \models \dk{\N}(\N = G \land \nxt \phi)$
and $\I,(r',m+1) \models \prev \dk{\N}(\N = G \land \nxt \phi)$. 
Consequently, $\I \models \dk{\N}(\N= G\land \nxt \phi) \rimp \nxt \ek{\N}(\prev \dk{\N}(\N = G \land \nxt \phi))$.

For the inductive case, suppose that $\I, (r,m+1) \models \ek{\N}^k(\prev \dk{\N}(\N= G \land \nxt \phi))$. 
Then by the conclusion from the base case, we have  
$\I, (r,m+1) \models \ek{\N}^k(\prev \nxt \ek{\N}(\prev \dk{\N}(\N = G \land \nxt \phi)))$, 
which is equivalent to 
$\I, (r,m+1) \models \ek{\N}^{k+1}(\prev \dk{\N}(\N = G \land \nxt \phi)))$. 

This shows that $\I, (r,m+1) \models \ek{\N}^k(\prev \dk{\N}(\N = G \land \nxt \phi))$ for all $k \geq 1$. 
Noting that $\prev \dk{\N}(\N = G \land \nxt \phi) \rimp (\N = G \land \phi)$ is valid, we obtain $\I, (r,m+1) \models \ck{\N}( \N = G \land \phi) $. 

The argument for the omissions failures model is similar. 
\end{proof} 

\newcommand{\graph}{\mathit{Gr}}  

    \roncomment{We need to fix a decision about whether agents send to themselves or not, and whether these messages can be lost. 
       I would be inclined to say yes, but they cannot be lost, for
       uniformity when defining the knowledge graph and relating it to
       the FIP state.
       I also think that we should just redefine ``hears from'' to be
       Lamport causality.
       I'd prefer not to say ``$(i,m)$ hears from $(j,m')$, since that suggests that there has been a message between the two. ``hears about'' 
       still suggests that they are different. ``sees'' might be best, if we need to use words rather than symbols.    
       
       FIXME: reconcile the various graph notations: Gr, $\lamport{r}$ and the graph in the implementation section. 
       } 
    
    For a run $r$, let $\lamport{r}$ be the smallest reflexive and transitive relation on pairs $(i,m)$ with $i \in \Agents$ and $m \in \Nat$ such that 
    \begin{itemize} 
    \item $(i,m) \lamport{r} (i,m+1)$ and 
    \item if, in run $r$, agent $i$ sends $j$ a message to agent $j$ in round $m+1$, which $j$ receives in that round, then $(i,m) \lamport{r} (j,m+1)$. 
    \end{itemize}

Next, we characterize how agents learn about failures in the FIP sending omissions model. Let $\R$ be system for this model. 
For a run $r\in \R$, define the edge labelled directed graph $\graph(r)  = (V, E)$  with vertices $V = \Nat \times \Agents$ and labelled edges $E \subset V\times \{0,1\} \times V$ 
defined so that $E$ contains an edge from $(m,i)$ to $(m',i')$ with label $\ell$ iff 
$m' = m+1$, and either 
\begin{itemize} 
\item the message sent by agent $i$ to agent $i'$ in round $m+1$ is delivered, and $\ell = 1$ or 
\item the message sent by agent $i$ to agent $i'$ in round $m+1$ is omitted, and $\ell = 0$. 
\end{itemize} 
For an agent $i$ and time $m$, write $\graph(r,i,m)$ for the subgraph of $\graph(r)$ consisting of all vertices $(m',i')$ such that 
there exists a path in  $\graph(r,m)$ from $(m',i')$  to $(m,i)$, in which all labels are $1$. The edges of $\graph(r,i,m)$ 
are all the edges that participate in such paths. 

Say that \emph{$i$ sees at point $(r,m)$ that $j$ omitted its message to $k$ in round $m'\leq m$} 
if in $\graph(r)$, the edge from $(m'-1,j)$ to $(m',k)$ is labelled $0$, 
and there exists a path from $(m',k)$ to $(m,i)$ in which all edges are labelled $i$. 

\begin{lemma} 
Let $\I$ be an FIP sending omissions system.  If $(r,m) \sim_i (r',m)$ then 
$\graph(r,i,m) = \graph(r,i,m)$. If $\I$ is a system for  a deterministic protocol, 
then $\graph(r,i,m) = \graph(r,i,m)$ implies $(r,m) \sim_i (r',m)$. 
\end{lemma}

\begin{lemma} \label{lem:seeomit} 
Let $\I$ be an FIP sending omissions system for a deterministic protocol. Then $\I,(r,m) \models K_i (j \not \in \N)$ iff for some $m' \leq m$ and agent $k$, 
agent $i$ sees at point $(r,m)$ that $j$ omitted its message to $k$ in round $m'$. 
\end{lemma} 

\begin{proof} 
(Sketch) Suppose that $i$ at point $(r,m)$ does not see that $j$ omitted any message. 
Let $r'$ be the run with the same initial state as $r$, except that $j\in \N$,  so that $j$ 
omits no messages in $r'$. By induction on $m'\leq m$, we can show that $\graph(r,i,m') = \graph(r',i,m')$. 
Hence $(r,m) \sim_i (r',m)$ and $\I,(r,m) \models \neg K_i(j \not \in \N)$. 

Conversely, suppose that  at point $(r,m)$, agent $i$  sees that $j$ omitted a message. 
If $(r,m) \sim_i(r',m)$, then $\graph(r,i,m) = \graph(r',i,m)$, 
so at  point $(r',m)$, agent $i$ also sees that $j$ omitted a message. 
This implies that $\I,(r',m) \models j \not \in \N$. 
This shows that  $\I,(r,m) \models K_i (j \not \in \N)$.
\end{proof} 

\roncomment{check if this proof works beyond sending omissions and say so in that case. Is this already in the literature somewhere?

Lemma~\ref{lem:CprevD} {\bf {lem:CprevD}} replaces Kaya's Lemma {\bf lemma:dfaulty}. Some patches of uses 
required, and maybe add a formalization of the equivalence of $\dk{\N}$ and $\dist_\N$ for these facts about membership in $\N$. 

} 
\begin{lemma} \label{lem:CprevD} 
Let $\I$ be an FIP sending omissions system for a deterministic protocol with $t \leq n -2$.
If $\I,(r,m+1) \models C_\N(j \not \in \N)$ then $\I,(r,m) \models D_\N(j \not \in \N)$. 
\end{lemma} 

\begin{proof} 
Suppose that  $\I,(r,m+1) \models C_\N(j \not \in \N)$ but not $\I,(r,m) \models D_\N(j \not \in \N)$. Let $i \in \N(r)$.
Since $t \leq n-2$, there exists another agent $i'\in \N(r)$. 
>From $\I,(r,m+1) \models C_\N(j \not \in \N)$, we have that $\I,(r,m+1) \models K_i (j \not \in \N)$. 
By Lemma~\ref{lem:seeomit}, at $(r,m+1)$, agent $i$ sees that $j$ omitted a message to some agent $k$ 
in some round $m'\leq m+1$. 
Let $m'$ be the smallest such value. 
We consider two cases, deriving a contradiction in each, by constructing runs $r'$ and $r''$ that contradict $\I,(r,m+1) \models C_\N(j \not \in \N)$. 
\begin{itemize} 
\item Case 1: $m'\leq m$. In this case, there there exists a path in $\graph(r,i,m)$ from $(k,m')$ to $(i,m+1)$. 
Let the last step in this path be from $(k',m)$ to $(i,m+1)$. Then at $(r,m)$, agent $k'$ sees that $j$ omitted a message to agent $k$ in 
round $m'\leq m+1$. We cannot have $k' \in \N(r)$, else, by Lemma~\ref{lem:seeomit}, we have 
$\I,(r,m) \models K_{k'} (j \not \in \N)$, which implies that $\I,(r,m) \models \dk{\N}(j \not \in \N)$, a contradiction. 
Therefore, we must have $k' \not \in \N(r)$, and no agent in $\N(r)$ sees at $(r,m)$ that  $j$ has omitted a message. 

Let $r'$ be a run identical to $r$ except that $i'$ does not receive a message from any agent in $\Agents \setminus \N(r)$ in round $m+1$. 
Then $(r,m+1) \sim_i (r',m+1)$.
Note that for all agents, the local states at time $m$ are the same in $r'$ as in $r$. In particular, we have that at $(r',m)$, as at $(r,m)$, 
no agent in $\N(r)$ sees an omission by agent $j$. 
Let $r''$ be a run identical to $r'$ except that $j$ omits no messages before round $m+1$ and $j$ sends a message to $i$ in round $m+1$. 
Since no agent in $\N(r)$ sees an omission by agent $j$ at $(r',m)$, each agent in $\N(r)$ has the same local state at $(r'',m)$ as at $(r',m)$. 
In $r''$, as in $r'$, agent $i'$ receives messages only from $\N(r)$. Hence $(r',m+1) \sim_{i'} (r'',m+1)$. 
Since at $(r'',m+1)$, agent $i$ does not see any messages omitted by $j$,  we have $\I, (r'',m+1) \models \neg K_i(j \not \in \N)$. 
By construction, we have $i,i' \in \N(r) = \N(r') = \N(r'')$.   
Hence $\I,(r,m+1) \models \neg \ek{\N} \ek{\N} \ek{\N} ( j \not \in \N)$, which is a contradiction with  $\I,(r,m+1) \models C_\N(j \not \in \N)$.

\item Case 2: $m' = m +1$. Then $i$ does not receive a message from $j$ in round $m+1$. 
Let $r'$ be a run that is identical to $r$ except that  $j$ omits no messages, 
$i'$ receives a message from all agents in $\N(r) \cup \{j\}$ in round $m+1$, and   
$i'$ receives no messages from any other agent in round $m+1$. Then, in round $m+1$ of run $r'$, agent $i'$ receives messages 
only from agents that do not at time $m$ see any messages omitted by $j$, so $\I,(r',m+1) \models \neg K_{i'}( j \not  \in \N)$. 
Since $m'$ was selected to be the smallest number such that, at $(r,m+1)$, agent $i$ sees that $j$ omitted a message in round $m'$, 
no agent from which $i$ receives a message in round $m+1$ of runs $r$ has seen $j$ omit a message at the point $(r,m)$. 
Thus, the changes made to agent $j$'s failures in constructing run $r'$ are not visible to agent $i$ at $(r',m+1)$, and we have 
$(r,m+1) \sim_i (r',m+1)$. Also, $i,i'\in \N(r) = \N(r')$ by construction.
Hence $\I,(r,m+1) \models \neg \ek{\N}\ek{\N}(j \not \in \N)$, which is a contradiction with  $\I,(r,m+1) \models C_\N(j \not \in \N)$. 
\end{itemize} 
\end{proof} 

We note that Lemma~\ref{lem:CprevD} does not hold when $t= n-1$. In this case, in a run $r$ with 
$\N(r) = \{1\}$, if all other agents omit their message to agent $1$ in round $1$, then we have $\I,(r,1) \models K_1(2 \not \in \N)$, 
so in fact $\I,(r,1) \models K_1(\N = \{1\})$, so in fact $\I,(r,1) \models C_\N(\N = \{1\})$, but we do not have 
$\I,(r,0) \models D_\N(2 \not \in \N)$.

Next, we characterize the circumstances under which we have $K_i (\faultyag)$. 
\begin{lemma}\label{lem:count-faulty} 
Let $\mathit{kf}(r,i,m)$ be the number of agents $j$ such that $\I,(r,m)\models K_i(j \not \in \N)$, 
Let $\I$ be a FIP sending omissions interpreted system with up to $t$ failures, for a deterministic protocol.
Suppose $\I,(r,m) \models \neg K_i(j \not \in \N)$ and that $\mathit{kf}(r,i,m)< t$. Then $\I,(r,m) \models \neg K_i(j \in \N)$. 
Consequently, $\I,(r,m)\models K_i (\faultyag)$ iff $\mathit{kf}(r,i,m)=t$
\end{lemma} 

\begin{proof} 
Suppose $\I,(r,m) \models \neg K_i(j \not \in \N)$ and that $\mathit{kf}(r,i,m)< t$. 
Then at $(r,m)$, agent $i$ sees no omissions by $j$. 
Since $\mathit{kf}(r,i,m)< t$, there exists at least one agent $j'\not \in \N(r)$ such that 
agent $i$ sees no omissions by $j'$. If $j'=j$, then we have $\I,(,m) \models \neg K_i(j \in \N)$ trivially. 

Otherwise, if $j' \neq j$, define $r'$ to be the run that differs from $r$ only in that 
$\N(r') = (\N(r) \setminus \{j\}) \cup \{j'\}$, and agent $j'$ omits no messages in $r'$. 

Since $r'$ has no more omissions than $r$, which has at most $t$ faulty agents, 
$r'$ also has at most $t$ faulty agents. 
We claim that $(r,m) \sim_i (r',m)$. Since $\I, (r',m) \models j \not \in \N$, we
obtain that  $\I,(,m) \models \neg K_i(j \in \N)$. 

Since the protocol is deterministic, to show $(r,m) \sim_i (r',m)$ it suffices to show that $\graph(r,i,m) = \graph(r',i,m)$. 
Because $r'$ has fewer failures than $r$, it is immediate that $\graph(r,i,m)$ is a subgraph of  $\graph(r',i,m)$. 
For the converse, we show that if there is  a path from $(i',m')$ to $(i,m)$ in $\graph(r',i,m)$, then 
this path exists also in $\graph(r,i,m)$. The proof is by induction on the length of these paths. 
The base case of length 0 is trivial, since both graphs contain vertex $(m,i)$. 
Supposing the result holds for paths of length $\ell$, consider a path of length $\ell+1$ in $\graph(r',i,m)$,
comprised of an edge $(m',i')$ to $(m'+1,i'')$, followed by a 1-labelled path from $(m'+1,i'')$ to $(m,i)$. 
By induction, the latter path also exists in $\graph(r,i,m)$. 
If $i' \neq j'$, then the initial edge also exists in $\graph(r,i,m)$, since the construction of $r'$ 
changes the failures only of agent $j'$. On the other hand, if $i' = j'$, and the initial edge 
fails to exist in $\graph(r,i,m)$, then at $(r,m)$, agent $i$ sees an omission of agent $j'$ 
in round $m'+1 \leq m$. This contradicts the selection of agent $j'$. Hence, in this 
case also, the initial edge must exist in $\graph(r,i,m)$. In each case, the entire path exists
in $\graph(r,i,m)$. This shows that $\graph(r,i,m) = \graph(r',i,m)$.

For the conclusion that $\I,(r,m)\models K_i (\faultyag)$ iff $\mathit{kf}(r,i,m)=t$, 
note that if $\mathit{kf}(r,i,m)=t$, then there are $t$ agents $j$ such that 
$\I,(r,m) \models \neg K_i(j \not \in \N)$ and it follows from the fact that no 
run has more than $t$ faulty agents that $\I,(r,m) \models  K_i(j' \in \N)$ 
for all $j' \not \in \N(r)$. 
On the other hand, if $\mathit{kf}(r,i,m)<t$, then 
$\I,(r,m) \models \neg K_i(j \not \in \N)$ for some agent $j$, and 
by the above, also $\I,(,m) \models \neg K_i(j \in \N)$.
This implies that $\I,(r,m)\models \neg K_i (\faultyag)$. 
\end{proof} 
  }

  \begin{proposition} \label{lemma:dfaulty}
    For all implementations $P$ of the knowledge-based program $\kbp^1$
    with respect to $\gamma_{\fip,n,t}$, 
    \begin{itemize}
      \item[(a)]
    $\I_{\gamma_{\fip,n,t},P}
        \models \Time > 0 \rimp (\ominus \dist_\N(\faultyag)
        \Leftrightarrow C_\N(\faultyag))$. 
      \item[(b)]
        $\I_{\gamma_{\fip,n,t},P} \models \Time > 0 \rimp (\nodecided_\N(v) \rimp
        \land_{j \in \N} \ominus (K_j \Circ \nodecided_j(v)))$ and  
        $\I_{\gamma_{\fip,n,t},P} \models ( C_{\N}(\faultyag) \land
    (\nodecided_\N(v))) \dimp   C_\N (\faultyag \land \nodecided_\N(v))$,
    for $v 
  \in \{0,1\}$. 
  \item[(c)] $\I_{\gamma_{\fip,n,t},P} \models  \Time > 0 \rimp ((C_{\N}(\faultyag) \land
  \ominus(\lor_{j \in \N} K_j (\exists v))) \dimp C_{\N}(\faultyag \land
    \exists v))$, for $v \in \{0,1\}$.
  \commentout{
\item[(d)] For all agents $i$, $\I_{\gamma_{\fip,n,t},P} \models
  C_{\N}(\faultyag) \rimp K_i (\faultyag \land C_{\N}(\faultyag))$.}
  \end{itemize}
    \end{proposition}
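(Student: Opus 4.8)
\textbf{Proof approach for Proposition~\ref{lemma:dfaulty}.}
The plan is to prove the three parts in order, using the special structure of the full-information context $\gamma_{\fip,n,t}$: corresponding runs of any two decision protocols are literally identical (as noted in the text), nonfaulty agents broadcast their complete state every round, and a nonfaulty sender's message always reaches every nonfaulty recipient in the same round. The recurring technical engine will be a lemma saying that in this FIP sending-omissions setting $\I \models \dk{\N}(\N = G \land \nxt\phi) \rimp \nxt C_{\N}(\N = G \land \phi)$, i.e.\ distributed knowledge among the nonfaulty agents at time $m$ of something that becomes a ``stable'' fact at time $m+1$ turns into common knowledge at $m+1$, because after one round of full-information broadcast among nonfaulty agents, each nonfaulty agent knows everything every nonfaulty agent knew; iterating gives common knowledge. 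I would state and prove this once (it is essentially Lemma~\ref{lem:DnextC} in the excerpt) and reuse it for (a), (b), and (c).

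\emph{Part (a).} For the direction $\ominus\dist_\N(\faultyag) \rimp C_\N(\faultyag)$: $\dist_\N(\faultyag)$ at time $m-1$ means there is a set $A$ of $t$ agents such that for each $i \in A$ some nonfaulty agent already knows $i \notin \N$. Since there are at most $t$ faulty agents and $A$ has size $t$, $A$ is exactly the set of faulty agents, so the ``who is faulty'' information is in fact distributed knowledge among $\N$ at time $m-1$; moreover $\N = \Agents \setminus A$ is a rigid fact once $A$ is pinned down. Apply the distributed-to-common lemma (with $\nxt\phi = \top$, or rather with $\phi$ the conjunction $\bigwedge_{i\in A} i\notin\N$) to conclude $C_\N(\faultyag)$ at time $m$, i.e.\ there is common knowledge among $\N$ of the set $A$ of $t$ faulty agents, which is exactly our abbreviation for $C_\N(\faultyag)$. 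For the converse $C_\N(\faultyag) \rimp \ominus\dist_\N(\faultyag)$ at $\Time = m > 0$: if the faulty set is common knowledge among $\N$ at time $m$, then in particular each nonfaulty agent $i$ knows $j\notin\N$ for each of the $t$ faulty $j$; by the characterization of how agents learn of omissions in the FIP model (Lemma~\ref{lem:seeomit}), agent $i$ knows $j\notin\N$ at time $m$ only because it ``sees'' an omission by $j$, and that evidence must have been visible to some nonfaulty agent already at time $m-1$ (peel off the last edge in the causal chain that delivered the evidence to $i$; the penultimate agent is nonfaulty or we can reroute since $n - t \ge 2$). This gives $\dist_\N(\faultyag)$ at $m-1$. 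The subtlety here is exactly the one flagged in the excerpt's comments: this direction fails when $t = n-1$, so the statement implicitly needs $n - t \ge 2$, and I would invoke that hypothesis when rerouting evidence through a second nonfaulty agent.

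\emph{Parts (b) and (c).} These are easier and structurally parallel. For the first conjunct of (b): $\nodecided_\N(v)$ at time $m$ means no nonfaulty $j$ has $\decided_j = v$; since $\decided$ changes only by the agent's own decision action and the context is synchronous, each nonfaulty $j$ at time $m-1$ knew it would not have $\decided_j = v$ at time $m$ (its local state and protocol determine its next action), giving $\bigwedge_{j\in\N} \ominus(K_j \Circ \nodecided_j(v))$. For the biconditional in (b) and for (c): the $\Leftarrow$ direction is trivial since $C_\N$ of a conjunction implies $C_\N$ of each conjunct. For $\Rightarrow$, combine $C_\N(\faultyag)$ (so the nonfaulty set $G$ is commonly known) with the fact that $\nodecided_\N(v)$, respectively $\exists v$, was distributed knowledge one step earlier among $\N$ --- for $\nodecided_\N(v)$ via the first conjunct of (b) plus one round of broadcast, and for $\exists v$ directly from the hypothesis $\ominus(\bigvee_{j\in\N} K_j(\exists v))$. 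Then apply the distributed-to-common lemma with $\phi$ the conjunction ``$\N = G$ and [$\nodecided_\N(v)$ / $\exists v$]'' to get $C_\N(\faultyag \land \nodecided_\N(v))$, resp.\ $C_\N(\faultyag \land \exists v)$, at time $m$. Note $\nodecided_\N(v)$ and $\exists v$ are both stable (once true they stay true: no nonfaulty agent un-decides, and an initial preference is fixed), which is what makes the ``$\nxt$'' step in the lemma go through.

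\emph{Main obstacle.} The hard part is Part (a), specifically the direction $C_\N(\faultyag) \rimp \ominus\dist_\N(\faultyag)$, because it is false at $t = n-1$ and requires a careful run-surgery argument (of the kind sketched for Lemma~\ref{lem:CprevD} in the excerpt): one must show that if $t$ failures are commonly known at time $m$ among the nonfaulty agents, then the evidence for each faulty agent's failure was already held by \emph{some} nonfaulty agent at time $m-1$, which needs the $n - t \ge 2$ slack to reroute omission-evidence through a second nonfaulty witness and to build the indistinguishable runs that would otherwise refute common knowledge. Everything else reduces to bookkeeping with the full-information broadcast structure and the synchrony of the context.
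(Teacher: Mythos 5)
Your plan for parts (a) and (b), and for the left-to-right half of (c), is essentially the paper's own argument in different packaging: your distributed-to-common-knowledge engine is exactly what the paper obtains directly from the induction rule for common knowledge (if $\I \models \phi \rimp E_\N(\psi \land \phi)$ then $\I \models \phi \rimp C_\N\psi$) together with the observation that every nonfaulty agent's time-$(m-1)$ state reaches every nonfaulty agent in round $m$; and for the hard direction of (a) the paper performs precisely the run surgery you gesture at --- let $A'$ be the set of agents that some nonfaulty agent knows at time $m-1$ to be faulty, note $|A'| < t$, build a run indistinguishable at time $m-1$ to all nonfaulty agents in which the faulty agents are only those in $A'$, and use a second nonfaulty agent (this is where $n-t \ge 2$ enters) to refute $C_\N(\faultyag)$ at time $m$. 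So far, no substantive divergence.

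The genuine gap is in part (c). You dismiss the right-to-left direction, $C_\N(\faultyag \land \exists v) \rimp \bigl(C_\N(\faultyag) \land \ominus(\bigvee_{j\in\N} K_j(\exists v))\bigr)$, as ``trivial since $C_\N$ of a conjunction implies $C_\N$ of each conjunct.'' That only yields $C_\N(\faultyag)$ and $C_\N(\exists v)$ at time $m$; it says nothing about the second conjunct, which asserts that some \emph{nonfaulty} agent already knew $\exists v$ at time $m-1$. (Contrast with (b), where the analogous conjunct $\nodecided_\N(v)$ refers to the current time and really is immediate.) A priori, all nonfaulty agents might first learn $\exists v$ in round $m$ from messages of faulty agents, so ruling this out requires the same kind of construction as the hard direction of (a): assuming no nonfaulty agent knows $\exists v$ at $m-1$, one takes a run $r'$ in which every initial preference is $1-v$, $\N(r') = \N(r)$, and the failure pattern is unchanged except that in round $m$ every agent hears only from the nonfaulty agents; then every nonfaulty agent has the same time-$(m-1)$ state in $r$ and $r'$, and for a nonfaulty $i$ and a second nonfaulty $j \ne i$ (again using $n-t \ge 2$), agent $i$ cannot exclude that $j$ considers $(r',m)$ possible, so $\exists v$ is not common knowledge among $\N$ at $(r,m)$ --- contradicting the hypothesis. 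You correctly identified this obstacle for (a), but missed that it recurs in essentially identical form in (c); as written, your proposal establishes only one direction of (c)'s biconditional.
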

  \begin{proof}
    Let $P$ be an implementation of $\kbp^1$ and let
    $\I = \I_{\gamma_{\fip,n,t},P}$.

For part (a), first suppose by way of contradiction that $\I,(r,m)
\models \neg \ominus  
    \dist_\N(\faultyag) \land C_\N(\faultyag)$. 
    Let $A' = \{i: \exists j \in \N(r)(\I, (r,m-1) \models K_j (i \notin \N))\}$.  By
  assumption,  $|A'| < t$.  There must exist some set $A$ with $|A| =
  t$ such that $\I, (r,m) \models \ck{\N}(\bigwedge_{ i \in   A} (i \notin \N))$.
  Moreover, $A'$ is a strict subset of $A$, since each nonfaulty agent
  in $r$ will learn 
  in round $m$ (if they did not already know it) that each agent in
  $A'$ is faulty (by getting a message from a nonfaulty agent from
  which it can infer this).
  Let $r'$
    be a run where $\N(r') = A'$, all
  agents have the same 
  initial state in $r$ and $r'$, and for all agents in $A'$, 
the same messages are delivered in $r$ and $r'$.
  It is
  easy to check that for all agents in $\N(r)$, we must have $(r,m-1)
  \sim_j (r',m-1)$.  (Formally, we show by induction on
 $k$ that if $j \in \N(r)$ and $k < m' \le m-1$, 
  then $(j',m'-k) \lamport{r} (j,m')$ iff
      $(j',m'-k) \lamport{r'} (j,m')$.
  That is, all agents in $\N(r)$
  consider possible a  run, namely $r'$, where the only nonfaulty
  agents are those in $A'$.
Now even though $i$ may learn about other other nonfaulty agents in
round $m$ of $r$, for all $j \in \N(r)$ such that $j \ne i$, $i$ must
consider it possible that at $(r,m)$, $j$ considers $(r',m)$ possible,
because $j$ hears from all agents other than those in $A'$ in round
$m$.  Thus, $\faultyag$ is not commnon knowledge among the nonfaulty
agents at $(r,m)$.  (Note that here we are using the fact that $n - t
\ge 2$, so that there is an agent $j \in \N(r)$ such that $j \ne i$.)

Conversely, suppose that $\I,(r,m-1) \models \dist_\N(\faultyag)$.
Thus, there exists some set $A$ with $|A| = t$ such that
$\I,(r,m-1) \models \dist_\N(\faultyag_A)$, where $\dist(\faultyag_A)$ is the
formula    $\forall i \in A
\exists j \in \N(K_{j}(i \not\in\N))).$  Moreover, for all runs $r'$,
if $\I,(r',m-1) \models \dist_\N(\faultyag_A)$, then $\N(r) = \Agents - A$.

It is well known (see \cite{FHMV}) that, for all formulas $\phi$ and
$\psi$, if $\I \models \phi \rimp E_\N(\psi \land \phi)$, then
$\I\models \phi \rimp C_\N\psi$.  Thus, 
it suffices to show 
    $\I \models \ominus \dist_\N(\faultyag_A) \Rightarrow E_\N (
\faultyag \wedge  
\ominus \dist_\N(\faultyag_A))$.

For all points $(r',m')$, if $\I,(r',m') 
\models \ominus \dist_\N(\faultyag_A)$, then at $(r',m)$, all the
nonfaulty agents know that the agents in $A$ are faulty and that these
are the only faulty agents.  Since
$|A|=t$,
we have that $\I,(r',m') \models E_\N (\faultyag)$.  Moreover, since all
the nonfaulty agents hear from all the other nonfaulty agents in round
$m$ of $r'$, they all know $\ominus \dist_\N(\faultyag_A)$.  This
completes the proof of part (a).

\commentout{
We next prove some properties 
that follow 
when the identities of faulty
agents become common knowledge among the nonfaulty.
  
\begin{lemma} \label{lemma:cfaulty-nd}
Suppose that $\I$ is an FIP sending omissions interpreted system for a deterministic protocol and $t \leq n-2$.
Then $\I \models ( C_{\N}(\faultyag) \land (\nodecided_\N(v)))
\Rightarrow   C_\N (\nodecided_\N(v))$, for $v \in \{0,1\}$.
  \end{lemma}
  \begin{proof}
}
For part (b), the first part is immediate: if a nonfaulty agent $i$
does not decide on a value $v$ by round $m$, then at time $m-1$, $i$
knows this will be the case. The if direction of the second part is
immediate from the fact that, since 
$\N \ne \emptyset$, $\I \models C_\N \phi \rimp \phi$ for all formulas
$\phi$ \cite{FHMV}.  For the only-if direction, suppose that 
$\I,(r,m) \models C_{\N}(\faultyag) \land \nodecided_\N(v)$.  For
each agent $i \in \N(r)$, as we obseved, $i$ knows at time $m-1$ in
$r$ that it will 
not decide $v$ in round $m$.  Since we are using a full-information
protocol, all the nonfaulty agents will know at time $m$ in $r$ that
$\nodecided_i(v)$ holds.  Thus, $\I,(r,m) \models E_\N(
\nodecided_\N(v))$.  It is a standard property of $C_\N$ that
$\I \models C_\N \phi \rimp E_{\N}C_\N \phi$ \cite{FHMV}.  Thus,
$\I,(r,m) \models 
E_\N(\nodecided_\N(v) \land C_\N \faultyag )$.  We have just shown that
$\I \models C_{\N}(\faultyag) \land \nodecided_\N(v)) \rimp 
E_\N C_{\N}(\faultyag) \land \nodecided_\N(v))$.  It follows that
$\I \models ( C_{\N}(\faultyag) \land (\nodecided_\N(v))) \Rightarrow
C_\N (\nodecided_\N(v))$, as desired.

\commentout{
Consequently, $\I \models \nodecided_\N(v) \rimp \prev \dk{\N}(\nxt \nodecided_\N(v))$. 
Suppose $\I,(r,m) \models \ck{\N}(\N = G)$. Then for all $j \not \in G$ we have 
$\I,(r,m) \models \ck{\N}(j \not \in \N)$, and by Lemma~\ref{lem:CprevD}
we obtain $\I,(r,m-1) \models \dk{\N}(j \not \in \N)$. 
We also have $\I,(r,m) \models K_i(\N = G)$ for $i \in \N(r)$, 
so by Lemma~\ref{lem:count-faulty}, we must have that $kf(r,i,m)= t$, so that the number of  agents $j$ for which 
$\I,(r,m-1) \models \dk{\N}(j \not \in \N)$ is also $t$. 
It follows that in fact $\I,(r,m-1) \models \dk{\N}(\N = G)$. 
With the conclusion above, this means that we have $\I, (r, m-1) \models \dk{\N}(\N = G \land \nxt \nodecided_\N(v) )$. 
It follows that $\I, (r, m) \models \ck{\N}(\N = G \land \nodecided_\N(v) )$ by Lemma~\ref{lem:DnextC}.
 \end{proof}

  \begin{lemma} \label{lemma:cfaulty-ev}
Suppose that $\I$ is an FIP sending omissions interpreted system for a deterministic protocol with $t \leq n-2$.
Then $\I 
\models (C_{\N}(\faultyag) 
    \land 
    \ominus \dk{\N}(\exists v))
    \Leftrightarrow 
    C_{\N}(\faultyag \land \exists v)$,
    where $v \in \{0,1\}$.
  \end{lemma}
  \begin{proof}
}
For part (c), the proof of the only-if direction is similar in spirit
to that of part (b); we leave details to the reader.  For the if
direction, suppose that 
$\I,(r,m) \models C_{\N}(\faultyag)  \land \exists v$ and, by way of
contradiction, $\I, (r,m) \not\models 
\lor_{j \in \N} (\ominus K_j(\exists v))$.
Let $r'$ be a run where all the agents have initial state $1-v$,
$\N(r') = \N(r)$, and the failure pattern is the same in $r$ and $r'$,
except that in round $m$ of $r'$, all agents hear only from the
nonfaulty agents.
We claim that, for all $i \in \N(r)$, we have $(r,m-1) \sim_i
(r',m-1)$.  For clearly, 
 if $j \in \N(r)$ and $k < m' \le m-1$, 
 then $(j',m'-k) \lamport{r} (j,m')$ iff
    $(j',m'-k) \lamport{r'} (j,m')$.  Since $(j,m')$ does not hear
  from any agent with initial value $v$ in $r$, it follows that this
  must also be the case in $r'$.  Now a straightforward induction on
 $m''$ shows that if $j \in \N(r)$, $m'' \le m' < m-1$, and
  $(j',m'') \lamport{r} (j,m')$, then $(r,m'') \sim_{j'} (r',m'')$.  As in
  part (a), even though $i$ may learn $\exists v$ in
round $m$ of $r$, for all $j \in \N(r)$ such that $j \ne i$, $i$ must
consider it possible that at $(r,m)$, $j$ considers $(r',m)$ possible,
so does not learn $\exists v$.
Thus, $\exists v$ is not commnon knowledge among the nonfaulty agents
at $(r,m)$.

\commentout{
Then whichever nonfaulty
agent knew $\exists v$ at time $m-1$  
we get $\I,(r,m-1) \models  D_{\N}(\faultyag) \land 
\dk{\N}(\exists v)$
hence 
(using $\I \models \exists v \dimp \nxt \exists v$) that 
$\I,(r,m-1) \models  D_{\N}(\faultyag  \land \nxt \exists v))$, 
By Lemma~\ref{lem:DnextC} this gives $\I,(r,m) \models C_{\N}(\faultyag \land \exists v)$. 

    Conversely, suppose $\I,(r,m) \models \neg C_{\N}(\faultyag) \lor 
    \neg \ominus \dk{\N}(\exists v)$.
    If $\I,(r,m) \models \neg C_{\N}(\faultyag)$, we immediately get 
    $\I,(r,m) \models \neg C_\N(\faultyag \land \exists v)$.
    If $\I,(r,m) \models C_{\N}(\faultyag)$, we must have 
    $\I,(r,m-1) \models 
    \neg \dk{\N}(\exists v)$. 
  In particular, we have $\I,(r,m-1) \models \neg K_k (\exists v)$ for all $k \in \N(r)$. 
  This means that for all $k\in \N(r)$ we have $(\ell,0) \lamport{r} (k,m-1)$ for no agent $\ell$ with $\I,(r,m) \models \init_\ell = v$.    
   Since $t\leq n-2$, there exist distinct agents $i$ and $j$ in $\N(r)$. 
   Define $r'$ to be the run that is identical to $r$ except that in round $m$, agent $j$ receives messages from the agents in $\N(r)$ only. Note $N(r') = \N(r)$, 
   so $j \in \N(r')$. Also $(r,m) \sim_i (r',m)$. 
   Let $r''$ be the run that is identical to $r'$, except that we change initial values so that $\I,(r'',0) \models \neg \exists v$. 
   By construction of $r'$, we have that $(r,m-1) \sim_k (r',m-1)$ for all $k\in \N(r)$, and hence 
   $(\ell,0) \lamport{r'} (k,m-1)$ for no agent $\ell$ with $\I,(r,0) \models \init_\ell = v$.    
   It follows that $(\ell,0) \lamport{r'} (j,m)$ for no agent $k$ with $\I,(r,0) \models \init_k = v$. 
   This implies that $(r',m) \sim_j (r'',m)$.    
    Hence $\I,(r,m) \models i \in \N \land \neg K_i \neg (j \in \N \land \neg K_j(\exists v)$ and we get 
    $\I,(r,m) \models \neg C_\N(\exists v)$.
  \end{proof}
  
  \roncomment{CHECK: There are probably still some $t\leq n-2$ still to be worked into the following, from use of lemmas with that condition.}
  
  \begin{lemma} \label{lem:CNtoKCN}
  Let $\I$ be an FIP interpreted system for the sending omissions model and a deterministic protocol. 
  Then for all agents $i$ we have $\I \models \ck{\N}(\faultyag) \rimp 
  K_i(\faultyag \land \ck{\N}(\faultyag))$. 
  \end{lemma}
  
  \begin{proof}
    }
\commentout{

For part (d), suppose that $\I,(r,m) \models \ck{\N}(\faultyag)$.
By part (a), $\I,(r,m) \models \ominus \dist(\faultyag)$.  Thus, there
are $n-t$ nonfaulty agents in $r$ and at time $m-1$, between them they
know who the faulty agents are.  Thus, in round $m$, they tell all
agents who the nonfaulty agents are.  Thus, at time $m$, all agents know
$\faultyag$ and know that $\ominus \dist(\faultyag)$ holds.  By part
(a), all agents know $C_{\N}(\faultyag)$ at time $m$.
}
\end{proof}

  \commentout{
  the nonfaulty agents in $r$
    By Lemma~\ref{lem:count-faulty} it follows that $|\N(r)| = n-t$.   
    Let $r'$ be a run where $(r,m) \sim_i (r',m)$. 
    Let $G = \Agents \setminus \N(r)$. 
    By Lemma~\ref{lem:CprevD}, we get $\I,(r,m-1) \models \dk{\N}(j \not \in \N)$
    for all 
    $j \in G$. It follows that $\I,(r,m-1) \models \dk{\N}(\N = G)$. 
    Since $i$ receives a message from all nonfaulty agents in round $m$, 
    all nonfaulty agents must have the same local state in both $(r,m-1)$ and 
    $(r',m-1)$. Then 
    we also have 
    $\I,(r',m-1) \models \N = G \land \dk{\N}(\faultyag)$. 
    Therefore 
    using Lemma~\ref{lem:DnextC},  
    we get
    $\I,(r',m) \models C_\N(\faultyag)$
    and hence 
    $\I,(r,m) \models K_i(\faultyag \land C_\N(\faultyag))$.
  \end{proof} 
  }
  
  \begin{lemma} \label{lemma:cfaultyimpliesall}
    For all implementations $P$ of the knowledge-based program $\kbp^1$
    with respect to $\gamma_{\fip,n,t}$, 
    if $\I_{\gamma_{\fip,n,t},P},(r,m)
    \models C_{\N}(\faultyag \land \nodecided_{\N}(1-v) \land \exists v)$,
    then for all $i \in \Agents$,
    $\I_{\gamma_{\fip,n,t},P},(r,m)
    \models K_i(C_{\N}(\faultyag \land \nodecided_{\N}(1-v) \land
    \exists v))$.
    Moreover,  for all agents $i$,   $\I_{\gamma_{\fip,n,t},P},(r,m)
    \models     C_{\N}(\faultyag)$ $\rimp K_i(C_{\N}(\faultyag)$.
  \end{lemma}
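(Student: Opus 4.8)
The plan is to reduce the statement to two pieces: the ``Moreover'' claim $\ck{\N}(\faultyag) \rimp K_i(\ck{\N}(\faultyag))$ for every agent $i$, and a transfer argument for the extra conjuncts $\nodecided_\N(1-v)$ and $\exists v$. The engine for both is Proposition~\ref{lemma:dfaulty}: part~(a) says that (when $\Time>0$) $\ck{\N}(\faultyag)$ holds at $(r,m)$ iff $\dist_\N(\faultyag)$ holds at $(r,m-1)$, and since $K_j(a\notin\N)$ forces $a$ to be genuinely faulty while at most $t$ agents are faulty, the witnessing set $A$ of size $t$ in $\dist_\N(\faultyag)$ must in fact equal $\faulty(r)$. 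The key observation about full-information protocols is then: in round $m$ every nonfaulty agent broadcasts its complete time-$(m-1)$ state, and every agent (faulty or not) receives these messages, since under sending omissions only faulty \emph{senders} can fail to transmit. Hence at $(r,m)$, agent $i$'s local state contains $r_j(m-1)$ for every $j\in\N(r)$.

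For the ``Moreover'' claim, first I would note that $\ck{\N}(\faultyag)$ at $(r,m)$ forces $m>0$ (at time $0$ no agent can know of $t\ge1$ faulty agents; for $t=0$ everything is vacuous). Then, given any run $r'$ with $r'_i(m)=r_i(m)$ --- the time must be $m$ by synchrony --- $i$ receives in round $m$ of $r'$ the same non-$\bot$ messages from the same senders, and because in a FIP the non-$\bot$ round-$m$ message of an agent $j$ \emph{is} $j$'s state at time $m-1$, we obtain $r'_j(m-1)=r_j(m-1)$ for every $j\in\N(r)$. Fixing, for each $a\in A$, a witness $j(a)\in\N(r)$ with $\I,(r,m-1)\models K_{j(a)}(a\notin\N)$, locality of $K_{j(a)}$ and the unchanged state give $\I,(r',m-1)\models K_{j(a)}(a\notin\N)$, so $a\in\faulty(r')$. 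Thus $A\subseteq\faulty(r')$, and $|A|=t$ forces $\faulty(r')=A$, hence $\N(r')=\N(r)$ and each $j(a)$ is nonfaulty in $r'$. Therefore $\I,(r',m-1)\models\dist_\N(\faultyag)$, and Proposition~\ref{lemma:dfaulty}(a) yields $\I,(r',m)\models\ck{\N}(\faultyag)$; since $r'$ was arbitrary, $\I,(r,m)\models K_i(\ck{\N}(\faultyag))$.

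For the first part, I would first establish, using Proposition~\ref{lemma:dfaulty}(b),(c) together with the elementary facts $\ck{\N}(\phi\land\psi)\leftrightarrow\ck{\N}\phi\land\ck{\N}\psi$ and $\ck{\N}\phi\rimp\phi$ (and that the witnessing faulty set is always $\faulty(r)$), that $\ck{\N}(\faultyag\land\nodecided_\N(1-v)\land\exists v)$ is equivalent, when $\Time>0$, to $\ck{\N}(\faultyag)\land\nodecided_\N(1-v)\land\ominus(\bigvee_{j\in\N}K_j(\exists v))$. Now assume the latter at $(r,m)$ and take $r'$ with $r'_i(m)=r_i(m)$. As above we get $\ck{\N}(\faultyag)$ at $(r',m)$ with $\N(r')=\N(r)$ and $r'_j(m-1)=r_j(m-1)$ for all $j\in\N(r)$. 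Since in this FIP context a nonfaulty agent's state at $m-1$ determines its whole prefix and hence all its actions through round $m$, each $j\in\N(r)$ makes the same decisions in $r'$ as in $r$, so $\decided_j$ has the same value at time $m$; thus $\nodecided_\N(1-v)$ transfers to $(r',m)$. Likewise, if $j_0\in\N(r)$ has $K_{j_0}(\exists v)$ at $(r,m-1)$, locality and the unchanged state give it at $(r',m-1)$, so $\ominus(\bigvee_{j\in\N}K_j(\exists v))$ holds at $(r',m)$. Re-assembling via the equivalence applied to $r'$ gives $\I,(r',m)\models\ck{\N}(\faultyag\land\nodecided_\N(1-v)\land\exists v)$, hence $K_i$ of it.

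The main obstacle will be getting the full-information message bookkeeping exactly right: that a non-$\bot$ message a possibly-faulty agent is seen to send still coincides with the protocol-prescribed complete state, that nonfaulty senders' messages are never lost (so $i$ genuinely holds all of $\N(r)$'s time-$(m-1)$ states), and that agreement on a nonfaulty agent's state at $m-1$ forces agreement on its entire past, and in particular on its decisions. I would also flag that these arguments route through Proposition~\ref{lemma:dfaulty} and so inherit its standing hypothesis $n-t\ge 2$.
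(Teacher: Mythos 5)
Your proposal is correct and takes essentially the same route as the paper: both arguments use Proposition~\ref{lemma:dfaulty} to pass from the common-knowledge formula at time $m$ to facts about the time-$(m-1)$ local states of the nonfaulty agents, observe that under the FIP these states are broadcast and delivered to every agent in round $m$ (so they are fixed by $i$'s time-$m$ local state and transfer to any indistinguishable point), and then apply Proposition~\ref{lemma:dfaulty} again to recover $C_{\N}(\faultyag \land \nodecided_{\N}(1-v) \land \exists v)$ there. Your version merely spells out the bookkeeping the paper leaves implicit (that the witnessing $t$-set must equal the actual faulty set, forcing $\N(r')=\N(r)$, and that decisions of nonfaulty agents are determined by their time-$(m-1)$ states), which is fine.
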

  \begin{proof}
        Let $\I = \I_{\gamma_{\fip,n,t},P}$ and suppose that $\I,(r,m)
    \models C_{\N}(\faultyag \land \nodecided_{\N}(1-v) \land \exists v)$
    for some $v \in \{0,1\}$ and $(r,m) \sim_i (r',m)$. 
\commentout{
    By Lemma~\ref{lem:CNtoKCN}, we have 
$\I,(r,m) \models K_i(\faultyag \land C_\N(\faultyag))$. It follows
that $\N(r')= \N(r)$. 
We then observe that in particular we have $\I,(r,m) \models 
\nodecided_\N(1-v)$. As $i$ receives a message from all nonfaulty agents in 
    every round 
    of $r$ and $\N(r') = \N(r)$, 
    we must also have $\I,(r',m) \models 
    \nodecided_\N(1-v)$. By Lemma~\ref{lemma:cfaulty-nd}, 
    $\I,(r',m) \models C_\N(\nodecided_\N(1-v))$.

    Lastly, we observe that $\I,(r,m) \models C_\N(\faultyag \land \exists v)$.
    Using Lemma~\ref{lemma:cfaulty-ev}, we get $\I,(r,m) \models
    \ominus \dk{\N}(\exists v)$.
    Since $i$ hears from all nonfaulty agents, we also have 
    $\I,(r',m) \models 
    \ominus \dk{\N}(\exists v)$.
    We previously showed that $\I,(r',m) \models C_\N(\faultyag)$.
    Therefore, by Lemma~\ref{lemma:cfaulty-ev}, 
    we can conclude that $\I,(r',m) \models C_\N(\exists v)$ 
    and $\I,(r,m) \models K_i(C_\N(\exists v))$.
}
We clearly must have $m > 0$, since we cannot have common knowledge of
the faulty agents at time 0.
By Proposition~\ref{lemma:dfaulty}, it follows that
$\I, (r,m-1) \models \dist_\N(\faultyag) \land
(\land_{j \in \N} (K_j \Circ \nodecided_j(v))
\land (\lor_{j \in \N} K_j (\exists v))$.  Since the nonfaulty agents
send messages to all agents in round $m$, all agents (not just the
nonfaulty agents) will know at time $m$ that this was true at time
$m-1$.  It now follows from Proposition~\ref{lemma:dfaulty} that, for
all agents $i$, we have
    $\I_{\gamma_{\fip,n,t},P},(r,m)
    \models K_i(C_{\N}(\faultyag \land \nodecided_{\N}(1-v) \land \exists v))$.
  \end{proof}

  We next show that once the nonfaulty agents know about $t$ faulty
  agents, every agent decides by the end of the 
  following round.
  \begin{lemma} \label{lemma:cfaulty}
    For all implementations $P$ of the knowledge-based program $\kbp^1$
    with respect to $\gamma_{\fip,n,t}$, 
    $\I_{\gamma_{\fip,n,t},P} \models C_{\N}(\faultyag) \Rightarrow  
    \bigwedge_{i \in \Agents}\Circ \neg (\decided_{i} = \bot)$.
  \end{lemma}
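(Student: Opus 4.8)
\textbf{Proof plan for Lemma~\ref{lemma:cfaulty}.}

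The plan is to show that if $C_\N(\faultyag)$ holds at a point $(r,m)$, then every agent --- faulty or nonfaulty --- has a value of $\decided_i$ other than $\bot$ at time $m+1$. First I would observe that $C_\N(\faultyag)$ cannot hold at time $0$, so we may assume $m > 0$. Using the ``if'' direction of part (b) and part (c) of Proposition~\ref{lemma:dfaulty} (together with $\I \models C_\N\phi \rimp \phi$), the formula $C_\N(\faultyag)$ at $(r,m)$ implies, for each $v \in \{0,1\}$, that either $\I,(r,m) \models \neg \nodecided_\N(v)$ (some nonfaulty agent has already decided $v$) or $\I,(r,m) \models C_\N(\faultyag \land \nodecided_\N(v))$; and likewise either $\I,(r,m) \models \neg \exists v$ or $\I,(r,m) \models C_\N(\faultyag \land \exists v)$ --- where for the latter I would invoke the ``only-if'' direction of part (c) applied at time $m-1$, noting that $\exists v$ is a stable fact that, once true, is known at time $m-1$ in the FIP by some agent. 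Since $\init_j \in \{0,1\}$ for every agent $j$, at least one of $\exists 0$, $\exists 1$ holds.

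The core of the argument is a case analysis. If some nonfaulty agent has already decided some value $v$ at $(r,m)$, then by Proposition~\ref{prop:kbp0correct}-style reasoning (Validity for $\kbp^1$, Proposition~\ref{p:kbp1correct}) we get $\exists v$, hence $\I,(r,m) \models C_\N(\faultyag \land \exists v)$; moreover $\I,(r,m) \models C_\N(\nodecided_\N(1-v))$ unless some nonfaulty agent has also decided $1-v$, which is impossible by Agreement. So $\I,(r,m) \models C_\N(\faultyag \land \nodecided_\N(1-v) \land \exists v)$, and by Lemma~\ref{lemma:cfaultyimpliesall} every agent $i$ knows this at $(r,m)$, so every still-undecided agent triggers the appropriate common-knowledge clause of $\kbp^1$ (the second line if $v = 0$, the third line if $v = 1$) and decides in round $m+1$; already-decided agents have $\decided_i \ne \bot$ trivially. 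If instead no nonfaulty agent has decided at $(r,m)$, then $\I,(r,m) \models C_\N(\faultyag \land \nodecided_\N(0) \land \nodecided_\N(1))$; combining with whichever of $\exists 0$, $\exists 1$ holds (via part (c)), we get $\I,(r,m) \models C_\N(\faultyag \land \nodecided_\N(1-v) \land \exists v)$ for some $v$, and again Lemma~\ref{lemma:cfaultyimpliesall} gives every agent $i$ the knowledge $K_i(C_\N(\faultyag \land \nodecided_\N(1-v) \land \exists v))$ at $(r,m)$, so each undecided agent decides in round $m+1$ via the corresponding clause. In all cases, $\I,(r,m) \models \bigwedge_{i \in \Agents} \Circ \neg(\decided_i = \bot)$.

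The main obstacle I anticipate is the bookkeeping needed to turn $C_\N(\faultyag)$ at $(r,m)$ into the precise conjunction $C_\N(\faultyag \land \nodecided_\N(1-v) \land \exists v)$ that Lemma~\ref{lemma:cfaultyimpliesall} consumes --- one must be careful that the two common-knowledge ``upgrades'' (for $\nodecided_\N$ and for $\exists v$) can be performed simultaneously, i.e.\ that $C_\N$ distributes over the conjunction of the stably-true facts. This is where parts (b) and (c) of Proposition~\ref{lemma:dfaulty} do the real work: they reduce each upgrade to a statement about distributed knowledge at time $m-1$, and since the FIP delivers all nonfaulty messages, the relevant facts are all pooled and re-broadcast in round $m$. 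One also has to handle the edge case where a nonfaulty agent decided $1-v$ rather than $v$ --- ruled out by Agreement --- and confirm that faulty agents that are not already decided also satisfy one of the common-knowledge clauses, which follows because Lemma~\ref{lemma:cfaultyimpliesall} quantifies over \emph{all} agents $i \in \Agents$, not just the nonfaulty ones.
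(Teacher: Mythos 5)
Your Case B (no nonfaulty agent has decided by time $m$) is essentially the paper's own argument: choose $v$ to be the initial preference of some nonfaulty agent, so that the premise $\ominus(\bigvee_{j\in\N}K_j(\exists v))$ of Proposition~\ref{lemma:dfaulty}(c) holds, combine with part (b), and finish with Lemma~\ref{lemma:cfaultyimpliesall}. One caveat: your blanket dichotomy ``either $\neg\exists v$ or $C_\N(\faultyag\land\exists v)$'' is not justified as stated, since part (c) requires a \emph{nonfaulty} agent to know $\exists v$ at time $m-1$; the mere truth of $\exists v$ (witnessed, say, only by a silent faulty agent's initial preference) provides no such knower. In Case B this is harmless because you can take $v$ to be a nonfaulty agent's own initial value, which is what the paper does.

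The genuine problem is Case A, where some nonfaulty agent has already decided $v$ by time $m$. There you cite Agreement and Validity for $\kbp^1$ (Proposition~\ref{p:kbp1correct}) to obtain $\nodecided_\N(1-v)$ and a nonfaulty knower of $\exists v$, and then lift the common-knowledge condition to time $m$. But in the paper's development Proposition~\ref{p:kbp1correct} is proved \emph{after} this lemma, and its proofs of Agreement and Validity go through Lemma~\ref{lemma:chain}, whose proof explicitly appeals to Lemma~\ref{lemma:cfaulty} (to rule out that a 0-decider used a common-knowledge clause). So, as written, your argument is circular. The Validity-type facts you need (an agent that decides 1 has $\init=1$; a 0-decision implies $\exists 0$) can be re-derived by a direct induction on decision rounds, but Agreement cannot simply be cited here. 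The paper sidesteps all of this in your Case A by a message-level argument at the round $m'\le m$ in which the nonfaulty agent $j$ decides: since $j$ is nonfaulty and the protocol is an FIP, if $j$ decides 0 then every agent learns $\jdecided_j=0$ and decides by round $m'+1$ via line 4; if $j$ decides 1 via line 5, every agent learns at time $m'$ that no agent is deciding 0 in round $m'$, and hence decides in round $m'+1$ via line 5 or a common-knowledge clause; if $j$ decides 1 via the common-knowledge clause, Lemma~\ref{lemma:cfaultyimpliesall} already makes everyone decide. Replacing your Case A with this argument---or independently re-proving the instances of Agreement you rely on---closes the gap; the remainder of your proposal is sound.
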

  \begin{proof}
    Let $\I = \I_{\gamma_{\fip,n,t},P}$ and
    suppose that $i \in \Agents$, $j \in \N$,  and 
    $\I,(r,m) \models
    C_{\N}(\faultyag)$.
    If $j$ decides 0 in round $m' \le m$, then $i$ must decide by
    round $m'+1$, 
    because it will hear from $j$ in round $m'$ that it is deciding,
so that $\I, (r,m') \models K_i(\jdecided_j=0)$.
(Note that $i$ may decide earlier or due to one of the common
knowledge conditions, but it will surely decide by round $m'+1$.)
If $j$ decides 1 at some round $m' \le m$, then either
\begin{itemize} 
\item $\I, (r,m'-1) \models K_j(\ck{\N}( \faultyag \land
  \nodecided_\N(0) \land \exists 1)$, so 
  $\I, (r,m'-1) \models K_i(\ck{\N}( \faultyag \land \nodecided_\N(0)
  \land \exists 1)$ by Lemma~\ref{lemma:cfaultyimpliesall}, and $i$
  must also decide by round $m'$;
\item $\I, (r,m'-1) \models K_j(\bigwedge_{k \in \Agents}
  \neg (\deciding_k= 0))$, so, since $P$ is a full-information protocol,
  $\I, (r,m') \models \prev K_j(\bigwedge_{k \in \Agents} \neg
  (\deciding_k= 0))$.
  Since $P$ is a FIP, $i$ hears from $j$ in round $m'$ that no agent
decides 0 in round $m'$.  Thus, $i$ knows that no agent will decide 0
in round $m'+1$ due to line 4 of $P$.  
\commentout{
It follows that  
  $\I, (r,m') \models K_i (\bigwedge_k (\neg(init_k = 0) \alnd \neg(
  \jdecided_k=0))$, from 
  which it is immediate that
  $$\I, (r,m') \models K_i(\bigwedge_{i' \in \Agents}
  (\bigwedge_{k\in \Agents}  
    \neg K_{i'} ( \jdecided_k=0))).$$
  knows that no other agent $i'$ knows $\justdecided_k=0$ 
     Moreover, no agent $k$ decides 0 in round $m'+1$ because
      $\init_k = 0$.
Thus, $i$ knows at time $m'$ that no agent is
decides 0.}
  If some agent decides 0 due to the common
      knowledge condition in round $m'+1$, then, by the argument above,
      $i$ also decides 0 in round $m'+1$.  If not,
then line 5 of $P$ applies, and
      $i$ decides 1 in round $m'+1$.
    \end{itemize} 

We have just shown that that $i$ decides by round $m+1$ if some nonfaulty
agent $j$ decides by round $m$.  If no nonfaulty agent decides by
round $m$, then $\I,(r,m)
\models \nodecided_\N(0) \land \nodecided_\N(1)$.  We must have $m >
0$ since $C_\N (\faultyag)$ cannot hold at time 0, and clearly
$\I,(r,m-1) \models (\lor_{k\in \N} \exists 0) \lor (\lor_{k\in \N}
\exists 1)$.  It now follows easily from
Proposition~\ref{lemma:dfaulty} that
    $\I,(r,m)    \models C_{\N}(\faultyag \land \nodecided_{\N}(1)
\land \exists 0)$ or 
    $\I,(r,m)    \models C_{\N}(\faultyag \land \nodecided_{\N}(0)
\land \exists 1)$; that is, one of the common knowledge conditions
must hold.  Hence, by Lemma~\ref{lemma:cfaultyimpliesall}, agent $i$
must decide by round $m+1$.
  \end{proof}

  \commentout{
    In the remaining possibilities, 
    $\I,(r,m) \models \nodecided_\N(0,1)$.
    By Lemma~\ref{lemma:cfaulty-nd}, we have $\I,(r,m) \models C_\N 
    (\nodecided_\N(0,1))$. ByLemma~\ref{lemma:cfaulty-ev}, if some nonfaulty 
    agent knows about a 0 at time $m-1$, we get $\I,(r,m) \models C_\N
    (\exists 0)$ and therefore $\I,(r,m) 
    \models C_{\N}(\faultyag \land \nodecided_{\N}(1) \land \exists
  0)$. If no nonfaulty agent knows about a 0 at time $m-1$, then all
  nonfaulty agents
    must know that there exists a 1 
    at time $m-1$. Therefore, by the second claim, $\I,(r,m) \models 
    C_\N(\exists 1)$ and we get $\I,(r,m) 
    \models C_{\N}(\faultyag \land \nodecided_{\N}(0) \land \exists 1)$.
    Hence, in either case we can apply Lemma~\ref{lemma:cfaultyimpliesall}
    to conclude that $i$ would decide with one of the common knowledge 
    conditions.
  \end{proof}
}
  
\commentout{
The following well-known result \cite{FHMV} will also prove useful in
the sequel:

\begin{lemma}\label{lem:CNprop} The following are equivalent, for all formulas $\phi$:
  \begin{itemize}
    \item $\I,(r,m) \models C_\N \phi$
    \item for all agents $i \in \N(r)$, $\I,(r,m) \models K_i C_\N \phi$ 
    \item for some agent $i \in \N(r)$, $\I,(r,m) \models K_i C_\N
      \phi$.
\end{itemize}
  \end{lemma}      
}

\subsubsection{Characterizing the condition for deciding 0}

  \begin{lemma} \label{lemma:chain}
    For all implementations $P$ of the knowledge-based program $\kbp^1$
       with respect to $\gamma_{\fip,n,t}$,
    if $\I_{\gamma_{\fip,n,t},P},(r,m)
    \models \neg K_{i}(C_{\N}(\faultyag \land \nodecided_{\N}(1) \land \exists 0))
    \land \neg K_{i}(C_{\N}(\faultyag \land \nodecided_{\N}(0) \land \exists 1))$,
        then agent $i$ receives a \zchain 
        in round $m$
    if and only if 
    $\I_{\gamma_{\fip,n,t},P},(r,m) \models \init_{i} = 0 
    \lor \bigvee_{j \in \Agents} K_{i}(\jdecided_{j}=0)$ and $i$ has not 
    decided 
    before round $m+1$.
  \end{lemma}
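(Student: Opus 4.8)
The statement is a biconditional characterizing when agent $i$ receives a 0-chain, under the assumption that neither common-knowledge condition (lines 2 and 3 of $\kbp^1$) holds for $i$. The plan is to prove the two directions separately, using the structure of $\kbp^1$ together with the fact (from Proposition~\ref{p:kbp1correct}) that all implementations of $\kbp^1$ are EBA decision protocols, so Validity and Unique Decision hold.

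\textbf{($\Rightarrow$) From 0-chain to the knowledge test.} Suppose there is a 0-chain $i_0,\ldots,i_m$ ending at $i=i_m$ in run $r$. I would argue by induction on the length, showing that if a 0-chain of length $m$ ends at agent $i$ and $i$ has not decided before round $m+1$, then $\I,(r,m)\models \init_i=0 \lor \bigvee_j K_i(\jdecided_j=0)$. For $m=0$: the definition of a 0-chain gives $\I,(r,0)\models \init_{i_0}=0$, and since $\init_i$ is local, $\I,(r,0)\models K_i(\init_i=0)$. For $m\ge 1$: by clause (c) of the 0-chain definition, $i_m$ knows at $(r,m)$ that $i_{m-1}$ just decided 0, i.e.\ $\I,(r,m)\models K_i(\jdecided_{i_{m-1}}=0)$, which gives the right disjunct directly. (I do not even need the induction hypothesis here, though it is harmless; the hears-from/message content is what delivers $K_i(\jdecided_{i_{m-1}}=0)$ — this uses that the full-information local state records all messages received, so $i$ can infer $i_{m-1}$'s action from the message, as guaranteed by the $\mu_i$ constraint in an EBA context.) This direction does not even use the hypothesis about the common-knowledge tests.

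\textbf{($\Leftarrow$) From the knowledge test to a 0-chain.} This is the direction where the hypothesis $\neg K_i(C_\N(\faultyag\land\nodecided_\N(1)\land\exists 0))\land \neg K_i(C_\N(\faultyag\land\nodecided_\N(0)\land\exists 1))$ is needed, and I expect it to be the main obstacle. Suppose $i$ has not decided before round $m+1$ and $\I,(r,m)\models \init_i=0 \lor \bigvee_j K_i(\jdecided_j=0)$. I would induct on $m$. If $\init_i=0$ then, since $i$ has not decided before round $m+1$ and the first decision rule (line~4) would fire at round~1 given $\init_i=0$ — provided neither common-knowledge test fired first — we must have $m=0$ (if $m\ge 1$, $i$ would have decided by round~1 via line 4 or via one of the common-knowledge lines, contradiction with "not decided before round $m+1$"); then the singleton $i$ is a 0-chain of length $0$. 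Otherwise $\I,(r,m)\models K_i(\jdecided_j=0)$ for some $j$; in particular $\jdecided_j=0$ holds at $(r,m)$, so $j$ first decided 0 in round $m$, i.e.\ $j$ did not decide before round $m$. Now I want to apply the induction hypothesis to $j$ at time $m-1$: I need $\I,(r,m-1)\models \init_j=0 \lor \bigvee_k K_j(\jdecided_k=0)$ and that $j$ satisfied $\neg K_j(C_\N(\cdots))$ for both common-knowledge formulas at $(r,m-1)$. The first follows because $j$ decided 0 in round $m$ via $\kbp^1$: either line~2 (common knowledge) fired — but then, by Lemma~\ref{lemma:cfaultyimpliesall}, $i$ would also know $C_\N(\faultyag\land\nodecided_\N(1)\land\exists 0)$ at $(r,m)$ (since $i$ hears from $j$, or rather, once this is common knowledge among nonfaulty agents it becomes known to everyone the next round), contradicting the hypothesis on $i$ — or line~4 fired, giving exactly $\init_j=0\lor\bigvee_k K_j(\jdecided_k=0)$ at $(r,m-1)$. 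The second (that $j$'s common-knowledge tests failed at $m-1$) follows similarly: if $K_j(C_\N(\cdots))$ held at $(r,m-1)$, then by the reasoning in Lemma~\ref{lemma:cfaultyimpliesall}/Lemma~\ref{lemma:cfaulty} it would propagate so that $i$ knows the corresponding common-knowledge fact at $(r,m)$, contradicting the hypothesis. So by the induction hypothesis there is a 0-chain $i_0,\ldots,i_{m-1}=j$; appending $i_m=i$, and using $K_i(\jdecided_j=0)$ to verify clause (c), and the fact that $i$ is new to the chain (since $i$ did not decide before round $m+1$, it cannot have appeared earlier in a chain whose agents all decide 0 at their respective rounds), yields a 0-chain of length $m$ ending at $i$.

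\textbf{Main obstacle.} The delicate point is the propagation argument: showing that if $j$'s common-knowledge test held at $(r,m-1)$ then $i$'s corresponding test holds at $(r,m)$, so that the failure of $i$'s tests forces the failure of $j$'s. This is where Lemma~\ref{lemma:cfaultyimpliesall} (and the monotonicity of the common-knowledge conditions established in Proposition~\ref{lemma:dfaulty}) does the work: once $C_\N(\faultyag\land\cdots)$ holds, it is known to \emph{every} agent (including $i$) in the same or next round, because the nonfaulty agents broadcast their states. I would want to state this propagation as a small self-contained claim and cite Lemma~\ref{lemma:cfaultyimpliesall} for it. A secondary subtlety is bookkeeping the "has not decided before round $m+1$" hypotheses as they pass down the chain to $j$, and ruling out that $i$ already appears in the chain — both handled by the observation that every agent on a 0-chain decides 0 exactly at its position and $i$ is assumed not to have decided before round $m+1$.
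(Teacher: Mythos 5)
Your proof has the same skeleton as the paper's: the forward direction is read off the 0-chain definition, and the converse is an induction on $m$ that extracts an agent $j$ with $K_i(\jdecided_j=0)$, argues that $j$ decided via line 4 of $\kbp^1$ rather than via a common-knowledge line, applies the induction hypothesis to $j$ at $(r,m-1)$, and appends $i$. The divergence --- and the soft spot --- is exactly the step you flag as the main obstacle. Your mechanism is: if $K_j(C_{\N}(\faultyag\land\nodecided_{\N}(1)\land\exists 0))$ held at $(r,m-1)$, then by Lemma~\ref{lemma:cfaultyimpliesall} the fact ``becomes known to everyone the next round,'' so $K_i(\cdots)$ would hold at $(r,m)$, contradicting the hypothesis. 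But Lemma~\ref{lemma:cfaultyimpliesall} is a same-time statement: common knowledge at $(r,m-1)$ yields $K_{i'}(\cdots)$ at $(r,m-1)$, not at $(r,m)$. To move one round forward you need the common-knowledge condition itself to persist to time $m$, and that is not automatic, because $\nodecided_{\N}(1)$ is time-indexed: you would have to argue that no nonfaulty agent decides 1 in round $m$ (using that every undecided agent fires line 2 of $\kbp^1$ in that round) and then reassemble $C_{\N}(\faultyag\land\nodecided_{\N}(1)\land\exists 0)$ at time $m$, e.g.\ via Proposition~\ref{lemma:dfaulty}. So your ``small self-contained claim'' is true, but it is not a citation of Lemma~\ref{lemma:cfaultyimpliesall}; as written, this step is a gap.

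The paper closes this step with a much shorter argument that you mention only in passing: if either common-knowledge test held at $(r,m-1)$, then $C_{\N}(\faultyag)$ holds there (veridicality), and Lemma~\ref{lemma:cfaulty} then says \emph{every} agent --- in particular $i$ --- has decided by time $m$, contradicting the assumption that $i$ has not decided before round $m+1$. This single observation simultaneously rules out a common-knowledge decision by $j$ and gives $\neg K_j(C_{\N}(\cdots))$ at $(r,m-1)$, as needed for the induction hypothesis, with no persistence argument at all; I recommend making it your primary tool rather than the propagation claim. One small piece of bookkeeping: when you append $i$, state explicitly that $i$ does decide 0 in round $m+1$ (it is undecided at $(r,m)$, both of its common-knowledge tests fail by hypothesis, and line 4 fires), since clause (b) of the 0-chain definition requires this of the final agent; your base case and your distinctness argument are fine.
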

  \begin{proof}
    We proceed by induction on $m$. Let $\I = \I_{\gamma_{\fip,n,t},P}$ and 
    suppose that $\I,(r,m)
    \models \neg K_{i}($ $C_{\N}(\faultyag \land \nodecided_{\N}(1) \land \exists 0))
    \land \neg K_{i}(C_{\N}(\faultyag \land \nodecided_{\N}(0) \land \exists 1))$.
    By Lemma~\ref{lemma:cfaultyimpliesall}, we then have
    $\I,(r,m)
    \models \neg C_{\N}(\faultyag \land \nodecided_{\N}(1) \land \exists 0)
    \land \neg C_{\N}(\faultyag \land \nodecided_{\N}(0) \land \exists 1)$.    

    If $m = 0$, then the only-if direction follows from the fact that 
        agent $i$ receives
    a 
    \zchain in round 0 iff $\I,(r,0) \models
    \init_i = 0$.  
    Obviously, $i$ has not decided at time 0. 
    For the 
    converse, we first observe that $\I,(r,0) \models \neg K_i(\jdecided_j = 0)$
    for all $j \in \Agents$ as at time 0, $\I,(r,0) \models \neg(\jdecided_j=0)$
    for all $j \in \Agents$.
Thus,    $\I_{\gamma_{\fip,n,t},P},(r,0) \models \init_{i} = 0 
    \lor \bigvee_{j \in \Agents} K_{i}(\jdecided_{j}=0)$ iff
$\I_{\gamma_{\fip,n,t},P},(r,0) \models \init_{i} = 0$. 
    So agent $i$ decides 0 in round 1 
    and receives a \zchain in round 0.
    
    If $m > 0$, then the only-if direction follows immediately from the 
definition of a \zchain, as the last agent $i$ decides for the first time
    after hearing from an agent that just decided 0. For the converse,
    suppose
    that $\I,(r,m) \models \init_i = 0 \lor \bigvee_{j \in \Agents} 
    K_i(\jdecided_j=0)$ holds for the first time at time $m$. 
    Since $m>0$ is the first time this formula holds and $i$ did not
    decide before round $m+1$, we must have $\I,(r,m) \models \init_i
    \neq 0$.  
    Thus, $\I,(r,m) \models K_i(\jdecided_j=0)$ for some 
        $j \in \Agents$, so agent $j$ decides 0 in round 
    $m$. 
We must have $\I,(r,m-1) \models \neg C_\N(\faultyag)$ since
otherwise, by Lemma~\ref{lemma:cfaulty},
$i$ would have decided by round $m$.
Thus,    $j$ could not have decided 0 
due to the common knowledge condition.  It follows from $\kbp^1$
that $\I,(r,m-1) \models \init_j = 0 \lor \bigvee_{j' \in \Agents} 
    K_j(\jdecided_{j'}=0)$. By the inductive hypothesis, $j$ receieves a \zchain 
    in round $m-1$.
    Since $\I,(r,m) \models K_i(\jdecided_j=0)$, agent $i$ receives the message sent by $j$ in round $m$.
    Because the common knowledge conditions are not satisfied and $i$ has not already decided,
    agent $i$ also decides 0 in round $m+1$.
    Thus. $i$ receives a \zchain in round $m+1$.
  \end{proof}

 \subsubsection{Characterizing the condition for deciding 1}
  
  To characterize when an agent is unable to decide 1 (excluding the decisions 
  made using one of the common knowledge conditions),
  we need some additional definitions.

\begin{definition}  Let $\mathit{len}_i(r,m)$ be the length of the longest \zchain 
that $i$ knows about at time $m$ in run $r$ (where $\mathit{len}_i(r,m) = 0$ if 
$i$ does not know about any \zchains), let
  $\mathit{last}_{ij}(r,m)$ be the last time $m'$ for which 
$(j,m') \lamport{r} (i,m)$ (where $\mathit{last}_{ij}(r,m) = -1$ if
$(i,m)$ does not hear from $(j,m')$ at all)
and let $\latest_i(r,m)$ to be the last time $m'$ such that, for some
agent $j$, we have $(j,m') \lamport{r} (i,m)$ and  
$\I,(r,m') \models \deciding_j=0$ (where $\latest_i(r,m) = -1$ if
there is no such time).
\end{definition}
  
Intuitively, if the common knowledge conditions do not hold
by time $m$, then an agent $i$ is unable to decide 1 if and only if 
  there are enough agents that could potentially extend the longest 
  \zchain that $i$ knows about,
  so that it has length at least $m$.
  The following lemma formalizes this intuition.
    \begin{proposition} \label{lemma:notdeciding1}
    For all implementations $P$ of the knowledge-based program $\kbp^1$
        with respect to $\gamma_{\fip,n,t}$, if
    $\I_{\gamma_{\fip,n,t},P},(r,m) 
    \models \neg K_i(C_{\N}(\faultyag \land \nodecided_{\N}(0) \land \exists 1))$,
    $\I_{\gamma_{\fip,n,t},P},(r,m)
    \models \neg K_i(C_{\N}($ $\faultyag \land \nodecided_{\N}(1) \land
        \exists 0))$, 
    and 
        $\I,{\gamma_{\fip,n,t},P},(r,m) \models \decided_i = \bot$, then
    the  
    following holds:
    \begin{itemize}
    \item $\I_{\gamma_{\fip,n,t},P},(r,m) \models
      \neg K_i \neg (\exists j \in \Agents (\deciding_j = 0))$
      if and only if 
      for all $m''$ with $\mathit{len}_i(r,m) < m'' \leq m$, 
      there exist at least $m'' - \mathit{len}_i(r,m)$ agents $j$ such that
      $\mathit{last}_{ij}(r,m) < m''$ and
      $\I,(r,\mathit{last}_{ij}(r,m)+1) \models \decided_j = \bot$.
    \end{itemize}

      \end{proposition}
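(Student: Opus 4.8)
The plan is to prove Proposition~\ref{lemma:notdeciding1} by relating the semantic condition ``$i$ considers it possible that some agent is deciding $0$'' to the combinatorial condition about extending the longest known $0$-chain. Recall that in the FIP context $\gamma_{\fip,n,t}$, corresponding runs of any two decision protocols have identical local states, so $i$'s local state at $(r,m)$ determines exactly the graph $\graph(r,i,m)$ of messages that causally influenced $i$, hence determines $\mathit{len}_i(r,m)$, the values $\mathit{last}_{ij}(r,m)$, and whether $\I,(r,\mathit{last}_{ij}(r,m)+1)\models\decided_j=\bot$. The first step is therefore to observe that all three quantities appearing in the statement are functions of $i$'s local state, and that by Lemma~\ref{lemma:cfaultyimpliesall} the hypothesis gives us $\I,(r,m)\models\neg C_\N(\faultyag\land\nodecided_\N(1)\land\exists 0)\land\neg C_\N(\faultyag\land\nodecided_\N(0)\land\exists 1)$ as well, and in particular (via Lemma~\ref{lemma:cfaulty}) that $\I,(r,m-1)\models\neg C_\N(\faultyag)$ at all relevant earlier times, so no agent visible to $i$ has decided via a common-knowledge condition. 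This lets us invoke Lemma~\ref{lemma:chain} to identify ``$j$ receives a $0$-chain in round $m'$'' with ``$j$ decides $0$ in round $m'+1$ via line~4 of $\kbp^1$''.

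For the ``if'' direction, suppose that for every $m''$ with $\mathit{len}_i(r,m)<m''\le m$ there are at least $m''-\mathit{len}_i(r,m)$ agents $j$ with $\mathit{last}_{ij}(r,m)<m''$ and $\I,(r,\mathit{last}_{ij}(r,m)+1)\models\decided_j=\bot$. I would construct a run $r'$ with $(r,m)\sim_i(r',m)$ in which some agent decides $0$ in round $m+1$. The idea is to build a $0$-chain of length $m$ ending at some agent $j^\ast$: start from the longest $0$-chain $i_0,\dots,i_{\ell}$ ($\ell=\mathit{len}_i(r,m)$) that $i$ knows about (which really exists by Lemma~\ref{lemma:chain} applied at the appropriate earlier point), and then, for each step from $\ell$ up to $m$, pick a fresh agent from the pool guaranteed by the counting hypothesis — one not yet seen by $i$ to have decided or to have been blocked before the relevant round — and route a message from the current chain-end to it at the right time, while making all agents that $i$ has \emph{not} heard from behave so as to be consistent with $i$'s view. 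Crucially, because $\mathit{last}_{ij}(r,m)<m''$ and $\I,(r,\mathit{last}_{ij}(r,m)+1)\models\decided_j=\bot$, agent $j$ is genuinely available to be the $m''$-th link — it has not been seen by $i$ to commit to a decision or to fail before it would need to receive the chain message. One must also arrange that $i$ itself does not hear about this chain until after round $m$ (so $(r,m)\sim_i(r',m)$ is preserved) and that the total number of faulty agents stays $\le t$; these are where the distinctness of the chain agents and the counting bound are used. Then the last agent on the constructed chain decides $0$ in round $m+1$ of $r'$, giving $\I,(r,m)\models\neg K_i\neg(\exists j(\deciding_j=0))$.

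For the ``only if'' direction, argue contrapositively: suppose there is some $m''$ with $\mathit{len}_i(r,m)<m''\le m$ for which the pool of suitable agents has size $<m''-\mathit{len}_i(r,m)$. I would show that in \emph{every} run $r'$ with $(r,m)\sim_i(r',m)$, no agent is deciding $0$ in round $m+1$. Suppose for contradiction some agent $j^\ast$ decides $0$ in round $m+1$ of such an $r'$; since no common-knowledge condition applies, by Lemma~\ref{lemma:chain} this means $j^\ast$ receives a $0$-chain of length $m$ in round $m$, i.e.\ there is a $0$-chain $j_0,\dots,j_m$ in $r'$. The first $\mathit{len}_i(r,m)$ links of any such chain are constrained: $i$ knows about a longest chain of exactly that length, so the ``new'' links from position $\mathit{len}_i(r,m)+1$ onward must be agents that $i$ has not seen decide $0$. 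In particular the agent at position $m''$ of the chain must have been undecided and not-yet-blocked (from $i$'s perspective) at the round it received its chain message, which forces it into the pool we assumed is too small — and since the chain agents are distinct, we cannot fit $m''-\mathit{len}_i(r,m)$ of them, a contradiction. The main obstacle, and the place needing the most care, is the construction in the ``if'' direction: making the new run simultaneously (a) indistinguishable to $i$ up to time $m$, (b) consistent with the sending-omissions failure model with at most $t$ faults, and (c) actually realizing a length-$m$ $0$-chain, all while respecting the causal structure already fixed by $i$'s local state. Getting the bookkeeping of which agents are faulty, which messages are routed when, and which agents $i$ has versus has not heard from exactly right — especially the boundary case $m''=m$ and the interplay with the second common-knowledge condition — is the technical heart of the proof; the rest is routine once Lemmas~\ref{lemma:chain}, \ref{lemma:cfaulty}, and \ref{lemma:cfaultyimpliesall} are in hand.
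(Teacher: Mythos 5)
Your proposal matches the paper's proof in all essentials: both directions rest on Lemma~\ref{lemma:chain} to identify a round-$(m+1)$ decision of $0$ with receipt of a length-$m$ $0$-chain, on the causality observation that chain agents beyond position $\mathit{len}_i(r,m)$ cannot lie in $i$'s causal past at or after their chain position (so $\mathit{last}_{ij}(r,m) < m''$ and they are undecided at time $\mathit{last}_{ij}(r,m)+1$), and, for the ``if'' direction, on an explicit surgery of the failure pattern that routes a $0$-chain through the agents supplied by the counting hypothesis while preserving $i$'s local state. Your only presentational difference is stating the ``only if'' direction contrapositively, which changes nothing of substance, and the bookkeeping you flag (indistinguishability to $i$, consistency with $SO(t)$, distinctness of the selected chain agents) is exactly what the paper's construction of the run $r''$ handles.
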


  \begin{proof}
        Let $\I = \I_{\gamma_{\fip,n,t},P}$ and suppose that $\I,(r,m)
    \models \neg K_{i}(C_{\N}(\faultyag \land \nodecided_{\N}(1) \land \exists 0))
    \land \neg K_{i}(C_{\N}(\faultyag \land \nodecided_{\N}(0) \land \exists 1))$.
    By Lemma~\ref{lemma:cfaultyimpliesall}, we have that
    $\I,(r,m)
    \models \neg C_{\N}(\faultyag \land \nodecided_{\N}(1) \land \exists 0)
    \land \neg C_{\N}(\faultyag \land \nodecided_{\N}(0) \land \exists 1)$. 
    Note that the common knowledge conditions also do not hold for any 
    earlier time, as otherwise $i$ would have decided earlier by 
    Lemma~\ref{lemma:cfaulty}.

    For the only-if direction, suppose that $\I,(r,m) \models \neg K_i \neg (
    \deciding_j = 0)$ for some $j \in \Agents$.
There must exist a run $r'$ such that 
    $(r,m) \sim_i (r',m)$ and 
    $\I,(r',m) \models \deciding_j = 0$.

Since $(r,m) \sim_i (r',m)$ and $i$ the common knowledge conditions do not
hold for $i$ in $(r,m)$, they do not hold in $(r',m)$ either. By 
Lemma~\ref{lemma:chain}, $j$ must receive a \zchain in round $m$ of $r'$.
Let $i_0,\dots,i_m$ be this \zchain.
    For all $m^*$ such that $\mathit{len}_i(r,m) < m^* \leq m$, consider
        the agent $i_{m^*}$ in the \zchain.
   We claim that 
    $(i_{m^*},m') \not\lamport{r} (i,m)$
    for all $m' \geq m^*$. For suppose that 
    $(i_{m^*},m') \lamport{r}
    (i,m)$.  
    Then, since $(r,m) \sim_i (r',m)$, 
    we also have $(i_{m^*},m') \lamport{r'} (i,m)$.
    Since $i_{m^*}$
    receives a \zchain at time $m^*$ in $r'$, this must also be
    the case in $r$ and $i$ learns about it, contradicting the
    assumption that $\mathit{len}_i(r,m) < m^*$.  
    We must have
        $\mathit{last}_{ii_{m^*}}(r,m) < m^*$ (otherwise $\len_i(r,m)$
    would be at least $m^*$).
    It follows that
          $\I,(r,\mathit{last}_{ii_{m^*}}(r,m)+1) \models
    \decided_{i_m^*} = \bot$, for otherwise $i_{m^*}$ would not be on
    the $0$-chain.
    Thus, for all $m''$ with
$\mathit{len}_i(r,m) < m'' \leq m$, 
      there exist at least $m'' - \mathit{len}_i(r,m)$ agents $j$ such that
            $\mathit{last}_{ij}(r,m) < m''$, namely, the agents $i_{m^*}$ with
      $\mathit{len}_i(r,m) < m^* \leq m^*$.
    
      Conversely, suppose
      that
      for all $m''$ where $\mathit{len}_i(r,m) < m'' \leq m$, 
      there exist at least $m'' - \mathit{len}_i(r,m)$ agents $j$ such that
      $\mathit{last}_{ij} < m''$ and
      $\I,(r,\mathit{last}_{ij}(r,m)+1) \models \decided_j = \bot$. 
    Let $i_0,i_1,...,i_{\mathit{len}_i(r,m)}$ be the longest \zchain 
    that $i$ knows about at time $m$ in run $r$.
\commentout{
    We show that agent $i$ considers possible a run $r'$ where there is a
    \zchain of length $m''$ that extends
    $i_0,i_1,...,i_{\mathit{len}_i(r,m)}$ with these  $m'' -
    \mathit{len}_i(r,m)$ agents.
        We proceed by induction on $m''$. Let the inductive hypothesis 
    be the statement that there exists a run $r''$ such that 
    $(r,m) \sim_i (r'',m)$ and there exists a \zchain of length
    $m''$ in $r''$ that 
    extends $i_0,i_1,...,i_{\mathit{len}_i(r,m)}$.

    If $m'' = \mathit{len}_i(r,m) + 1$, then there is one agent $j$ 
    that $i$ did not hear from after and including time $\mathit{len}_i(r,m) 
    + 1$. We also know that agent $i$ did not hear that $j$ decided before time 
    $\mathit{len}_i(r,m) + 1$.
    \roncomment{We need something stronger here: that $j$ actually did not decide yet by time $m''$!} 
    Therefore, in a run $r''$ where agent $j$ receives a message from 
    $i_{\mathit{len}_i(r,m)}$ in round $\mathit{len}_i(r,m) + 1$ 
    to become part of the \zchain, agent $i$ has the same local state.

    If $\mathit{len}_i(r,m) + 1 < m'' \leq m$, we first consider $m''-1$. 
    By the inductive hypothesis, there exists a run $r''$ such that 
    $(r,m) \sim_i (r'',m)$ and there exists a \zchain that 
    is an extension of $i_0,i_1,...,i_{\mathit{len}_i(r,m)}$ that
    has length $m''-1$.
    Note that there are $m'' - 1 - \mathit{len}_i(r,m)$ agents other than the 
    ones in $i_0,i_1,...,i_{\mathit{len}_i(r,m)}$ in this \zchain. 
    Since there are $m'' - \mathit{len}_i(r,m)$ agents $j$ that
    satisfy $\I,(r,\mathit{last}_{ij}(r,m)+1) \models \decided_j = \bot$ 
    and $\mathit{last}_{ij}(r,m) < m''$, 
    we observe that there must be an agent $j'$ with 
    $\mathit{last}_{ij'}(r,m) < m''$ and 
    $\I,(r,\mathit{last}_{ij}(r,m)+1) \models \decided_j = \bot$
    that is not a part of the \zchain reaching time $m''-1$.
    \roncomment{There is a similar issue here: how do we know that the agent we are adding to the chain has not yet decided, if the 
    last $i$ heard from them was much earlier?} 
    Since $r_i(m) = r''_i(m)$, we then have 
    $\mathit{last}_{ij'}(r'',m) < m''$ and 
    $\I,(r'',\mathit{last}_{ij}(r,m)+1) \models \decided_j = \bot$.    
    Then, analogous to the base case, we can use $j'$ to show that 
    there exists a run where agent $i$ has the same local state
    with a \zchain reaching time $m''$ that ends with $j'$. 
  \end{proof}
}%

By assumption, there must exist agents 
$i_{\mathit{len}_i(r,m)+1}, \ldots, i_m$ such that for all $m''$ with
$\mathit{len}_i(r,m) < m'' \le m$, we have $\last_{ii_{m''}} < m''$
and       $\I,(r,\mathit{last}_{ii_{m''}}(r,m)+1) \models
\decided_{i_{m''}} = \bot$. Consider a run $r''$ such that the initial
  state of all agents is the same in $r$ and $r''$, $\N(r) = \N(r'')$,
  and the failure pattern in $r$ and $r'$ is the same except for the
  messages received by and from the agents  $i_{\mathit{len}_i(r,m)+1}, \ldots,
  i_m$.  Each agent $i_{m''}$ with $\mathit{len}_i(r,m) < m'' \le m$
  receives messages from the same agents in $r$ and $r''$ up to and
  including round $\last_{i,i_{m''}}$ and the same agents receive
  messages from $i_{m''}$ up to round $\last_{i,i_{m''}}$; after round
$\last_{i,i_{m''}}$, agent $i_{m''}$ receives only one message, from
  agent $i+_{m''-1}$ at round $m''$, and if $m'' < m$, only one
  message is received from $i_{m''}$  
  after round   $\last_{i,i_{m''}}$: a message is received by 
  $i_{m''+1}$ from $i_{m''}$ at round $m''+ 1$.  (If $m'' = m$, no
  messages are received from $i_{m''}$ after round
    $\last_{i,i_{m''}}$.)  It is easy to check that $(r,m) \sim_i
  (r'',m)$ and $i_0, \ldots, i_m$ is a \zchain in $r''$.
\end{proof}

         \commentout{
 
    Define $\last_{i,j}(r,m)$ to be the largest time $m'$ such that $(j,m') \lamport{r} (i,m)$. In case there is no such time, we define  $\last_{i,j}(r,m) = -1$.
    
    Define $\latest_i(r,m)$ to be the largest time $m'$ such that for some agent $j$ we have $(j,m') \lamport{r} (i,m)$ and 
    $\I,(r,m') \models \deciding_j=0$. In case there is no such time, we define  $\latest_i(r,m) = -1$.
    
    Note that by Lemma~\ref{lemma:chain}, if  $\I,(r,m) 
    \models \neg K_i(C_{\N}(\faultyag \land \nodecided_{\N}(1-v) \land \exists v))$ for all $v\in \{0,1\}$, then 
     $\latest_i(r,m)$ is the largest time $m'$ such that for some agent $j$ we have $(j,m') \lamport{r} (i,m)$
     there exists a \zchain $i_1, \ldots, i_{m'+1}$ of length $m'+1$ with $i_{m'+1} = j$. 

    \roncomment{We need to be a  bit careful when doing surgery on runs. It is not that case that cutting edges out of Lamport causality ``reduces'' the \zchain{s}. 
    Example: suppose we have maximal \zchain{s} $i_1, i_2$ and $i_3,i_4$ and $(i_2,2) \lamport{r} (i_3,3)$ and  
    $(i_4,2) \lamport{r} (i_2,3)$. These last two edges do not extend the chains because the agents have already decided. But if we cut the edge $(i_1,0) \lamport{r} (i_1,1)$, then agent $i_2$ has  
    no longer decided at time 3, and we have a length 3 chain $i_3,i_4,i_2$.}

In the following result, note that if $\last_{i,j}(r,m) = -1$, then    
 $\I,(r,\last_{i,i_k}(r,m)+1) \models \decided_j = \bot$ is meaningful and holds trivially, because no agent has decided at time 0. 
 Similarly, if $\ell = \latest_i(r,m) = -1$ then $\ell+2 = 1$ and the statement  calls for the existence of a sequence of 
 agents starting with $i_1$. 
 
     \begin{lemma} \label{lemma:notdeciding1v2}
    Let  $P$ be an implementation of the knowledge-based program $\kbp^1$
    with respect to $\gamma_{\fip,n,t}$, and let $\I= \I_{\gamma_{\fip,n,t},P},(r,m)$. 
    Suppose  
    $\I,(r,m) \models \decided_i = \bot$ and 
    $\I,(r,m) 
    \models \neg K_i(C_{\N}(\faultyag \land \nodecided_{\N}(1-v) \land \exists v))$ for all $v\in \{0,1\}$ 
    and 
    $\I,(r,m) \models \neg (\init_i = 0 \lor K_i(\bigvee_{j\in \Agents} \jdecided_j=0))$.
     Let $\ell = \latest_i(r,m)$. Then  $\I,(r,m) \models 
      \neg K_i \neg (\exists j \in \Agents (\deciding_j = 0))$
      if and only if there exist agents $i_{\ell+2}, \ldots , i_{m+1}$ 
      such that for all $k = \ell+2, \ldots, m+1$, we have 
      $\last_{i,i_k}(r,m) < k-1$ and 
      $\I,(r,\last_{i,i_k}(r,m)+1) \models \decided_j = \bot$.
    \end{lemma} 
    
    \begin{proof} 
    Suppose first that $\I,(r,m) \models \neg K_i \neg (\exists j \in \Agents (\deciding_j = 0))$. Then there exists a point $(r',m) \sim_i (r,m)$ and agent $j$ such that 
    $\I,(r',m) \models \deciding_j = 0$. Since also  $\I,(r',m) 
    \models \neg K_i(C_{\N}(\faultyag \land \nodecided_{\N}(1-v) \land \exists v))$ for $v \in \{0,1\}$, 
     there exists, by Lemma~\ref{lemma:chain}, a \zchain $i_1,\ldots , i_{m+1}$ in $r'$, with $i_{m+1} =j$. 
    Note that because $(r,m) \sim_i(r',m)$, we have by FIP that $(j,m') \lamport{r} (i,m)$ iff $(j,m') \lamport{r'} (i,m)$, for all agents $j$ and $m'\leq m$. 
    It follows by an easy induction that if $(j,k) \lamport{r} (i,m)$ then $r_j(k) = r'_j(k)$. 
    Hence also $\latest_i(r,m) = \latest_i(r',m) = \ell$. 
    If, for $m+1 \geq k \geq \ell + 2$, we had that $(i_k,k-1) \lamport{r'} (i,m)$, we would have 
    that $\latest_i(r,m) = \latest_i(r',m) \geq k - 1 \geq  \ell +1 >  \latest_i(r,m)$, a contradiction. 
    Hence, for $m+1\geq k\geq  \ell+2$, we have  $\last_{i,i_k}(r,m) <k-1$. 
    Because $i_k$ decides in round $k$, which starts at time $k-1$, we have $\I,(r, \last_{i,i_k}(r,m)+1) \models \decided_i = \bot$. 
       
    Conversely, assume that there exist agents $i_{\ell+2}, \ldots , i_{m+1}$ 
    such that for all $k = \ell+2, \ldots, m+1$, we have 
    $\last_{i,i_k}(r,m) < k-1$ and 
    $\I,(r,\last_{i,i_k}(r,m)+1) \models \decided_j = \bot$.
    Since $\ell = \latest_i(r,m)$, there exists a \zchain $i_0, \ldots , i_{\ell+1}$ such that $(i_{\ell+1},\ell) \lamport{r} (i,m)$. 
    Note that for $k = \ell+2 , \ldots, m+1$, we cannot have $\I,(r,0) \models \init_{i_k} =0$ and $(i_k,0) \lamport{r} (i,m)$, 
    since then we would have $\last_{i,i_k}(r,m) \geq 0$ and $\I,(r,\last_{i,i_k}(r,m)+1) \models \decided_{i_k} = 0$, contradicting
    the assumption. Hence either $\I,(r,0) \models \init_{i_k} =1$ or $(i_k,0) \not \lamport{r} (i,m)$. 

    Consider the run $r'$ obtained by modifying $r$ so that, for all agents $j$, and times $k\leq m$,
    \begin{enumerate} 
    \item if $\ell = -1$, then $\init_{i_1} = 0$, and $\init_{i_2} = \init_{i_3}  \ldots =\init_{i_{m+1}} = 1$, and 
    \item if $\ell \geq 0$, then $\init_{i_{\ell+2}} = \init_{i_{\ell+3}}  \ldots =  \init_{i_{m+1}} = 1$, and 
    \item if $ j \not \in \{i_{\ell+2} , \ldots, i_{m+1}\}$ and  $(j,0) \not \lamport{r} (i,m)$, then $\init_j = 1$, 
    \item if $(j,k) \not \lamport{r} (i,m)$ and $k>0$, then for all agents $j' \not \in \N(r)$, we have $(j',k-1) \not \lamport{r'} (j,k)$, 
    with the exceptions that    
    $(i_{\ell+1}, \ell)  \lamport{r'} (i_{\ell+2}, \ell+1) \lamport{r'} \ldots \lamport{r'} (i_{m+1},m)$. 
    \end{enumerate}
    We have $\N(r) = \N(r')$. 
    Since $(i_{k},k-1) \not \lamport{r} (i,m)$ for $k = \ell+2, \ldots ,m+1$, 
    the construction does not modify agents $j$ at times $k$ 
    such that $(j,k) \lamport{r} (i,m)$. It follows from this by an easy induction that if $(j,k) \lamport{r} (i,m)$, 
    then $r'_j(k) = r_j(k)$. In particular, we have we have $(r,m) \sim_i (r',m)$, 
    and consequently, also $\I,(r',m) 
    \models \neg K_i(C_{\N}(\faultyag \land \nodecided_{\N}(1-v) \land \exists v))$ for all $v\in \{0,1\}$.  
    This implies, by Lemma~\ref{FIXME}, that no agent decides using the common knowledge conditions before round $m+2$. 
    Additionally,  if $\ell \geq 0$ then $i_1, \ldots , i_{\ell+1}$ is a \zchain in $r'$.    
    
    We  claim that for all times $k\leq m$, 
    \begin{enumerate}  
    \item[(a)] 
    for all agents $j$  such that $\last_{ij}(r,m) < k$ and $\I,(r,\last_{ij}(r,m) + 1) \models  \decided_j = \bot$, 
    we have  
    $\I,(r',k) \models \decided_j = \bot$,  except if
    $j=i_{k'} \in \{\ell+2, \ldots, m+1\}$ and $k \geq k'$, and 
    \item[(b)]  
    $i_1, \ldots , i_{k+1}$ is a \zchain in $r'$.    
    \end{enumerate}  
    The proof is by induction on $k$. 
    
    In the base case of $k = 0$, we have  $\I,(r',k) \models \decided_j = \bot$ for all $j$, so part (a) holds. 
    Also, if $\ell = -1$ then $\I,(r',0) \models \init_{i_1} =0$, so $\I,(r',0) \models \deciding_{i_1} =0$, so 
    $i_1$ is a \zchain in $r'$. If $\ell \geq 0$ then $i_1$ is a \zchain in $r$ by assumption, hence also in $r'$, as noted above. 
    Thus, part (b) holds also. 
    
    Inductively, suppose that the claim holds for time $k$, For part (a), 
    assume $k+1 \leq m$ and $\last_{ij}(r,m) < k+1$  and  
    $\I,(r,\last_{ij}(r,m)+1) \models \decided_j = \bot$. 
    Also suppose that if
    $j=i_{k'} \in \{i_{\ell+2}, \ldots, i_{m+1}\}$ we have not $k +1 \geq k'$, i.e., $k+1 < k'$. 
    We need to show $\I,(r',k+1) \models \decided_j = \bot$. 
    
We have $\last_{ij}(r,m) < k+1$, and consider two cases:
\begin{enumerate}
\item $\last_{ij}(r,m)  = k$. 
In this case we know that  $\I,(r',k+1) \models \decided_j = \bot$ by assumption.  

\item  $\last_{ij}(r,m) < k$. 
We have $\I,(r,\last_{ij}(r,m)+1) \models \decided_j = \bot$
and if $j=i_{k'} \in \{i_{\ell+2}, \ldots, i_{m+1}\}$ we have $k+1 < k'$, so also $k< k'$.
The conditions of the inductive hypothesis part (a) therefore hold at time $k$, 
so we have  $\I,(r',k) \models \decided_j = \bot$. By construction,
we have $(j',k-1) \lamport{r'} (j,k)$ for all $j'\in \N(r)$. 
For $j' \in \N(r)$ we have $(j',k) \lamport{r} (i,m)$, 
so $\I,(r',k+1) \models \init_{j'} = 1 \land \neg \decided_{j'} = 0$, 
else we contradict the assumption about $i$. 
It follows that $j$ does not decide 0 using the fourth line of the
knowledge-based program.  
Moreover, since  $\I,(r',m) \models \neg K_i(C_{\N}(\faultyag \land \nodecided_{\N}(1-v) \land \exists v))$ for all $v\in \{0,1\}$, 
by Lemma~\ref{FIXME}, 
$j$ also does not decide using the common knowledge conditions. 
Finally, because $i_1, \ldots , \ldots i_{k+1}$ is a \zchain in $r'$, agent 
$j$ does not decide 1 by the fifth line of $\kbp^1$ in round $k+1$ of run $r'$. 
Hence $\I,(r',k+1) \models \decided_j = \bot$ in this case also. 
\end{enumerate} 

For part (b), we have by induction that $i_1, \ldots ,i_{k+1}$ is a \zchain in $r'$. 
Moreover by part (a) at time $k+1$, we have 
$\I,(r,k+1) \models \decided{i_{k+2}} = \bot$, 
since $\last_{i,i_{k+2}}(r,m) < k+1$ (by assumption) and not $k+1\geq k+2$. 
By construction $(i_{k+1},k) \lamport{r'} (i_{k+2}, k+1)$. 
it follows that $i_{k+2}$ decides 0 in round $k+2$, and hence 
$i_1, \ldots ,i_{k+2}$ is a \zchain in $r'$, so (b) holds for time $k+1$. 

This completes the proof of the claim. By part (b) in the instance $k=m$, 
the sequence $i_1, \ldots, i_{m+1}$ forms a \zchain in $r'$, 
    and we have $\I,(r',m) \models \deciding_{i_{m+1}} = 0$.  
    This proves that $\I,(r,m) \models 
      \neg K_i \neg (\exists j \in \Agents (\deciding_j = 0))$.
\end{proof}
 
  }

  \begin{corollary} \label{lemma:deciding1}
    For all implementations $P$ of the knowledge-based program $\kbp^1$
    with respect to $\gamma_{\fip,n,t}$, if $\I_{\gamma_{\fip,n,t},P},(r,m)
    \models \neg C_\N(\faultyag)$, 
    $\I_{\gamma_{\fip,n,t},P},(r,m) \models \decided_i = \bot$,
        $\mathit{len}_i(r,m) \leq m-2$, and
    agent $i$ hears from all but one agent in round $m$ of $r$, then 
    $\I_{\gamma_{\fip,n,t},P},(r,m) \models \deciding_i = 1$.
  \end{corollary}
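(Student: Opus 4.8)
The goal is to deduce, under the stated hypotheses, that agent $i$ decides $1$ in round $m+1$ of $r$ (that is, $\deciding_i = 1$ holds at $(r,m)$), by combining Corollary~\ref{lemma:chain}, Proposition~\ref{lemma:notdeciding1}, and Lemma~\ref{lemma:cfaulty}. The first thing I would establish is that the two common-knowledge tests in $\kbp^1$ fail at $(r,m)$ for agent $i$: since $\I,(r,m)\models \neg C_\N(\faultyag)$, by Lemma~\ref{lemma:cfaultyimpliesall} we cannot have $K_i(C_\N(\faultyag\land\cdots))$ for either value, so lines~2 and~3 of $\kbp^1$ do not fire. (We also get, as in the proof of Lemma~\ref{lemma:notdeciding1}, that these conditions fail at all earlier times, since otherwise $i$ would have decided before round $m+1$ by Lemma~\ref{lemma:cfaulty}, contradicting $\decided_i=\bot$ at $(r,m)$.) Next I would rule out line~4: I claim $i$ does not receive a \zchain in round $m$. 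Indeed, if it did, then by Corollary~\ref{lemma:chain} there is a \zchain of length $m$ ending at $i$, so $i$ knows about a \zchain of length $m$, giving $\mathit{len}_i(r,m)\ge m$, contradicting $\mathit{len}_i(r,m)\le m-2$. Hence, by Corollary~\ref{lemma:chain} again (its contrapositive, using that the common-knowledge conditions fail for $i$ and $i$ has not decided before round $m+1$), the line-4 test $\init_i=0\lor\bigvee_j K_i(\jdecided_j=0)$ is false at $(r,m)$.

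It then remains to show the line-5 test $K_i(\bigwedge_{j\in\Agents}\neg(\deciding_j=0))$ \emph{is} true at $(r,m)$, i.e.\ that $\I,(r,m)\models K_i\neg(\exists j\in\Agents(\deciding_j=0))$; combined with the above this forces $\kbp^1$ to select $\decide_i(1)$, which (once we know it is an implementation of $\kbp^1$) is exactly $\deciding_i=1$. For this I would invoke Proposition~\ref{lemma:notdeciding1} in contrapositive form: it suffices to exhibit some $m''$ with $\mathit{len}_i(r,m)<m''\le m$ for which there do \emph{not} exist $m''-\mathit{len}_i(r,m)$ distinct agents $j$ with $\mathit{last}_{ij}(r,m)<m''$ and $\decided_j=\bot$ at time $\mathit{last}_{ij}(r,m)+1$. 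Take $m''=m$. Any agent $j$ that $i$ \emph{does} hear from in round $m$ (which, by hypothesis, is all but one agent) satisfies $\mathit{last}_{ij}(r,m)=m\not<m$, so it is not counted. Thus at most one agent $j$ — the single agent $i$ failed to hear from in round $m$ — can satisfy $\mathit{last}_{ij}(r,m)<m$. But $m-\mathit{len}_i(r,m)\ge m-(m-2)=2>1$, so the required number of agents does not exist. Hence by Proposition~\ref{lemma:notdeciding1}, $\I,(r,m)\models\neg(\neg K_i\neg(\exists j(\deciding_j=0)))$, i.e.\ $K_i\neg(\exists j(\deciding_j=0))$.

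Putting the pieces together: at $(r,m)$ agent $i$ has $\decided_i=\bot$, lines~2--4 of $\kbp^1$ all fail, and line~5 succeeds, so $\kbp^1$ prescribes $\decide_i(0)$? — no, line~5 prescribes $\decide_i(1)$. Since $P$ implements $\kbp^1$ with respect to $\gamma_{\fip,n,t}$, we have $P_i(r_i(m))=\decide_i(1)$, which means $\I_{\gamma_{\fip,n,t},P},(r,m)\models\deciding_i=1$, as required.

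\textbf{Main obstacle.} The delicate point is not the counting argument (which is short once Proposition~\ref{lemma:notdeciding1} is available) but making sure the hypotheses of Proposition~\ref{lemma:notdeciding1}, Corollary~\ref{lemma:chain}, and Lemma~\ref{lemma:cfaulty} are genuinely met — in particular that the common-knowledge conditions fail for $i$ at $(r,m)$ \emph{and} at all earlier times, and that ``$i$ has not decided before round $m+1$'' (needed for the contrapositive of Corollary~\ref{lemma:chain}) really follows from $\decided_i=\bot$ at $(r,m)$ together with $\neg C_\N(\faultyag)$. One must be careful that $\neg C_\N(\faultyag)$ at time $m$ does not by itself preclude $C_\N(\faultyag)$ at some earlier time; but if $C_\N(\faultyag)$ held earlier, Lemma~\ref{lemma:cfaulty} would force $\decided_i\ne\bot$ at $(r,m)$, a contradiction, so this is fine. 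The only real bookkeeping is tracking the off-by-one in $\mathit{len}_i(r,m)\le m-2$ versus the ``all but one agent'' hypothesis, which is exactly what gives the slack $2>1$.
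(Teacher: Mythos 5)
Your proof is correct and takes essentially the same route as the paper's: the heart is the identical application of Proposition~\ref{lemma:notdeciding1} with $m''=m$, observing that agents heard from in round $m$ cannot serve as witnesses, so at most one agent remains while $m-\len_i(r,m)\ge 2$ are required, which yields $K_i\neg(\exists j\,(\deciding_j=0))$. The additional bookkeeping you supply—using the failure of the common-knowledge tests and Lemma~\ref{lemma:chain} to rule out the $0$-deciding lines of $\kbp^1$ before concluding that line~5 fires—is exactly what the paper leaves implicit in its one-line conclusion that $i$ decides $1$, so it is a faithful elaboration rather than a different argument.
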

  \begin{proof}
    Suppose that $\I_{\gamma_{\fip,n,t},P},(r,m)
    \models \neg C_\N(\faultyag)$, $\mathit{len}_i(r,m) \leq m-2$, and 
    agent $i$ hears from all but one agent in round $m$ of $r$.
    For all agents $j$ that $i$ heard from in round $m$, 
    $\mathit{last}_{ij}(r,m) \ge m$. So there is only one agent that 
    could satisfy $\mathit{last}_{ij}(r,m) < m$. But 
 since $m - \mathit{len}_i(r,m) \geq 2$, Proposition~\ref{lemma:notdeciding1}
    implies that $\I_{\gamma_{\fip,n,t},P},(r,m) \models 
    K_i \neg (\exists j \in \Agents (\deciding_j = 0))$.
    Therefore agent $i$ decides 1 in round $m+1$.
  \end{proof}

  \commentout{
  \begin{corollary} \label{lemma:deciding1}
    For an implementation $P$ of the knowledge-based program $\kbp^1$
    with respect to $\gamma_{\fip,n,t}$, let $\I=\I_{\gamma_{\fip,n,t},P}$.
    Suppose $m\geq 2$ and
    $\I,(r,m) 
    \models \neg C_\N(\faultyag)$, 
    $\I,(r,m) \models \decided_i = \bot$,
    $\latest_i(r,m) \leq m-3$, and 
    agent $i$ directly hears from all but one agent in round $m$ of $r$.  Then 
    $\I,(r,m) \models \deciding_i = 1$.
  \end{corollary}
  \begin{proof}
By Lemma~\ref{lemma:notdeciding1v2}, since $\latest_i(r,m) \leq m-3$, 
were it the case that $\I,(r,m) \models \neg K_i \neg (\exists j \in \Agents (\deciding_j = 0))$, 
there would exist agents 
$i_{m-1}, i_{m},i_{m+1}$ such that $\last_{i,i_{m-1}}(r,m) < m-2$ and $\latest_{i,i_{m}}(r,m) < m-1$
and  $\latest_{i,i_{m+1}}(r,m) < m$. But this means that there are two agents $i_{m-1}$ and $i_m$
that $i$ did not directly hear from in round $m$, contrary to assumption. 
Hence we must have $\I,(r,m) \models  K_i \neg (\exists j \in \Agents (\deciding_j = 0))$. 
By $\latest_i(r,m) \leq m-3$ and $\I,(r,m) 
    \models \neg C_\N(\faultyag)$) we have that $\I,(r,m) \models \neg \deciding_i = 0$, so 
$\I,(r,m) \models \deciding_i = 1$.
  \end{proof}

 We remark that with $m=2$, $\latest_i(r,m) \leq m-3$ implies  $\latest_i(r,m) =-1$, i.e., agent $i$ is not aware of any \zchain{s} 
 at $(r,m)$. The lemma is false with $m=1$ (unless $n=2$), since an agent that agent $i$ did not hear from may have decided in round  
 1 and another agent may be deciding in round 2 as a result.  It also fails if we replace the condition $\latest_i(r,m) \leq m-3$
 with $\latest_i(r,m) \leq m-2$. (For example, in case $m=2$, we  we may have a \zchain $i_1,i_2,i_3$ such that $i$ 
 does not hear from $i_1$ in round 1 (so that $\latest_i(r,2) = 0= m-2$), but 
 $i$ hears from all agents except $i_2$ in round 2.) 
}
 
    \subsubsection{$\kbp^1$ satisfies the EBA conditions}
\repro{p:kbp1correct}
All implementations of $\kbp^1$ with
respect to $\gamma_{\fip,n,t}$ are EBA decision protocols for $\gamma_{\fip,n,t}$.
\erepro

\begin{proof}
    Fix an implementation $P$ of $\kbp^1$ in $\gamma_{\fip,n,t}$.

    Unique Decision follows from essentially the same argument as in
    the $\kbp^0$ case. 
    The only difference is since $P$ is a FIP, the decisions 
    can be inferred from the local state without explicitly storing them.

    To see that Agreement holds, suppose by way of contradiction that $r$ is a 
    run where there exist nonfaulty agents $i$ and $j$ and a time $m$ such that
    $\I,(r,m) \models \decided_i = 0 \land \decided_j = 1$.
    Suppose that $j$ decides 1 in round $m_j+1$ and $i$ decides 0 in round $m_i+1$, 
    so that the decision conditions first hold at times $m_i$ and $m_j$, respectively. 
  We first observe that if either $i$ or $j$ decides using one of the common
    knowledge conditions, then we get a contradiction. If 
        $\I,(r,m_{i}) \models K_{i}(C_\N(\faultyag \land \nodecided_{\N}(1)
    \land \exists 0))$ then $j$ couldn't have decided 1 at or before round
    $m_{i}$, since  
    $j\in\N$ and $\I,(r,m_{i}) \models \nodecided_{\N}(1)$. Agent $j$ would 
    then decide 0 at round $m_{i}+1$, since $\I,(r,m_{i}) \models
    K_{j}(C_\N(\faultyag  
        \land \nodecided(1) \land \exists 0))$, contradicting
the assumption that $j$ decides 1 in this run. 
    If 
    $\I,(r,m_{i}) \models \neg K_{i}(C_{\N}(\faultyag \land \nodecided_{\N}(1)
    \land \exists 0))$ and $\I,(r,m_{j}) \models K_{j}(C_\N(\faultyag \land 
    \nodecided_{\N}(0) \land \exists 1))$, then $i$ could not have decided 0 
    before $m_{j}$ as $i \in \N$ and $\I,(r,m_{j}) \models \nodecided_{\N}(0)$. 
Since $\I,(r,m_{j}) \models K_{i}(C_\N(\faultyag \land 
    \nodecided_{\N}(0) \land \exists 1))$, and the other common knowledge 
    condition for deciding 0 does not hold for $i$
    at time $m_j$, by the argument above, 
    $i$ decides 1 
    in round $m_j + 1$,
        contradicting the assumption that $i$ decides 0 in round $m_i +1$.  
    
It remains to consider the cases where 
    neither $i$ nor $j$ decides 
using one of the common knowledge conditions.

If 
$m_{j} \leq m_{i}$, 
must have either $\I,(r,m_{i}) \models \init_{i} = 0$
    or $\I,(r,m_{i}) \models K_{i}(\jdecided_{k} = 0)$ for some $k \in \Agents$.
    Since, agent $i$ does not decide using a common knowledge condition, 
    we can apply Lemma~\ref{lemma:chain} to 
    conclude that $i$ receives a \zchain at time $m_i$, which implies that 
    there exists an agent $i'$ such that 
    $\I,(r,m_{j}) \models \deciding_{i'} = 0$. Hence, 
        $\I,(r,m_{j}) \models \neg K_{j}(\neg (\deciding_{i'}=0))$,
so  $j$ cannot decide 1 at $m_{j}$.
    
If $m_{j} > m_{i}$, since $i$ decides 0 in round $m_i + 1$ 
without using the common knowledge condition, we must have 
    $\I,(r,m_i) \models \init_i = 0 \bigvee_{j \in \Agents} 
K_i (\jdecided_j = 0)$. We can again apply Lemma~\ref{lemma:chain} to 
conclude that $i$ receives a \zchain at time $m_i$ in $r$.
 As $i$ is nonfaulty, $j$ must hear from $i$ in round $m_{i}+1$,
    so $\I,(r,m_j) \models K_j(\jdecided_i = 0)$. 
    It follows that agent $j$ should decide 0 in this run,
    contradicting    the assumption that $j$ decides 1.

    For Validity, observe that if an agent $i$ decides $v$ using the common 
    knowledge condition, it follows that some agent had an initial
        preference of $v$.
If $i$ decides 0 without using the common knowledge condition,
    by Lemma~\ref{lemma:chain}, there must be a \zchain, and hence an agent 
    that had an initial preference of 0. 
Finally, if agent $i$ decides 1 without using a common knowledge
condition, then $i$ did not 
    decide 0 in the first round and therefore we must have 
    $\init_i = 1$.

        For Termination,
                we must show that all nonfaulty agents decide by round $t+2$. 
        Suppose that a nonfaulty agent $i$ does not decide by round $t+1$
        in run $r$, and that $r'$ is a run such that $(r',t+1) \sim_i
        (r,t+1)$.  
Since we are using a full-information protocol, it easily follows that
$(r,m) \sim_i (r',m)$ for
all $m \leq t+1$.
Since $i$ does not decide by round $t+1$ of $r$, we do not have 
$\I,(r,m) \models K_i \ck{\N}(\faultyag \land \nodecided_\N(1-v)
    \land \exists v)$ 
        for $v\in \{0,1\}$ and $m \leq t$.     
     By Lemma~\ref{lemma:cfaultyimpliesall}, it easily follows that
    $\I(r',m)\models \neg \ck{\N}(\faultyag \land \nodecided_\N(1-v)
     \land \exists v)$  for $v\in \{0,1\}$ and $m \leq t$.  
     This implies that no agent $j$ decides in $r'$ using the
          common knowledge conditions at or before round $t+1$.
    
    By Lemma~\ref{lemma:chain}, if any agent $j$ decides 0 in round
    $t+1$ in $r'$, then that agent receives a \zchain $i_0, \ldots,
    i_{t}$  
    at time $t$ in $r'$. Since agents on a chain are distinct, and there are
    at most $t$ faulty agents,  
    this chain contains at least one nonfaulty agent $i_k$ that
    decides 0.  But then $i$ must receive a message (and a \zchain)
    from $i_k$ in round 
        $k+1 $, which means that $i$ decides 0 by round $t+2$, as claimed.

    We have shown if an agent decides by round $t+1$ using the commnon knowledge
conditions or decides 0 by  line 4 of $\kbp^1$
in a run $r'$ that $i$ considers possible, then $i$ decides by round
$t+2$ in $r$.  If this is not the case, then
  $\I,(r,t+1) \models K_i(\bigwedge_{j \in \Agents} \neg (\deciding_j
    = 0))$, so $i$ decides 1 in round $t+2$.
\end{proof}

\commentout{
Then using Proposition 4.3 from \cite{HMW}, we get:
\begin{corollary} \label{corollary:hmwcorrectness}
  For all implementations $P$ of the knowledge-based program $\kbp^1$ with 
  respect to $\gamma_{\fip,n,t}$,
  \begin{enumerate}[label=(\alph*)]
    \item $\I_{\gamma_{\fip,n,t},P} \models (\dec_i = 0)
        \Rightarrow B_i^\N (\exists 0 \wedge
      C^\boxdot_{\N \land \cO}\exists 0 \wedge \neg (\dec_i = 1))$
    \item $\I_{\gamma_{\fip,n,t},P} \models (\dec_i = 1)
        \Rightarrow B_i^\N (\exists 1 \wedge
      C^\boxdot_{\N \land \Z}\exists 1 \wedge \neg (\dec_i= 0))$
  \end{enumerate}
 \end{corollary}
}

\subsubsection{$\kbp^1$ is optimal}

To show that $\kbp^1$ is optimal, we first show that it suffices to
prove that weak safety implies optimality, and then show $\kbp^1$ is
weakly safe.

\rethm{thm:kbp1opt}
  If $\kbp^1$ is weakly safe with respect to $\gamma_{\fip,n,t}$
  then all implementations of $\kbp^1$
  are optimal with respect to $\gamma_{\fip,n,t}$.
\erethm
\begin{proof}
  \commentout{
  The proof uses the characterization of optimality from \cite{HMW}.
  Let $P$ be an implementation of $\kbp^1$ with respect to $\gamma_{\fip,n,t}$
  and $\I = \I_{\gamma_{\fip,n,t},P}$.

  We first define two operators $B_i^\N \phi$ and 
  $C_{\cS}^\boxdot \phi$ that
  are used in this characterization. $B_i^\N$ is an abbreviation of
  $K_i(i \in \N \rimp \phi)$. Thus, 
  $\I,(r,m) \models B_i^\N\phi$ if and only if  
  $\I,(r',m') \models \phi$ for all points $(r',m')$
  such that $r_i(m) = r'_i(m')$ and $i \in \N(r')$.
 Intuitively, $B_i^\N$ holds if $i$ knows that if it is nonfaulty, then
 $\phi$ holds.
  
  The common knowledge operator has a characterization in terms of 
  $\cS$-$\boxdot$-reachability. In \cite{HMW}, it is shown that 
  $\I,(r,m) \models C_\cS^\boxdot \phi$ if and only if 
  $\I,(r',m') \models \phi$ for all points $(r',m')$ 
  that are $\cS$-$\boxdot$-reachable from $(r,m)$.  We are interested
  in the cases that $\cS$ is either
  $\N \land \cO$ or $\N \land \Z$.  Theorem 5.4 in 
  \cite{HMW} shows that $\kbp^1$ is an optimal EBA protocol 
  if the following holds:
  \begin{align*}
    \I \models i \in \N \rimp (\Circ(\decided_i = 0)
        \Leftrightarrow B_i^\N (\exists 0 \wedge
      C^\boxdot_{\N \land \cO}\exists 0 \wedge \neg (\Circ(\decided_i = 1)))) \\
    \I \models i \in \N \rimp (\Circ(\decided_i = 1)
        \Leftrightarrow B_i^\N (\exists 1 \wedge
      C^\boxdot_{\N \land \Z}\exists 1 \wedge \neg (\Circ(\decided_i = 0))))
  \end{align*}
}
Suppose that $\kbp^1$ is weakly safe with respect to
$\gamma_{\fip,n,t}$.
  To prove that all implementations of $\kbp^1$ are optimal with
  respect to $\gamma_{\fip,n,t}$,
we use Theorem~\ref{thm:HMWchar}.  
  It suffices to prove the only if 
  direction, since Proposition 4.3 in \cite{HMW} shows that the if direction 
  holds for EBA protocols, and Proposition~\ref{p:kbp1correct} shows
  that $\kbp^1$ is an EBA protocol. 
    
  Suppose that $\kbp^1$ is weakly safe with respect to
  $\gamma_{\fip,n,t}$. Let $P$ be an implementation of $\kbp^1$ and
  let $\I = \I_{\gamma_{\fip,n,t},P}$,
  We give the argument for the $\N\land\cO$ case. 
  We first assume that $\I,(r,m) \models i \in \N$ for some point $(r,m)$. 
 In terms of $(\N\land\cO)$-$\boxdot$-reachability, we  want to show that
 if for all points $(r',m)$ such that $i \in \N(r')$, 
  $\I,(r',m) \models \exists 0 \land \neg (\Circ(\decided_i = 1))$, and 
  for all points $(r'',m')$ that are $(\N\land\cO)$-$\boxdot$-reachable from 
  $(r',m)$, we have $\I,(r'',m') \models \exists 0$, then 
  $\I,(r,m) \models \Circ(\decided_i = 0)$. Suppose by way of contradiction 
  that for all $(r',m)$ such that $i \in \N(r')$ the condition above holds but 
  $\I,(r,m) \models \neg \Circ(\decided_i = 0)$. Then, 
$\I,(r,m) \models i \in \N \land \Circ(\decided_i = \bot)$ and by weak safety, 
  there exist points $({r^1}',m)$ and $({r^1}'',m)$ such that
  $r_i(m) = {r^1}'_i(m)$,
  $i \in \N({r^1}')$,
  $({r^1}'',m')$ is $(\N\land\cO)$-$\boxdot$-reachable from $({r^1}',m)$, and
  $\I,({r^1}'',m') \models \neg \exists 0$. 
  This is a contradiction.
  Since this holds for all $(r,m)$, the only-if direction of the 
  first optimality condition holds.
  The argument for the $\N\land\Z$ case is completely analogous.

  Therefore all implementations $P$ of $\kbp^1$ with respect to 
  $\gamma_{\fip,n,t}$ are optimal.
\end{proof}

Let $P$ be an implementation  on $\kbp^1$ and let $\I =
\I_{\gamma_{\fip,n,t},P}$..  
We want to show that $\kbp^1$ is weakly safe.  So suppose that
$\I,(r,m) \models i \in \N \land \Circ(\decided_i = \bot)$.  We need
to show that there exist points       $({r^0}',m)$, $({r^0}'',m)$,
$({r^1}',m)$, and $({r^1}'',m)$ satisfying the conditions of weak
safety.  Before we do this, we introduce an invariant.

\begin{definition}[Invariant condition for $v \in \{0,1\}$]
    $\I,(r,m) \models \inv_{v}(i,j,k)$ if and only if there exist distinct 
agents  $i,j,k$ such that:
  \begin{itemize}
    \item $\I,(r,m) \models (\decided_{i} = \bot) \land (\deciding_{j} = v)
          \land (\decided_{k} = \bot)$,
    \item $\I,(r,m) \models i \in \N \land j \in \N \land k \not\in \N$, and
    \item 
            $k$ does not exhibit any faulty behaviour throughout $r$.
   \end{itemize}
\end{definition}

The motivation for the constraint on $k$ in the definition is the following: 
\begin{lemma} \label{lem:notCN}
  If $k \not \in \N(r)$ and $k$ does not exhibit faulty behavior in $r$,
  then $\I,(r,m) \models \neg
\ck{\N}(\faultyag)$ for all $m$. 
\end{lemma}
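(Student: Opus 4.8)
\textbf{Proof plan for Lemma~\ref{lem:notCN}.}

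The plan is to argue by contradiction, exploiting the characterization of $C_\N(\faultyag)$ in terms of $\dist_\N(\faultyag)$ from Proposition~\ref{lemma:dfaulty}(a), together with the observation that an agent $k$ that never exhibits faulty behavior can never be \emph{seen} to be faulty by any other agent in a full-information run. Concretely, suppose toward a contradiction that $\I,(r,m) \models \ck{\N}(\faultyag)$ for some $m$. Since common knowledge of the faulty set cannot hold at time $0$, we have $m \ge 1$, and by Proposition~\ref{lemma:dfaulty}(a), $\I,(r,m-1) \models \dist_\N(\faultyag)$. Unwinding the definition of $\dist_\N(\faultyag)$, there is a set $A \subseteq \Agents$ with $|A| = t$ such that for every $i' \in A$ there is some nonfaulty agent $j'$ with $\I,(r,m-1) \models K_{j'}(i' \notin \N)$. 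Because $k \notin \N(r)$ and there are at most $t$ faulty agents, $A$ must be exactly the set of faulty agents, so $k \in A$; hence some nonfaulty agent $j'$ satisfies $\I,(r,m-1) \models K_{j'}(k \notin \N)$.

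The heart of the argument is then to show this is impossible: if $k$ behaves exactly according to protocol throughout $r$ (sends every message it is supposed to send, none of it blocked, since the only omissions in $SO(t)$ come from faulty senders and $k$'s messages are never blocked), then for \emph{any} nonfaulty agent $j'$ and any time $m'$ there is a run $r^\ast$ that $j'$ cannot distinguish from $r$ at time $m'$ in which $k$ is nonfaulty. The natural construction is: let $r^\ast$ be the run with the same initial states as $r$ and the same failure pattern except that $k$ is moved into the nonfaulty set, i.e.\ the adversary of $r^\ast$ is $(\N(r) \cup \{k\}, F)$ where $F$ is the failure map of $r$ restricted so that no message from $k$ is blocked --- but since $k$ never had its messages blocked in $r$ anyway (it displays no faulty behavior), $F$ is already unchanged, and $r^\ast$ has at most $t-1 \le t$ faulty agents, so $r^\ast \in \R_{\exchange,SO(t),P}$. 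An easy induction on $m'$ (using that $\exchange$ is a full-information protocol and that the message traffic in $r$ and $r^\ast$ is literally identical, since the only difference is a label in the environment state and $\pi$ does not let agents observe $\N$ directly from their local state) shows $r_{j'}(m') = r^\ast_{j'}(m')$ for every agent $j'$ and every time $m'$. In particular $r_{j'}(m-1) = r^\ast_{j'}(m-1)$, so $\I,(r,m-1) \models \neg K_{j'}(k \notin \N)$ because $\I,(r^\ast,m-1) \models k \in \N$ --- contradicting the conclusion of the previous paragraph.

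The main obstacle is making the indistinguishability induction airtight: one must confirm that nothing in an agent's local state in the FIP depends on $\N$ except through observed message omissions, so that swapping $k$ from faulty to nonfaulty (with $k$'s actual message behavior held fixed) is genuinely invisible to every agent at every time. This is exactly the point emphasized in the text preceding Proposition~\ref{p:kbp1correct} --- that in $\gamma_{\fip,n,t}$ the local states do not record $\decided_i$, $\rd_i$, or $\N$, and corresponding runs of any two decision protocols are literally identical --- so the induction is routine given that setup; I would simply cite it. One small point to handle is the degenerate case $t = 1$ (so $A = \{k\}$) and the case where $\N(r)$ might already be all of $\Agents \setminus \{k\}$: these cause no difficulty since the construction only enlarges the nonfaulty set and never increases the failure count. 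I would also remark that the hypothesis $n - t \ge 2$, used elsewhere in the paper, is not actually needed here, though it does no harm to assume it.
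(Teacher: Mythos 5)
Your core construction---the run with the same initial states and the same failure pattern in which $k$ is simply relabeled as nonfaulty, indistinguishable from $r$ to every agent at every time---is exactly the paper's proof, which concludes directly that no nonfaulty agent (and hence certainly not the nonfaulty agents commonly) can know $k \notin \N$, so $\neg \ck{\N}(\faultyag)$ holds at every point of $r$. The detour through Proposition~\ref{lemma:dfaulty}(a) and $\dist_\N(\faultyag)$ is unnecessary (and quietly reintroduces the $n-t\ge 2$ hypothesis you correctly observe is not needed); since $C_\N$ implies that each nonfaulty agent knows the agents in the witnessing set $A \ni k$ are faulty at time $m$ itself, your indistinguishability argument already yields the contradiction without stepping back to time $m-1$.
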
 

\begin{proof} 
If 
If $k \notin \N(r)$ but does not exhibit fauilty behavior, let $r'$ be
a run such that $\N(r') = \N(r) - \{k\}$, the failure pattern in $r$
  and $r'$ is the same, and all agents have the same initial
  preferences in $r$ and $r'$.  Clearly, for all times $m$, $(r,m)
  \sim_i (r',m)$ and $\I,
  (r',m) \not\models \faultyag$, so $\I,(r,m) \models  \neg
\ck{\N}(\faultyag)$.
\end{proof}

It follows immediately that
$\I,(r,m) \models \inv_1(i,j,k)$, then no agent decides using the
common knowledge conditions in $r$.

We are now ready to construct the points $(r_i^{0''},m)$ and
$(r^{1''},m)$ required for weak safety.
The main part of the argument for this case is done by
Lemmas~\ref{lemma:dir1-induction} and~\ref{lemma:o-contra}.
Lemma~\ref{lemma:dir1-induction} shows that once we are at a point where 
the invariant condition for 1 holds, we can $(\N\land\cO)$-$\boxdot$-reach 
a point where every agent has initial preference 1. 
Lemma~\ref{lemma:o-contra} shows that if a nonfaulty agent $i$ is unable to 
decide, $i$ must consider possible a run where the invariant condition for 1
holds.
The desired result follows from these two lemmas.

\newcommand{\edges}{E}

The following technical lemma will play a key role in our proof
of weak safety.
For a point $(r,m)$, we write $\edges(r,m)$ for the set of edges
$(i,k-1) \lamport{r} (j,k)$  
with $1 \leq k \leq m$.

\begin{lemma} \label{lemma:zchaincut} 
  If $i\in \N(r)$, $\I,(r,m) \models \neg \ck{\N}(\faultyag)$,
  $\latest_i(r,m) =\ell < m$, and $\I,(r,m) \models \decided_i =
  \bot$, then there
  exists a run $r'$
such that  
\begin{itemize} 
\item $(r,m) \sim_i(r',m)$,
  \item $\N(r) = \N(r')$,
\item $\edges(r',m) \subseteq \edges(r,m)$, 
\item all \zchain{s} in $r'$ are known to agent $i$ at time $m$; that is, if 
  $i_0, \ldots i_k$ is a \zchain in $r'$, then $(i_k, k)
  \lamport{r'} (i,m)$,  
\item all agents that do not exhibit faulty behavior in $r$ also do
  no not exhibit faulty behavior in $r'$. 
\end{itemize}	
\end{lemma}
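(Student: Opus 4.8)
\textbf{Proof plan for Lemma~\ref{lemma:zchaincut}.}

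The plan is to build $r'$ from $r$ by a careful ``surgery'' on the
failure pattern that cuts exactly those message-deliveries which could
create a \zchain that agent $i$ is not aware of, while preserving
agent $i$'s view at time $m$ and keeping the nonfaulty set fixed.
First I would fix $i \in \N(r)$, let $\ell = \latest_i(r,m) < m$, and
note that by Lemma~\ref{lemma:chain} (applicable since $\I,(r,m)
\models \neg\ck{\N}(\faultyag)$ implies, via
Lemma~\ref{lemma:cfaultyimpliesall}, that the common knowledge
conditions fail at $(r,m)$ and, by Lemma~\ref{lemma:cfaulty}, at all
earlier times as well) a \zchain is present in $r$ exactly when some
agent is ``deciding $0$'' in the $\kbp^1$ sense. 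The set
$\edges(r,m)$ of ``causal'' edges $(j,k-1)\lamport{r}(j',k)$ for $1\le
k\le m$ is precisely what determines $i$'s local state at $(r,m)$,
since $P$ is a full-information protocol; so any run $r'$ with
$\edges(r',m)\subseteq\edges(r,m)$ and $\edges(r',m)$ containing all
edges into $(i,m)$'s causal cone will satisfy $(r,m)\sim_i(r',m)$
provided the initial preferences of all agents in that cone are
unchanged.

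The construction itself: let $r'$ agree with $r$ on all initial
preferences and on the membership of $\N$, and define the failure
pattern $F'$ of $r'$ to block (set to $0$) every message delivery
$(j,k-1)\to(j',k)$ such that $(j',k)\not\lamport{r}(i,m)$ \emph{and}
either $j$ or $j'$ is faulty in $r$ (so that the extra omission is
charged to an already-faulty agent and $\N(r')=\N(r)$ is preserved);
otherwise $F'$ agrees with $F$. This immediately gives
$\edges(r',m)\subseteq\edges(r,m)$ and leaves every edge of the causal
cone of $(i,m)$ intact, from which $(r,m)\sim_i(r',m)$ follows by the
usual induction on $k\le m$ using full information. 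Any agent that
exhibits no faulty behavior in $r$ is nonfaulty in $r$, hence is never
the endpoint of a blocked-new edge (we only block edges incident to
agents faulty in $r$), so it still exhibits no faulty behavior in $r'$
--- this handles the last bullet. It remains to verify the key bullet:
every \zchain in $r'$ is known to $i$ at time $m$. Here I would argue
by induction on the length of a putative \zchain $i_0,\dots,i_k$ in
$r'$: the chain consists of edges $(i_{s-1},s-1)\lamport{r'}(i_s,s)$
for $s=1,\dots,k$ together with deciding-$0$ facts; since such edges
lie in $\edges(r',m)\subseteq\edges(r,m)$, the whole chain is present
in $r$ as well, and by the definition of $\latest_i$ and Lemma~\ref{lemma:chain}
the last node $(i_k,k)$ must satisfy $(i_k,k)\lamport{r}(i,m)$, for
otherwise $i_k$ would be ``deciding $0$'' at a time not seen by $i$,
contradicting $\latest_i(r,m)=\ell$. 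But $(r,m)\sim_i(r',m)$ forces
$(i_k,k)\lamport{r'}(i,m)$ as well (the causal cone of $(i,m)$ is
identical in $r$ and $r'$ by construction), so the chain is known to
$i$ at time $m$.

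The main obstacle I anticipate is the verification that blocking these
extra edges does not, through the state-update dynamics, cause an agent
to decide $0$ earlier than in $r$ and thereby \emph{create} a new short
\zchain that escapes $i$'s view. The delicate point is that cutting an
incoming edge to some agent $j$ can only \emph{delay} or remove $j$'s
knowledge, never add to it, so $j$ cannot newly satisfy the
deciding-$0$ test $\init_j=0 \lor K_j(\bigvee_{j'}\jdecided_{j'}=0)$;
but one must also check the deciding-$1$ branch and the common-knowledge
branches do not misfire, which is where Lemma~\ref{lem:notCN} and the
fact that any agent not exhibiting faulty behavior in $r$ still does
not in $r'$ are needed (so $\neg\ck{\N}(\faultyag)$ persists in $r'$).
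I would organize this as a single induction on $k\le m$ proving
simultaneously that (a) every agent in the causal cone of $(i,m)$ has
the same local state in $r$ and $r'$ at time $k$, and (b) no agent
outside that cone decides $0$ at time $k$ in $r'$ unless it already
did so in $r$ via a chain known to $i$ --- the inductive step for (b)
being the substantive work, and essentially mirroring the inductive
argument already used in the proof of Proposition~\ref{lemma:notdeciding1}.
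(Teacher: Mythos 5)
There is a genuine gap, and it lies exactly where your construction departs from what is needed. You keep all initial preferences fixed, but then any \emph{faulty} agent $j_0$ with $\init_{j_0}=0$ whose node $(j_0,0)$ lies outside the causal cone of $(i,m)$ still decides $0$ in round $1$ of $r'$ (the decide-$0$ test fires on $\init_{j_0}=0$ alone, regardless of which incoming messages you cut). That makes $j_0$ a \zchain in $r'$, and since you only \emph{remove} edges, $(j_0,0)\not\lamport{r}(i,m)$ forces $(j_0,0)\not\lamport{r'}(i,m)$, so the fourth bullet fails outright. The paper's construction avoids this by flipping the initial preference of every $0$-valued agent outside the cone to $1$ — killing would-be chains at their source — and, at nodes outside the cone, blocking only messages from agents that already exhibit faulty behavior in $r$. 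Your blocking rule has two further problems: in the sending-omissions model $SO(t)$ you cannot block a message from a nonfaulty sender to a faulty receiver while keeping $\N(r')=\N(r)$ (only faulty senders may omit), and blocking messages sent by a faulty-but-clean agent destroys the fifth bullet, which is precisely the property the lemma must preserve (it is what keeps the $\inv_1$ witness $k$ clean in the downstream reachability arguments).

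Your verification of the key bullet also rests on a misreading of $\latest_i$: $\latest_i(r,m)=\ell$ bounds only the deciding-$0$ events that $i$ \emph{hears about}, not all deciding-$0$ events in the run, so a chain in $r$ ending outside the cone yields no contradiction with $\latest_i(r,m)=\ell$. Moreover, the step ``the chain's edges lie in $\edges(r,m)$, hence the chain is present in $r$'' is unsound: a \zchain requires the agents to actually decide $0$ at the right times, and cutting incoming edges can shift decisions — in particular it can delay a $1$-decision and leave an agent free to join or extend a $0$-chain in $r'$ that did not exist in $r$. This is exactly the obstacle you flag in your final paragraph, but it is not a routine check to be deferred; it is the substantive content of the lemma, and the paper handles it not by an induction of the kind you sketch but by the two structural moves above (flipping initial values outside the cone and restricting incoming messages there to nonfaulty/clean senders), after which a short argument shows any surviving chain must end inside $i$'s cone.
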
 

\begin{proof} 
Note that since $\I,(r,m) \models \decided_i = \bot$ and $\latest_i(r,m)< m$, we cannot have 
$\I,(r,m) \models \deciding_i =0$, so in fact agent $i$ cannot hear
from any agent that decides 0 in round $m$.  
This means that we must have $\ell = \latest_i(r,m) <m-1$.
We construct $r'$ by modifying $r$ appropriately.  For all $(j,m') \not
\lamport{r} (i,m)$,
if  $m'=0$ and $\I,(r,0) \models \init_j = 0$, then we modify
$\init_j$ to 1 in $r'$, and if $m'>0$, we restrict the 
messages arriving at $(j,m')$ to be only those from the nonfaulty
agents in $\N(r)$, and the agents that exhibit no faulty behavior in
$r$. We have 
$\N(r') = \N(r)$.  

Clearly,  $\edges(r',m) \subseteq \edges(r,m)$. 
Since we do not modify the failure pattern for messages corresponding to
pairs on the path $(j,m') \lamport{r} (i,m)$, 
we have $(r,m) \sim_i (r',m)$. Moreover, the construction changes only the failure behaviour of agents who
exhibit faulty behavior in $r$, so all agents that do not exhibit
faulty behavior in $r$ also do not exhibit faulty behavior in $r'$. 

Note that because $\I,(r,m) \models i\in \N \land \neg \ck{\N}(\faultyag)$ and $(r,m) \sim_i(r',m)$, 
we have $\I,(r',m) \models \neg \ck{\N}(\faultyag)$, so in $r'$, no
agent decides using the common knowledge conditions before round
$m+2$. In particular, any 0-decisions made before this round must be
made using the fourth line of the knowledge-based program.  

The construction guarantees that if $i_0, \ldots, i_\ell$ is a \zchain
in $r'$, then $(i_\ell, \ell) \lamport{r'} (i,m)$.
To see this, first note that none of $i_0, \ldots, i_\ell$ can be
nonfaulty, otherwise $i$ would receive a \zchain in $r$ and $r'$,
and decide 0 before time $m$, contradicting the assumption that
$\I,(r,m) \models \deciding_i = \bot$.  If $\ell > 0$, then the fact
that $i_\ell$ received a message from $i_{\ell-1}$ in $r'$ and
$i_{\ell-1}$ is faulty means that $(i_{\ell},\ell) \lamport{r}
(i,m)$.  It easily follows that $(i_{\ell},\ell) \lamport{r'}
(i,m)$.  Thus, $i$ knows about this \zchain in $r'$.  If $\ell = 0$, then
either $(i_0,0) \lamport{r} (i,m)$, hence $(i_0,0) \lamport{r'}
(i,m)$, or the initial value of $i_0$ was changed to 1 in $r'$, so this is not
in fact a \zchain.

It remains to show that $r'$ has no \zchain{s}
of length greater than $\ell$.
Suppose to the contrary that that $i_0, \ldots, i_{\ell+1}$ is a
\zchain in $r'$ of length $\ell+1$.
If $\ell \ge 0$, then
since $\latest_i(r,m) = \ell$, we must have $(i_{\ell+1} ,\ell+1)
\not \lamport{r'} (i,m)$, for otherwise we would have  
$(i_{\ell+1} ,\ell + 1) \lamport{r} (i,m)$ and $i_0, \ldots,
i_{\ell+1}$  would be a \zchain in $r$ of length $\ell+1 >
\latest_i(r,m)$, a contradiction. 
\commentout{
We cannot have $\ell+1 = 1$, since then we have $\I,(r',0) \models
\init_{i_0} = 1$ by construction, contradicting 
that $i_1$ is a $\zchain$. On the other hand if $\ell+2 > 1$, then in round $\ell+1$, agent $i_{\ell+2}$
receives messages only from 
agents that do not exhibit faulty behavior in $r$. Since $\ell+1 <m$, none 
of these agents can have just decided 0, since otherwise
$i$ would also receive its message, and $i$ would decide 0 in round 
$\ell+2\leq m$, 
so that $\latest_i(r,m) > \ell$, a contradiction.
Consequently, we have $\I,(r,\ell+1) \models \neg \decides_{i_{\ell+2}} = 0$. But this contradicts the assumption 
that $i_1, \ldots, i_{\ell+2}$ is a \zchain in $r'$. We conclude that $r'$ contains no \zchain{s} longer than $\ell+1$. 
}
If $\ell = -1$ (i.e., $i$ does not know about any \zchains in $r$), suppose there
is \zchain in $r'$.  Then there must be some agent $i_0$ with an
initial preference of 0 in $r'$.  We cannot have $(i_0,0) \lamport{r}
(i,m)$, for otherwise $i$ would know about a \zchain in $r$.  But
then our construction guarantees that the initial preference of $i_0$
in $r'$ is 1, not 0.  We conclude that $r'$ contains no \zchain{s}
of length at least  $\ell+1$, as desired.
\end{proof}

\begin{lemma} \label{lemma:rollback}
  If 
  $m \geq 2$,
  $\I,(r,m) \models \inv_{1}(i,j,k)$, and 
  $\latest_j(r,m) \leq m-3$,
    then there exists a run $r^{*}$ such that 
  $\I,(r^{*},m-1) \models \inv_{1}(i',j',k')$ for some $i',j',k' \in \Agents$,
   $\edges(r^*,m-2) \subseteq \edges(r,m-2)$, and $\N(r) \cup \{k\} =
  \N(r^*) \cup \{k'\}$. 
\end{lemma}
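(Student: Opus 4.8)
\textbf{Proof plan for Lemma~\ref{lemma:rollback}.}

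The plan is to perform ``surgery'' on the run $r$ to roll back time by one step while preserving an instance of the invariant $\inv_1$. The hypothesis $\I,(r,m)\models\inv_1(i,j,k)$ gives us distinct agents $i,j,k$ with $i$ undecided and nonfaulty, $j$ nonfaulty and deciding $1$ at $(r,m)$, and $k$ faulty but exhibiting no faulty behavior in $r$. Since $j$ decides $1$ at time $m$ and (by Lemma~\ref{lem:notCN} applied to $k$) no agent decides using the common knowledge conditions in $r$, the decision of $j$ must come from line~5 of $\kbp^1$, i.e. $\I,(r,m)\models K_j(\bigwedge_{\ell}\neg(\deciding_\ell=0))$. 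First I would record the consequences of $\latest_j(r,m)\le m-3$: by Corollary~\ref{lemma:deciding1} (or rather the analysis behind Proposition~\ref{lemma:notdeciding1}), this is exactly the slack we need so that, even after deleting one round of incoming messages to $j$, agent $j$ still knows no agent is deciding $0$. Concretely, $\latest_j(r,m)\le m-3$ means the most recent $0$-deciding agent that $j$ hears about acted at time $\le m-3$, so $j$'s ``view'' at time $m-1$ already rules out current $0$-decisions provided there are not too many agents $j$ failed to hear from.

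Next I would construct $r^*$. The idea is to take the run whose first $m-1$ global states agree with $r$, but where at round $m$ the ``faultiness'' is rearranged: we want $k$ to now play the role that some other agent played (so $k$ can be the undecided nonfaulty agent or one of the roles), and we want the deciding-$1$ structure to survive one step earlier. More precisely, I expect the cleanest route is: let $r^*$ be the run with $r^*(t)=r(t)$ for $t\le m-1$ and with the failure pattern modified only from round $m$ onward, choosing $\N(r^*)$ so that $\N(r)\cup\{k\}=\N(r^*)\cup\{k'\}$ where $k'$ is whichever agent we now designate as the ``faulty but clean'' witness. Since $r^*$ agrees with $r$ up to time $m-1$, we automatically get $\edges(r^*,m-2)\subseteq\edges(r,m-2)$ (indeed equality on that range). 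The three agents $i',j',k'$ witnessing $\inv_1$ at $(r^*,m-1)$ would be: $j' = j$ (still nonfaulty, now shown to be deciding $1$ at time $m-1$ using the $\latest$ bound and Proposition~\ref{lemma:notdeciding1}), $i'$ an undecided nonfaulty agent at time $m-1$ (which exists because $i$ is undecided at time $m$ hence at $m-1$, and $i\in\N(r)=\N(r^*)$ if we keep $i$ nonfaulty; if the role reassignment forces $k'$ to take $i$'s slot we instead set $i'=i$ and $k'$ elsewhere), and $k'$ a faulty agent exhibiting no faulty behavior in $r^*$ — obtained by making the clean agent $k$ from $r$ continue to be clean, or by swapping in a fresh faulty agent whose only ``fault'' is being declared faulty.

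The key steps in order: (1) from $\inv_1(i,j,k)$ and Lemma~\ref{lem:notCN}, deduce $j$'s decision uses line~5 and $\I,(r,m-1)\models\decided_j=\bot$; (2) use $\latest_j(r,m)\le m-3$ together with Proposition~\ref{lemma:notdeciding1} to show $\I,(r,m-1)\models K_j(\bigwedge_\ell\neg(\deciding_\ell=0))$, so $j$ is deciding $1$ already at $(r,m-1)$ — this is where the numerical hypothesis is spent; (3) check $i$ is undecided at $(r,m-1)$ and still nonfaulty; (4) exhibit $k'$: either $k$ itself still works (it is faulty, clean, undecided — clean behavior at time $m$ in $r$ extends to a clean run since we only change the failure-pattern labels, not the messages), or, if $k$ is forced into one of the other roles, rearrange $\N$ by one agent, which is exactly what the condition $\N(r)\cup\{k\}=\N(r^*)\cup\{k'\}$ permits; (5) verify $r^*$ is a legal run of $\gamma_{\fip,n,t}$, i.e. $|\Agents-\N(r^*)|\le t$, which holds because $|\Agents-\N(r)|\le t$ and we change the faulty set by at most swapping one element. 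The main obstacle I anticipate is step~(2): making precise that the $\latest_j(r,m)\le m-3$ bound leaves enough room after removing the last round of $j$'s incoming edges, i.e. ensuring that at time $m-1$ the count of agents $j$ has not recently heard from is still small enough that $\len_j(r,m-1)\le (m-1)-2$, so Proposition~\ref{lemma:notdeciding1}/Corollary~\ref{lemma:deciding1} applies and $j$ genuinely decides $1$ one step earlier rather than stalling; this is a bookkeeping argument about the $\latest$, $\last$, and $\len$ quantities that I would carry out carefully using the characterization in Proposition~\ref{lemma:notdeciding1}.
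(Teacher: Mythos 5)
Your plan has a fatal flaw at exactly the step you flag as the ``main obstacle'': it is not a bookkeeping issue, the claim in your step~(2) is contradictory with the hypothesis. You propose to take $r^*(t)=r(t)$ for $t\le m-1$ and argue that $j$ is already deciding $1$ at $(r,m-1)$. But $\deciding_j=1$ at $(r^*,m-1)$ means $j$ performs $\decide_j(1)$ at time $m-1$, and that action is a function of $j$'s local state $r^*_j(m-1)=r_j(m-1)$. In $r$ itself, $\inv_1(i,j,k)$ at $(r,m)$ gives $\decided_j=\bot$ at time $m$, so at the state $r_j(m-1)$ the protocol does \emph{not} decide; hence no run agreeing with $r$ through time $m-1$ can make $j$ (or any agent, with its $r$-local state) decide one round earlier. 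The bound $\latest_j(r,m)\le m-3$ cannot be ``spent'' to accelerate $j$'s decision in an unchanged prefix; Proposition~\ref{lemma:notdeciding1} and Corollary~\ref{lemma:deciding1} only say an agent decides $1$ if, \emph{in addition} to short known \zchain{s}, it hears from enough agents — and in $r$ agent $j$ evidently did not.

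The paper's proof works precisely because it does \emph{not} keep the prefix intact: it first applies Lemma~\ref{lemma:zchaincut} to pass (via $\sim_j$) to a run $r^\dagger$ in which \emph{all} \zchain{s} have length at most $m-3$ — note that $\latest_j(r,m)\le m-3$ only bounds the chains $j$ knows about, and hidden longer chains in $r$ would block any agent from deciding $1$, a second gap your plan does not address. It then modifies the deliveries in rounds $m$ and $m-1$ through a chain of indistinguishability-preserving steps: make $k$ temporarily nonfaulty and have it hear from everyone but $i$ in round $m$ (so $k$ decides $1$ by Corollary~\ref{lemma:deciding1}), then let $i$ hear from \emph{every} agent in round $m-1$, so that $i$ decides $1$ in round $m$, and finally restore $i$ to $\N$ while making $j$ the faulty-but-clean agent. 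Only the round-$m$ and round-$(m-1)$ edges are touched, so $\edges(r^*,m-2)\subseteq\edges(r,m-2)$ still holds, and the witness triple is the permutation $(i',j',k')=(k,i,j)$ with $\N(r^*)=\N(r)\cup\{k\}\setminus\{j\}$, which is exactly what the clause $\N(r)\cup\{k\}=\N(r^*)\cup\{k'\}$ accommodates. Your proposal's insistence on an unmodified prefix and on $j'=j$ means the central construction is missing, so the argument as outlined cannot be completed.
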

\begin{proof}
 By Lemma~\ref{lemma:zchaincut}, there exists a point
 $(r^\dagger,m) \sim_j (r,m)$ such that $\N(r) = \N(r^\dagger)$, 
  the longest \zchain in $r^\dagger$ has length  at most $m-3$,
  and $\edges(r^\dagger,m) \subseteq \edges(r,m)$.
  Moreover, since $k$ exhibits no faulty behavior in $r$, the same
 holds in $r^\dagger$.  
In addition, since 
$\I,(r,m) \models \decided_i = \bot \land\decided_j = \bot \land
\decided_k = \bot$, we also
 have  
 that $\I,(r^\dagger,m) \models \decided_i = \bot \land\decided_j = \bot \land
\decided_k = \bot$.
 
 We construct a run $r'$ by modifying $r^\dagger$ so that agent $i$ is 
 faulty, 
 $k$ is nonfaulty, 
 and agent $k$ hears from every other agent except agent 
 $i$ in round $m$. 
Since the only modifications  are in round $m$, we have $\edges(r',m-1) =
 \edges(r^\dagger,m-1) \subseteq \edges(r,m-1)$. 
 Agent  
  $j$ is nonfaulty in both $r$ and $r'$ and 
  has the same local state in both $(r,m)$ and $(r',m)$. Hence, $j \in (\N\land\cO)(r,m) \cup 
  (\N\land\cO)(r',m)$; therefore, $(r',m)$ is 
  $(\N\land\cO)$-$\boxdot$-reachable from $(r,m)$ through agent $j$.
  Moreover, the length of the longest \zchain in $r'$ is $m-3$,
    because we do not modify 
$r^\dagger$ in round $m-2$ or earlier in constructing $r'$.  
  Since $m\geq 2$ and $\latest_k(r',m) \leq m-3$, 
    in $r'$, agent $k$ decides 1 in round $m+1$
  by Corollary~\ref{lemma:deciding1}.
  
    Next consider the run $r''$ that is identical to 
  $r'$ except that agent $i$ hears from every agent in round $m-1$. 
  Here we have $\edges(r'',m-2) = \edges(r',m-2) \subseteq \edges(r,m-2)$. 
  Since the messages received by agent $k$ are identical in round $m$,
  we have $k \in (\N\land\cO)(r',m) \cup (\N\land\cO)(r'',m)$. 
  Thus, $(r'',m)$ is $(\N\land\cO)$-$\boxdot$-reachable from $(r',m)$ 
  through agent $k$.
  Because the longest \zchain in $r'$ has length $m-3$,
  and we make no change to $r'$ in round $m-2$  or earlier in
  constructing $r''$,  
  the longest \zchain in
  $r''$ aiso has length $m-3$.  
By Corollary~\ref{lemma:deciding1},
  agent $i$ decides 1 in round $m$ of $r''$
  upon hearing from every agent in round $m-1$. 
  Since agent $j$ heard from agent $i$ and $j$ did not decide earlier 
  in run $r''$, 
  agent $j$ 
  decides 1 in round $m+1$ in $r''$. 
    
  Now
    consider the run $r^*$ that is identical to $r''$ except that 
  agent $i$ is in $\N$ (so that agent $k$ does get a message from agent $i$ in round $m$), 
    and $k$ is faulty in $\N$ (but does not exhibit any faulty behavior.) 
Clearly, $\edges(r^*,m-2) = \edges(r'',m-2) \subseteq \edges(r,m-2)$. 
    Since $j$ has the same local state in both 
 $(r'',m)$ and $(r^*,m)$, and is nonfaulty in both $r''$ and $r^*$, we have
    $j \in (\N\land\cO)(r'',m) \cup (\N\land\cO)(r^*,m)$. Therefore, 
  $(r^*,m)$ is $(\N\land\cO)$-$\boxdot$-reachable from $(r'',m)$ through 
  agent $j$.

Finally, observe that 
in run $r^*$, 
$i$ is nonfaulty and decides 1 in 
round $m$, $k$ is nonfaulty and does not decide before round $m$,
and $j$ is faulty but exhibits no faulty behavior and   
does not decide before round $m$. Hence, 
  $\I,(r^*,m-1) 
  \models \inv_1(k,i,j)$. By the transitivity 
  of the $(\N\land\cO)$-$\boxdot$-reachability relation, 
  $(r^*,m-1)$ is 
  $(\N\land\cO)$-$\boxdot$-reachable from $(r,m)$, and the claim holds with
  $(i',j',k') = (k,i,j)$. Each step of the
  construction involved a swap of faultiness between a faulty and
  nonfaulty agent among $i$, $j$, and $k$, leaving the 
  faultiness of other agents invariant, so we have $\N(r^*) \cup \{k'\} = \N(r) \cup \{k\}$. 
\end{proof}

\newcommand{\sends}{\mathit{Sends}}

In the following, we write $r[0..m]$ for the prefix of the run $r$ up
to and including to time $m$, but removing 
the information about which agents are nonfaulty from the environment
state, and
write $\sends_i(r,m)$ for the set of agents $j$ such that $(i,m-1) \lamport{r} (j,m)$. 

\begin{lemma} \label{lemma:delete-pastv2}
  If $\I,(r,m) \models \inv_i(i,j,k)$ and the longest \zchain
in $r$ has length at most $m-2$,
then for all $p \not \in \N(r)\cup\{k\}$, there exists a
$(\N\land\cO)$-$\boxdot$-reachable point $(r',m')$ with  
$m' \in \{m,m+1\}$ 
such that $r[0..m-1] = r'[0..m-1]$, $\I,(r',m') \models
\inv_i(i',j',k')$, $p \not \in \{i',j',k'\}$, and  
$(p,m-1)\not  \lamport{r'} (j',m')$.  
\end{lemma}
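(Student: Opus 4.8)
\textbf{Plan for proving Lemma~\ref{lemma:delete-pastv2}.}

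The plan is to argue by reverse induction on the length of the longest \zchain in the run together with an argument that pushes a ``designated faulty but passive'' agent through the $(\N\land\cO)$-$\boxdot$-reachability relation, very much in the spirit of Lemma~\ref{lemma:rollback}. Let $\ell$ denote the length of the longest \zchain in $r$; by hypothesis $\ell \le m-2$. Fix a target agent $p \notin \N(r)\cup\{k\}$ whose influence on $j$ at time $m$ we want to remove. The first observation is that since $\I,(r,m)\models\inv_1(i,j,k)$, by Lemma~\ref{lem:notCN} no agent decides using the common knowledge conditions anywhere in $r$, so all 0-decisions in $r$ are made by line 4 of $\kbp^1$, and Corollary~\ref{lemma:deciding1} is available (the hypothesis $\ell\le m-2$ is exactly the $\len$-bound needed there, once we also control $\latest$).

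First I would handle the case $p \notin \sends_j(r,m)$, i.e. $(p,m-1)\not\lamport{r}(j,m)$ already: then $r'=r$ works with $m'=m$, taking $(i',j',k')=(i,j,k)$ and noting $p\notin\{i,j,k\}$ by choice of $p$ (since $i,k\in\N(r)\cup\{k\}$ and $j\in\N(r)$). For the substantive case $p\in\sends_j(r,m)$, the construction proceeds in stages analogous to Lemma~\ref{lemma:rollback}, using $k$ (which is faulty but exhibits no faulty behaviour) as the ``free'' agent that we temporarily make act faulty. Concretely: (1) first apply Lemma~\ref{lemma:zchaincut} to $j$ to pass to $(r^\dagger,m)\sim_j(r,m)$ in which every \zchain is known to $j$, the longest \zchain still has length $\le\ell$, $\edges(r^\dagger,m)\subseteq\edges(r,m)$, $\N(r^\dagger)=\N(r)$, $k$ still exhibits no faulty behaviour, and $r^\dagger[0..m-1]$ agrees with $r[0..m-1]$ off the pairs not $\lamport{r}$-below $(j,m)$ — here I would be careful that the cut only touches round $m$ edges into $j$'s non-ancestors, so in fact $r^\dagger[0..m-1]=r[0..m-1]$ on the relevant prefix; (2) then build $r'$ from $r^\dagger$ by making $k$ nonfaulty and $i$ faulty, with $k$ hearing from everyone except $i$ in round $m$, so that $(r',m)$ is $(\N\land\cO)$-$\boxdot$-reachable from $(r,m)$ through $j$ (same local state for $j$, and $j$ is about-to-decide-1 and nonfaulty in both), and in $r'$ agent $k$ decides 1 in round $m+1$ by Corollary~\ref{lemma:deciding1} since the longest \zchain is unchanged at $\le\ell\le m-2$; (3) modify round $m-1$ so that $p$ no longer sends to $j$ (block the message $p\to j$ in round $m-1$) — this is legitimate because $p\notin\N$, so it is an allowed sending-omission, and it does not increase the number of faulty agents or change $\N$; (4) swap the roles of $i$ and $k$ back so that $i\in\N$ and $k$ is faulty-but-passive again, obtaining $r'$ with $(r',m)$ still $(\N\land\cO)$-$\boxdot$-reachable from the previous point through $j$ (whose local state and nonfaultiness are preserved). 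Along the way one checks $\edges(r',m-2)\subseteq\edges(r,m-2)$, hence $r'[0..m-1]$ equals $r[0..m-1]$ up to the one blocked edge $p\to j$ in round $m-1$; one then re-examines whether the statement wants exact equality $r[0..m-1]=r'[0..m-1]$ — if so, the blocked edge must be chosen so that it only affects $j$'s round-$m$ state, which is fine since we are free to block $p\to j$ in round $m-1$ only, not any other message, so agents other than $j$ are untouched through time $m-1$, and $j$'s time-$(m-1)$ state is also untouched (only its time-$m$ state changes). Finally, read off $\inv_1(i',j',k')$ at $(r',m')$ with $m'\in\{m,m+1\}$: if $j$ still decides 1 in round $m+1$ in $r'$ (which it does, since blocking one incoming message in round $m-1$ from a faulty agent cannot turn a $\deciding_j=1$ into a $\deciding_j=0$ — by Proposition~\ref{lemma:notdeciding1}, fewer incoming edges can only shrink the set of agents $j$ considers possibly-deciding-0, hence preserves $K_j\neg(\exists$ agent deciding $0)$), take $m'=m$ and $(i',j',k')=(i,j,k)$; and crucially $(p,m-1)\not\lamport{r'}(j,m)$ by construction. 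If the blocking happened to shift $j$'s decision round by one, use $m'=m+1$ with the corresponding shifted witnesses $(i,j,k)$, which still satisfy $\inv_1$ at $(r',m+1)$ since $i$ and $k$ are still nonfaulty-and-undecided and $j$ is still deciding $1$.

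The main obstacle I expect is \emph{monotonicity of the ``cannot decide 1'' condition under edge deletion}: we need that blocking the single message $p\to j$ in round $m-1$ does not destroy the invariant $\inv_1(i,j,k)$ by causing some agent to decide 0 earlier (which would both violate $\decided = \bot$ clauses and potentially create a longer \zchain). The clean way to control this is to block the edge $p\to j$ and \emph{only} that edge, so that no agent other than $j$ changes state through time $m-1$; then no new \zchain can be created (a \zchain ending at $j$ at time $m-1$ would require $j$ to have decided 0 at time $m-1$, which it has not, since $\I,(r,m)\models\deciding_j=1$ means $\decided_j=\bot$ up through time $m-1$), so the longest \zchain length is still $\le\ell\le m-2$, and by Proposition~\ref{lemma:notdeciding1} removing an incoming edge to $j$ can only decrease $j$'s uncertainty about agents deciding 0, hence $j$ still (at worst, after a one-round delay) decides 1. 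A secondary subtlety is bookkeeping the $\N$-invariance: each faultiness swap is between one of $\{i,j,k\}$ and another of $\{i,j,k\}$ (or keeps $p\notin\N$ throughout), so $\N(r')\cup\{k'\}$ is controlled exactly as in Lemma~\ref{lemma:rollback} — though note the present statement only asserts $p\notin\{i',j',k'\}$ and $p\not\lamport{r'}(j',m')$, not an $\N$-equation, so this bookkeeping is lighter than in Lemma~\ref{lemma:rollback} and I would only track enough of it to ensure $p$ stays faulty (so the edge-blocking is always a legal sending omission) and that $i',j',k'$ are distinct with the right nonfaulty/faulty/decided status.
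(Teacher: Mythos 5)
There is a genuine gap, and it sits exactly where the real difficulty of this lemma lies. Your key step is to block the round-$m$ message from $p$ to $j$ and then argue that ``removing an incoming edge to $j$ can only decrease $j$'s uncertainty about agents deciding 0, hence $j$ still decides 1.'' This monotonicity is backwards: by Proposition~\ref{lemma:notdeciding1}, what prevents $j$ from deciding 1 is precisely the existence of enough agents it has \emph{not} heard from recently (they could be silently extending a hidden \zchain), so deleting the edge from $p$ makes $p$ one more such agent and can destroy $K_j(\bigwedge_{j'}\neg(\deciding_{j'}=0))$ (compare Corollary~\ref{lemma:deciding1}, which needs $j$ to hear from all but \emph{one} agent). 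So after your step (3) you may no longer have $\deciding_j=1$, and then neither your ``$m'=m$'' conclusion nor your final swap-back has a bridge agent: an $(\N\land\cO)$-$\boxdot$ step needs an agent that is nonfaulty and deciding/decided 1 with the \emph{same local state} at both points (mere legality of the omission, which is what you invoke, is not enough), and after the blocking $j$ may fail to qualify while $i$ and $k$ change faultiness across the swap. Your fallback ``use $m'=m+1$ with $(i,j,k)$'' also fails on the conclusion $(p,m-1)\not\lamport{r'}(j,m+1)$: you never cut $p$'s round-$m$ messages to the \emph{other} agents, and with full information those agents relay $p$'s time-$(m-1)$ state to $j$ in round $m+1$. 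This is why the paper's proof first isolates $p$ (only $j$ hears from $p$ in round $m$), then manufactures new deciding-1 bridge agents at time $m+1$ (first $i$, then $k$, temporarily making $j$ faulty and repairing it), and finally splits into two cases, ending either at $(r',m)$ with $(i,j,k)$ or at $(r',m+1)$ with the permuted triple $(k,i,j)$ in which $j$ plays the faulty-but-passive role --- the construction accepts that $j$ may stop deciding 1 rather than assuming it cannot.

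A second, independent problem is your step (1): Lemma~\ref{lemma:zchaincut} changes initial preferences and blocks deliveries at rounds $\leq m-1$ for agents outside the causal past of $(j,m)$, so $r^\dagger[0..m-1]\neq r[0..m-1]$ in general, which violates the prefix equality the lemma must deliver (this is also why the paper's proof of this lemma confines every modification to rounds $m$ and $m+1$). The step is in any case unnecessary: the hypothesis already says the longest \zchain in $r$ itself has length at most $m-2$, which is all that Corollary~\ref{lemma:deciding1} needs.
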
 

\newcommand{\run}[1]{r^{#1}} 
\newcommand{\reach}{$(\N\land\cO)$-$\boxdot$ reachable\xspace}

\begin{proof} 
We construct an $(\N\land\cO)$-$\boxdot$-path that establishes the result. 
We remark that $\inv_1$ is generally not maintained along this path: we falsify $\inv_1$, 
but re-establish it in the final step.
The construction has several branches, depicted in Figure~\ref{fig:no-path1}, 
in which we show just the changes made to obtain each successive run,  
and track failure edges in the causality graph. Agent timelines are depicted horizontally. 
Failure edges are indicated by dashed lines, and nodes labelled 1 indicate that the agent 
is deciding 1 at that node. 

\begin{figure} 
\centerline{\includegraphics[height=8in]{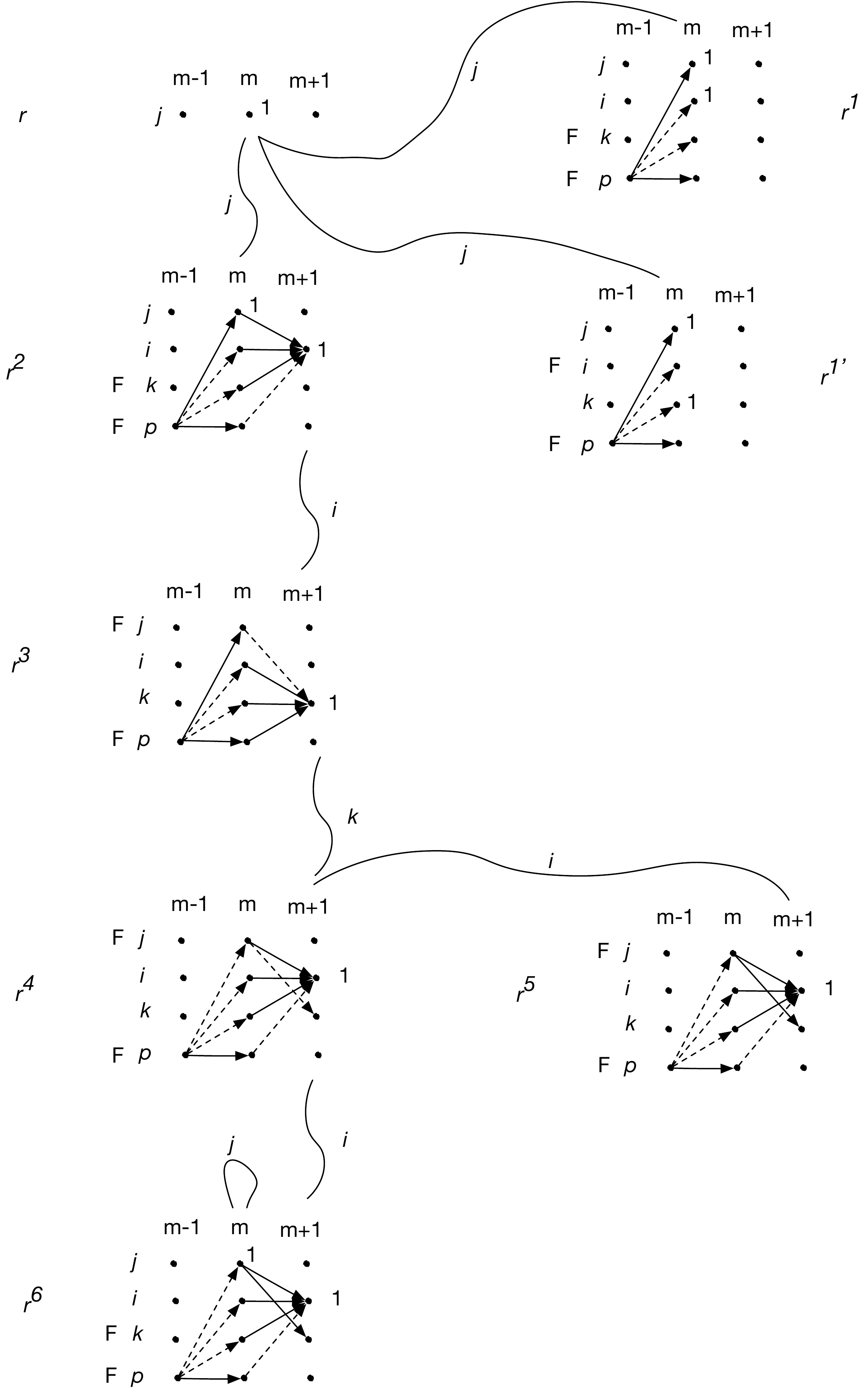}}
\caption{Construction for Lemma~\ref{lemma:delete-pastv2} \label{fig:no-path1}} 
\end{figure} 

If $(p,m-1)\not  \lamport{r} (j,m)$, we can take $(r',m') = (r,m)$ and
$(i',j',k') = (i,j,k)$, and we are done.  
Otherwise, consider the run $\run{1}$ obtained  by modifying $r$ so that in round $m$, 
$p$'s messages to all agents but $j$ and itself fail, that is, $\sends_p(\run{1},m) = \{p,j\}$.  
We have $\N(\run{1}) = \N(r)$. Since $(r,m) \sim_j (\run{1},m)$ and $j \in (\N \land \cO)(r,m) \cap (\N \land \cO)(\run{1},m)$, 
we have that $(\run{1},m)$ is \reach from $(r,m)$.

It remains the case that $i,j$ and $k$  
exhibit no faulty behavior in $\run{1}$, and that they are undecided
at time $m$.  
Since $k$ is faulty but exhibits no faulty behavior in $\run{1}$, by
Lemma~\ref{lem:notCN}, no agent decides using the common knowledge
conditions in $\run{1}$. It also remains the case that there is no \zchain
with length greater than $m-2$, so no  
agent decides 0 after round $m-1$ in $\run{1}$. 
If $\I,(\run{1},m) \models \deciding_i = 1$, then we are done, taking
$(r',m') = (\run{1},m)$ and $(i',j',k')=(j,i,k)$,  since $i\in
\N(\run{1})$  
but $(p,m-1) \not \lamport{\run{1}} (i,m)$  and $r[0..m-1] = \run{1}[0..m-1]$ by construction. 
On the other hand, if $\I,(\run{1},m) \models \deciding_k = 1$, then we
would similarly be done, taking $(r',m') = (\run{1'},m)$ and $(i',j',k')
= (j,k,i)$, where $\run{1'}$ is the run obtained from $\run{1}$ by  
setting $\N(\run{1'}) = \N(r) \cup \{k\}\setminus \{i\}$.

If $\I,(\run{1},m) \models \deciding_i = \bot \land \deciding_k = \bot$, 
we have  $\I,(\run{1},m+1) \models \deciding_i = 1$, since $\I,(\run{1},m) \models \deciding_j = 1$ and $j \in \N(\run{1})$. 
Let $\run{2} $ be the run obtained by modifying  $\run{1}$ 
so that in round $m+1$, agent $i$ receives messages from all agents but $p$. We have 
$\N(\run{2}) = \N(r)$. 
Since $(r,m) \sim_j (\run{2},m)$ and $j \in (\N \land \cO)(r,m) \cap (\N \land \cO)(\run{1},m)$, 
we have that $(\run{2},m)$ is \reach from $(r,m)$.

It remains the case that $k$  is faulty but
exhibits no faulty behavior in $\run{2}$, hence no agent decides using the 
common knowledge conditions in $\run{2}$. 
Note that $\run{2}[0..m-1] = \run{1}[0..m-1] = r[0,m-1]$ and
$\sends_p(\run{2},m) = \sends_p(\run{1},m) = \{p,j\}$. 
We also have $\run{2}_i(m) = \run{1}_i(m)$ and $\run{2}_k(m) = \run{1}_k(m)$, so 
$\I,(\run{1},m) \models \deciding_i= \bot \land \deciding_k = \bot$.  
Thus, using Corollary~\ref{lemma:deciding1}, we have that $\I,(\run{2},m+1) \models \deciding_i = 1$, because 
the longest \zchain in $\run{2}$ has length at most $m-2$.

Let $\run{3}$ be the run obtained by modifying $\run{2}$ so that 
$\N(\run{3}) = \N(r) \cup \{k\} \setminus \{j\}$,  and 
agent $k$ receives a message from all agents but $j$ in round $m+1$. 
Since $(\run{2},m)+1 \sim_i (\run{3},m+1)$ and $i \in (\N \land \cO)(\run{2},m+1) \cap (\N \land \cO)(\run{3},m+1)$, 
we have that $(\run{3},m)$ is \reach from $(r,m)$.
We remark that $\inv_1$ does not hold in $\run{3}$
with a permutation of $i,j,k$, since $j$ now exhibits faulty behavior.
However, we still have $\I,(\run{3},m+1) \models \neg
\ck{\N}(\faultyag)$, because $i \in \N(\run{3})$ and 
$i$ does not observe $j$'s faulty behavior.
It follows that no agent decides in $\run{3}$ using the common
knowledge conditions before round $m+3$.

Note that $\sends_p(\run{3},m) = \sends_p(\run{2},m) \setminus\{j\} =
\{p,j\}$ and $\sends_j(\run{3},m+1) = \Agents\setminus\{k\}$.  
We have $\run{3}[0..m-1] = \run{2}[0..m-1] = r[0,m-1]$. 
This implies that none of $i,j,k$ has decided at time $m$. 
In addition, $\run{3}_i(m) = \run{2}_i(m)$ and $\run{3}_k(m) =
\run{2}_k(m)$, so  
neither $i$ nor $k$ decides in round $m+1$. 
By Corollary~\ref{lemma:deciding1}, we have that $\I,(\run{3},m+1)
\models \deciding_k = 1$, because  
the longest \zchain in $\run{3}$ has length at most $m-2$.

Let $\run{4}$ be the run obtained by modifying $\run{3}$ so that agent
$p$ does not send any messages to agent $j$  
in round $m$, and in round $m+1$,  all agents except $p$ send a message to agent $i$. 
We have $\N(\run{4}) = \N(\run{3}) = \N(r) \cup \{k\} \setminus \{j\}$
and $\run{4}[0..m-1] = \run{3}[0..m-1] = r[0,m-1]$. 
In addition, $\sends_p(\run{4},m) = \sends_p(\run{3},m) \setminus \{j\} =
\{p,j\} \setminus \{j\} = \{p\}$,  
and  $\sends_j(\run{4},m+1) = \sends_j(\run{3},m+1) = \Agents\setminus\{k\}$. 
Since in $\run{3}$, agent $j$ does not send a message to agent $k$ in
round $m+1$,  
the change in round $m$ is not visible to agent $k$ at time $m+1$, nor
is the round $m+1$ change visible to $k$ at time $m+1$.  
Thus, we have $(\run{3},m+1) \sim_k (\run{4},m+1)$. 
In addition, $k \in  (\N \land \cO)(\run{3},m+1) \cap (\N \land \cO)(\run{4},m+1)$, 
so $(\run{4},m+1)$ is \reach from $(r,m)$. 
We have $\run{4}_i(m) = \run{3}_i(m)$, so agent $i$ does not decide in round $m+1$ in $\run{4}$. 
By Corollary~\ref{lemma:deciding1}, we get that  $\I,(\run{4},m+1) \models \deciding_i = 1$. 

It is still the case in $\run{4}$ that agent $j$ exhibits faulty behavior. To
reinstate $\inv_1$,  
let $\run{5}$ be the run obtained from $\run{4}$ by changing the failed message 
from $j$ to $k$ in round $m+1$ to be successfully transmitted. 
Since this was the only failure of $j$ introduced earlier, this ensures that 
$j$ does not exhibit faulty behavior in $\run{5}$. 
We have $\N(\run{5}) = \N(\run{4}) = \N(r) \cup \{k\} \setminus \{j\}$
and $\run{4}[0..m-1] = \run{3}[0..m-1] = r[0,m-1]$. 
The latter means that no agent decides 0 after round $m-1$. 
The change made in constructing $\run{5}$ is not visible to agent $i$ at time $m+1$, so we have 
$(\run{4},m+1) \sim_i (\run{5},m+1)$. We have 
 $i \in (\N \land \cO)(\run{4},m) \cap (\N \land \cO)(\run{5},m)$, 
so  $(\run{5},m)$ is \reach from $(r,m)$.

In addition, $\run{5}_i(m) = \run{4}_i(m)$, so agent $i$ does not decide in
round $m+1$ of run $\run{5}$,  
since it did not do so in round $m+1$ of $\run{4}$. 
Similarly, $\run{5}_k(m) = \run{4}_k(m)$, so agent $k$ does not decide in round $m+1$ of run $\run{5}$. 
With respect to agent $j$, we have two possibilities. 
\begin{itemize} 

  \item  If $\I,(\run{5},m) \models \deciding_j = 1$, 
then let $\run{6}$  be the run identical to $\run{5}$ except that $\N(\run{6}) = \N(r)$. 
(That is, we switch $k$ from being nonfaulty in $\run{5}$ to faulty in
$\run{6}$, and $j$ from being faulty to being
nonfaulty.) 
We still have that $(\run{4},m+1) \sim_i (\run{6},m+1)$ and 
 $i \in (\N \land \cO)(\run{4},m+1) \cap (\N \land \cO)(\run{6},m+1)$, 
so  $(\run{6},m+1)$ is \reach from $(r,m)$.
We also have that  $\I,(\run{6},m) \models \deciding_j = 1$, 
because $\run{6}_j(m) = \run{5}_j(m)$. 
Since $(\run{6},m) \sim_j (\run{6},m)$ and 
 $i \in (\N \land \cO)(\run{6},m) \cap (\N \land \cO)(\run{6},m)$, 
we have that $(\run{6},m)$ is \reach from $(r,m)$.
Note that  $(p,m-1) \not \lamport{\run{6}} (j,m)$. 
Hence we are done, taking $(r',m') = (\run{6},m)$
and $(i',j',k') = (i,j,k)$. 

\item  If  $\I,(\run{5},m) \models \neg \deciding_j = 1$, 
then we have $\I,(\run{5},m+1) \models \decided_i =  \decided_j = \decided_k = \bot$, 
and $\I,(\run{5},m+1) \models \deciding_i = 1$. 
Moreover, $\sends_p(\run{5},m) = \{p\}$ and $i \not \in 
\sends_p(\run{5},m+1)$.
Thus  $(p,m-1) \not \lamport{\run{5}} (i,m+1)$. 
Hence we are done, taking $(r',m') = (\run{5},m+1)$
and $(i',j',k') = (k,i,j)$. 

\end{itemize} 
\end{proof} 

\commentout{
\roncomment{in the section I worked on, I've replaced uses Kaya's Lemma~\ref{lemma:delete-past} using my Lemma~\ref{lemma:delete-pastv2} (which sharpens up Kaya's proof but has a slightly diferrent statement), 
 but any other applications of Kaya's lemma should be checked and adjusted 
accordingly. Leaving the old statement in here until that has been done.} 

\begin{lemma} \label{lemma:delete-past}
  If $\I,(r,m) \models \inv_{1}(i,j,k)$ and the longest \zchain that $j$ 
  knows about does not reach time $m-1$, then for all agents $k^{*}$ that are 
  known to be faulty by agent $j$ at $(r,m)$, \\
  \roncomment{why do we need $j$ to know $k^*$ is nonfaulty?}\\ 
  there exists a run $r^{k^*}$ 
  such that:
  \begin{itemize}
    \item $(r^{k^*}, m^{*})$ is $(\N \land \cO)$-$\boxdot$-reachable from 
      $(r,m)$,
    \item $r^{k^*}$ is identical to $r$ up to and including round $m-1$,
    \item $\I,(r^{k^*},m^{*}) \models \inv_{1}(i',j',k')$ for some $i',j',k'$,
    \item agent $i,j,k$ does not hear from $k^{*}$ in round $m'$ 
          for all $m' \geq m$, and
          \roncomment{$i',j',k'$ here? Do we really need this for all $i',j',k'$? I suspect $j'$ will do.} 
    \item $m^{*}$ is equal to $m$ or $m+1$.
  \end{itemize}
\end{lemma}
}

\commentout{
\begin{proof}
  Suppose $\I,(r,m) \models \inv_{1}(i,j,k)$ and the longest \zchain that $j$ 
  knows about does not reach time $m-1$ and some agent $k^{*}$ is known to
  be faulty by agent $j$ at $(r,m)$. Then, we first consider the run $r'$ that 
  is identical to $r$ except that agent $k^*$ only sends a message to
  agent $j$ in round $m$ 
  \roncomment{$k^*$ may or may not have sent, actually, just keep it the same} 
  and stays silent in the following rounds, and that
  agent $i$ and $k$ receive a message from all agents
  but agent $k^{*}$ in round $m$ or after. We then observe that agent $j$
  has the same local state in both $(r,m)$ and $(r',m)$. Then, 
  $j \in (\N\land\cO)(r,m) \cup (\N\land\cO)(r',m)$. Therefore, $(r',m)$ is 
  $(\N\land\cO)$-$\boxdot$-reachable from $(r,m)$ through agent $j$. 
  Now, we consider whether agent $i$ decides 1 in round $m+1$ of $r'$.

  If agent $i$ decides 1 in round $m+1$ of $r'$, then we can conclude that the 
  claim holds with $r^{k^*} = r'$, $m^{*} = m$, $i' = j$, $j' = i$, 
  and $k' = k$. \roncomment{Actually, not $j$ heard from $k^*$ here, so not quite the conclusion (4) in the statement.
  I think you also want to consider $k$ whether $k$ decides here, to set up that $k$ has not decided at time $m+1$ in the next step.} 

  If agent $i$ does not decide 1 in round $m+1$ of $r'$, then we 
  first observe using Corollary~\ref{lemma:deciding1} that agent 
  $i$ decides 1 in round $m+2$.  
  We then consider a run $r''$ that is 
  identical to $r'$ except that agent $j \notin \N$, $k \in \N$, 
  and agent $k$ receives a message from all agents except $j$ in round $m+1$. 
  Since agent $i$ decides 1 in round $m+2$, 
  $i \in (\N\land\cO)(r',m+1) \cup (\N\land\cO)(r'',m+1)$.
  Therefore, $(r'',m+1)$ is $(\N\land\cO)$-$\boxdot$-reachable from $(r',m)$
  through agent $i$. \roncomment{Need here that $k$ has not yet decided at time $m+1$.} 

  Then, we consider a run $r'''$ that is identical to $r''$ except that
  agent $k^*$ does not send a message to agent $j$ in round $m$. 
  Note that in $r'''$, the 
  only message that $k^*$ sends in round $m$ or after is the message to 
  agent $k$ in round $m+1$.
  
  Using Corollary~\ref{lemma:deciding1}, 
  since $k$ heard from all but $j$ in round $m+1$ and $\mathit{len}_k(r''',m+1) 
  = m-1$, agent $k$ decides 1 in round $m+2$.
  Hence, $k \in (\N\land\cO)(r'',m+1) \cup (\N\land\cO)(r''',m+1)$
  and $(r''',m+1)$ is $(\N\land\cO)$-$\boxdot$-reachable from $(r'',m)$ 
  through agent $k$.

\roncomment{Missing agent in the transition here! You want agent $i$, so set it 
up to receive all but one first.} 

  Lastly, we consider the run $r^{k^*}$ that is identical to $r'''$ except
  that agent $k^*$ does not send a message to agent $k$ in round $m+1$.
  Since in $r'''$ only $k$ hears from $k^*$ in round $m$, when we get to 
  $r^{k^*}$, $k^*$ no longer sends any messages in rounds $m$ or after.
  Therefore no agent at time $m'$ \emph{hears-from} $(k^*,m'-1)$ for $m' \geq m$. 
  Hence, we
  have now reached a run $r^{k^*}$ and $m^{*} = m+1$ where agent $k$ is 
  decides 1 in round $m+2$ 
  and the claim holds with $i' = j$, $j' = k$, $k' = i$.
\end{proof}
}

The following lemma allows us to reduce the size of the set $S_{m-1}(r)$, 
so that there are fewer paths by which a \zchain reaching time 
$m-2$ is visible at time $m$. 

\begin{lemma} \label{lem:shortenchain}
  If $\I,(r,m) \models \inv_1(i,j,k)$ and the longest \zchain in
  $r$ has length at least $m-2$, then there exists a run $r'$ and a
permutation $i',j',k'$ of $i,j,k$ such that  
$\I,(r',m) \models \inv_1(i',j',k')$,
$\N(r) \cup \{k\} = \N(r') \cup \{k'\}$, 
the point $(r',m)$ is $(\N \land \cO)$-$\boxdot$-reachable from $(r,m)$, and 
there are no \zchain{s} in $r'$ of length $m-2$. 
\end{lemma}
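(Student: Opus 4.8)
The plan is to reduce the length of the longest \zchain by invoking Lemma~\ref{lemma:zchaincut} applied to agent $j$, and then, if a maximal \zchain of length $m-2$ survives, to destroy it with one additional $(\N \land \cO)$-$\boxdot$-step in the style of Lemmas~\ref{lemma:rollback} and~\ref{lemma:delete-pastv2}.

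First I would record the structural consequences of $\inv_1(i,j,k)$. By Lemma~\ref{lem:notCN}, $\I,(r,m)\models\neg\ck{\N}(\faultyag)$, so no agent decides via the common-knowledge clauses of $\kbp^1$ anywhere in $r$; hence $j$ decides $1$ at $(r,m)$ via the last clause, so $\I,(r,m)\models K_j(\bigwedge_{l\in\Agents}\neg(\deciding_l=0))$, and therefore $\I,(r,m)\models\neg(\deciding_i=0)\land\neg(\deciding_j=0)$. Since $(l,m')\lamport{r}(a,m)$ with $m'=m$ forces $l=a$, this gives $\latest_i(r,m)<m$ and $\latest_j(r,m)<m$. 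Next, any \zchain $i_0,\dots,i_\ell$ in $r$ with $\ell\ge m-2$ must consist entirely of faulty agents and be disjoint from $\{i,j,k\}$: if some $i_{m'}$ with $m'\ge1$ were nonfaulty, or equal to $k$ (which exhibits no faulty behavior), it would broadcast a just-decided-$0$ message in round $m'+1\le m-1$, forcing $i$ and $j$ to decide $0$ by round $m$, contradicting $\decided_i=\bot$ and $\deciding_j=1$ at time $m$; and a nonfaulty $i_0$ (or $i_0=k$) would force $m\le1$. In particular $m-1\le t$, and the hypothesis is vacuous unless $m\ge2$.

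Now apply Lemma~\ref{lemma:zchaincut} to agent $j$ (legitimate since $j\in\N(r)$, $\I,(r,m)\models\decided_j=\bot\land\neg\ck{\N}(\faultyag)$, and $\latest_j(r,m)<m$), obtaining a run $r^{\dagger}$ with $(r,m)\sim_j(r^{\dagger},m)$, $\N(r^{\dagger})=\N(r)$, $\edges(r^{\dagger},m)\subseteq\edges(r,m)$, every \zchain of $r^{\dagger}$ known to $j$ at time $m$, and $i,j,k$ still exhibiting no faulty behavior in $r^{\dagger}$. Because $j$ is nonfaulty and $i$, $k$ exhibit no faulty behavior, both $i$ and $k$ lie in $j$'s causal past at time $m$, so their local states through time $m-1$, and hence their actions in round $m$, agree in $r$ and $r^{\dagger}$; thus $i$ and $k$ remain undecided at $(r^{\dagger},m)$ while $j$ still decides $1$, so $\I,(r^{\dagger},m)\models\inv_1(i,j,k)$. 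Moreover $(r^{\dagger},m)$ is $(\N\land\cO)$-$\boxdot$-reachable from $(r,m)$ via $j$ (same local state, nonfaulty, deciding $1$), and since each \zchain $i_0,\dots,i_k$ of $r^{\dagger}$ satisfies $(i_k,k)\lamport{r^{\dagger}}(j,m)$ with $\deciding_{i_k}=0$ at time $k$, every \zchain of $r^{\dagger}$ has length at most $\latest_j(r^{\dagger},m)=\latest_j(r,m)\le m-2$. If in fact $\latest_j(r,m)\le m-3$, then $r'=r^{\dagger}$ already has no \zchain of length $m-2$ and we are done, with $k'=k$.

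Otherwise $\latest_j(r,m)=m-2$, and by Lemma~\ref{lemma:chain} there is a \zchain $i_0,\dots,i_{m-2}$ in $r^{\dagger}$, all of whose agents are faulty and disjoint from $\{i,j,k\}$, with $(i_{m-2},m-1)\to(j,m)$ an edge of $r^{\dagger}$ (any other route carrying $i_{m-2}$'s time-$(m-2)$ state to $(j,m)$ would deliver a just-decided-$0$ message to $j$, forcing $j$ to decide $0$). Here I would perform one more $(\N\land\cO)$-$\boxdot$-step, temporarily routing it through whichever of $i$ or $k$ can be made to play the deciding-$1$ role, reaching a run in which $i_0$'s initial preference is $1$ (so the chain collapses) while re-establishing $\inv_1$ for a permutation of $i,j,k$; iterating over all length-$(m-2)$ chains and composing the $\boxdot$-paths by transitivity gives the required $r'$ with $\N(r)\cup\{k\}=\N(r')\cup\{k'\}$. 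The main obstacle is precisely this final step: a length-$(m-2)$ \zchain responsible for $\latest_j(r,m)=m-2$ is unavoidably causally visible to $j$, so it cannot be destroyed without perturbing $j$'s state, and destroying it requires a change at an early time that propagates along the entire chain; the resolution — as in Lemmas~\ref{lemma:rollback} and~\ref{lemma:delete-pastv2} — is to interleave the chain-destroying modification with a re-routing of the deciding-$1$ anchor, and the real work is the bookkeeping of which agent plays the undecided, deciding-$1$, and silent-faulty roles at each intermediate point, together with checking that no new \zchain of length $m-2$ is created along the way.
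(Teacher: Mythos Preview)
Your first step is sound: applying Lemma~\ref{lemma:zchaincut} to $j$ produces a run $r^{\dagger}$ with $(r,m)\sim_j(r^{\dagger},m)$, $\inv_1(i,j,k)$ preserved (your argument that $i$ and $k$ lie in $j$'s causal past at time $m-1$ is correct), and every \zchain in $r^{\dagger}$ of length at most $\latest_j(r,m)\le m-2$. When $\latest_j(r,m)\le m-3$ this finishes the lemma cleanly.

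The gap is the remaining case $\latest_j(r,m)=m-2$, which you correctly identify as the obstacle but do not resolve --- and this is precisely the case in which the lemma is invoked in Lemma~\ref{lemma:dir1-induction}. Your suggested attack, flipping $\init_{i_0}$ to $1$, targets the \emph{root} of the chain; but $(i_0,0)$ sits $m-2$ rounds deep in $j$'s causal past, so severing it from every candidate $(\N\land\cO)$-anchor would require surgery at time $0$ propagated forward through the whole chain, and nothing in your outline supplies that sequence of steps.

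The paper attacks the chain at its \emph{tip} instead, in two phases. Phase~1 removes, for each faulty $q\notin\N(r)\cup\{k\}$, all round-$(m-1)$ edges from other faulty agents into $q$: apply Lemma~\ref{lemma:delete-pastv2} with $p=q$ to make $(q,m-1)$ invisible to the current anchor, then delete the offending edges (invisible to the anchor since they terminate at $(q,m-1)$), then Lemma~\ref{lemma:rollback} if needed to return to time $m$. After iterating over all faulty $q$, no round-$(m-1)$ message passes between distinct faulty agents. The point of this is that now, for any faulty $q$ that still decides $0$ at time $m-2$, the \emph{only} outgoing edge from $(q,m-2)$ is the self-edge to $(q,m-1)$: an edge to a nonfaulty agent or to $k$ would extend the chain to length $m-1$ via an agent that broadcasts to $j$, contradicting $\deciding_j=1$; and edges to other faulty agents were just removed. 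Phase~2 then applies Lemma~\ref{lemma:delete-pastv2} once more to make $(q,m-1)$ invisible to the anchor; since $(q,m-1)$ is the sole successor of $(q,m-2)$, this makes $(q,m-2)$ invisible too, and Lemma~\ref{lemma:zchaincut} applied to the anchor now reroutes $q$'s round-$(m-2)$ inputs to nonfaulty agents only --- so $q$ no longer receives a just-decided-$0$ message and stops deciding $0$ in round $m-1$. Iterating over all such $q$ eliminates every \zchain of length $m-2$. The essential idea you are missing is this isolation step: rather than reaching back to time $0$, first arrange that the tip $(q,m-2)$ has no forward route into the anchor's past except through $(q,m-1)$, which Lemma~\ref{lemma:delete-pastv2} can already sever.
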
 

\begin{proof} 
Consider an agent $q \in \Agents (\N(r) \cup \{k\})$. 
\newcommand{\faultyagin}{F}
We write $\faultyagin_q(r,m-1)$ for the set of edges
of the form $(p,m-2) \lamport{} (q,m-1)$ in $\edges(r,m-1)$,  
where $p \in  \Agents \setminus \N(r) \cup \{k,q\}$.  

We first show that there exists 
a point $(\run{q},m)$ that is $(\N \land \cO)$-$\boxdot$-reachable
from $(r,m)$
such that $\I, (\run{q},m) \models \inv_1(i^q,j^q,k^q)$, where  $\{i,j,k\} =
\{i^q,j^q,k^q\}$,
$\N(r) \cup\{k\} = \N(r^q) \cup \{k^q\}$, and
$\edges(\run{q},m-1) \subseteq \edges(r,m-1) \setminus \faultyagin_q(r,m-1)$.  
To do this, we first apply Lemma~\ref{lemma:delete-pastv2} to $(r,m)$
and $(N\land\cO)$-$\boxdot$-reach 
a point $(\run{1},m_{1})$ such that $\I,(\run{1},m_{1}) \models
\inv_{1}(i_1,j_1,k_1)$,
$N(r) \cup \{k\} = \N(\run{1}) \cup \{k_1\}$,
$\edges(\run{1},m-1) = \edges(r,m-1)$,
$(q,m-1) \not \lamport{\run{1}} (j_1,m_1)$, and $m_{1}$ is either $m$ or $m+1$.
    
Let $\run{2}$ be the run obtained from $\run{1}$ by converting
all edges $(p,m-2) \lamport{\run{1}} (q,m-1)$ 
in $\faultyagin_q(r,m-1)$ to failures,  so that $\edges(\run{2},m-1) = \edges(\run{1},m-1) \setminus \faultyagin_q(r,m-1)
= \edges(r,m-1) \setminus \faultyagin_q(r,m-1)$.
We have $(\run{2},m_1) \sim_{j_1} (\run{1},m_1)$, and
$\N(\run{1}) = \N(\run{2})$,  
so  $(\run{2},m_1)$ is  $(N\land\cO)$-$\boxdot$-reachable from $(\run{1},m_1)$. 
It remains the case that none of $i_1,j_1,k_1$ have decided at $(\run{2},m_1)$, since the state of 
these agents at time $m_1-1$ is visible to $j_1$ at time $m_1$ in $\run{1}$ and $(\run{2},m_1) \sim_{j_1} (\run{1},m_1)$.  
Hence $\I,(\run{2},m_1) \models \inv_1(i_1,j_1,k_1)$ and $\N(r) \cup \{k\} = \N(\run{2} \cup \{k_1\}$. 

If $m_1 = m$ we are done, taking $\run{q} = \run{2}$; otherwise, 
$m_1 = m+1$ and we apply Lemma~\ref{lemma:rollback} to obtain 
a point $(\run{q},m)$ \reach from $(r,m)$ with 
$\I,(\run{q},m_1) \models \inv_1(i_q,j_q,k_q)$ and $\N(\run{q}) \cup
\{k^q\} = \N(\run{2}) \cup \{k_1\} = \N(r) \cup \{k\}$. 
We have $\edges(\run{q},m-1) = \edges(\run{q},m_1-2) \subseteq
\edges(\run{2}, m_1-2)  
= \edges(\run{2}, m-1) = \edges(r,m-1) \setminus \faultyagin_q(r,m-1)$.

We successively repeat the steps above for all agents in $\Agents
\setminus (\N(r) \cup \{k\})$,  
thereby $(N\land\cO)$-$\boxdot$-reaching a point 
$(\run{3},m)$ with $\I,(\run{3},m)\models \inv_1(i_3,j_3,k_3)$,
$\N(\run{3}) \cup \{k_3\} = \N(r) \cup \{k\}$, and 
$\edges(\run{3},m-1) \subseteq \edges(r,m-1) \setminus (\bigcup_{q\in
\Agents \setminus (\N(r) \cup \{k\})} ~\faultyagin_q(r,m-1))$. 
That is, in round $m-1$ of $\run{3}$, no messages are transmitted between distinct (faulty) agents in $\Agents \setminus (\N(r) \cup \{k\})$. 

It may still be the case that we have faulty agents $q$ that decide in
round $m-1$  of $\run{3}$,  
for which $(q,m-1) \lamport{\run{3}} (j',m)$. 
In this case, we again apply Lemma~\ref{lemma:delete-pastv2} to $(\run{3},m)$ and $(\N\land\cO)$-$\boxdot$-reach
a point $(\run{4},m_{4})$  where $\I,(\run{4},m_{4}) \models
\inv_{1}(i_4,j_4,k_4)$,
$\{i_4,j_4,k_4\} = \{i,j,k\}$, and $\N(\run{4}) \cup \{k_4\} =
\N(\run{3}) \cup \{k_3\} = \N(r) \cup \{k\}$,
$\run{4}[0\ldots m-1] = \run{3}[0\ldots m-1]$, $(q,m-1) \not
\lamport{\run{4}} (j_1,m_4)$, and $m_{4}$ is either $m$ or $m+1$. 
At this point we apply the construction of Lemma~\ref{lemma:zchaincut}
to produce from $\run{4}$ a run  
$\run{5}$ such that $(\run{5},m_4)$ is 
$(\N \land \cO)$-$\boxdot$-reachable from $(\run{4},m_4)$ via agent $j_4$, 
$\I,(\run{5},m_{4}) \models \inv_{1}(i_4,j_4,k_4)$, 
$\N(\run{5} \cup \{k_4\} = \N(\run{4}) \cup \{k_4\} = \N(r) \cup \{k\}$,  
all \zchain{s} are visible to $(j_4,m_4)$, and
$\edges(\run{5},m-1) \subseteq \edges(\run{4},m-1)
 \subseteq \edges(r,m-1) \setminus (\bigcup_{q\in \Agents \setminus (\N(r) \cup \{k\})} ~\faultyagin_q(r,m-1))$.

We claim that $q$ does not decide in round $m-1$ of $\run{5}$. 
Obviously, we do not have an edge from $(q,m-2)$ into $(\N(r) \cup \{k\}) \times \{m-1\}$ in $\run{4}$, 
otherwise $r$ would have a  \zchain of length at least $m-1$.
We also do not have an edge in $\run{4}$ from $(q,m-2)$ into $(\N(r)
\cup \{k\})\times \{m-1\}$,  
since all such edges have been eliminated. 
Thus, the only edge from $(q,m-2)$ is to $(q,m-1)$. 
Since $(q,m-1) \not \lamport{\run{4}} (j_4,m_4)$, 
it follows that $(q,m-2) \not \lamport{\run{4}} (j_4,m_4)$, 
and the construction of $\run{5}$ ensures that   $\I,(\run{5},m-2) \models \neg \deciding_{q} = 0$.

If $m_4 = m+1$, we apply Lemma~\ref{lemma:rollback} to obtain a 
point $(\run{6},m)$ that is  \reach from $(r,m)$ such that 
$\edges(\run{6},m-1) \subseteq \edges(\run{5},m-1)$,
$\I,(\run{6},m) \models \inv_{1}(i_6,j_6,k_6)$,  
$\{i_6,j_6,k_6\} = \{i,j,k\}$, $\N(\run{6}) \cup \{k_6\} =
\N(r) \cup \{k\}$, and  
$q$ does not decide in round $m-1$. 
And if $m_4 = m$, we already have this conclusion with
$\run{6} = \run{5}$.  
 
We successively apply this construction for all $q \in \Agents \setminus (\N(r) \cup \{k\})$
until we have an $(\N\land\cO)$-$\boxdot$-reachable point $(\run{7},m)$
such that
$\I,(\run{7},m) \models \inv_{1}(i_7,j_7,k_7)$, $\{i_7,j_7,k_7\} =
\{i,j,k\}$,
$\N(\run{7}) \cup \{k_7\} = \N(\run{6}) \cup \{k_6\} = \N(r)\cup \{k\}$,
 $\edges(\run{7},m-1) \subseteq \edges(\run{4},m-1)$, and 
$\I,(\run{7},m-2) \models \neg \deciding_{q} = 0$ for all $q \in \Agents \setminus (\N(r) \cup \{k\})$. 
Since $\I,(\run{7},m) \models \inv_{1}(i_7,j_7,k_7)$ implies that 
no agent in $\N(\run{7}) \cup \{k_7\} = \N(r) \cup \{k\}$ decides in round $m-1$, we conclude that 
$\run{7}$ has  no \zchain{s} of length $m-2$. 
We therefore have the result with $r' = \run{7}$. 
\end{proof}

\begin{lemma} \label{lemma:dir1-induction}
  If $\I,(r,m) \models \inv_{1}(i,j,k)$, then there exists a run $\hat{r}$ such 
  that $(\hat{r}, 0)$ is $(\N \land \cO)$-$\boxdot$-reachable from $(r,m)$ and 
  $\I,(\hat{r},0) \models \neg \exists 0$.
\end{lemma}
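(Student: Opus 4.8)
The plan is to prove the lemma by induction on $m$, using Lemma~\ref{lem:shortenchain} and Lemma~\ref{lemma:rollback} as the engine of the inductive step and the transitivity of $(\N\land\cO)$-$\boxdot$-reachability to glue the pieces together. First I would do a preprocessing step. Assume $\I,(r,m)\models\inv_1(i,j,k)$. Since $k$ is faulty but exhibits no faulty behaviour, Lemma~\ref{lem:notCN} gives $\I,(r,m')\models\neg\ck{\N}(\faultyag)$ for every $m'$, so no agent ever decides using either common-knowledge clause of $\kbp^1$ in $r$; hence, by Lemma~\ref{lemma:chain}, the $0$-decisions in $r$ are exactly the last agents of \zchains, and one checks directly (a nonfaulty member of a long enough \zchain would broadcast its $0$-decision and force $j$ to decide $0$ at or before round $m+1$, contradicting $\deciding_j = 1$ at $(r,m)$) that $r$ contains no \zchain of length greater than $m-2$. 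If $r$ does contain a \zchain of length $m-2$, I would apply Lemma~\ref{lem:shortenchain} to $(\N\land\cO)$-$\boxdot$-reach a point $(r',m)$ at which $\inv_1(i',j',k')$ holds for some permutation $i',j',k'$ of $i,j,k$ and at which $r'$ has no \zchain of length $m-2$; replacing $(r,m)$ by this point (legitimate by transitivity), one may assume henceforth that every \zchain in $r$ has length at most $m-3$, so that $\latest_q(r,m)\le m-3$ for every agent $q$.

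For the inductive step I would take $m\ge 2$. Since $\latest_j(r,m)\le m-3$ and $\I,(r,m)\models\inv_1(i,j,k)$, Lemma~\ref{lemma:rollback} produces a run $r^*$ with $\I,(r^*,m-1)\models\inv_1(i^*,j^*,k^*)$ for some agents $i^*,j^*,k^*$, and with $(r^*,m-1)$ being $(\N\land\cO)$-$\boxdot$-reachable from $(r,m)$. The induction hypothesis then gives a run $\hat r$ with $(\hat r,0)$ $(\N\land\cO)$-$\boxdot$-reachable from $(r^*,m-1)$ and $\I,(\hat r,0)\models\neg\exists 0$, and transitivity of $\boxdot$-reachability finishes the inductive step.

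What remains is the base case, i.e.\ the situation left after preprocessing once $m\le 2$. Here one already has $\I,(r,0)\models\neg\exists 0$: for $m=2$, ``no \zchain of length $m-2$'' is precisely the statement that no agent has initial preference $0$, and for $m\in\{0,1\}$ this is either vacuous (by inspection of $\kbp^1$, a nonfaulty agent cannot have $\deciding_j=1$ that early, since the common-knowledge clauses cannot fire before the set of faulty agents is common knowledge and line~5 of $\kbp^1$ cannot fire at time~$0$) or follows from a short direct argument of the kind one gives for $m=1$. In each of these cases what still has to be supplied is a run $\hat r$ with $\neg\exists 0$ whose time-$0$ point is $(\N\land\cO)$-$\boxdot$-reachable from $(r,m)$; because $(\N\land\cO)$-$\boxdot$-reachability does not allow moving backwards in time within a single run, this requires building a short $\boxdot$-path in the same style as the constructions in Lemmas~\ref{lemma:rollback} and~\ref{lemma:delete-pastv2}, terminating at a time-$0$ configuration in which every agent prefers~$1$.

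The hard part will be exactly this base case: turning ``the run currently in hand has no initial $0$'' into ``a time-$0$ point with $\neg\exists 0$ is reachable'', which calls for an explicit $\boxdot$-path rather than a mere rewind. A secondary, purely bookkeeping difficulty is checking that the hypotheses of Lemmas~\ref{lem:shortenchain} and~\ref{lemma:rollback} (the bounds on \zchain lengths, the bound $\latest_j(r,m)\le m-3$, and $m\ge 2$) genuinely hold after preprocessing, and that the permutations of $i,j,k$ returned at each stage do not interfere with the next application.
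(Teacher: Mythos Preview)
Your overall strategy---induction on $m$, using Lemma~\ref{lem:shortenchain} when the longest known \zchain has length $m-2$ and Lemma~\ref{lemma:rollback} to step from $m$ down to $m-1$---is exactly the paper's. The paper's base case is $m=1$ only ($m=0$ is impossible, as you note).

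The difficulty you identify in the base case, however, rests on a misreading of the definition: $(\N\land\cO)$-$\boxdot$-reachability \emph{does} allow moving backward in time within a single run. In the definition, the within-run times $m_j$ and $m_j'$ are unconstrained; in particular, the trivial path with $k=0$ and $m_0'=0$ shows that $(r,0)$ is $\cS$-$\boxdot$-reachable from $(r,m)$ for any $\cS$. Hence, as soon as you have reached any point in a run $r'$ satisfying $\I,(r',0)\models\neg\exists 0$, you are done---no further path construction is needed. The paper's $m=1$ argument is accordingly short: since $j$ is deciding $1$ at $(r,1)$ via line~5 of $\kbp^1$, $j$ cannot know $\exists 0$; so $j$ considers possible the run $\hat r$ with the same failure pattern but all initial preferences equal to $1$, reaches $(\hat r,1)$ through $j$ (who lies in $\N\land\cO$ at both points with the same local state), and then moves to $(\hat r,0)$ within $\hat r$.

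A smaller point: your claim that $r$ has no \zchain of length greater than $m-2$ is not established by ``a nonfaulty member would broadcast,'' since a \zchain may consist entirely of faulty agents; a hidden \zchain of length $m-1$ is not directly ruled out by $\inv_1$. The paper sidesteps this by reasoning about $\len_j(r,m)$ rather than the global maximum; the hypothesis of Lemma~\ref{lemma:rollback} is $\latest_j(r,m)\le m-3$, and that lemma internally invokes Lemma~\ref{lemma:zchaincut} to discard any \zchains not visible to $j$.
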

\begin{proof}
  We proceed by induction on $m$.
Suppose that $\I,(r,m) \models \inv_{1}(i,j,k)$, so that $j$ decides 1 in
  round $m+1$ of $r$.  
  No agent can decide 1 in round 1, so we have $m\geq 1$.

  If $m = 1$, agent $j$ can't know $\exists 0$ at time 1 right before
  $j$ decides   1 in round 2, as otherwise $j$ would have gotten a
\zchain and, by Lemma~\ref{lemma:chain}, would decide 0. Hence,
  $j$ considers possible the run
  $\hat{r}$  where no agent  
  has an initial preference of 0 and the message pattern is identical to $r$.
Clearly, $(\hat{r},1)$ is $(\N\land\cO)$-$\boxdot$-reachable from $(r,1)$ 
  through agent $j$.

  If $m > 0$, suppose that $\len_)j(r,m) = m'$.
  Note that $j$ can't know about 
  a \zchain of length $m-1$ or greater, as in that case $j$ would decide 
  0 in round $m+1$.
  If $m' < m-2$, then we can immediately apply 
      Lemma~\ref{lemma:rollback} to $(\N\land\cO)$-$\boxdot$-reach a point 
      $(r^{*},m-1)$ where the invariant condition for 1 holds. We can then apply 
      the inductive hypothesis to conclude that there exists a run $\hat{r}$
      such that $\I,(\hat{r},0) \models \neg \exists 0$ and $(\hat{r},0)$ is 
      $(\N\land\cO)$-$\boxdot$-reachable from $(r^{*},m-1)$. By the 
      transitivity of the $(\N\land\cO)$-$\boxdot$-reachability relation,
      $(\hat{r},0)$ is also $(\N\land\cO)$-$\boxdot$-reachable from
      $(r,m)$ and the claim holds.
        And if $m' = m-2$, 
    then by Lemma~\ref{lem:shortenchain}, there exists a point $(r',m)$ that is \reach from $(r,m)$ 
    such that $\I,(r',m)\models \inv_1(i_1,j_1,k_1)$ and
        all \zchains in $r'$ have length less than $m-2$.  
    The previous case applied to $(r',m)$ then yields the result.
\end{proof}
    \commentout{ 
    we consider each \zchain that reaches $m-2$. Let 
      $\C(r,m)$ be the set of all agents that receive a \zchain at time $m-2$ 
      at point $(r,m)$.
      Then, for each $k' \in \C(r,m)$, we define the set
      $S_{k'}(r,m)$ as the set of agents that $j$ heard from in round $m$ such 
      that each agent in $S_{k'}$ knows about a \zchain ending with $k'$ at 
      time $m-1$. We also note that any $k'' \in S_{k'}(r,m)$ where $k'' \ne k'$
      could not have decided in round $m-1$, as otherwise $j$ would have a 
      \zchain that reaches $m$. Thus, $k''$ decided at round $m-2$ or 
      earlier.

      Then, it suffices to show that the following claim holds, since the 
      inductive hypothesis implies that we can reach the desired run from any 
      point at time $m-1$ that satisfies the invariant condition for 1:
      \begin{claim}
        If $\I,(r,m) \models \inv_{1}(i,j,k)$,
        then there exists a point $(r^{*},m-1)$ that is 
        $(\N\land\cO)$-$\boxdot$-reachable from $(r,m)$ such that 
        $\I,(r^*,m-1) \models \inv_{1}(i,j,k)$
      \end{claim}
      We proceed by induction on the size of $\C(r,m)$.
      
      \roncomment{Replace this discussion using Lemma~\ref{lem:shortenchain}.}

      If $|\C(r,m)| = 0$, since there are no \zchains that reach $m-2$, we 
      can apply Lemma~\ref{lemma:rollback} to reach a point $(r^{*},m-1)$ that 
      satisfies the required condition.

      If $|\C(r,m)| > 0$, then we fix a $k' \in \C(r,m)$ and consider the 
      following.

      We first consider all $k'' \in S_{k'}(r,m)$ where $k'' \ne k'$ and 
      inductively apply the following argument to each $k''$. We apply 
      Lemma~\ref{lemma:delete-past} to $(r,m)$ and $(N\land\cO)$-$\boxdot$-reach
      a run $(r_{1},m_{1})$ where $\I,(r_{1},m_{1}) \models \inv_{1}(i,j,k)$,
      agent $j$ never hears from $k''$ in round $m$ or after, and $m_{1}$ is 
      either $m$ or $m+1$.

      Consider a run $r_{1}'$ that is identical to $r_{1}$ except that agent 
      $k''$ did not receive a message from $k'$ in round $m-1$ and agent $i$ 
 received a message from everyone in round $m_{1}$. Then, agent $j$ has the
      same local state in both $(r_{1},m_{1})$ and $(r_{1}',m_{1})$ since agent 
      $j$ never heard from $k''$ in $r_{1}$ after round $m$. Then at 
      $(r_{1},m_{1})$, agent $j$ satisfies the condition to decide 1
      in round $m_{1}+1$, agents $i,k$ did not decide 
      previously and act nonfaulty throughout, and $k$ is faulty. Hence, we 
      still have $\I,(r'_{1},m_{1}) \models \inv_{1}(i,j,k)$.

      We then apply Lemma~\ref{lemma:rollback} if $m_{1} = m+1$ to 
      $(\N\land\cO)$-$\boxdot$-reach a point $(r^{*}_{1},m)$. Now, in $(r^*,m)$ 
      the invariant condition still holds, $k''$ never hears from $k'$ in round 
      $m-1$, and $k'' \not\in S_{k'}(r^{*}_{1},m)$. We repeat this argument
      starting from $\I,(r^{*}_{1},m)$ to eventually 
      $(\N\land\cO)$-$\boxdot$-reach a point $(r^{*}_{2},m)$ where 
      $S_{k'}(r^{*}_{2},m) = \{ k' \}$.

      Lastly, we apply Lemma~\ref{lemma:delete-past} to $(r^{*}_{2},m)$ and 
      $(N\land\cO)$-$\boxdot$-reach a run $(r^{c},m^{c})$ where 
      $\I,(r^{c},m^{c}) \models \inv_{1}(i,j,k)$, agent $j$ does not hear from 
      $k'$ in round $m$ or later, and $m^{c}$ is either $m$ or $m+1$. We then 
      observe that $k'$ did not send any messages in round $m-1$ in this run as 
      we eliminated all $k'' \in S_{k'}(r^{c},m^{c})$ where $k'' \ne k'$. 
      Therefore, agent $j$ did not hear from $k'$ since round $m-1$. Similar 
      to the previous step in the argument, we then consider a run $r'^{c}$ that
      is identical to $r^{c}$ except that $k'$ never got a \zchain in 
      round $m-2$. Therefore, as $j \in (\N\land\cO)(r'^{c},m^{c}) \cup 
      (\N\land\cO)(r^{c},m^{c})$, $(r'^{c},m)$ is 
      $(\N\land\cO)$-$\boxdot$-reachable from $(r^{c},m)$ through agent $j$.
      If $m^{c} = m+1$, we apply Lemma~\ref{lemma:rollback} to 
      $(\N\land\cO)$-$\boxdot$-reach $(r'^{c},m)$. By the transitivity of the 
      $(\N\land\cO)$-$\boxdot$-reachability relation, this shows that 
      $(r'^{c},m)$ is $(\N\land\cO)$-$\boxdot$-reachable from $(r,m)$. Then, 
      since $|\C(r'^{c},m)| < |\C(r,m)|$, we can apply the inductive hypothesis 
      of the induction on $\C$ to conclude that the claim holds.
      }%

\begin{lemma} \label{lemma:o-contra}
If  $\I, (r,m) \models (i \in \N \land \Circ (\decided_{i} = \bot))$,
then $i$ considers a point $(r',m)$ possible at $(r,m)$ where 
  $\I,(r',m) \models \inv_1(i,j,k)$.
\end{lemma}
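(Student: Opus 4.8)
The plan is to show that if a nonfaulty agent $i$ is undecided at $(r,m)$ and does not decide in round $m+1$, then none of the decision conditions of $\kbp^1$ hold for $i$ at $(r,m)$, and to use this to extract a run $r'$ indistinguishable to $i$ in which the invariant $\inv_1(i,j,k)$ holds. First I would record what the hypothesis $\I,(r,m) \models i\in\N \land \Circ(\decided_i = \bot)$ buys us: since $i$ is running $\kbp^1$ and does not decide in round $m+1$, all five guards fail at $(r,m)$; in particular
$$\I,(r,m) \models \neg K_i(\ck{\N}(\faultyag \land \nodecided_\N(1)\land \exists 0)) \land \neg K_i(\ck{\N}(\faultyag\land \nodecided_\N(0)\land \exists 1)),$$
the fourth guard fails (so $\init_i\ne 0$ and $i$ does not know any agent just decided $0$), and the fifth guard fails, i.e.\ $\I,(r,m)\models \neg K_i(\bigwedge_{j\in\Agents}\neg(\deciding_j = 0))$. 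From the last fact there is a point $(r'',m)$ with $(r,m)\sim_i (r'',m)$ and an agent $j$ with $\I,(r'',m)\models \deciding_j = 0$. Since $(r,m)\sim_i(r'',m)$, by Lemma~\ref{lemma:cfaultyimpliesall} the common-knowledge conditions also fail at $(r'',m)$, and $i$ is still undecided there (as $\decided_i$ is part of $i$'s state... here, in the FIP context, it is inferred from $i$'s local state, which is the same in $r$ and $r''$). So $j$ decides $0$ in round $m+1$ of $r''$ \emph{not} via the common-knowledge line, and by Lemma~\ref{lemma:chain}, $j$ receives a \zchain in round $m$ of $r''$.

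Next I would modify $r''$ to make $i$, $j$, and a chosen third agent $k$ simultaneously witness $\inv_1$. The three bullets of $\inv_1(i,j,k)$ require: $i,j$ nonfaulty and undecided with $\deciding_j = 0$; $k$ faulty, undecided, and exhibiting no faulty behavior throughout. Agent $i$ is already nonfaulty and undecided in $r''$, and $j$ is deciding $0$ there; the only issues are (a) whether $j$ is nonfaulty in $r''$, (b) whether $j$ is distinct from $i$ (it is, since $i$ is not deciding $0$), and (c) the existence of a suitable faulty-but-silently-faulty $k$. For (a), since $j$ receives a \zchain of some length $m' \le m$ in $r''$, the chain $i_0,\dots,i_{m'}=j$ consists of distinct agents; if $j$ itself is faulty in $r''$ we can, using the surgery techniques of Proposition~\ref{pro:optimal}/Lemma~\ref{lemma:zchaincut}, interchange $j$ with a nonfaulty agent not on the chain — possible because $n-t\ge 2$ and the relevant messages from $j$ are only those along the chain — producing an indistinguishable-to-$i$ run in which $j$ is nonfaulty and still deciding $0$, while $i$'s view and undecidedness are preserved. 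For (c), after this surgery there are at least $n-t\ge 2$ nonfaulty agents; if $n - t \ge 3$ or there is already a faulty agent exhibiting no faulty behavior we can take $k$ directly, and otherwise we designate one currently nonfaulty agent other than $i$ and $j$ to be ``faulty'' in the failure pattern while keeping its message behavior intact (legitimate since we are in the sending-omissions model: being faulty does not force any omission), which changes no agent's local state and hence preserves $(r,m)\sim_i(r'',m)$; we must also check $t\ge 1$, which holds whenever the scenario is nontrivial, and if $t=0$ then no agent is ever undecided at $m\ge 1$ and the hypothesis forces $m=0$, a case we handle separately (at $m=0$, $i$ undecided and not deciding means $\init_i=1$ and $i$ cannot rule out a neighbor with $\init=0$ deciding $0$ in round $1$, giving $\inv_1$ directly).

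The main obstacle I expect is ensuring that after all of these modifications $k$ is genuinely \emph{undecided} at $(r',m)$ and that $j$ is still undecided at $(r',m)$ while deciding $0$ — the $\inv_1$ definition demands $\decided_j=\bot \land \deciding_j = 1$... wait, no: $\inv_v$ requires $\deciding_j = v$, so for $\inv_1$ we need $\deciding_j = 1$, not $\deciding_j = 0$. This is the real subtlety: Lemma~\ref{lemma:o-contra} concludes $\inv_1(i,j,k)$, meaning $j$ is deciding \emph{1}, not $0$. So the correct route is: the failure of the \emph{fifth} guard gives a point where some agent is deciding $0$, but to build $\inv_1$ I instead need to exhibit, in a run $i$ considers possible, a nonfaulty agent $j$ that decides $1$ in round $m+1$ together with a silently-faulty $k$. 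The cleanest way is to take $r'$ to be a run where the unique agent $i$ fails to hear from (there must be one, or $i$ would have enough information) is set up so that some other nonfaulty agent $j$ has exactly $i$'s evidence and will decide $1$ by Corollary~\ref{lemma:deciding1}, while a third agent $k$ is faulty-but-silent to block common knowledge (Lemma~\ref{lem:notCN}); all of this is indistinguishable to $i$ because we only alter the failure pattern among agents $i$ never hears from and only in round $\ge m$. Thus I would: (1) from the failed guards deduce $\latest_i(r,m)$ is small and $i$ hears from all but one agent (or, if not, reduce to that case using Lemma~\ref{lemma:zchaincut}); (2) pick $j$ nonfaulty with $i$'s view and $k$ faulty-silent; (3) invoke Corollary~\ref{lemma:deciding1} to get $\deciding_j = 1$ in the constructed $r'$; (4) verify $\decided_i = \decided_j = \decided_k = \bot$ at $(r',m)$ and $(r,m)\sim_i(r',m)$, yielding $\I,(r',m)\models \inv_1(i,j,k)$.
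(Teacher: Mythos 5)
You catch, mid-proof, that $\inv_1$ needs a witness $j$ with $\deciding_j=1$ rather than $0$, and your final sketch---an $i$-indistinguishable run in which a nonfaulty $j$ decides 1 via Corollary~\ref{lemma:deciding1} while a silently-faulty $k$ blocks $C_\N(\faultyag)$ via Lemma~\ref{lem:notCN}---is indeed the right idea and matches the paper's strategy. However, the execution has two genuine gaps. First, your recipe for producing $k$ (``designate a currently nonfaulty agent as faulty while keeping its message behavior intact'') fails in exactly the hard case, namely when $\I,(r,m)\models K_i(\faultyag)$: then $i$ already knows of $t$ faulty agents, so the run has $t$ faulty agents, adding another would leave $SO(t)$, and making one of the known-faulty agents silent contradicts evidence already in $i$'s view. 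Your side conditions ($n-t\ge 3$, or ``there is already a silently faulty agent'') do not cover this. The paper's proof splits on $K_i(\faultyag)$ versus $\neg K_i(\faultyag)$; in the hard case it uses $\neg C_\N(\faultyag)$ (forced by Lemma~\ref{lemma:cfaulty}) together with Proposition~\ref{lemma:dfaulty} to find a faulty agent $k$ whose faultiness no nonfaulty agent knew at time $m-1$, and then does a further case analysis on how $i$ learned of $k$'s faultiness in round $m$ (silence from $k$ itself, versus a report from a faulty $k'$), building the witnessing run differently in each sub-case. Your proposal never confronts this situation.

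Second, the construction of $j$ does not work as stated. An agent $j$ that ``has exactly $i$'s evidence'' would, under the deterministic implementation, take the same action as $i$---namely $\noop$---so it cannot be the agent deciding 1. What is needed (and what the paper does) is to give $j$ strictly more information: modify round $m$, invisibly to $i$, so that $j$ hears from all (or all but one) agents, and simultaneously ensure the constructed run contains no 0-chain longer than $m-2$ (only the chains $i$ already knows about), so that Corollary~\ref{lemma:deciding1} yields $\deciding_j=1$. Relatedly, your step (1)---``deduce that $i$ hears from all but one agent, or reduce to that case via Lemma~\ref{lemma:zchaincut}''---is neither achievable nor needed: that lemma preserves $i$'s view and does not change whom $i$ hears from, the failure of the decide-1 guard only guarantees the existence of enough \emph{hidden} agents in the sense of Proposition~\ref{lemma:notdeciding1} (there need not be a unique one), and the all-but-one hypothesis of Corollary~\ref{lemma:deciding1} must be arranged for $j$, not for $i$.
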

\begin{proof}
  Suppose that at some point $(r,m)$, $\I,(r,m) \models i \in \N \land \Circ 
  (\decided_{i} = \bot)$. We first observe that we must have $\I,(r,m) \models 
  \neg C_{\N}(\faultyag)$, since otherwise, by Lemma~\ref{lemma:cfaulty}, 
  we get a contradiction.
To show that
  $i$ considers 
  a point $(r',m)$ possible at $(r,m)$ where 
  $\I,(r',m) \models \inv_1(i,j,k)$,
  we consider two cases:

  \begin{itemize}%
    \item 
    Suppose 
that $\I,(r,m) \models \neg K_{i}(\faultyag)$. By definition, there 
exists an agent $k$ such that $\I,(r,m) \models \neg K_{i} \neg 
    (k \in \N) \land \neg K_{i} \neg (k \not \in \N)$. Let $r'$ be a run 
    identical to $r$ except $k \not\in \N$, $k$ does not exhibit any
    nonfaulty behavior in $r'$, 
      some agent $j \in \N$ hears from all other agents in round $m$,
      and the only \zchains in run $r'$ are the ones that $i$ knows
      about.
      This means that the \zchains in $r'$ have length at most
      $m-2$, so eventually all agents should decide 1.
      We claim that $(r,m) \sim_{i} (r',m)$. Suppose by way of 
      contradiction that $i$ has different local states in $(r,m)$ and $(r',m)$.
      Then either $i$ hears in $r'$ from an agent that didn't receive
      a message from  
      $k$ in round $m$ of $r$ that heard from $k$ in round $m$ of $r'$
      or $i$ hears about a \zchain in round $m$ of $r'$ that it did not know
      about at time $m$ in $r$.
      In the first case, $i$ would have known that $k$ is faulty in 
            $(r,m)$, which is a contradiction; the second case 
      contradicts the assumption that $i$ doesn't know about such
      \zchains in $r'$.

      By Corollary~\ref{lemma:deciding1}, agent $j$ 
      decides 1 in round $m+1$ upon hearing from all other agents in round $m$.
Thus, $\I,(r',m) \models \inv_1(i,j,k)$ and $(r,m) \sim_i (r',m)$.

    \item 
      If
    $\I,(r,m) \models K_{i}(\faultyag)$, we make use of the observation 
      that $\I,(r,m) \models \neg C_{\N}(\faultyag)$. By
 Proposition~\ref{lemma:dfaulty}, it must be the case that $\I,(r,m-1) \models 
 \neg \dist_{\N}(\faultyag)$; equivalently, $\I,(r,m) \models \forall
      j \in \N (\neg K_{j} \neg (k \in \N))$ for some faulty agent $k$. 
      Since $i \in \N$, we must have $\I,(r,m-1) \models \neg
      K_{i}\neg (k \in \N)$. 
Since $\I,(r,m) \models K_{i}(\faultyag)$, it must be the case that agent $i$ learned that $k$ is faulty in 
      round $m$, either by not getting a message from $k$ for the first time or 
      by getting a message from some faulty agent $k'$ that knew about agent $k$
      being faulty. Agent $k'$ can't be a nonfaulty agent, since no 
      nonfaulty agent knows that $k$ is faulty at time $m-1$. 

      If $i$ learned that $k$ is faulty by not 
      hearing from $k$ in round $m$, then at $(r,m)$, agent $i$
      considers possible the point $(r',m)$, where in round $m$ of $r'$,   
      some nonfaulty agent $j$ received a message from all other agents.
      By Corollary~\ref{lemma:deciding1}, in $r'$, 
      $j$ decides 1 in round $m$. 
      Thus, $\I,(r',m) \models \inv_1(i,j,k)$, as desired.

      If $i$ learned that $k$ is faulty by hearing 
      from some faulty agent $k'$ that $k$ is faulty, then $i$
      considers possible the run $r'$ 
      which is identical to $r$ except that $j$ and $k$ receive
      a message from all other agents except $k'$ in round $m$.
      By Corollary~\ref{lemma:deciding1}, both agents $j$ and $k$ are 
      about to decide 1 at $(r',m)$,  since all \zchains in $r'$
      have length at most $m-2$ and 
      they both receive a message from all agents but $k'$. 
      We must have $(r,m) \sim_i (r',m)$, as the only change 
      between the runs is in round $m$, and $i$ receives the same
      messages in both runs.  Moreover, $i$ is nonfaulty in both runs.

      Now consider a run $r''$ where $i$ does not hear from 
      $k'$ in round $m$ or later, and otherwise receives all messages from 
      the other agents. Thus, $i$, $j$, and $k$ hear from all agents
      but $k'$ in rounds  
      $m$ and $m+1$ in $r''$. As $j \in (\N\land\cO)(r'',m') \cup 
      (\N\land\cO)(r',m')$, $(r'',m')$ is $(\N\land\cO)$-$\boxdot$-reachable 
      from $(r',m)$. 

      Since $j$ does not hear from $k'$ after round $m$ in $r''$, it does not 
      discover that $k$ is faulty. Therefore, there exists a run $r'''$ 
      that is identical to $r'$ except that agent $k$ does not exhibit
      nonfaulty behavior.  Thus,
            $\I,(r''',m') \models \inv_1(i,j,k)$. Moreover,
      $j \in (\N\land\cO)(r''',m') \cup (\N\land\cO)(r'',m')$, 
so            $(r''',m')$ is $(\N\land\cO)$-$\boxdot$-reachable 
      from $(r'',m')$.
  \end{itemize}
\end{proof}

\commentout{
\begin{lemma} \label{lemma:dir1}
  $\I \models i \in \N \Rightarrow (B_i^\N (\exists 0 
  \wedge C^\boxdot_{\N \land \cO}\exists 0 \wedge \neg (\dec_i = 1)) \Rightarrow
  (\dec_i = 0))$.
\end{lemma}
\begin{proof}
  To show that this formula is valid in this system, we show that it holds for 
  all points $(r,m)$. We first assume that $\I,(r,m) \models i \in \N$.
  Then, we show that $\I,(r,m) \models B_i^\N (\exists 0 \wedge 
  C^\boxdot_{\N \land \cO}\exists 0 \wedge \neg (\dec_i = 1)) \Rightarrow 
  (\dec_i = 0)$ by contradiction.

  Suppose $\I,(r,m) \models B_i^\N (\exists 0 \wedge C^\boxdot_{\N \land \cO}
  \exists 0 \wedge \neg (\dec_i = 1))$ but $\I,(r,m) \models \neg 
  (\dec_{i} = 0)$. Then, since agent $i$ has neither decided nor decides
  0 or 1 in round $m$, $\I,(r,m) \models \Circ (\decided_{i} = \bot)$. Then, 
  we can use Lemma~\ref{lemma:o-contra} to conclude that $\I,(r,m) \models \neg 
  B_{i}^{\N}(C^\boxdot_{\N \land \cO}\exists 0)$, and we get the desired 
  contradiction.
\end{proof}
}

Finally, we construct the points $(r_i^{0'},m)$ and
$(r^{1'},m)$ required for weak safety.
The argument for this case is simpler than that of the 
previous case.
Most of the work for this case is done by
Lemmas~\ref{lemma:dir2-induction} and 
\ref{lemma:z-contra}, which are analogues of
Lemmas~\ref{lemma:dir1-induction} and~\ref{lemma:o-contra}.

\begin{lemma} \label{lemma:dir2-induction}
  If $\I,(r,m) \models \inv_{0}(i,j,k)$,
  then there exists a run $\hat{r}$ such that $(\hat{r}, 0)$ is
  $(\N \land \Z)$-$\boxdot$-reachable from $(r,m)$ and $\I,(\hat{r},0) \models 
  \neg \exists 1$.
\end{lemma}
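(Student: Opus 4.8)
The plan is to argue by induction on $m$, following the shape of the proof of Lemma~\ref{lemma:dir1-induction} but exploiting that on the $0$-biased side the decision to decide $0$ is always witnessed by a concrete \zchain, which is what makes this direction simpler. First I would record the basic observation that drives everything: since $k$ exhibits no faulty behaviour throughout $r$, Lemma~\ref{lem:notCN} gives $\I,(r,m)\models \neg \ck{\N}(\faultyag)$, and because knowledge implies truth, $j$ cannot be deciding $0$ by virtue of the common-knowledge clause of $\kbp^1$ (that clause would require $K_j(\ck{\N}(\faultyag\land \nodecided_{\N}(1)\land \exists 0))$, hence $\ck{\N}(\faultyag)$). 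So $j$'s impending $0$-decision is forced either by $\init_j = 0$, which forces $m=0$, or — since $\neg \ck{\N}(\faultyag)$ also supplies the hypotheses of Lemma~\ref{lemma:chain} — by $j$ receiving a \zchain in round $m$.

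For the base case $m=0$ we have $\init_j = 0$ and $j\in\N(r)$. I would take $\hat r$ to be the run with the empty failure pattern in which every agent has initial preference $0$; then $r_j(0)=\hat r_j(0)$, both points witness $\deciding_j = 0$, and $j$ is nonfaulty in both, so $j \in (\N\land\Z)(r,0)\cap(\N\land\Z)(\hat r,0)$ and $(\hat r,0)$ is $(\N\land\Z)$-$\boxdot$-reachable from $(r,0)$ in a single step, with $\I,(\hat r,0)\models \neg\exists 1$ as required. For the inductive step, $m>0$ and $j$ receives a \zchain $i_0,\ldots,i_m = j$. The agent $i_{m-1}$ first decides $0$ in round $m$, so $\I,(r,m-1)\models \deciding_{i_{m-1}} = 0$, and $i,i_{m-1},k$ are pairwise distinct (both $i$ and $k$ are undecided at $(r,m)$ while $i_{m-1}$ is decided there, and $i_{m-1}\ne i_m = j$). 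The goal is to $(\N\land\Z)$-$\boxdot$-reach from $(r,m)$ a point $(r^{*},m-1)$ at which $\inv_0$ holds (for some permutation of three agents) and then apply the induction hypothesis there, concluding by transitivity of $(\N\land\Z)$-$\boxdot$-reachability. If the \zchain contains a nonfaulty agent $i_\ell$ with $\ell\le m-1$ — in particular if $i_{m-1}$ itself is nonfaulty — this is immediate: $i_0,\ldots,i_\ell$ is a \zchain in $r$, $i_\ell$ is deciding $0$ and nonfaulty at $(r,\ell)$, $i$ is still undecided there, and $k$ is still faulty and silent, so $\I,(r,\ell)\models\inv_0(i,i_\ell,k)$, and $(r,\ell)$ is reachable from $(r,m)$ within the run $r$ (using that $i_\ell\in(\N\land\Z)$ at both times). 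Otherwise all of $i_0,\ldots,i_{m-1}$ are faulty, and here I would invoke an $(\N\land\Z)$-analogue of Lemma~\ref{lemma:rollback}: a short sequence of indistinguishable modifications that swaps the faultiness of $i_{m-1}$ with that of a displaced agent, letting $i_{m-1}$ play the nonfaulty ``deciding-$0$'' role and the displaced agent take the faulty-but-silent role, while leaving the run up to round $m-1$ — and in particular the prefix \zchain $i_0,\ldots,i_{m-1}$ — intact, so that $\inv_0$ holds at time $m-1$ of the resulting run.

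The simplification relative to the $(\N\land\cO)$ case is precisely that the witnessing \zchain is already present in $r$, so there is no need for an analogue of Lemma~\ref{lem:shortenchain} (the ``hidden \zchain'' bookkeeping): we never have to account for arbitrarily structured families of partial chains, only to roll the one chain back a single level. I expect the main obstacle to be exactly that rollback surgery in the subcase where no $i_\ell$ with $\ell<m$ is nonfaulty — one must carry out the faultiness swap through $(\N\land\Z)$-$\boxdot$-reachable intermediate runs without creating any new \zchain of length $\ge m-1$ (which would make agents decide $0$ prematurely and break the invariant) and without destroying the prefix $i_0,\ldots,i_{m-1}$. The intended tool is to reuse the ``cut all extraneous \zchains'' construction of Lemma~\ref{lemma:zchaincut} to first make the chain ``tight'' (so $i_{m-1}$ need not omit anything), and then apply the faultiness-swap pattern of Lemma~\ref{lemma:rollback}, specialized to the indexical set $\N\land\Z$.
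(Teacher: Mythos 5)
Your base case and your ``some chain agent is nonfaulty'' subcase are fine, but the heart of the inductive step---the case in which every agent on the 0-chain $i_0,\ldots,i_{m-1}$ is faulty, which is the typical situation---is exactly the part you have not carried out, and the route you sketch for it does not go through. Promoting $i_{m-1}$ to nonfaulty ``while leaving the run up to round $m-1$ intact'' is in general impossible: nothing in $\inv_0(i,j,k)$ prevents $i_{m-1}$ from having omitted messages in earlier rounds (and $i_{m-2}$, when it exists, necessarily omitted its round-$(m-1)$ decision message to both $i$ and $j$), and in the full-information context such omissions can already be recorded in $j$'s local state at time $m$, so no run indistinguishable to the natural pivot $j$ can make $i_{m-1}$ nonfaulty while keeping the prefix. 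Lemma~\ref{lemma:zchaincut} does not repair this: its construction only cuts messages arriving at points outside the pivot's causal cone and merely \emph{preserves} clean behaviour (``agents that do not exhibit faulty behavior in $r$ also do not exhibit faulty behavior in $r'$''); it never makes a dirty agent clean. Likewise there is no ready-made $(\N\land\Z)$ analogue of Lemma~\ref{lemma:rollback}: that lemma's mechanism (hearing from all but one agent together with the absence of long 0-chains, via Corollary~\ref{lemma:deciding1}) forces 1-decisions, whereas forcing a 0-decision requires delivering a 0-chain, which is a different surgery. So the key construction of the proof is missing.

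The paper's proof avoids the whole issue by never touching the status of the chain agents. Writing $k'=i_{m-1}$ and $k^*=i_{m-2}$, it first swaps the faultiness of $i$ and $k$, letting $k$ receive $k'$'s round-$m$ decision message while blocking $i$'s round-$m$ message to $k$ (invisible to $j$ at time $m$, so this point is reachable via $j$); it then grafts the already-nonfaulty, already-undecided agent $i$ onto the chain by delivering $k^*$'s round-$(m-1)$ decision message to $i$ (for $m=1$, by giving $i$ initial preference $0$) and delivering $i$'s round-$m$ message to $j$ (invisible to $k$, who is now nonfaulty and receiving a 0-chain, hence in $\N\land\Z$, so reachable via $k$); finally it restores $i$ to nonfaulty (invisible to $j$). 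At the resulting point $\inv_0(j,i,k)$ holds at time $m-1$, and the induction hypothesis plus transitivity of $(\N\land\Z)$-$\boxdot$-reachability finish the proof. If you want to salvage your outline, replace ``promote $i_{m-1}$'' by this graft-$i$ move; as it stands, your proposal has a genuine gap at its central step.
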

\begin{proof}
  We proceed by induction on $m$.

  If $m = 0$, then since agent $j$ decides 0 in round 1, it must be the
  case that  
  $\I,(r,0) \models \init_{j} = 0$. Hence, agent $j$ considers a run $\hat{r}$ 
  where every agent has initial preference 0 possible.
Clearly, $(\hat{r},0)$ is $(\N\land\Z)$-$\boxdot$-reachable from $(r,0)$.   
If $m > 1$, since agent $k$ is faulty but does not exhibit faulty
behavior in $r$, $\I,(r,m) \models \neg C_{\N}(\faultyag)$.
By Lemma~\ref{lemma:chain}, 
  agent $j$ 
  must have received at least one \zchain of length $m$ in order to
  decide 0.  
  Let $k'$ be an agent from whom $j$ received a \zchain.
  Note that $k'$ 
  can't be $i$  
    or $j$, since $i$ and $j$ do not decide 
    before round
     $m+1$. Let $r'$ be a run identical to $r$, except that $i$ is faulty
  instead of $k$, $k$ hears from $k'$, and $k$ does not hear from
  $i$ in  
    round $m$ of $r'$. Since $j \in (\N\land\Z)(r,m) \cap
  (\N\land\Z)(r',m)$ and $j$ has  
  the same local state at both points, $(r',m)$ is 
  $(\N\land\Z)$-$\boxdot$-reachable from $(r,m)$.

  In $r'$, agent $k$ is nonfaulty and receives a \zchain
  from $k'$ in round $m$. Let $k^*$ be an agent that sent a
  \zchain to $k'$ in round  
  $m-1$. Consider a run $r''$ that is identical to $r'$ 
  except that agent $i$ receives a \zchain in round $m-1$ from $k^*$ and
    agent $j$ receives a message from agent $i$ in round $m$.
  (If $m=1$, then we take $r''$ to be a run where $i$ has initial
  preference 0 and $j$  
  receives a message from $i$ in round 1.) Agent $i$ decides 0 in round 
  $m$ of $r''$. Since $k$ has the same local state in both $(r',m)$ and
  $(r'',m)$,  
  and $k \in (\N\land\Z)(r',m) \cap (\N\land\Z)(r'',m)$, $(r'',m)$ is 
  $(\N\land\Z)$-$\boxdot$-reachable from $(r',m)$.
  Because $j$ received a message from $i$ in round $m$ of $r''$ and $j$
is nonfaulty, $j$ decides 0 in round $m+1$ of $r''$.

Let $r'''$ be a run that is identical to $r''$ except that agent
$i$ is nonfaulty. Then $j$ has the same
local state 
  in both $(r'',m)$ and $(r''',m)$, and $j \in (\N\land\Z)(r'',m) \cap 
  (\N\land\Z)(r''',m)$.
  Thus, $(r''',m)$ is $(\N\land\Z)$-$\boxdot$-reachable from $(r,m)$ and 
  in $r'''$, agent $i$ decides 0 in round $m$, $i \in \N$, $j \in \N$,
  $k \not\in \N$, and $k$ acts nonfaulty throughout run $r'''$.
  Moreover, $\I,(r''',m-1) \models \inv_0(j,i,k)$.
  By the inductive 
  hypothesis, it follows
  that there exists a run $\hat{r}$ such that $(\hat{r},0)$ is 
  $(\N\land\Z)$-$\boxdot$-reachable from $(r''',m-1)$ and $\I,(\hat{r},0) 
    \models \neg \exists 1$. By the transitivity of the 
  $(\N\land\Z)$-$\boxdot$-reachability relation, $(\hat{r},0)$ is also
  $(\N\land\Z)$-$\boxdot$-reachable from $(r,m)$ and the claim holds.
\end{proof}

\begin{lemma} \label{lemma:z-contra}
If  $\I, (r,m) \models (i \in \N \land \Circ (\decided_{i} = \bot))$,
then $i$ considers a point $(r',m)$ possible at $(r,m)$ where 
  $\I,(r',m) \models \inv_0(i,j,k)$.
\end{lemma}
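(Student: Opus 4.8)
\textbf{Proof proposal for Lemma~\ref{lemma:z-contra}.}

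The plan is to mirror the structure of the proof of Lemma~\ref{lemma:o-contra}, exploiting the fact that the $(\N\land\Z)$-case is genuinely simpler: to establish $\inv_0(i,j,k)$ we only need to produce a run $r'$ indistinguishable to $i$ in which some nonfaulty agent $j$ actually decides $0$ in round $m+1$, together with a faulty agent $k$ that exhibits no faulty behaviour and has not decided. The key difference from the $1$-case is that deciding $0$ is triggered simply by receiving a \zchain (Lemma~\ref{lemma:chain}), so we do not need the counting argument of Proposition~\ref{lemma:notdeciding1}; we just need to exhibit an indistinguishable run containing a long enough \zchain ending at a nonfaulty agent.

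First I would observe, exactly as at the start of Lemma~\ref{lemma:o-contra}, that $\I,(r,m)\models i\in\N\land\Circ(\decided_i=\bot)$ forces $\I,(r,m)\models \neg C_\N(\faultyag)$ by Lemma~\ref{lemma:cfaulty}, and that the two common-knowledge decision conditions fail for $i$ at $(r,m)$ (otherwise $i$ would decide in round $m+1$). Then I would split on whether $\I,(r,m)\models K_i(\faultyag)$. In the case $\neg K_i(\faultyag)$, there is an agent $k$ with $\I,(r,m)\models \neg K_i\neg(k\in\N)\land\neg K_i\neg(k\notin\N)$; I would construct $r'$ by making $k$ faulty but silent-faulty-free (i.e.\ exhibiting no faulty behaviour), while \emph{adding} a \zchain of length $m$ ending at some nonfaulty agent $j\notin\{i,k\}$ — concretely, take a chain $i_0\to\cdots\to i_m=j$ whose first $m-1$ agents are faulty and each sends only to the next agent in the chain at the appropriate round, and whose messages are invisible to $i$. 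Since $n-t\ge 2$ such agents exist, and the chain uses at most $t$ faulty agents. I would then verify $(r,m)\sim_i(r',m)$ just as in Lemma~\ref{lemma:o-contra}: $i$ cannot detect the altered behaviour of $k$ (it already could not tell whether $k$ was faulty), and $i$ does not learn of the new \zchain because it is routed away from $i$; by Lemma~\ref{lemma:chain}, $j$ decides $0$ in round $m+1$ of $r'$, giving $\inv_0(i,j,k)$.

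In the case $K_i(\faultyag)$, I would argue as in Lemma~\ref{lemma:o-contra} that, since $\neg C_\N(\faultyag)$, Proposition~\ref{lemma:dfaulty} gives $\I,(r,m-1)\models\neg\dist_\N(\faultyag)$, so there is a faulty $k$ that no nonfaulty agent knew to be faulty at time $m-1$, and $i$ learned of $k$'s faultiness in round $m$ itself — either by missing a message from $k$, or via a faulty relay $k'$. In the first subcase I take $r'$ to agree with $r$ except that in round $m$ some nonfaulty $j$ gets a \zchain (constructed as above); $i$ still considers $r'$ possible, $k$ is faulty-but-clean in $r'$, and $j$ decides $0$ in round $m+1$. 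In the second subcase (relay $k'$), I would, as in Lemma~\ref{lemma:o-contra}, first pass to an indistinguishable run where $j$ and $k$ both receive a \zchain in round $m$ (routed so $i$ sees no change), then $(\N\land\Z)$-$\boxdot$-reach a point via $j$ where $i$ stops hearing from $k'$ so that $k$'s faultiness is no longer detectable, and finally switch $k$ to faulty-but-clean; this yields $\inv_0(i,j,k)$. The main obstacle, as in the $1$-case, is bookkeeping: one must check at each surgery that (a) the modified failure pattern still lies in $SO(t)$, (b) $i$'s local state is genuinely unchanged (using that the protocol is full-information and deterministic, so indistinguishability reduces to equality of the \emph{hears-from} graph $\lamport{r}$ up to time $m$), and (c) the inserted \zchain is not accidentally seen by $i$ — but because \zchain{s} only shorten the decision time and the $(\N\land\Z)$-reachability steps are all ``forward'' moves through an agent who actually decides $0$, no counting invariant like $\len_i$ is needed, which is why this lemma is lighter than Lemma~\ref{lemma:o-contra}.
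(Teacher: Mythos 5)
Your overall architecture (the case split on $K_i(\faultyag)$, the clean-faulty agent $k$, the run surgery, and the treatment of the relay subcase) matches the paper's proof, but there is a genuine gap at the point where you produce the $0$-deciding nonfaulty agent. In case 1, and again in the first subcase of case 2, you simply \emph{insert} a \zchain of length $m$ ending at a nonfaulty $j$ and assert its messages can be made ``invisible to $i$''. Nothing in your argument guarantees that such a run is indistinguishable to $i$: with a full-information protocol, $i$'s time-$m$ view may already pin down the initial preferences of, and the rounds in which it heard from, precisely the agents you would need on the chain (think of a failure-free-looking view in which $i$ has seen every agent's initial value), and the exact condition under which a length-$m$ chain can be threaded past $i$'s view is the counting condition of Proposition~\ref{lemma:notdeciding1} --- the tool you explicitly claim not to need. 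The paper does not build the chain by hand; it gets it from the program: since $i$ does not decide in round $m+1$, the decide-$1$ test of $\kbp^1$ fails at $(r,m)$, i.e.\ $\I,(r,m)\models\neg K_i(\bigwedge_{j\in\Agents}\neg(\deciding_j=0))$, so by the semantics of $K_i$ there is already a point $(r',m)$ with $r_i(m)=r'_i(m)$ at which some agent $j'$ is deciding $0$, and Lemma~\ref{lemma:chain} (the common-knowledge conditions having failed) supplies the \zchain inside a \emph{legal} run; the only surgery then needed is to reroute the penultimate chain message to a nonfaulty agent $j$ and to arrange the clean-faulty $k$.

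Your hand-built chain also breaks the failure budget. Its first $m$ members must be faulty (each omits all messages except one), and $m$ can be as large as $t$, since a nonfaulty agent may remain undecided up to time $t+1$; the clean-faulty $k$ cannot lie on the chain (chain members exhibit omissions, $k$ must not), so your $r'$ may contain $t+1$ faulty agents and is then not a run of $SO(t)$. The bound you invoke, $n-t\ge 2$, only guarantees a second nonfaulty agent and does not address this. In the paper's route the issue does not arise, because the witness run comes from the failed knowledge test and is therefore automatically a run of the system. Your outline of the second subcase of case 2 (relay $k'$, then an $(\N\land\Z)$-$\boxdot$ step through $j$, then switching $k$ to clean-faulty) does track the paper, but it inherits the same unjustified chain-insertion step; once you replace that step by the failure of the decide-$1$ test plus Lemma~\ref{lemma:chain}, the rest of your plan goes through essentially as in the paper.
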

\begin{proof}
  Suppose that $\I,(r,m) \models i \in \N \land \Circ 
  (\decided_{i} = \bot)$. We must have $\I,(r,m) \models 
  \neg C_{\N}(\faultyag))$, since otherwise, by Lemma~\ref{lemma:cfaulty}, 
  we get a contradiction. We next show that, whether or not
  $i$ knows which agents are faulty, $i$ considers 
  a point $(r',m)$ possible at $(r,m)$ from which a point $(r''',m)$
  is   $(\N \land \cO)$-$\boxdot$-reachable from $(r,m)$ such that 
  $\I,(r''',m) \models \inv_0(i,j,k)$.

  \begin{itemize}%
  \item If $\I,(r,m) \models \neg K_{i}(\faultyag)$ then, by definition, there 
    exists a $k \in \Agents$ such that $\I,(r,m) \models \neg K_{i} \neg 
    (k \in \N) \land \neg K_{i} \neg (k \not \in \N)$.  Let $r'$ be a run 
        identical to $r$ except that $k \not\in \N$ and $k$ does not
        exhibit faulty behavior in $r'$.
    We claim that $(r,m) \sim_{i} (r',m)$. Suppose by way of 
      contradiction that $i$ has different local states in $(r,m)$ and $(r',m)$.
Since the only difference between these runs are the blocked 
      messages sent by $k$ that are no longer blocked, $i$ must have heard from 
      an agent (either directly or indirectly) that did not receive a message 
      from $k$ in $r$ but did in $r'$. But in that case, $i$ would have known
            that $k$ is faulty in $r$, which is a contradiction.

            Since $i$ has not yet decided by $(r',m)$ and does not decide 
in round $m$ (as $i$ has the same local state at $(r,m)$ and $(r',m)$),
     we have that $\I,(r',m) \models \neg K_{i}(\bigwedge_{j \in \Agents} \neg 
      (\deciding_{j} = 0))$. That is, $i$ considers it possible that
      there exists  an agent $j'$ that decides 0 in round $m+1$. 
      Let $r''$ be the run that is identical to $r'$ except that $j'$
    decides 0 in round $m+1$ of $r''$.  By
    Lemma~\ref{lemma:chain}, if $m \ge 1$, $j'$ receives a \zchain in $r''$
    from some agent $k'$ in round $m$, and if $m=0$, $\init_{j'} = 0$. Let 
      $r'''$ be a run that is identical to $r''$ except that agent $j$
    (who is also nonfaulty) receives a message from $k'$ in round $m$
      if $m \ge 1$ and has 
            initial preference 0 if $m=0$. Hence, agent $j$ decides 0 in
    round $m+1$ of $r'''$. By the  
      transitivity of the knowledge relation, we then have 
      $(r,m) \sim_{i} (r''',m)$.

    \item If $\I,(r,m) \models K_{i}(\faultyag)$, we use the observation 
            that $\I,(r,m) \models \neg C_{\N}(\faultyag)$.
      As in the proof of Lemma~\ref{lemma:o-contra}, we must have that
       $\I,(r,m-1) \models \neg K_{i}\neg (k \in \N))$ for some
      nonfaulty agent $k$.

           It must be the case that agent $i$ learned that $k$ is faulty in 
 round $m$, due to either not getting a message from $k$ for the first time or 
 getting a message from some faulty agent $k'$ that knew that $k$
 was faulty. (Agent $k'$ can't be nonfaulty, since no 
 nonfaulty agent knows that $k$ is faulty at time $m-1$.) 
      Since agent $i$ has not decided yet and does not decide in round 
      $m+1$ of $r$, agent $i$ knows that the other nonfaulty agents
      did not decide 
      in an earlier round. Moreover, in $(r,m)$, $i$ considers a point 
      $(r',m)$ possible where $\I,(r',m) \models (\deciding_{j'} = 0)$ for some 
            agent $j'$, since $i$ does not decide 1 in round $m+1$ of $r'$.

            If agent $i$ learned that $k$ is faulty due to not 
            hearing from $k$ in round $m$ of $r'$, then at the point $(r',m)$,
            agent $i$ considers 
      $(r'',m)$ possible,  
            where $r''$ is such that some nonfaulty agent $j$ received a
       message from $k$ in round $m$
       and $k$ decides 0 in round $m-1$ (due to hearing from the agent
       that sent a 
      message to the agent that sent a message to $j'$ in the \zchain). 
      In $r''$, $j$ receives a \zchain in round $m$ and thus decides
      0 in round  $m+1$.
      It follows that $\I,(r'',m) \models \inv_0(i,j,k)$.
      
      If agent $i$ learned that $k$ is faulty due to hearing 
      it from some faulty agent $k'$, then agent $i$ considers a point
      $(r'',m)$ possible at $(r',m)$, where $r''$ is identical to $r'$
      except that in $r''$, agent $j$ does not  
      hear from agent $k'$ at or after round $m$ and receives a \zchain 
from agent $j'$ in round $m$.
If $k \ne k'$, then again we have $I,(r'',m) \models \inv_0(i,j,k)$.

If $k = k'$, then $I,(r'',m) \not\models \inv_0(i,j,k)$,   because 
$k$ does not act nonfaulty throughout run $r''$.  In that case,
consider a run $r^*$ that is identical to $r''$ except that
agent $k$ sends a message to all agents but $j$ in round $m$ of $r^*$. 
      Since $j \in (\N\land\Z)(r'',m) \cap (\N\land\Z)(r^*,m)$ and 
      $j$ has the same local state at both points, $(r^*,m)$ is 
      $(\N\land\Z)$-$\boxdot$-reachable from $(r'',m)$. Let 
      $r'''$ be the run that is identical to $r^*$ except that agent $k$
does not exhibit faulty behavior in $r'''$ and $i$ receives a \zchain
from $j'$.  
      By construction, $\I,(r''',m) \models \inv_0(i,j,k)$
      Since agent $i$ did not know $k$ was faulty in $r^*$,
      $i \in (\N\land\Z)(r^*,m) \cap (\N\land\Z)(r''',m)$
      and $(r''',m)$ is $(\N\land\Z)$-$\boxdot$-reachable from
      $(r^*,m)$.
Again, this suffices for the desired result.
        \end{itemize}
\end{proof}
This completes the proof that $\kbp^1$ is weakly safe
and we get the following result:
\rethm{thm:kbp1-safe} 
  $\kbp^1$ is weakly safe with respect to $\gamma_{\fip,n,t}$.
\erethm

\commentout{
\begin{lemma} \label{lemma:dir2}
  $\I \models i \in \N \Rightarrow
        (B_i^\N (\exists 1 \wedge C^\boxdot_{\N \land \Z}\exists 1 
  \wedge \neg (\dec_i = 0)) \Rightarrow (\dec_i = 1))$.
\end{lemma}
\begin{proof}
  To show that this formula is valid in this system, we show that it holds for 
  all points $(r,m)$.
So suppose, by way of contradiction, that for some point $(r,m)$ in
$\I$, we have $$\I,(r,m) \models i \in \N \land B_i^\N (\exists 1
\wedge C^\boxdot_{\N \land \Z} \land \wedge \neg (\dec_i = 0)).$$
Since agent $i$ has not decided either 0 or 1 at or before round $m$,
it follows that $\I,(r,m) \models \Circ (\decided_{i} = \bot)$. 
By Lemma~\ref{lemma:z-contra}, it follows that
  $\I,(r,m) \models \neg B_{i}^{\N} (C^\boxdot_{\N \land \Z} \exists 1)$, 
so we get the desired contradiction.
\end{proof}

Combining Corollary~\ref{corollary:hmwcorrectness},
Lemma~\ref{lemma:dir1}, 
and Lemma~\ref{lemma:dir2}, we get:
\begin{theorem} \label{theorem:optimal}
  For all implementations $P$ of the knowledge-based program $\kbp^1$ with
  respect to $\gamma_{\fip,n,t}$,
  \begin{enumerate}[label=(\alph*)]
    \item $\I_{\gamma_{\fip,n,t},P} \models i \in \N \Rightarrow ((\dec_i = 0)
        \Leftrightarrow B_i^\N (\exists 0 \wedge
      C^\boxdot_{\N \land \cO}\exists 0 \wedge \neg (\dec_{i} = 1))))$
    \item $\I_{\gamma_{\fip,n,t},P} \models i \in \N \Rightarrow ((\dec_i = 1)
        \Leftrightarrow B_i^\N (\exists 1 \wedge
      C^\boxdot_{\N \land \Z}\exists 1 \wedge \neg (\dec_i = 0))))$
  \end{enumerate}
\end{theorem}

By \cite[Theorem 5.4]{HMW}, it follows that all
implementations $P$ 
of $\kbp^{1}$ with respect to $\gamma_{\fip,n,t}$ are optimal.
}

\subsubsection{Implementation}
We now explicitly define the full-information context using
communication graphs 
similar to those used by Moses and Tuttle \cite{MT}.
Intuitively, a communication graph for agent $i$ is a labeled graph
that provides a compact
description of all messages sent and received by agent $i$.  Formally,
the \emph{communication graph} $G_{i,m}$ is defined as
follows.  The set  $V(G_{i,m})$ of vertices consists of all pairs of
the form $(j,m')$ for all 
    $m' \leq m$ and agents $j$; the set $E(G_{i,m})$ of edges consists of
    all edges 
    from $(j,m'-1)$ to $(j',m')$ for $j,j' \in \Agents$ and $m' \leq
    m$; there is a message label $l_{G_{i,m}} \in \{0,1,?\}$ for each
    edge (see below), and an initial  
    preference label $p_{G_{i,m}} \in \{0,1,?\}$ for each $j \in
    \Agents$ (which can     be viewed as a label on vertices of the
    form $(j,0)$).     
    An edge from $(j,m'-1)$ to $(j',m')$ is labeled with a 1 if $i$ knows
    that $j$ sent a message to $j'$ in round $m'$; it is labeled with
    a 0 if $i$ knows that $j$ did not send a message to $j'$ in round
    $m'$; and it is labeled with a ? if $i$ does not know whether $j$ sent a
    message to $j'$ in round $m'$.  Note that with a full-information
    protocol, if $i$ knows the initial preferences of agents, and
    which agents sent round $m''$ messages for $m'' < m'$, then it is
    easy for $i$ to figure out what the content of a message that was
    sent would be.
    A preference label of $v \in \{0,1\}$ on $(j,0)$
    indicates that $i$ knows that $j$'s initial preference was $v$,
    while a label of $?$ indicates that $i$ does not know $j$'s
    initial preference. 
    We write $G_{i,m}(r)$ for the communication graph of agent $i$ at time $m$ 
        in a run $r$ in a full-information exchange and
        $\mathcal{G}_{i,m}$ for the set 
    of all time-$m$ communication graphs for agent $i$.

    \renewcommand{\G}{\mathcal{G}}

    Let $\exchange_{\fip}(n)$ be the 
full information-exchange protocol for $n$ agents, where for each agent $i$,
the following hold:
\begin{itemize}
  \item The local states have the form $\langle \Time_i, \decided_i, \init_i,
    G_{i,\Time_i} \rangle$, where $G_{i,\Time_i}$ is a communication graph.
    
  \item The initial local states of each agent $i$ have the form 
    $\langle 0, \bot, \init_i, G_{i,0} \rangle$, where $G_{i,0} \in
    \G_{i,0}$.  (Note that that in $G_{i,0}$, we must have 
    $p_{G_{i,0}}(j) = \;?$ for all agents $j \ne i$ and 
    $p_{G_{i,0}}(i) = \init_i$.)
    
  \item $M_i =  \G_{i,\Time_i}$. 
  \item For all agents $j$ and actions $a$, $\mu_{ij}(\langle \Time_i, 
    \decided_i, \init_i, G_{i,\Time_i} \rangle,a) = G_{i,\Time_i}$.
  \item $\delta_i(\langle \Time_i, \decided_i, 
    \init_i, G_{i,m} \rangle,a,(m_1,\dots,m_n)) = 
 \langle \Time_i+1, \decided'_i, \init_i, G_{i,\Time_i+1} \rangle$,
    where $G_{i,\Time_i+1} \in \G_{i,\Time_i+1}$ is obtained by adding
    vertices and edges for round
    $\Time_i+1$ and combining the labels from all graphs that were 
    received by $i$ and $G_{i,\Time_i}$. 
More precisely, if $RG_i$ consists of all the graphs that $i$ received
up to and including round $\Time_i+1$, then
    \begin{align*}
          l_{G_{i,\Time_i+1}}((j,m),(j',m+1)) &= \begin{cases}
            v & \mbox{if } \exists  G \in RG_i ( l_G((j,m),(j',m+1)) = v) 
            \land v \in \{0,1\}, \\
            1 & \mbox{if } m = \Time_i \land j' = i \land m_j \ne \bot, \\  
            0 & \mbox{if } m = \Time_i \land j' = i \land m_j = \bot, \\
            ? & \text{otherwise.}
          \end{cases} \\
          p_{G_{i,\Time_i+1}}(j) &= \begin{cases} 
            v & \mbox{if } \exists G \in S, v \in \{0,1\}( (p_{G}(j) = v)),\\
            ? & \text{otherwise}.  
          \end{cases}
    \end{align*}
    
        Finally, $\decided_i'$ is determined by the action $a$, just
    as in the standard EBA context. 
\end{itemize}

Let $\gamma_{\fip,n,t} = (\exchange_{\fip}(n), SO(t), \pi_{\fip,n})$,
where $\pi_{\fip,n}$ interprets the standard propositions in the
standard way.
To  check the knowledge conditions in $\kbp^1$, it is useful to define
the following sets, which can be computed in polynomial time from the
communication graph: 
\begin{itemize}
\item For $m' < m$, $f(j,m',G_{i,m})$ is the set of faulty agents that
  $i$ knows that $j$ knows about at time $m'$, given $G_{i,m}$.  The 
set 
$f(j,m',G_{i,m})$ is the union of (a) $f(j',m'-1,G_{i,m})$ for all
  $j'$   that sent a message to $j$ in round $m'$ in $G_{i,m}$ if $m' >0$, (b)
   $\{j'\}$ for each agent $j'$ that did
    not send a message to $j$ in round $m'$ in $G_{i,m}$, and (c) 
    $f(j,m'-1,G_{i,m})$ if $m' > 0$.  (Note that $f(j,0,G_{i,m}) = \emptyset$.)
  \item For $m' < m$,   $D(S,m',G_{i,m})$ is the set of faulty agents
    that $i$ knows that the agents in $S$ know about at time $m'$,
    given $G_{i,m}$.  
 $D(S,m',G_{i,m}) = \cup_{k \in S} f(k,m',G_{i,m})$.
\end{itemize}

In addition to $f$ and $D$, agent $i$ can compute 
the actions of each agent $j$
at time $m' < m$ if 
$(j,m') \lamport{r} (i,m)$ where $r$ is the current run, 
since we are using
a full-information protocol. Let $d(j,m',G_{i,m}) \in \{0,1,\bot,?\}$ represent
what $i$ knows about the action of agent $j$ 
in round $m'+1$. If $d(j,m',G_{i,m}) = v \in \{0,1\}$, then $i$ knows
that  that $j$ decides $v$ 
in round $m'+1$; if $d(j,m',G_{i,m}) = \bot$, then $i$ knows that $j$
      does not decide in  
round $m'+1$; finally, if $d(j,m',G_{i,m}) = \;?$, then 
$(j,m') \not\lamport{r} (i,m)$.

The set of values known by each agent that sent a message 
either directly or indirectly to $i$ can be also computed from the
communication  graph in polynomial time.
Let $V(j,m',G_{i,m})$ be the set of values that $i$ knows that $j$
knows about at time $m'$
if $(j,m') \lamport{r} (i,m)$ 
according to $G_{i,m}$ and $\emptyset$ otherwise.

We next define families $\common_v(i,m,G_{i,m})$ and
      $\cond_1(i,m,G_{i,m})$ of Booleans  that can also be
      computed in polynomial time. 
\begin{definition}[$\common_v$ and $\cond_v$]
  Intuitively, $\common_v$ holds if $K_i(\dist_N(\faultyag)$ 
  $\land$\newline
  $(\bigwedge_{j \in \N} (K_j \Circ \nodecided_j(v)))$ $\land$
  $(\bigvee_{j \in \N} K_j (\exists v)))$ holds at time $m$ (which
  means that $C_\N(\faultyag \land \nodecided_\N(1) \land \exists v)$
  holds at time $m+1$), 
  given $G_{i,m}$.  We compute $\common_v$ as follows.
   If either of the following three conditions hold, then
  $\common_0(i,m,G_{i,m}) = 
 \mathit{false}$:
  \begin{itemize}
 \item $|D(\bar{f}(i,m,G_{i,m}),m-1,G_{i,m})| \ne t$,
where $\bar{f}(j,m',G_{i,m}) = \Agents - f(j,m',G_{i,m})$ (these are
the agents that $i$ thinks might be nonfaulty at time $m'$, given $G_{i,m}$);
\item there exists an agent $j \not \in f(i,m,G_{i,m})$
  such that $d(j,m',G_{i,m}) = 1-v$ for some $m' < m$.
  \item  for all agents $j \notin
    D(\bar{f}(i,m,G_{i,m}),m-1,G_{i,m})$,
      $v \not\in V(j,m-1,G_{i,m})$
\end{itemize}
  Otherwise, $\common_v(i,m,G_{i,m}) = \mathit{true}$.  

The first condition for taking $\common_v(i,m,G) = \mathit{false}$ corresponds
to agent  $i$ thinking that the agents who
might be nonfaulty at time $m$ do not have distributed knowledge
of $t$ faulty agents at time $m - 1$.  If so, certainly the agents
who are actually nonfaulty will not have distributed knowledge
of who the faulty agents are at time $m - 1$, so there will not
be common knowledge among the nonfaulty agents
of who the faulty agents are at time $m$—see Lemma 5.
The second condition holds if
some agent $j$ that that $i$ considers
  possibly nonfaulty at time $m$ has decided 
  $1-v$.  We can assume that $|D(\bar{f}(i,m,G_{i,m}),m-1,G_{i,m})| =
      t$ (otherwise $\common_v(i,m,G) = \mathit{false}$ by the first
      condition).    Thus, $i$ 
  knows who the nonfaulty agents are at time $m$, so $i$ knows that a
      nonfaulty agent 
  has decided $1-v$, so it cannot be common knowledge among the
  nonfaulty agents that no nonfaulty
      agent decided $1-v$.

  The Boolean $\cond_0 = \mathit{true}$ holds if the formula $\init_i
      = 0 
      \vee
      \bigvee_{j 
    \in \Agents} K_i (\jdecided_j = 0)$ holds at time $m$, given
      $G_{i,m}$.  Formally,   
  \begin{itemize}
    \item $\cond_0(i,0,G_{i,0}) = (\init_i = 0)$.
    \item For $m>0$,  $\cond_0(i,m,G_{i,m}) = \mathit{true}$  if 
      there exists an agent $j$ such that $d(j,m-1,G_{i,m}) = 0$ and 
      $l_{G_{i,m}}((j,m-1),(i,m)) = 1$;
      otherwise, $\cond_0(i,m,G_{i,m}) = \mathit{false}$.
  \end{itemize}
The Boolean $\cond_1(i,m,G_{i,m}) = \mathit{true}$ holds if $K_i(\bigwedge_{j \in
\Agents} \neg (\deciding_j = 0))$ holds at time $m$, given $G_{i,m}$.
      Formally,
  \begin{itemize}
    \item $\cond_1(i,0,G_{i,0}) = \mathit{false}$.
    \item For $m> 0$, let $m'$ be the latest time such that
      $d(j,m',G_{i,m}) = 0$ for some  agent $j$ (as usual,
$m' = -1$ if $d(j,m',G_{i,m}) \ne 0$ for all  agents $j$),
and 
let $m_{j}$ be the latest time that 
$(j,m_j) \lamport{r} (i,m)$, where $r$ is a run for which $G_{i,m}$
describes $i$'s view at time $m$ in $r$.  (There are many such runs;
it does not matter which one we choose, since they all agree on the
whether $(j,m_j) \lamport{r} (i,m)$.)  Intuitively, $m' = \len_j(r,m)$.
If,
for all $m''$ with $m' < m'' \leq m$, there exist at least $m'' - m'$ 
agents $j$ such that $d(j,m'',G_{i,m}) = \;?$, then 
$\cond_1(i,m,G_{i,0}) = \mathit{true}$; otherwise
$\cond_1(i,m,G_{i,0}) = \mathit{false}$.

  \end{itemize}
\end{definition}

Using these definitions, we can define an implementation of $\kbp^1$ in the 
full-information context. Let $P^{\opt}$ be the EBA decision protocol 
implemented by the following program:

\begin{program}
  \DontPrintSemicolon
  \lIf{$\decided_i\neq \bot$}{$\noop$}
    \lElseIf{$\common_0(i,\Time_i-1,G_{i,\Time_i})$}{$\decide_i(0)$}
  \lElseIf{$\common_1(i,\Time_i-1,G_{i,\Time_i})$}{$\decide_i(1)$}
  \lElseIf{$\cond_0(i,\Time_i,G_{i,\Time_i})$}{$\decide_i(0)$}
  \lElseIf{$\cond_1(i,\Time_i,G_{i,\Time_i})$}{$\decide_i(1)$}
  \lElse{$\noop$}
  \caption{$P^{\opt}_i$}
\end{program}

The following lemma shows that the initial conditions that are checked in 
the definition of $\common_v$ correspond to checking for $C_\N(\faultyag)$. 
\begin{lemma} \label{lemma:f=d=t}
  $|f(i,m,G_{i,m}(r))| = 
  |D(\bar{f}(i,m,G_{i,m}(r)),m-1,G_{i,m}(r))| = t$ for some agent $i$
  if and only if 
 $\I_{\gamma_{\fip,n,t},\kbp^1},(r,m) \models C_\N(\faultyag)$.
\end{lemma}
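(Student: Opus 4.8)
\textbf{Proof plan for Lemma~\ref{lemma:f=d=t}.}

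The plan is to connect the purely graph-theoretic quantities $f$ and $D$ to the epistemic statement via the characterization of common knowledge of failures established earlier, in particular Proposition~\ref{lemma:dfaulty}(a), which says $\I_{\gamma_{\fip,n,t},\kbp^1} \models \Time > 0 \rimp (\ominus \dist_\N(\faultyag) \Leftrightarrow C_\N(\faultyag))$, together with the definition $\dist_\N(\faultyag) \equiv \exists A \subseteq \Agents (|A| = t \land \forall i \in A\, \exists j \in \N\, K_j(i \notin \N))$. So the lemma reduces to showing that the graph condition holds iff $\I,(r,m-1) \models \dist_\N(\faultyag)$.

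First I would record the semantic meaning of the two graph quantities. Since $\exchange_{\fip}(n)$ is a full-information protocol, a standard induction on $m'$ (using Lemma~\ref{lem:seeomit}-style reasoning, i.e. that $i$ knows $j$ is faulty exactly when $i$ sees an omission by $j$ along an all-$1$ path) shows that $f(j,m',G_{i,m}(r))$ is precisely $\{k : \I,(r,m) \models K_i(K_j(k \notin \N))$ at time $m'\}$ — the set of agents that $i$ knows, at $(r,m)$, that $j$ knew to be faulty by time $m'$. Consequently $\bar f(i,m,G_{i,m}(r)) = \Agents - f(i,m,G_{i,m}(r))$ is the set of agents $i$ considers possibly nonfaulty at time $m$, and $D(\bar f(i,m,G_{i,m}(r)), m-1, G_{i,m}(r)) = \bigcup_{k \notin f(i,m,G_{i,m}(r))} f(k,m-1,G_{i,m}(r))$ is the set of agents that $i$ knows are known to be faulty by \emph{some} agent that $i$ considers possibly nonfaulty. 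A monotonicity observation is useful: $f(j,m',G_{i,m}(r)) \subseteq f(i,m,G_{i,m}(r))$ whenever $(j,m') \lamport{r} (i,m)$, and since all agents in $\bar f(i,m,G_{i,m}(r))$ that could actually be nonfaulty send $i$ their state by round $m$, the set $D(\bar f(i,m,G_{i,m}(r)),m-1,G_{i,m}(r))$ is always contained in $f(i,m,G_{i,m}(r))$, so it has size at most $t$; moreover $|f(i,m,G_{i,m}(r))| \le t$ always, since no run has more than $t$ faulty agents and $i$ can only ``know'' an agent is faulty if it really is. Thus the content of the hypothesis $|f(i,m,G_{i,m}(r))| = |D(\cdots)| = t$ is that both sets have their maximum possible size $t$ and hence coincide.

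For the ``only if'' direction: if $|D(\bar f(i,m,G_{i,m}(r)),m-1,G_{i,m}(r))| = t$, then there are $t$ distinct agents each of which $i$ knows was known to be faulty at time $m-1$ by some agent that $i$ considers possibly nonfaulty. Since $i$ itself is among the agents it considers possibly nonfaulty when $i \in \N(r)$ — and I will argue $i \in \N(r)$ must hold here, because $|f(i,m,G_{i,m}(r))|=t$ forces $i$ to know the identities of all $t$ faulty agents, so $i \notin$ that set — these $t$ agents are exactly $\Agents - \N(r)$, and the agents witnessing their faultiness are all in $\N(r)$ (any agent $i$ considers possibly nonfaulty at time $m$ whose state $i$ has actually incorporated is in fact nonfaulty once $i$ knows the full faulty set). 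Hence $\I,(r,m-1) \models \dist_\N(\faultyag)$, and by Proposition~\ref{lemma:dfaulty}(a), $\I,(r,m) \models C_\N(\faultyag)$. Conversely, if $\I,(r,m) \models C_\N(\faultyag)$, then $\I,(r,m-1) \models \dist_\N(\faultyag)$, so some set $A$ of $t$ faulty agents is such that each member is known faulty at time $m-1$ by some nonfaulty agent; each such nonfaulty agent is in $\bar f(i,m,G_{i,m}(r))$ for any $i \in \N(r)$ (since $i$ cannot know a nonfaulty agent is faulty), all of them send $i$ their states by round $m$, and $|\N(r)| = n - t$ forces $i$ to then know exactly who the $t$ faulty agents are; hence $|f(i,m,G_{i,m}(r))| = t$ and $D(\bar f(i,m,G_{i,m}(r)),m-1,G_{i,m}(r)) \supseteq A$, giving $|D(\cdots)| = t$.

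The main obstacle I anticipate is the bookkeeping around the ``possibly nonfaulty'' set $\bar f(i,m,G_{i,m}(r))$: one must be careful that when $i$ already knows all $t$ faulty agents, $\bar f(i,m,G_{i,m}(r))$ equals $\N(r)$ and all those agents have genuinely reported to $i$, so that distributed knowledge computed over $\bar f$ coincides with distributed knowledge over the actual nonfaulty set — this is where the assumption $n - t \ge 2$ (used throughout Section~\ref{sec:fip-optimal}) and the full-information property are essential, and it is exactly the subtlety flagged in the surrounding discussion of why $D$ is computed over $\bar f$ rather than over $\N$. Once that identification is pinned down, the equivalence with $\dist_\N(\faultyag)$ and the appeal to Proposition~\ref{lemma:dfaulty}(a) are routine.
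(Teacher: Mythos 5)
Your proposal is correct and follows essentially the same route as the paper: translate $|f(i,m,G_{i,m}(r))| = |D(\bar{f}(i,m,G_{i,m}(r)),m-1,G_{i,m}(r))| = t$ into $\ominus\dist_{\N}(\faultyag)$ using veridicality (the $f$- and $D$-sets contain only genuinely faulty agents, so size $t$ forces $\bar{f}(i,m,G_{i,m}(r)) = \N(r)$ and forces $D$ to be the full faulty set, known to the nonfaulty agents at time $m-1$), and then invoke Proposition~\ref{lemma:dfaulty}(a), exactly as the paper does in both directions. The only quibble is your digression arguing that $i \in \N(r)$ must hold: it is both unnecessary and not actually justified as stated (a faulty $i$ can know the identities of all $t$ faulty agents, itself included), and the paper never needs it, since $\bar{f}(i,m,G_{i,m}(r)) = \N(r)$ already guarantees that the agents witnessing the faults in $D$ are nonfaulty.
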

\begin{proof}
  Let $\I = \I_{\gamma_{\fip,n,t},\kbp^1}$ and $G = G_{i,m}(r)$.
    Suppose that $|f(i,m,G)| = |D(\bar{f}(i,m,G),m-1,G)| 
    = t$ for some agent $i$. We first observe that 
    $\Agents - f(i,m,G) = \N$, as there are $t$ faulty agents.
  This implies that the set $D(\bar{f}(i,m,G),m-1,G)$ is the 
  set of all faulty agents that are known by the nonfaulty agents at time $m-1$. 
  Since $|D(\bar{f}(i,m,G),m-1,G)| = t$, it must be the case that 
  for all faulty agents $j$, there exists a nonfaulty agent that knows that 
  agent $j$ is faulty. Hence, $\I,(r,m) \models \ominus \dist_\N(\faultyag)$. By 
Proposition~\ref{lemma:dfaulty}, we have $\I,(r,m) \models C_\N(\faultyag)$. 

  Conversely, suppose that $\I,(r,m) \models C_\N(\faultyag)$. Again, by 
  Proposition~\ref{lemma:dfaulty}, we get that $\I,(r,m) \models \ominus 
  \dist_\N(\faultyag)$. By definition, the union of all faulty agents known
by nonfaulty agents at time $m-1$ is the set of all faulty agents. Hence, 
$|D(\N,m-1,G)| = t$.
Therefore, $\N = \bar{f}(i,m,G)$.  
\end{proof}

\begin{theorem} \label{theorem:implementation}
  If $n - t \geq 2$, then $P^{\opt}$ implements $\kbp^1$ in the full-information
  EBA context $\gamma_{\fip,n,t}$. 
\end{theorem}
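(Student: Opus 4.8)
The plan is to show that for every run $r$ of $\I_{\gamma_{\fip,n,t},P^{\opt}}$ and every time $m$, the action dictated by $P^{\opt}_i$ at the local state $\langle \Time_i,\decided_i,\init_i,G_{i,m}(r)\rangle$ coincides with the action dictated by $\kbp^1_i$ when its knowledge tests are evaluated in the system $\I = \I_{\gamma_{\fip,n,t},\kbp^1}$. Since both programs are defined by the same cascade of guarded clauses (already-decided; common-knowledge-of-faulty-and-$\exists 0$; common-knowledge-of-faulty-and-$\exists 1$; the $0$-chain clause; the no-one-deciding-$0$ clause; $\noop$), it suffices to prove that each Boolean computed from the communication graph is equivalent to the truth of the corresponding knowledge formula at $(r,m)$. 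Concretely I would prove the following four equivalences, all uniformly in $(r,m)$ and $i$ with $\decided_i=\bot$ at $(r,m)$:
\begin{enumerate}
\item $\common_0(i,m-1,G_{i,m}(r)) = \mathit{true}$ iff $\I,(r,m)\models K_i(C_\N(\faultyag\land\nodecided_\N(1)\land\exists 0))$, and symmetrically for $\common_1$ and $\exists 1$;
\item $\cond_0(i,m,G_{i,m}(r)) = \mathit{true}$ iff $\I,(r,m)\models \init_i=0\lor\bigvee_{j\in\Agents}K_i(\jdecided_j=0)$;
\item $\cond_1(i,m,G_{i,m}(r)) = \mathit{true}$ iff $\I,(r,m)\models K_i(\bigwedge_{j\in\Agents}\neg(\deciding_j=0))$, \emph{given} that the two common-knowledge guards and $\cond_0$ already failed at $(r,m)$ (this is exactly the hypothesis under which $\cond_1$ is reached in $P^{\opt}_i$, matching the hypothesis under which the last nontrivial clause of $\kbp^1_i$ is reached).
\end{enumerate}
For the common-knowledge equivalence I would invoke Proposition~\ref{lemma:dfaulty}, which reduces $K_i(C_\N(\faultyag\land\nodecided_\N(1)\land\exists 0))$ at time $m$ to the conjunction, known to $i$, that at time $m-1$ the nonfaulty agents have distributed knowledge of $t$ faulty agents, that each nonfaulty agent knew it would not decide $1$ in round $m$, and that some nonfaulty agent knew $\exists 0$; by Lemma~\ref{lemma:f=d=t}, $i$ knowing $\dist_\N(\faultyag)$ holds at time $m-1$ is exactly $|f(i,m,G_{i,m})| = |D(\bar f(i,m,G_{i,m}),m-1,G_{i,m})| = t$ (and $n-t\ge 2$ is used here so that $i$ can actually identify $\N$ once $t$ faulty agents are pinned down). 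The $\nodecided_\N(1)$ part translates to the second bullet in the definition of $\common_0$ via the $d(\cdot)$ labels, and $\exists 0$ translates to the third bullet via the $V(\cdot)$ sets; here one uses that with a full-information protocol $i$ can recompute from $G_{i,m}$ exactly what each agent it has heard from knew. The equivalence for $\cond_0$ is the straightforward FIP-level translation of Lemma~\ref{lemma:chain}: $i$ knows $\jdecided_j=0$ for some $j$ iff $i$ received in round $m$ a message from an agent $j$ whose round-$m$ action, reconstructed from $G_{i,m}$, is $\decide_j(0)$, which is precisely $d(j,m-1,G_{i,m})=0$ together with the edge label $l_{G_{i,m}}((j,m-1),(i,m))=1$.

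The equivalence for $\cond_1$ is where the real content lies, and it is the step I expect to be the main obstacle. Here I would use Proposition~\ref{lemma:notdeciding1}: under the standing hypothesis that the common-knowledge guards fail and $i$ has not decided, $\I,(r,m)\models\neg K_i\neg(\exists j(\deciding_j=0))$ iff for every $m''$ with $\len_i(r,m) < m''\le m$ there are at least $m''-\len_i(r,m)$ agents $j$ with $\last_{ij}(r,m) < m''$ and $\decided_j=\bot$ at $\last_{ij}(r,m)+1$. I would then verify that the graph-theoretic quantities in the definition of $\cond_1$ compute exactly the right things: $m'$ (the latest time some agent's reconstructed action is $\decide(0)$) equals $\len_i(r,m)$ \emph{under the standing hypothesis} — this is the subtle point, and it follows from Lemma~\ref{lemma:chain} again, since when the common-knowledge guards fail every reconstructed $0$-decision comes from a $0$-chain that $i$ has learned about; that $d(j,m'',G_{i,m})=\;?$ is precisely $(j,m'')\not\lamport{r}(i,m)$, i.e. $\last_{ij}(r,m)<m''$; and that the agents counted as ``?'' at level $m''$ are automatically undecided at the relevant time because $i$ has not heard from them, so the ``$\decided_j=\bot$'' clause of Proposition~\ref{lemma:notdeciding1} is subsumed. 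Finally, when $\cond_1$ holds I must also rule out $i$ \emph{deciding $0$}, i.e.\ confirm $\I,(r,m)\models\neg(\init_i=0\lor\bigvee_j K_i(\jdecided_j=0))$, but that is exactly the failure of $\cond_0$, which is part of the standing hypothesis. Assembling these equivalences clause by clause — and invoking Proposition~\ref{p:kbp1correct} to know $\kbp^1$ has an implementation in this context so the comparison is well posed — gives $P^{\opt}_i(s) = (\kbp^1_i)^\I(s)$ for all reachable local states $s$, which is the definition of $P^{\opt}$ implementing $\kbp^1$ in $\gamma_{\fip,n,t}$; polynomiality is immediate since $f$, $D$, $d$, $V$, $\common_v$, $\cond_v$ are all defined by local recursions over a graph of size $O(n\cdot \Time_i)$.
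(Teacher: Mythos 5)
Your proposal follows essentially the same route as the paper's proof: a clause-by-clause verification that the graph-computed Booleans agree with the knowledge tests of $\kbp^1$, using Lemma~\ref{lemma:f=d=t} and Proposition~\ref{lemma:dfaulty} (together with Lemma~\ref{lemma:cfaultyimpliesall}) for the common-knowledge guards, the $0$-chain analysis for $\cond_0$, and Proposition~\ref{lemma:notdeciding1} for $\cond_1$. The points you single out as delicate (identifying $m'$ with $\len_i(r,m)$ once the common-knowledge guards fail, and the undecidedness clause in Proposition~\ref{lemma:notdeciding1}) are handled at essentially the same level of detail as in the paper's own argument.
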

\begin{proof}
  Let $\I = \I_{\gamma_{\fip,n,t},\kbp^1}$ and $G = G_{i,m}(r)$. 
  We show that for all points $(r,m)$, 
  $P^{\opt}_i(r_i(m)) = (\kbp^1_i)^\I(r_i(m))$.
  \begin{itemize}%
    \item If $P^{\opt}_i(r_i(m)) = \noop$ because
     $\decided_i \ne \bot$ in $r_i(m)$, we clearly also have 
     $(\kbp^1_i)^\I(r_i(m)) = \noop$, because 
     $\I,(r,m) \models K_i(\decided_i \ne \bot)$.
    \item If $P^{\opt}_i(r_i(m)) = \decide_i(0)$ by the second line,
            we must have $\decided_i = \bot$ in $r_i(m)$ and $m > 0$. 
    By the definition of $\common_0$, we must also have 
    (a) $|D(\bar{f}(i,m-1,G),m,G)| = t$, 
      (b) $d(j,m',G) \ne 1$ for all $j \not\in f(i,m,G)$ 
    and $m' < m$, and 
    (c) $0 \in V(j,m-1,G)$ for some $j \not\in f(i,m,G)$.
    From (a), it follows that $\I,(r,m) \models C_\N(\faultyag)$ 
    using Lemma~\ref{lemma:f=d=t}.
        From (b), it follows
 that no nonfaulty agent decides 1 at any round $m' 
      < m+1$. Hence, $\I,(r,m) \models \nodecided_\N(1)$.
      Finally, (c) implies that $i$ knows that a nonfaulty agent $j$ 
      had an initial preference 0 at time $m-1$; that is,
            $\I,(r,m) \models \ominus K_j(\exists 0)$.
            Combining these observations, using
            Proposition~\ref{lemma:dfaulty},  
      we can conclude that 
            $\I,(r,m) \models C_{\N}(\faultyag \land \nodecided_{\N}(1)
            \land \exists 0)$.
By Lemma~\ref{lemma:cfaultyimpliesall},
            $\I,(r,m) \models K_i(C_{\N}(\faultyag \land \nodecided_{\N}(1) 
 \land \exists 0))$, so $(\kbp^1_i)^\I(r_i(m)) = \decide_i(0)$.

     \item If $P^{\opt}_i(r_i(m)) = \decide_i(1)$ by the third line, we must have 
      $\decided_i = \bot$ in $r_i(m)$, $\common_0(i,m,G) = \mathit{false}$,
      $\common_1(i,m,G) =\mathit{true}$,       
      and $m > 0$.
As before, $\common_1(i,m,G) = \mathit{true}$ 
      implies that
      $\I,(r,m) \models C_\N(\faultyag) \land \nodecided_\N(0) \land
  \ominus K_j(\exists 1)$, so $\I,(r,m) \models
      K_i(C_{\N}(\faultyag \land \nodecided_{\N}(0) \land \exists 1))$.

            It thus suffices to show that $\I,(r,m) \models \neg 
      C_{\N}(\faultyag \land \nodecided_{\N}(1) \land \exists 0)$. 
            Since $\common_0(i,m,G) = \mathit{false}$, we must either have 
                  (a) $|D(\bar{f}(i,m-1,G),m-1,G)| \ne t$,
      (b) $|D(\bar{f}(i,m,G),m-1,G)| = t$ and 
for some $j \not\in f(i,m,G)$ and $m' < m$, $d(j,m',G) = 1$, or 
(c) $|D(\bar{f}(i,m-1,G),m-1,G)| = t$ and 
        for all $j \not\in f(i,m,G)$, $0 \not\in V(j,m-1,G)$.
        If (a) holds, then Lemma~\ref{lemma:f=d=t} implies that 
        $\I,(r,m) \models \neg C_\N(\faultyag)$.
        If (b) holds, then, as we have observed,  $\I,(r,m) \models \neg
        C_\N(\faultyag)$ and, in addition, 
        $d(j,m',G) = 1$ for some nonfaulty $j$ and 
      $m' < m$. Hence, $\I,(r,m) \models \neg \nodecided_{\N}(1)$.
        If (c) holds, then we again have $\I,(r,m) \models \neg \nodecided_{\N}(1)$; moreover,
        $0 \not\in V(j,m-1,G)$ for all nonfaulty $j$. Hence, $i$ 
      considers it possible that none of the nonfaulty agents knows
      about a 0 
      at time $m-01$ given $G$, so by Proposition~\ref{lemma:dfaulty},
      $\I,(r,m) \models \neg C_N(\exists
      0)$. 
      Thus, in all cases $\I,(r,m) \models \neg 
      C_{\N}(\faultyag \land \nodecided_{\N}(1) \land \exists 0)$,
      so $(\kbp^1_i)^\I(r_i(m)) = \decide_i(1)$.

    \item If $P^{\opt}_i(r_i(m)) = \decide_i(0)$ by the fourth line, we must 
      have $\decided_i = \bot$ in $r_i(m)$, $\common_0(i,m,G) = \common_1(i,m,G) =
      \mathit{false}$, and $\cond_0(i,m,G) = \mathit{true}$. 
Again, $\common_0(i,$ $m,G)  = \mathit{false}$
      implies that $\I,(r,m) \models \neg C_{\N}(\faultyag \land 
      \nodecided_{\N}(1) \land \exists 0)$. Similarly,
      $\common_1(i,m,G) = \mathit{false}$ implies that
      $\I,(r,m) \models \neg C_{\N}(\faultyag \land 
      \nodecided_{\N}(0) \land \exists 1)$. 
      We thus need to show only that $\I,(r,m) \models \init_i = 0 \lor 
      \bigvee_{j \in \Agents} K_i (\jdecided_j = 0)$. We proceed by induction on
   $m$. For the base case, if $\cond_0(i,0,G) = \mathit{true}$, by definition,
      it must be the case that $\init_i = 0$.
      For the inductive step, suppose that $\cond_0(i,m,G) = \mathit{true}$ for some $m > 0$. 
      By definition, this implies that $d(j,m-1,G) = 0$ and 
      $l_{G}((j,m-1),(i,m)) = 1$ for some agent $j$. 
      Thus, $(\kbp^1_i)^\I(r_i(m)) = \decide_i(0)$.
    Then agent $j$ decides 0 and agent $i$ hears from agent $j$ 
    in round $m$. It follows that
      $\I,(r,m) \models K_i(\jdecided_j = 0)$, so
    $(\kbp^1_i)^\I(r_i(m)) = \decide_i(0)$.

    \item If $P^{\opt}_i(r_i(m)) = \decide_i(1)$ by the fifth line,
      we must have $\decided_i = \bot$ in $r_i(m)$, $\common_0(i,$ $m,G) =
      \common_1(i,m,G) = \cond_0(i,m,G) = \mathit{false}$, and
      $\cond_1(i,m,G) = \mathit{true}$.
      We also have $m > 0$ since $\cond_1(i,m,G) = \mathit{true}$.
      As before, $ \common_0(i,m,G) = \common_1(i,m,G) =
      \mathit{false}$ implies that the common knowledge conditions  
      don't hold. 
      We thus need to show that $\I,(r,m) \models \neg (\init_i = 0 \lor 
      \bigvee_{j \in \Agents} K_i (\jdecided_j = 0))$ and $\I,(r,m) \models 
      K_i(\bigwedge_{j \in \Agents} \neg (\deciding_j = 0))$. 

      By definition, $\cond_0(i,m,G) = \mathit{false}$
      implies that for all $j \in \Agents$,
      $d(j,m-1,G) \ne 0$ or $l_G((j,m-1),(i,m)) \ne 1$, so
either $j$ did not decide 0 in round $m$ or 
$i$ did not receive a message from agent $j$ in round $m$. In either case,
$\I,(r,m) \models \neg K_i (\jdecided_j = 0)$, so
      $\I,(r,m) \models \neg (\init_i = 0 \lor 
      \bigvee_{j \in \Agents} K_i (\jdecided_j = 0))$.
      Finally, if $\cond_1(i,m,G) = \mathit{true}$,
      by Proposition~\ref{lemma:notdeciding1}.
      we can conclude that $\I,(r,m) \models 
      K_i(\bigwedge_{j \in \Agents} \neg (\deciding_j = 0))$, so
$(\kbp^1_i)^\I(r_i(m)) = \decide_i(1)$.

    \item If $P^{\opt}_i(r_i(m)) = \noop$ by the last line, we must have 
    $\decided_i = \bot$ in $r_i(m)$, and $\common_0(i,$ $m,G) = 
\common_1(i,m,G) = \cond_0(i,m,G)  =  \cond_1(i,m,G) = \mathit{false}$.
If $m = 0$, none of the conditions in $\kbp^1$ can hold
except $\I,(r,m) \models \init_i = 0 \lor \bigvee_{j \in \Agents} 
K_i (\jdecided_j = 0)$. However, since $\cond_0(i,0,G) = \mathit{false}$, 
we must have $\init_i \ne 0$ and $(\kbp^1_i)^\I(r_i(0)) = \noop$. 
If $m > 0$, then arguments above
show that the common knowledge conditions don't hold and 
      $\I,(r,m) \models \neg (\init_i = 0 \lor \bigvee_{j \in \Agents} 
K_i (\jdecided_j = 0))$. We thus need to show only that 
$\I,(r,m) \models \neg K_i(\bigwedge_{j \in \Agents} \neg 
      (\deciding_j = 0))$. Since the common knowledge conditions don't 
      hold, we can apply Proposition~\ref{lemma:notdeciding1} to conclude that this is 
      the case. 
      Thus, none of the conditions in $\kbp^1$ hold and we have 
      $(\kbp^1_i)^\I(r_i(m)) = \noop$.
  \end{itemize}
\end{proof}

We can then conclude that $P^{\opt}$ 
is also optimal with respect to full-information exchange.
Since each condition in $P^{\opt}$ can be checked in polynomial time in the 
size of the communication graph, and the communication graph itself uses 
$O(n^2t)$ bits, we then get the following 
result:
\repro{prop:kbp1-polynomial}
  There exists a polynomial-time implementation $P^{\opt}$ of $\kbp^1$ with 
  respect to a full-information exchange. 
\erepro

\commentout{
\repro{prop:cfaulty-necessary}
  If $P$ is an optimal protocol in an 
EBA context $\gamma$, and $\I_{P,\gamma} ,(r,m) \models
  \decided_i = \bot \land K_i(\ck{\N}(\faultyag \land \nodecided_\N(1) \land
   \exists 0))$, then all undecided agents in $\N(r)$ make a
    decision in round $m+1$, and 
    similarly if $\I_{P,\gamma} ,(r,m) \models \decided_i = \bot \land
    K_i(\ck{\N}(\faultyag \land \nodecided_\N(0) \land \exists 1))$.
\erepro

\repro{p:kbp1correct}
All implementations of $\kbp^1$ with
respect to $\gamma_{\fip,n,t}$ are EBA decision protocols for $\gamma_{\fip,n,t}$.
\erepro

\rethm{thm:kbp1opt} 
  If $\kbp^1$ is weakly safe with respect to $\gamma_{\fip,n,t}$
      then all implementations of $\kbp^1$
  are optimal with respect to $\gamma_{\fip,n,t}$.
\erethm

\rethm{thm:kbp1-safe} 
  $\kbp^1$ is weakly safe with respect to $\gamma_{\fip,n,t}$.
\erethm

\repro{prop:kbp1-polynomial}
  There exists a polynomial-time implementation 
  $P^{\opt}$ 
  of $\kbp^1$ with respect to a 
  full-information exchange. 
\erepro
} %

\subsection{Proof for Section~\ref{sec:cost}}
\repro{prop:decision-times} 
  If $r$ is a failure-free run, then
  \begin{enumerate}[(a)]
      \item If there is at least one agent with an initial preference of 0 in
          $r$, then all agents decide by round 2.
   \item If all agents have an initial preference of 1, then all agents decide 
     by round $t+2$ with $P^{min}$ and by round $2$ with $P^{\basic}$ and 
     $P^{\fip}$.      
  \end{enumerate}
\erepro
\begin{proof}
    For the first part, suppose that some nonfaulty agent has an initial preference
    of 0.  Clearly that agent decides 0 in the first round and tells all
  the other agents, who decide in the second round (for all three protocols).
  
  For the second part, suppose that all the agents are nonfaulty and
  have an initial preference 
    of 1.  Then with $P^{min}$, since no agent will decide 0 or hear about
  a decision of 0, the agents will wait for $t+1$ rounds of information
  exchange decide 1 in round $t+2$.  With $P^{\basic}$ and $P^{\fip}$, no agent will
  decide right away and  since all agents $i$ will get a message from every other
  agent $j$ in the first round from which they can conclude that $j$'s
  initial preference was 1 ($(\init,1)$ in the case of $P^{\basic}$ and
  an explicit message saying that $j$'s initial preference was 1 in the
  case of $P^{\fip}$), agents can all decide on 1 in round 2.
\end{proof}

\commentout{

\section{Proof of Proposition~\ref{pro:optimal}}\label{proofpro:optimal}

  \newenvironment{RETHM}[2]{\trivlist \item[\hskip\labelsep{\textcolor{lipicsGray}{$\blacktriangleright$}\nobreakspace\sffamily\bfseries #1\hskip 3pt\relax\ref{#2}.}]\it}{\endtrivlist}
\newcommand{\rethm}[1]{\begin{RETHM}{Theorem}{#1}}
\newcommand{\erethm}{\end{RETHM}}
\newcommand{\relem}[1]{\begin{RETHM}{Lemma}{#1}}
\newcommand{\recor}[1]{\begin{RETHM}{Corollary}{#1}}
\newcommand{\repro}[1]{\begin{RETHM}{Proposition}{#1}}
\newcommand{\erepro}{\end{RETHM}}
\newcommand{\erelem}{\end{RETHM}}
\newcommand{\erecor}{\end{RETHM}}

We repeat the statement of the proposition for the reader's convenience.

\repro{pro:optimal}
$\kbp^0$ is safe with respect to all contexts $\gamma_{\min,n,t}$ and
  $\gamma_{\basic,n,t}$ such that $n - t \ge 2$.
\erepro

\commentout{
Let $\alpha=(\N,F)$ be the adversary in $r$.  We define $r'$ by
  assuming that all agents start with initial preference 1 in $r'$, and
  the adversary $(\N',F')$ is defined as follows.  Let $j$ be a 
  nonfaulty agent in $r$ (there must be one), where $j=i$ if $i$ is
  nonfaulty in $r$.  Then $\N' = \N - \{j\} \cup \{i\}$.
  Let $F'$ be identical to $F$
  except that 
  if in $r$ some agent $j'$ does not receive a message from $i$ in round $k$,
  then in $r'$, (according to $F'$) $j'$ does receive a message from $i$ but does
  not receive a message from $j$ in round $k$, with one exception: if
  $i$ receives the message 1 from $j$ in round $m$ of $r$, then $j$'s
  messages are not blocked by $F'$ in round $m$ of $r'$. 
  Note that this is well-defined since $i=j$ only when $i$ is nonfaulty in 
  $r$ and in that case every message sent by $i$ is delivered.
  Therefore,
  \[
  F'(k,i',j') = \begin{cases}
        1 & k = m \land \mu_{i'j'}(r_{i'}(k), P_{i'}(r_{i'}(k))) = 1 \\
        1 & F(k,i,j') = 0 \land i' = i \land 
              \neg (k = m \land \mu_{i'j'}(r_{i'}(k), P_{i'}(r_{i'}(k))) = 1) \\
          0 & F(k,i,j') = 0 \land i' = j \land 
          \neg (k = m \land \mu_{i'j'}(r_{i'}(k), P_{i'}(r_{i'}(k))) = 1) \\
                              F(k,i',j') & \text{otherwise.}
      \end{cases}
  \]
  That is, roughly speaking, $F$ and $F'$ are identical with regard to faulty
  behavior except that they interchange the faulty behavior of $i$ in
  $r$ with the faulty behavior of $j$ in $r'$.
  
  To prove the first part of the claim, it suffices to show that $r_i(m)
  = r'_i(m)$.    
  argument. To do this, we first need the next claim. 
  Define a \emph{one-step hears-from} relation in both run $r$ and
  run $r'$ on pairs $(j',k)$ consisting of
  agents and times by saying that $(j'',k+1)$ one-step hears from
  $(j',k)$ in $r$ (resp., $r')$ if agent 
  $j'$ sends $j''$ a non-$\bot$ message at round $k+1$ of $r$ (resp.,
  if $j''$ receives a non-$\bot$ message from $j'$ in round $k+1$
  of $r$ (resp., $r'$).
  Let the \emph{hears-from} relation be the reflexive transitive
  closure of the one-step hears-from relation.
  \begin{claim*}
    If $(i,m')$ hears from $(j,k)$ in $r$ for some $m' \le m$, then
    the following holds:
    \begin{enumerate}[a.]
      \item $j$ has the same local state at $(r,k)$ and $(r',k)$,
      \item for $j' \notin \{i,j\}$, $(i',k+1)$ hears from $(j',k)$
        in $r$ iff $(i',m')$ hears 
        from
        $(j',k)$ in $r'$, and gets the same message
        in both cases,
      \item $(i',k+1)$ hears from $(j,k)$ in $r$ iff
        $(i',k+1)$ hears from $(i,k)$ in $r'$, and 
        if 
        $k \ne m$ or 
        if $k=m$ and $j$ does not send the message 1 in round $m$ of $r$, 
        then the same message is sent in both cases,
      \item $(i',k+1)$ hears from $(i,k)$ in $r$ iff $(i',k+1)$ hears from
        $(j,k)$ in $r'$ and 
        if 
        $k \ne m$ or if $k=m$ and $j$ does not 
        send the message 1 in round $m$ of $r'$, 
        then the same message is sent by $j$ in both cases.
    \end{enumerate}
  \end{claim*}
  Suppose that $k=0$.  Part (a)
  of the induction argument follows immediately from the observation,
  since the initial preference determines the starting local state.
  Part (b) is immediate from part (a) given how we have defined $F$ and
  $F'$
  by switching the roles of $i$ and $j$. For part (c), 
  the fact that $(i',1)$ hears from $(j,0)$ in $r$ iff
  $(i',1)$ hears from $(i,0)$ in $r'$ is immediate from the definition
  of $F$ and $F'$.  In both cases the message sent is $\bot$
  in $\gamma_{\min,n,t}$ and $(\init,1)$ in $\gamma_{\basic,n,t}$.
  For part (d), the argument is the same as in part (c).
  
  Now suppose that $k \ge 0$ and the induction hypothesis holds for $k$.
  Part (a) of the induction for $k+1$ follows 
  immediately from the induction hypothesis, since each agent is in the
  same local state at $(r,k)$ and $(r',k)$, receives the same
  messages from all other agents (except that the messages from $i$ and
  $j$ may be switched), and performs the same actions
  according to the decision protocol.    
  Part (b) is again immediate from the definition of $F$ and
  $F'$ and the fact that, by the induction hypothesis, $j' \notin
  \{i,j\}$ is in the same local state in $(r,k)$ and $(r',k)$.  For part
  (c), again, the definition of $F$ and $F'$ ensures that $(i',k+1)$
  gets a message from $(j,k)$ in $r$ iff $(i',k+1)$ gets a message from
  $(i,k)$ in $r'$. 
  If $k \ne m-1$, then in round $k$ of $r$, $j$ must send the message $\bot$ to
  $i'$ in $\gamma_{\min,n,t}$ and $(\init,1)$ in
  $\gamma_{\basic,n,t}$.  The only other alternative is to send the
  message 1, but then since $j$ is nonfaulty in $r$, $j$ would also send
  the message 1 to $i$, who would then decide 1 prior to round $m$, a
  contradiction.  Similarly, in round $k$ of $r'$, $i$ must
  send the message $\bot$ to
  $i'$ in $\gamma_{\min,n,t}$ and $(\init,1)$ in
  $\gamma_{\basic,n,t}$, for if it sends 1, then it would also send 1
  in round $k$ of $r$, contradicting the assumption that it does not
  decide before round $m$. The argument if $k=m$ and $j$ does not send
  the message 1 is the same.  Finally, the fact that $i'$
  considers it possible 
  that $j$ is nonfaulty at the point $(r,k+1)$ iff $i'$
  considers it possible that $i$ is nonfaulty at $(r',k+1)$, 
  since $j$ is in fact nonfaulty in $r$ and 
  $i$ is in fact nonfaulty in $r'$.
  The argument for part (d) is similar to
  that for (c) and left to the reader, except that in the last step, we
  must be a little more careful in arguing that $i'$
  considers it possible 
  that $i$ is nonfaulty at the point $(r,k+1)$ iff $i'$
  considers it possible that $j$ is nonfaulty at $(r',k+1)$.  In this
  case, the argument follows from the fact that 
  if $i'$ considers it possible that $j$ (resp., $i$) is
  nonfaulty at $(r',k+1)$ (resp., $(r,k+1)$), then there must be a run $r''$
  that $i'$ considers 
  possible at $(r',k+1)$ (resp., $(r,k+1)$) where $j$ (resp., $i$) is in
  fact nonfaulty.  
  We can then get a run $r'''$ that $i'$ considers possible at $(r,k+1)$
  (resp., $(r',k+1)$) where $i$ (resp., $j$) is nonfaulty by switching the
  roles of $i$ and $j$  
  in $r''$, analogous to the construction of $r'$ from $r$.
  
  It is now immediate that $r_i(m') = r'_i(m')$ for $m' < m$.  To see
  that $r_i(m) = r'_i(m)$, observe that the argument above shows that $i$
  gets the same messages in round $m$ in both $r$ and $r'$ unless $i$
  gets 1 from $j$ in one of these runs.  But if that happens, the
  exception in the
  construction of $F$ and $F'$ and the fact that $r_j(m-1) = r'_j(m-1)$
  means that  $i$ gets 1 from $j$ in both of these runs.  In that case,
  again, $r_i(m) = r'_i(m)$.  (Here we use the fact that the number of
  $(init,1)$ messages received is ignored if $i$ decides in round $m$.)
  This completes the first part of the argument.
}
\commentout{    
First observe that if $(i,m')$ hears from $(j,k)$ in $r$ for some $m'
\le m$, then 
the initial preference of $j$ must be 1.  For if it is
  0, then we must have $k=0$ (since an agent with initial preference 0
  decides 0 in round 1 and sends the message 0 in round 1, and
  otherwise sends no 
  messages), and it is easy to see that if $(i,m')$ hears from
             $(j,0)$ in $r$, then $i$ gets a 0-chain,
which contradicts the assumption that $i$ has not received a 0-chain
by $(r,m)$.  A similar argument shows that if $(i,m')$ hears from
$(j,k)$ in $r$ for some $m' \le m$, then $j$ has not received a
0-chain by $(r,k)$. 
}

\section{$\kbp^0$ in the FIP context} \label{sec:FIPoptimal}

While it follows from Proposition~\ref{prop:kbp0correct} that all
implementations of $\kbp^0$ satisfy the EBA 
specification even if we use full-information 
exchange, what can we say about optimality?  While safety is a
sufficient condition for optimality, as we observed, the first part
does not hold for FIPs. As we shall see, the second part does. As we now show, 
we can weaken the first part to get a condition that
is sufficient for $\kbp^0$ to be optimal that plausibly could
hold for FIPs (and we conjecture that it does).

As a first step to considering this result, we follow Halpern,
Moses, and Waarts \citeyear{HMW} (HMW from here on in) and consider a
slightly nonstandard 
EBA context.  We assume that each agent $i$'s local state does \emph{not}
contain the variables $\decided_i$ and $\rd_i$, but does contain a
variable or variables that keep track of all messages received from
all agents.  Note that this is without loss of generality since, given
a decision protocol $P$, we can infer what each agent's action in each
round $m$ would be from its local state at time $m-1$.
Let $\gamma_{\fip,n,t}$ denote the family of full-information contexts
as described above.  Using $\gamma_{\fip,n,t}$ has the
advantage that, for all decision protocols $P$ and
$P'$, corresponding runs of $P$ and $P'$ in $\gamma_{\fip,n,t}$ are
actually identical; although agents may make different decision, their
local states are the same at all times.  (This would not be the case
if the local states had included information  about decisions, and in
particular, if they had included the variables $\decided_i$ and
$\rd_i$.)  But we note that it is critical that we are dealing with
FIPs here; the  claim would not be true for arbitrary protocols. 
It is easy to see that Proposition~\ref{prop:kbp0correct} continues to
hold with this change; the proof is almost identical, so we omit it
here.  

To explain how we modify the first part of the safety condition, we
need to recall some material from \cite{HMW}.  Given an
\emph{indexical set} $\cS$ of agents, that is, 
a function from points to subsets of $\Agents$, a point
$(r', m')$ is \emph{$\cS$-$\boxdot$-reachable
from $(r, m)$} if there exist runs ${r^0}, \ldots, r^k$, times $m_0,
m_0' , \ldots, m_{k}, m_{k}'$, and agents $i_0,
\ldots, i_{k-1}$ such that $(r^0,
m_0) = (r, m)$, $(r^k, m_k') = (r', m')$, and for $0 \le j \le k-1$, 
$i_j \in \cS(r^{j},m_{j}') \cap
\cS(r^{j+1}, m_{j+1})$ and 
$r^{j}_j(m_j') = r^{j+1}_j(m_{j+1})$.%
\footnote{
HMW introduced a family of \emph{continual common knowledge}
operators $\contcb{\cS}$ such that $\contcb{\cS}\phi$ holds at a point $(r,m)$
iff $\phi$ is true at all points $(r',m')$ that are $\cS$-$\boxdot$-reachable
from $(r, m)$.
We get standard (indexical) common knowledge by taking $m_k
= m_k'$ in the definition of continual common knowledge;
since we are working with synchronous systems, we could restrict to
taking $m_j' = m_{j+1}$.}
Using the notation of \cite{HMW}, let $\N \land \cO$
denote the indexical set where $(\N \land \cO)(r,m)$ consists of all agents that
are nonfaulty and about to decide 1
or have already decided 1
at the point $(r,m)$.   

A knowledge-based protocol is \emph{weakly safe with respect
to an EBA context $\gamma$} if, for all implementations $P$ of   $\kbp$ and 
all points $(r,m)$ of $\I= (\R_{\exchange,\failures,P},\pi)$ the
second condition of safety holds, and the first condition is replaced
by the following condition:
\begin{itemize}
\item If $i$ has not received a 0-chain by $(r,m)$ and $i$ does not decide
  1 before round $m$, then there exists points $(r',m)$ and
  $(r'',m'')$ such that 
  such that:
    \begin{enumerate}
      \item $r_i(m)= r'_i(m)$,
      \item $i$ is nonfaulty and decides 1 in $r'$,
        \item $(r'',m'')$ is $(\N \land \cO)$-$\boxdot$-reachable from $(r',m')$,
      \item all agents have initial preference 1 in $r''$.
    \end{enumerate}
\end{itemize}
    It is easy to see that the first condition in the safety implies the
condition above. If the first condition in safety holds, we can just
take $(r'',m'') = (r',m')$. Since $i$ nonfaulty in $r'$, $i$ must
decide 1 in $r'$ (given that all agents have initial preference 1 in
$r'$), say at the point $(r',m_1)$.
It easily follows that $(r',m')$ is
$(\N \land \cO)$-$\boxdot$-reachable from $(r',m')$, and we are done.

\begin{proposition}\label{pro:cond2}  $\kbp^0$ satisfies the second
  safety condition 
  with respect to $\gamma_{\fip,n,t}$.
  \end{proposition}
\begin{proof} 
The argument starts just as the second part of the proof of
Proposition~\ref{pro:optimal}: 
  Suppose that $\I, (r,m) \models \neg K_i \neg (i \in \N) \land \neg
  (K_i(\bigwedge_{j \in \Agents}  
\neg (\deciding_j = 0))$, $i$ does not decide before 
round $m$ in $r$, $m > 1$, and if $i$ decides 0 in round $m$ of $r$,
then $m > 2$. 
Then  there exists a point $(r',m)$ such that $r_i(m)= r'_i(m)$
and some agent $j$ decides 0 in
round $m+1$ of $r'$.  We want to modify $r'$ to get a run $r''$ such
that (a) $r_i(m)= r''_i(m)$, (b) $i$ and $j$ are nonfaulty in $r''$, 
and (c) $j$ decides 0 in round $m+1$ of $r''$.
But now the argument is even easier.  If the adversary in $r'$ is
$(\N',F')$, let the adversary in $r''$ be $(\N'',F'')$, where $\N'' =
\N \cup \{i,j\}$, and $F''$ is like $F'$, except that none of $i$ or
  $j$'s messages is blocked.  If $i$ does not have the same local
  state in $(r',m)$ and $(r'',m)$, then it must be the case that
  $(i,m)$ must hear from some pair $(i',k)$ in $r$ where $i'$ does not
  get a message from $i$ or $j$.  But in that case, at the point
  $(r,m)$, $i$ would know that $i$ or $j$ is faulty, which it does
  not.  It thus follows that $r_i(m) = r_i''(m)$, as desired.  (Note
  that for this argument, we do not need to have $n-t \ge 2$.)
  \end{proof}

\begin{theorem} If $\kbp^0$ is weakly safe with respect to
  $\gamma_{\fip,n,t}$, then for all implementations $P$ of $\kbp^0$, 
  $P$ is optimal with respect to $\gamma_{\fip,n,t}$.
\end{theorem}

\begin{proof}
  Suppose that $\kbp^0$ is weakly safe with respect to
  $\gamma_{\fip,n,t}$.  The argument follows the lines of that of
  Theorem~\ref{thm:safe}.  Suppose that $P$ implements $\kbp^0$ with
  respect to $\gamma_{\fip,n,t}$ and $P'$ strictly dominates $P$.
  Again, let $k$ be the earliest round at which some
agent $i$ decides in round $k$ of a run $r'$ of
$\I_{\gamma_{\fip,n,t},P'}$ and $i$ either does not decide at or before round
$k$ of the corresponding run $r$ of $\I_{\gamma,P}$ or $i$
makes a different decision in round $k$ of $r$ than in round $k$ of
$r'$.     If $i$ decides 0 in round $k$ of $r'$, then, as before, then
it must be the case that $i$ has not received a 0-chain by $(r,k-1)$ and $i$
does not decide 1 before round $k-1$.   Thus, by the first part of the
weak safety condition, there exist points $(r',k-1)$ and $(r'',k'')$ such
that $r_i(k-1) = r'_i(k-1)$, $(r'',k'')$ is $(\N \land
\cO)$-$\boxdot$-reachable from $(r,m-1)$, and all agents have initial
preference 1 in $r''$.  Since $(r'',k'')$ is $(\N \land
\cO)$-$\boxdot$-reachable from $(r',m-1)$, there exist a sequence of
runs $r^0, \ldots, r^h$, times $m_0,m_0' \ldots, m_h, m_h'$, and
agents $i_0, \ldots, i_{h-1}$ satisfying the conditions for
reachability.  We now show by induction on $h'$ that the nonfaulty
agents decide 0 according to protocol $P'$ in the run $r^{h'}$ for $0
\le h' \le h$.  Since $r_i(k-1) = r_i'(k-1)$, and $i$ decides 0 in round
$k-1$ of $r$, it does so in round $k-1$ of $r' = r^0$.  Since $i$ is
nonfaulty in $r'$, all the nonfaulty agents must decide
0 in $r'$ with $P'$.

For the inductive step, suppose that all
nonfaulty agents decide 0 in $r^{h'}$ with $P'$.  Since $i_{h'} \in (\N
\land \cO)(r^{h'},m_{h'}') \cap (\N \land \cO)(r^{h'+1},m_{h'}')$, 
it follows that $i_{h'}$ is nonfaulty  and decides 1 at both the points
$(r^{h'},m_{h'}')$ and $(r^{h'+1},m_{h'}')$ with $P$; moreover,
$r^{h'}_{i_{h'}}(m_{h'}') = r^{h'+1}_{i_{h'}}(m_{h'}')$.  Since $P'$
  dominates $P$, $i_{h'}$ must decide at some time $m_{h'}'' \le
  m_{h'}'$ in $r^{h'}$ with $P'$.  By the induction  hypothesis,
  $i_{h'}$ decides 0 with $P'$.  Since $r^{h'}_{i_{h'}}(m_{h'}') =
  r^{h'+1}_{i_{h'}}(m_{h'}')$ and $m_{h'}'' \le m_{h'}'$, we must also have
  $r^{h'}_{i_{h'}}(m_{h'}'') = r^{h'+1}_{i_{h'}}(m_{h'}'')$ (for
  otherwise, $i_{h'}$ would have different information
  in the two runs at time $m_{h'}''$, and must continue to have
  different information at time $m_{h'}''$ with the FIP.) 
  Thus,
      $i_{h'}$ must decide 0 in $r^{h'+1}$ with $P'$, and hence all
      nonfaulty agents decide 0 in $r^{h'+1}$ with $P'$.  This
      completes the induction argument.  It follows that all nonfaulty
       agents decide 0 in $r^h = r''$ with $P'$, giving us the desired
      contradiction, since all agents have initial value 1 in $r''$.

The argument in the case that $i$ decides 1 in round $k$ of $r'$ is
identical to that in the proof of Theorem~\ref{thm:safe}, using the
second condition of weak safety, which is identical to the second
condition of safety and, as we observed in
Proposition~\ref{pro:cond2}, is satisfied by  $\kbp^0$ with respect to
$\gamma_{\fip,n,t}$.  Thus, we have shown that if
$\kbp^0$ is weakly safe with respect to $\gamma_{\fip,n,t}$, then
every implementation of $\kbp^0$ is 
optimal with respect to $\gamma_{\fip,n,t}$.
\end{proof}
\commentout{
We now want to show that converse. So suppose that $P$ is an implementation 
of $\kbp^0$ and $P$ is optimal with
respect to $\gamma_{\fip,n,t}$.   From Theorem 5.4 in HMW, it follows
that if a nonfaulty agent $i$ does not decide 1 (resp., 0) at a point $(r,m)$,
of $\I_{\gamma_{\fip,n,t},P}$,
then there exists a point $(r',m)$ such that $r_i(m) = r'_i(m)$, 
$i$ is nonfaulty in $r'$, and either (a) i decides 1 (resp., 0) at
$(r',m)$ or (b)
everyone has an initial 
preference of 1 (resp., 0) in $r'$ or (c) there is a point $(r'',m'')$
that is $(\N 
\land \cO)$-$\boxdot$-reachable from $(r',m')$ where all agents have an initial
preference of 1 (resp., $(\N
\land \Z)$-$\boxdot$-reachable from $(r',m')$ where all agents have an initial
preference of 0). 
It is now almost immediate the two conditions of weak safety
hold (for the second condition, we can take $i=j$).
This completes the proof.}

The question of whether $\kbp^0$ is weakly safe with respect to
$\gamma_{\fip,n,t}$ and hence optimal remains open.  We conjecture
that it is.

}
\end{full}

\begin{full}
\begin{acks}
  Alpturer and Halpern were supported in part by AFOSR grant
FA23862114029.  Halpern was additionally supported in part by ARO grants
W911NF-19-1-0217 and W911NF-22-1-0061. 
The Commonwealth of Australia (represented by the Defence Science and Technology
Group) supported this research through a Defence Science Partnerships agreement.
We thank Yoram Moses for useful comments on the paper.
\end{acks}
\end{full} 

\begin{full}
\bibliographystyle{ACM-Reference-Format}
  
\bibliography{z,joe}
\end{full}

\end{document}